\DeclareMathOperator{\Tr}{Tr}
\newtheorem{theorem}{Theorem}[section]
\newtheorem{corollary}{Corollary}[theorem]
\newtheorem{lemma}[theorem]{Lemma}
\newtheorem{definition}{Definition}[section]
\newtheorem{example}{Example}[section]
\newtheorem{remark}{Remark}[section]
\newtheorem{proposition}{Proposition}[section]
\newenvironment{proof}{\subparagraph{Proof:}}{\hfill$\textbf{QED}$\\}
\newcommand\numberthis{\addtocounter{equation}{1}\tag{\theequation}}
\newcommand{\dket}[1]{\vert #1\rrangle}
\newcommand{\dbra}[1]{\llangle #1\vert}
\DeclareMathOperator{\supp}{Supp}
\begin{document}

\begin{center}
	\begin{large}
		{\bf COMENIUS UNIVERSITY IN BRATISLAVA}  \\ 
	\end{large}
	\begin{large}
		{\bf FACULTY OF MATHEMATICS, PHYSICS AND INFORMATICS} \\
	\end{large}
	
	\vspace*{0.3cm}
\end{center}

\vspace*{5.0cm}

\begin{large}
	\begin{center}
		{\bf PROGRAMMABLE QUANTUM PROCESSORS: \\EQUIVALENCE AND LEARNING}
	\end{center}
\end{large}
\begin{center}
	\begin{Large}
		{\bf Dissertation}
	\end{Large}
\end{center}

\vfill

\begin{large}
	{\bf 2024} \hspace*{9.0cm}{\bf Mgr. Jaroslav Pavličko}
\end{large}

\newpage


\begin{center}
	\begin{large}
		{\bf COMENIUS UNIVERSITY IN BRATISLAVA}  \\ 
	\end{large}
	\begin{large}
		{\bf FACULTY OF MATHEMATICS, PHYSICS AND INFORMATICS} \\
	\end{large}
	
	\vspace*{4.8cm}
\end{center}

\begin{large}
	\begin{center}
		{\bf PROGRAMMABLE QUANTUM PROCESSORS: \\EQUIVALENCE AND LEARNING}
	\end{center}
\end{large}

\begin{center}
	\begin{Large}
		{\bf Dissertation}
	\end{Large}
\end{center}

\vspace*{5.5cm}

\begin{large}
	Study program: \quad Theoretical and Mathematical Physics
\end{large}

\begin{large}
	Study field: \hspace*{1.2cm} 13. - Physics
\end{large}

\begin{large}
	Training center: \quad Institute of Physics, Slovak Academy of Sciences
\end{large}

\begin{large}
	Supervisor: \hspace*{1.2cm} Doc. Mgr. Mário Ziman, PhD.
\end{large}

\vfill

\begin{large}
	{\bf Bratislava 2024} \hspace*{7.5cm}{\bf Mgr. Jaroslav Pavličko}
\end{large}
\thispagestyle{empty}

\includepdf[pages=-]{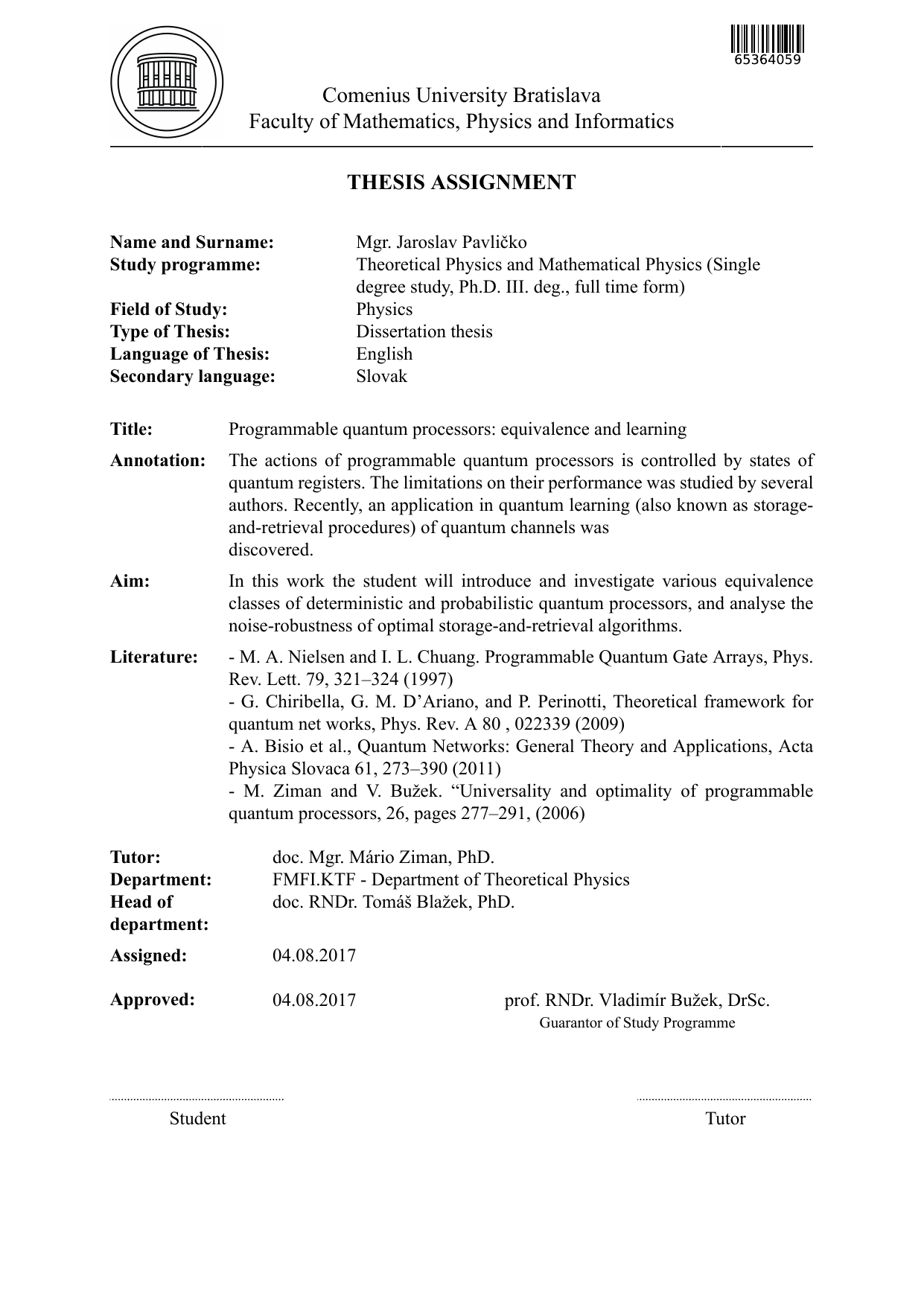}

\newpage
\vspace*{\fill}
\begin{center}
	I want to express my eternal gratitude for love and support \\I have received from my family - my mom, Miro, Vierka, Martin and Anička, \\who is little condensed joy of the purest of forms
\end{center}
\vspace*{\fill}

\chapter*{Abstrakt}
V prvej časti práce bola skúmaná ekvivalencia kvantových deterministických a pravdepodobnostných procesorov. Programovateľný kvantový procesor je zariadenie, ktoré je schopné zmeniť vstupný dátový stav želaným spôsobom. Bola definovaná deterministická a tri typy pravdepodobnostnej (silná, slabá a štruktúrna) ekvivalencie. Boli objavené nevyhnutné a postačujúce podmienky pre deterministickú a štruktúrnu ekvivalenciu unitárne zviazaných procesorov. Ekvivalencia deterministického SWAP procesora pre dvojdimenzionálny dátový a dvojdimenzionálny programový priestor bola kompletne vyriešená. Bolo zistené, že spany operátorov štruktúrne ekvivalentných procesorov sú identické. Vzťahy medzi rôznymi typmi ekvivalencií boli takisto preskúmané.

V druhej časti bola preskúmaná odolnosť pravdepodobnostného úložného a získavacieho zariadenia (PÚAZ), ktoré bolo pôvodne optimalizované pre implementáciu fázového hradla, voči šumu. Konkrétne voči depolarizácii a fázovému tlmeniu. V prípade depolarizačného kanála zmiešaného s unitárnym kanálom, zariadenie implementuje zašumený kanál s klesajúcou pravdepodobnosťou vzhľadom k rastúcemu počtu použití daného kanálu. V prípade fázového tlmenia, zariadenie implementuje zašumený kanál s rovnakou pravdepodobnosťou ako originálne PÚAZ optimalizované pre fázového hradlo. Konkrétne implementácie - cez Vidalovu-Masanesovu-Cirac-ovu schému a virtuálny qudit - boli tiež preskúmané. Vidal-Masanes-Cirac dáva rovnaké výsledky pre oba zašumené kanály, ktoré sú zároveň lepšie ako výsledky z PÚAZ. Implementácia depolarizačného kanála pomocou virtuálneho quditu prináša zhoršenú pravdepodobnosť úspešného merania v porovnaní s Vidalom-Masanesom-Ciracom. Avšak, je stále lepšia ako v prípade PÚAZ. Pravdepodobnosť úspešného merania pre fázové tlmenie implementované pomocou virtuálneho quditu je rovnaká ako pre Vidalovu-Masanesovu-Ciracovu schému a PÚAZ.
\newline
\textbf{Kľúčové slová:} kvantový procesor, ekvivalencia kvantových procesorov, pravdepodobnostné úložné a získavacie zariadenie

\chapter*{Abstract}
In the first part of the work, the equivalence of quantum deterministic and probabilistic processors was investigated. A programmable quantum processor is a device able to transform input data states in a desired way. Deterministic equivalence as well as three types of probabilistic equivalences - strong, weak, and structural - were defined. Necessary and sufficient conditions for deterministic and structural equivalence of unitarily related processors were discovered. Equivalence of deterministic SWAP processor for two-dimensional data and two-dimensional program space was completely solved. It was found that spans of operators of structurally equivalent processors are identical. Relations between types of individual equivalences were also examined.

In the second part, robustness of probabilistic storing and retrieval device (PSAR), originally optimized for implementing a phase gate, to noise was examined - specifically to depolarization and phase damping. In the case of a depolarizing channel mixed with a unitary channel, the device implements noisy channel with the probability that decreases with an increasing number of times the given channel is applied. In the case of the phase damping channel, the device implements noisy channel with the same probability as the original PSAR device optimized for phase gate. Concrete implementations - through the Vidal-Masanes-Cirac scheme and virtual qudit - were examined. Vidal-Masanes-Cirac gives the same result for both noisy channels which is better than the result from PSAR. Implementation through virtual qudit for depolarization yields worse probability of successful measurement than Vidal-Masanes-Cirac. However, it is still better than the probability for PSAR. Probability of successful measurement obtained for phase damping implemented through virtual qudit is the same as for Vidal-Masanes-Cirac and PSAR.
\newline
\textbf{Keywords:} quantum processor, equivalence of quantum processors, probabilistic storing and retrieval device

\chapter*{Foreword}
My desire was to prepare as complete and as clear a work as I had been able to accomplish. Especially, original derivations might be described in excruciating detail. However, I had myself on my mind, while writing in this way. I know that I despise when vast parts of calculations are skipped (or better yet, left as an exercise for a reader), therefore I had opted for this kind of writing style. However, it is clear to me, that I might not have upheld my ambition in every part of the thesis. And also, it is clear to me, that this kind of writing comes with the risk of causing more confusion than clarity. It is also my longing and wish that the English language, vocabulary, and syntax would come to me in a more ordered, nuanced, rich, and clear flow than it did. However, I express my wish that the work is comprehensible enough.

Quantum processors themselves are very intriguing and peculiar subject as there exist close connection to Stinespring dilation and quantum instruments. Thus, studying these devices can bring about a profound physical epiphanies with deep consequences.

In the university's description of the foreword, one of the suggestions or requirements was to describe methods used in the thesis. I have decided to humor this particular demand by listing my two favorite methods - lying on a bed and thinking and lying in a hot bathtub and reading.

Let me conclude with a poem that has none whatsoever to do with physics, but I like it and the fact it was written more than 25 centuries ago by $\Sigma\alpha\pi\varphi\acute{\omega}$ enhances its haunting beauty and ambition (translated by Aaron Poochigian):
\begin{verse}
	\emph{\hspace*{4cm}I declare\\
		\hspace*{4cm}That later on,\\
		\hspace*{4cm}Even in an age unlike our own,\\
		\hspace*{4cm}Someone will remember who we are.}
\end{verse}

\tableofcontents

\chapter{Introduction}
Quantum computing and information theory is still a relatively young field of science. Its roots can be traced back to Feynman and his question whether it is possible to simulate physics on a computer back in 1982 \cite{SimulatingPhysicsWithComputers}. More attention came with first quantum algorithms promising speed-ups compared to their classical counterparts such as quantum factorization \cite{DiscreteLogarithmsAndFactoring} or quantum search \cite{AFastQuantumMechanicalAlgorithmForDatabaseSearch} as well as important results in quantum cryptography \cite{QuantumCryptographyPublicKeyDistributionAndCoinTossing} and quantum teleportation \cite{TeleportingAnUnknownQuantumStateViaDualClassicalAndEinsteinPodolskyRosenChannels}. Quantum mechanics is also shaped by no-go theorems putting boundaries on what is and what is not possible in quantum physics. Examples of such theorems include no-cloning \cite{ASingleQuantumCannotBeCloned} and no-deleting \cite{ImpossibilityOfDeletingAnUnknownQuantumState}, no-broadcasting \cite{NoncommutingMixedStatesCannotBeBroadcast}, no-hiding theorem \cite{QuantumInformationCannotBeCompletelyHiddenInCorrelationsImplicationsForTheBlackHoleInformationParadox}, Bell's theorem \cite{OnTheEinsteinPodolskyRosenParadox}, or Bell-Kochen-Specker theorem \cite{TheProblemOfHiddenVariablesInQuantumMechanics}.

The heart of a ''classical'' computer is a processor - a device able to manipulate input data. Analogous device for quantum computers was first proposed by Nielsen and Chuang \cite{ProgrammableQuantumGateArrays}. They proved a no-go theorem called no-programming, which states that there exist no quantum processor that would be able to perfectly implement every possible unitary transformation. A way to deal with no-programming theorem is to either implement the desired transformation perfectly, but only with certain probability, or to implement the desired transformation only approximately. In this work, we ask the question when quantum processors are able to implement the same transformations, or to put it in other words, when are they equivalent.

An especially useful tool for optimizing quantum devices is a higher-order formalism called quantum networks that encapsulates description of quantum operations, states, and measurements into a common way of describing them \cite{TheoreticalFrameworkForQuantumNetworks, TheoreticalFrameworkForHigher-orderQuantumTheory, QuantumNetworks:GeneralTheoryAndApplications, QuantumCircuitArchitecture}. Therefore, simplifying manipulation and optimization of mentioned concepts. After all, quantum processors can also be described by this formalism. One of the uses of quantum networks is the investigation of how quantum dynamics can be stored in quantum states via a probabilistic storage and retrieval device \cite{OptimalCloningOfUnitaryTransformation, OptimalProbabilisticStorageAndRetrievalOfUnitaryChannels, OptimalQuantumLearningOfAUnitaryTransformation}. This process is sometimes referred to as quantum learning.  We examine closer the ability of the optimal device for probabilistic storage and retrieval of phase gates to resist the noise.

\chapter{Mathematical Formalism}\label{MF}
Mathematical language is a fundamental stone of every physical theory. Mathematics is the way to describe physics. It has the ability to simplify understanding and work with the theory. Mathematical formalism of quantum information theory is based on linear operators inhabiting Hilbert spaces. It moves description of quantum states from vectors to density operators, evolutions from unitary operators to quantum channels and measurements from von Neumann measurements to positive operator-valued measures. All these objects describe of quantum systems and its transformations into a case where one does not have full understanding of the entire system.

All these operators are used to describe quantum circuits. Their description is further generalized in chapter \ref{QN} into operators describing quantum networks that are created by composition of quantum circuits.

In the last section of this chapter, we also provide a brief introduction into the group theory, especially its representation theory with emphasis on the simplest unitary group $U(1)$.

\section{Hilbert Space}

Quantum system is a physical system that has to be described using quantum mechanics. To each quantum system, there is a corresponding Hilbert space.

Hilbert space is a vector space that generalizes Euclidean vector spaces into infinite dimensions \cite{ACourseInFunctionalAnalysis, OperationalQuantumPhysics}.

\begin{definition}
	Hilbert Space
	\\
	Hilbert space ${\cal{H}}$ is a vector space over complex numbers $\mathbb{C}$ with inner product $\bra{.}\ket{.}: \cal{H} \times \cal{H} \rightarrow \mathbb{C}$, for which the following conditions hold:
	\begin{itemize}
		\item \textcolor{gray}{Positivity}: every vector $\ket{\Psi} \in \cal{H}$ is positive $\bra{\Psi}\ket{\Psi} > 0$, if $\ket{\Psi} \neq 0$.
		\item \textcolor{gray}{Conjugate symmetry}: inner product is conjugate symmetric, i.e., for every pair $\ket{\Psi}, \ket{\Phi} \in \cal{H}$, the following $\bra{\Psi}\ket{\Phi} = \bra{\Phi}\ket{\Psi}^{\ast}$ holds, where $\bra{\Phi}\ket{\Psi}^{\ast}$ denotes complex conjugation of $\braket{\Phi}{\Psi}$.
		\item \textcolor{gray}{Linearity}: inner product is linear in its second argument, i.e., for all $a, b \in \mathbb{C}$ and all $\ket{\Psi}, \ket{\Phi}, \ket{\Xi} \in \cal{H}$, the following $\bra{\Xi}\ket{a\Psi + b\Phi} = a\bra{\Xi}\ket{\Psi} + b\bra{\Xi}\ket{\Phi}$ is true.
	\end{itemize}
\end{definition}
By combining conjugate symmetry property with linearity in the second argument, we obtain anti-linearity in the first argument $\bra{a\Psi + b\Phi}\ket{\Xi} = a^{\ast}\bra{\Psi}\ket{\Xi} + b^{\ast}\bra{\Phi}\ket{\Xi}$.

Two finite-dimensional Hilbert spaces are isomorphic if they have the same dimension. Let us have $\ket{\Psi}, \ket{\Phi} \in {\cal{H}}$ and $\ket{\Psi^{\prime}}, \ket{\Phi^{\prime}} \in {\cal{H}}^{\prime}$, where dimensions of their respective Hilbert spaces are the same. Then, there always exists a bijection such that $\bra{U\Psi}\ket{U\Phi} = \bra{\Psi^{\prime}}\ket{\Phi^{\prime}}$, where $U$ is a unitary operator, which means that $U^{\dagger}U = UU^{\dagger} = \mathbb{1}$.


Allow us to also briefly introduce the set of bounded linear operators on Hilbert space ${\cal{H}}$ denoted by $\cal{L}(\cal{H})$. Norm of a bounded operator $A:\cal{L}(\cal{H}) \rightarrow \cal{L}(\cal{H})$ is given by the following definition $\norm{A} = \{\sup(A\Psi) \mid \Psi \in {\cal{H}}, \norm{\Psi} = 1 \} < \infty$. 

\section{Quantum States}

Experiments can be divided in two parts - preparation and measurement. There can be numerous ways how to prepare the same state. Therefore, quantum state can be viewed as an equivalence class of preparations. Quantum state provides probability distribution for every possible measurement.

Mathematically, quantum states are described by density matrices - Hermitian, positive semi-definite operators with trace equal to one \cite{TheMathematicalLanguageOfQuantumTheoryFromUncertaintyToEntanglement, QuantumComputationAndQuantumInformation}. These operators form state space ${\cal{S}}({\cal{H}}) = \{\varrho \in {\cal{T}}({\cal{H}}) \mid \varrho \geq 0, \Tr(\varrho) = 1 \}$, where $\cal{T}({\cal{H}})$ denotes the set of trace-class operators, i.e., bounded operators with finite trace. Space of states is convex, which means that all states can be expressed by convex combination of extremal elements. Extremal elements are called pure states. States $\varrho$ that are formed by convex combination of other states $\varrho = \lambda\varrho_{1} + (1-\lambda)\varrho_{2}$ are called mixed. Mixed state, composed of two other states, can be interpreted as having two distinct preparation devices between which we switch during preparation.

Pure states also exhibit one peculiar property called quantum superposition, which has no classical counterpart. Quantum superposition means that pure state $\ket{\Psi}$ can be expressed through linear combination of other pure states $\ket{\Psi} = \sum_{i} c_{i} \ket{\varphi_{i}}$ with $\sum_{i} \abs{c_{i}}^{2} = 1$.

Let us state the relation between decompositions of mixed states, which shall be used later in proof of proposition \ref{equalityOfChannels}.
\begin{proposition}\label{relOfDecom}
	If the decomposition of a mixed state into pure states is $\varrho = \sum_{j=1}^{N} p_{j}\dyad{\Phi_{j}}$, then all its decompositions into pure states are of the form $\varrho = \sum_{k=1}^{M} q_{k}\dyad{\varphi_{k}}$, where for all vectors $\ket{\varphi_{1}}, \cdots, \ket{\varphi_{M}}$ and $q_{1}, \cdots, q_{M}$, the relations $\sqrt{p_{j}} \ket{\Phi_{j}} = \sum_{k=1}^{M} u_{jk}\sqrt{q_{k}}\ket{\varphi_{k}}$ are satisfied and for complex numbers $u_{jk}$, the following $\sum_{j}u_{jk}u_{jk^{\prime}}^{\ast} = \delta_{kk^{\prime}}$ holds.
\end{proposition}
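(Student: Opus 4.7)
The plan is to prove this through the purification (Schrödinger--HJW) technique, since the stated relation between decompositions is precisely the statement that any two ensemble realizations of a density operator are connected by a rectangular matrix with orthonormal columns. The overall strategy will be: (i) represent each decomposition as a vector on a larger Hilbert space (a purification of $\varrho$); (ii) show the two purifications are related by a unitary on the ancillary system; (iii) read off the coefficients $u_{jk}$ from this unitary and verify the orthonormality condition.

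First I would introduce an auxiliary Hilbert space $\mathcal{H}_{B}$ of dimension $D=\max(N,M)$ with orthonormal basis $\{\ket{e_{\ell}}\}_{\ell=1}^{D}$, and construct the two purifications
\begin{equation*}
\ket{\Psi_{1}} = \sum_{j=1}^{N}\sqrt{p_{j}}\,\ket{\Phi_{j}}\otimes\ket{e_{j}}, \qquad \ket{\Psi_{2}} = \sum_{k=1}^{M}\sqrt{q_{k}}\,\ket{\varphi_{k}}\otimes\ket{e_{k}}.
\end{equation*}
A direct computation of $\Tr_{B}\dyad{\Psi_{1}}$ and $\Tr_{B}\dyad{\Psi_{2}}$ gives $\varrho$ in both cases, so $\ket{\Psi_{1}}$ and $\ket{\Psi_{2}}$ are two purifications of the same density operator on a common ancilla. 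Then I would invoke the standard fact that any two purifications of a density operator with the same ancilla space are related by a unitary on that ancilla: there exists a unitary $U$ on $\mathcal{H}_{B}$ such that $(\mathbb{1}\otimes U)\ket{\Psi_{2}} = \ket{\Psi_{1}}$.

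Expanding $U\ket{e_{k}}=\sum_{j}u_{jk}\ket{e_{j}}$ and substituting yields
\begin{equation*}
\sum_{j}\Bigl(\sum_{k}u_{jk}\sqrt{q_{k}}\,\ket{\varphi_{k}}\Bigr)\ket{e_{j}} \;=\; \sum_{j}\sqrt{p_{j}}\,\ket{\Phi_{j}}\ket{e_{j}},
\end{equation*}
and matching coefficients of the independent vectors $\ket{e_{j}}$ gives the claimed identity $\sqrt{p_{j}}\ket{\Phi_{j}} = \sum_{k}u_{jk}\sqrt{q_{k}}\ket{\varphi_{k}}$. The orthonormality of the columns of $U$, i.e.\ $U^{\dagger}U=\mathbb{1}$, is exactly the condition $\sum_{j}u_{jk}u_{jk'}^{\ast}=\delta_{kk'}$ once the indices are read off, which closes the proof.

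The main obstacle, in my view, is bookkeeping rather than anything conceptually hard. Since $N$ and $M$ may differ, one must pad the shorter ensemble with zero-weight terms so both purifications live in the same tensor-product space, and then be careful that the corresponding rows/columns of $u_{jk}$ arising from those padded indices do not affect the physical relation. The only nontrivial input is the purification-uniqueness lemma itself; if one does not want to assume it, it is easily re-derived using the singular value decomposition of the matrices $A_{ij}=\sqrt{p_{j}}\,(\ket{\Phi_{j}})_{i}$ and $B_{ik}=\sqrt{q_{k}}\,(\ket{\varphi_{k}})_{i}$ and the observation $AA^{\dagger}=\varrho=BB^{\dagger}$, but otherwise the argument is essentially a one-line consequence of that fact.
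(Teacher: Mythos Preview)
Your argument via purifications is correct and is precisely the Schr\"odinger--HJW approach. Note, however, that the paper does not give its own proof of this proposition at all: it simply cites the textbook of Heinosaari and Ziman, so there is no in-paper argument to compare against. The purification proof you outline is in fact the standard one found in that reference, so your proposal matches what the cited source does; the only cosmetic point is that the padding of the shorter ensemble with zero-weight terms (which you flag as bookkeeping) is indeed all that is needed to make the unitary $U$ genuinely square and hence satisfy $\sum_{j}u_{jk}u_{jk'}^{\ast}=\delta_{kk'}$ on the nose.
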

Proof of the previous proposition can be found in \cite{TheMathematicalLanguageOfQuantumTheoryFromUncertaintyToEntanglement}.

\section{Effects}

Effects describe simplest measuring devices that have outcomes \textit{yes} or \textit{no}. Outcomes of measurements are called events. Therefore, effect describes whether the event (associated with a given effect) has happened during the measurement or not. Whether the device has detected the outcome \textit{yes} or \textit{no}. Effect is a collection of all \textit{yes} or \textit{no} events that can happen during various experiments, where the probabilities corresponding to these events remain constant for all quantum states \cite{TheMathematicalLanguageOfQuantumTheoryFromUncertaintyToEntanglement}.

Effects $E$ are affine maps from state space into the interval of real numbers $E(\varrho): {\cal{S}} \rightarrow \left[0,1\right]$. They can be represented by bounded Hermitian operators $\widehat{E}$. In the following discussion we shall drop the hat and denote operator corresponding to a given effect simply by $E$. Effects attribute probabilities to quantum states through Born rule $\Tr(\varrho E)$, which gives probability of an event, corresponding to effect $E$, happening during measurement, after the preparation of state $\varrho$. It is an affine transformation, because it must preserve attributed probability for convex state $E(\varrho) = E(\lambda \varrho_{1} + (1 - \lambda) \varrho_{2}) = \lambda E(\varrho_{1}) + (1-\lambda)E(\varrho_{2})$. Because effects attribute probabilities to states, certain restrictions are put on them. If effects $E_{i}$ for $i = 1, \cdots, N$ describe all possible events that can occur in a given measurement, then $\sum_{i} E_{i} = \mathbb{1}$, because probability of some event happening must be exactly $1$. And because probability of an individual event to happen cannot exceed $1$, restriction on every individual effect is the following $\mathbf{0} \leq E_{i} \leq \mathbb{1}$, where $\mathbf{0}$ denotes zero effect assigning $0$ probability of happening to every state and $\mathbb{1}$ being identity effect assigning probability $1$ to every state.

Effects form a set on Hilbert space ${\cal{E}}({\cal{H}}) = \{E \in {\cal{L}}_{s}({\cal{H}}) \mid \mathbf{0} \leq E \leq \mathbb{1} \}$, where ${\cal{L}}_{s}({\cal{H}})$ denotes set of bounded Hermitian operators.


\section{Observables}

Observables are properties of a quantum system that can be measured. Measurement is described by a collection of all possible events, which are represented by effects, which can happen during measurement.
Therefore, observables are collections of effects.

Observables can be described by positive operator-valued measures (POVMs) \cite{OperationalQuantumPhysics, TheMathematicalLanguageOfQuantumTheoryFromUncertaintyToEntanglement, QuantumComputationAndQuantumInformation}. Let us now define POVMs for finite dimensions \cite{PreskillLectureNotes}:
\begin{definition}
	Positive Operator-Valued Measure
	\\
	Positive operator-valued measure $M$ is formed by measurement operators $M_{i}$, for which the following conditions are fulfilled:
	\begin{itemize}
		\item Hermiticity: $M_{i} = M_{i}^{\dagger}$,
		\item positivity: $M_{i} \geq 0$,
		\item completeness: $\sum_{i}M_{i} = \mathbb{1}$.
	\end{itemize}
\end{definition}

Probability of measuring an outcome $i$ for a state $\varrho$ is given by $\Tr(\varrho M_{i})$. Probabilities must be real and positive numbers; therefore, one asks for positivity and Hermiticity of elements of POVMs $\{M_{i}\}$. Completeness means that probability of observing an observable during measurement is one.

Let us also introduce special case of measurement.
\begin{definition}
	Projection-valued Measure
	\\
	POVM $A(x)$ is a projection-valued measure (PVM) if for all $x \in X$, where $X$ is countable set of all possible outcomes, $A(x)$ is a projection.
\end{definition}
From Neumark theorem, we know that POVMs can be obtained from projective measurements on a larger Hilbert space \cite{OperationalQuantumPhysics, ProbabilisticAndStatisticalAspectsOfQuantumTheory}.
\begin{theorem}\label{neumark}
	Neumark
	\\
	For every POVM $\{F_{j}\}$ on Hilbert space ${\cal{H}}$, there exists a Hilbert space ${\cal{H}}_{env}$ with state $\xi$ and a PVM $\{E_{j}\}$ such that
	\begin{align*}
		\Tr(\varrho F_{j}) = \Tr\left[(\varrho \otimes \xi) E_{j}\right],
	\end{align*}
	for every $\varrho \in {\cal{H}}$.
\end{theorem}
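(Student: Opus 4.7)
The plan is to construct an ancilla ${\cal{H}}_{env}$, a pure state $\xi$ on it, and a unitary $U$ on ${\cal{H}} \otimes {\cal{H}}_{env}$ whose conjugation of rank-one projectors of the ancilla yields the desired PVM. The key idea is of Stinespring type: exploit the square roots $\sqrt{F_j}$, which exist by the spectral theorem since each $F_j$ is Hermitian and positive, to build an isometry whose components encode the POVM.

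Concretely, let $N$ be the number of outcomes of $\{F_j\}$, fix an orthonormal basis $\{\ket{e_j}\}_{j=1}^{N}$ of an $N$-dimensional space ${\cal{H}}_{env}$, and set $\xi = \dyad{e_1}$. First, I would define the linear map $V: {\cal{H}} \rightarrow {\cal{H}} \otimes {\cal{H}}_{env}$ by $V\ket{\Psi} = \sum_{j} \sqrt{F_j}\ket{\Psi} \otimes \ket{e_j}$ and verify, using orthonormality of $\{\ket{e_j}\}$ and completeness of the POVM, that $V^{\dagger}V = \sum_j F_j = \mathbb{1}$; hence $V$ is an isometry. Second, I would extend $V$ to a unitary $U$ on ${\cal{H}} \otimes {\cal{H}}_{env}$ by requiring $U(\ket{\Psi}\otimes \ket{e_1}) = V\ket{\Psi}$ and mapping an orthonormal basis of the orthogonal complement of ${\cal{H}}\otimes \ket{e_1}$ to an orthonormal basis of the orthogonal complement of $\mathrm{Im}(V)$. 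Third, I would set $E_j = U^{\dagger}(\mathbb{1}\otimes \dyad{e_j})U$; each $E_j$ is an orthogonal projector because $U$ is unitary and $\mathbb{1}\otimes \dyad{e_j}$ is a projector, and $\sum_j E_j = U^{\dagger}U = \mathbb{1}$ since $\sum_j \dyad{e_j} = \mathbb{1}_{env}$. Thus $\{E_j\}$ is a PVM.

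Finally, verification of the claimed identity reduces to a short trace manipulation:
\begin{align*}
\Tr\!\left[(\varrho\otimes \xi)E_j\right] = \Tr\!\left[U(\varrho\otimes \dyad{e_1})U^{\dagger}(\mathbb{1}\otimes \dyad{e_j})\right] = \Tr\!\left[\sqrt{F_j}\,\varrho\,\sqrt{F_j}\right] = \Tr(\varrho F_j),
\end{align*}
where I use cyclicity of the trace together with the explicit form $V\varrho V^{\dagger} = \sum_{i,k}\sqrt{F_i}\varrho\sqrt{F_k}\otimes \dyad{e_i}{e_k}$ and $\braket{e_i}{e_j}\braket{e_j}{e_k} = \delta_{ij}\delta_{jk}$ to collapse the sum to the diagonal.

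The main obstacle I anticipate is not computational but conceptual: cleanly justifying the extension of the isometry $V$ to a unitary $U$. In the finite-dimensional setting adopted in this section this is elementary dimension counting, but the argument is the only place where one steps outside purely algebraic manipulation; everything else---existence of $\sqrt{F_j}$, isometry of $V$, PVM properties, and the final trace computation---follows mechanically from the definitions and facts already recorded earlier in the chapter.
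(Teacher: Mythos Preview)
Your proposal is correct and follows the standard Stinespring-type dilation argument. Note, however, that the paper does not actually prove this theorem: it merely states it, cites \cite{OperationalQuantumPhysics, ProbabilisticAndStatisticalAspectsOfQuantumTheory} for the proof, and then remarks that the PVM can be chosen of the form $E_{j} = U^{\dagger}(\mathbb{1}\otimes P_{j})U$ with $U$ unitary and $P_{j}$ a projector. Your construction realizes exactly this form with $P_{j}=\dyad{e_{j}}$, so your write-up is fully compatible with the paper's presentation and simply supplies what the paper omits.
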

PVM $E_{j}$ can be chosen to be in the following form $E_{j} = U^{\dagger} (\mathbb{1} \otimes P_{j}) U$, where $U$ is unitary operator and $P_{j}$ is projector.

\section{Channels}

Dynamics of quantum system is described by linear trace preserving completely positive map called quantum channel. It can be imagined as a device to transfer quantum information. Quantum information is encoded in quantum states, that means, that quantum channel takes as input a quantum state and also outputs (changed) quantum state. Quantum channels are special case of transformations called quantum operations, which are linear trace non-increasing completely positive maps.

Linearity is required, because of convex decomposition of state $\varrho = \sum_{i}\lambda_{i}\varrho_{i}$. Then the evolution described by channel ${\cal{C}}$ of this state is ${\cal{C}}(\varrho) = {\cal{C}}(\sum_{i}\lambda_{i}\varrho_{i})$. But channel ${\cal{C}}$ can also be applied to every individual state $\varrho_{i}$ from convex decomposition. Therefore, we require $\sum_{i}\lambda_{i}{\cal{C}}(\varrho_{i}) = {\cal{C}}(\sum_{i}\lambda_{i}\varrho_{i})$ to be true and the requirement for linearity follows. Trace cannot increase after transformation, because resulting probability has to always be $1$. Quantum states are positive operators; therefore, we require channel to be also positive and preserve the positivity of states. But it is not enough, because channel can act only on part of the whole Hilbert space. Let us have channel $\cal{C}$ acting on Hilbert space ${\cal{H}}_{a}$. Let us expand given space by another space ${\cal{H}}_{b}$, while requiring that the expanded channel ${\cal{C}} \otimes \mathbb{1}_{b}$ acts non-trivially only on the space ${\cal{H}}_{a}$. Therefore, we desire for this expanded operator to also preserve positivity. But certain positive operators, such as partial transpose, fail to remain positive on the whole expanded space. Therefore, we require channel to be completely positive.

Channels can be realized by isometries on a larger Hilbert space through Stinespring dilation \ref{Stinespring}, as will be shown in the \ref{QN}th chapter.

\subsection{Operator-sum Representation}

We shall introduce representation of channels that shall be used in later chapters. Let us have completely positive linear operators $A_{1}, \cdots, A_{N}$, for which $\sum_{i} A_{i}^{\dagger}A_{i} = \mathbb{1}$ holds. Therefore, we are allowed to combine such operators into a channel \cite{TheMathematicalLanguageOfQuantumTheoryFromUncertaintyToEntanglement}.
\begin{proposition} \label{opSumRepr}
	Operator-sum Representation
	\\
	Let us have finite set of bounded operators $A_{1}, \cdots, A_{N}$. Then, they form quantum channel $\cal{C}$ if:
	\begin{align}\label{Kraus}
		{\cal{C}}(\varrho) = \sum_{i} A_{i}\varrho A_{i}^{\dagger}, \qquad
		\sum_{i} A_{i}^{\dagger} A_{i} = \mathbb{1}.
	\end{align}
\end{proposition}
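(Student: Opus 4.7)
The plan is to verify directly that the prescription $\mathcal{C}(\varrho) = \sum_i A_i \varrho A_i^{\dagger}$ satisfies the three defining properties of a quantum channel as laid out in the preceding section: linearity, trace preservation, and complete positivity. Since the completeness relation $\sum_i A_i^{\dagger} A_i = \mathbb{1}$ is exactly the input assumption, the proof is essentially a bookkeeping exercise, but each property deserves a separate line.

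First I would note that linearity is immediate: the map $\varrho \mapsto A_i \varrho A_i^{\dagger}$ is linear for each fixed $i$ because multiplication by operators is bilinear, and a finite sum of linear maps is linear. Next, for trace preservation, I would apply the cyclicity of the trace to each Kraus term, writing
\begin{align*}
    \Tr\!\left[\mathcal{C}(\varrho)\right] = \sum_{i} \Tr(A_i \varrho A_i^{\dagger}) = \sum_{i} \Tr(A_i^{\dagger} A_i \varrho) = \Tr\!\left[\Big(\sum_{i} A_i^{\dagger} A_i\Big)\varrho\right] = \Tr(\mathbb{1}\, \varrho) = \Tr(\varrho),
\end{align*}
where the interchange of sum and trace is legitimate because the sum is finite.

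The only property that requires genuine thought is complete positivity. My plan here is the standard Kraus-style argument. First, if $\varrho \geq 0$, I would write $\varrho = B^{\dagger} B$ for some operator $B$ (for instance $B = \sqrt{\varrho}$), so that
\begin{align*}
    \bra{\psi} A_i \varrho A_i^{\dagger} \ket{\psi} = \bra{\psi} A_i B^{\dagger} B A_i^{\dagger} \ket{\psi} = \norm{B A_i^{\dagger} \ket{\psi}}^2 \geq 0
\end{align*}
for every $\ket{\psi}$, so each term is positive and hence $\mathcal{C}(\varrho) \geq 0$. To upgrade to complete positivity, I would introduce an arbitrary auxiliary Hilbert space $\mathcal{H}_b$ and check that the extended operators $\tilde{A}_i = A_i \otimes \mathbb{1}_b$ again satisfy $\sum_i \tilde{A}_i^{\dagger} \tilde{A}_i = \mathbb{1} \otimes \mathbb{1}_b$, so that $\mathcal{C} \otimes \mathbb{1}_b$ is itself of Kraus form with the same structural assumptions; by the previous paragraph it is positive, which is exactly complete positivity of $\mathcal{C}$.

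The main obstacle, if any, is purely notational rather than mathematical: the real content of the statement is positivity of each summand, and the Kraus form is essentially engineered to make complete positivity transparent via the tensor-product trick. A careful reader might also want me to comment that the conditions imposed in the proposition are sufficient but not necessary to realise a given channel uniquely, since different Kraus families can represent the same $\mathcal{C}$ (a fact to be invoked later, in the spirit of Proposition \ref{relOfDecom}); however, only the sufficient direction is asserted here, so I would leave the uniqueness question for the subsequent discussion.
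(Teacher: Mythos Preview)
Your proof is correct and complete: you verify linearity, trace preservation via cyclicity of the trace together with the completeness relation, and complete positivity by the standard tensor-product argument showing that $\mathcal{C}\otimes\mathcal{I}_b$ retains Kraus form. The paper itself does not supply a proof of this proposition; it is stated as a known structural fact and immediately followed by the definition of Kraus operators, with the reader presumably referred to the cited textbook \cite{TheMathematicalLanguageOfQuantumTheoryFromUncertaintyToEntanglement}. So there is nothing to compare against, and your argument is the standard one any such reference would give.
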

Operators, that describe channel as in equation (\ref{Kraus}), are called Kraus operators.

Further, in chapter \ref{QP}, when we discuss equivalence of quantum processors, the following lemma giving condition on when two distinct sets of Kraus operators define the same channel, shall prove itself to be particularly useful.
\begin{proposition}\label{equalityOfChannels}
	Let us have two finite sets of operators $\{A_{1}, \cdots, A_{N}\}$ and $\{B_{1}, \cdots, B_{M}\}$. They describe the same operation if and only if for every $i$, the following $A_{i} = \sum_{j} u_{ij}B_{j}$ and $\sum_{i} u_{ij}u_{ij^{\prime}}^{\ast} = \delta_{jj^{\prime}}$ hold.
\end{proposition}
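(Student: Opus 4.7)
The plan is to prove the two directions separately, using the maximally entangled state trick to reduce the nontrivial direction to the already stated Proposition \ref{relOfDecom} on mixed state decompositions.

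For the easy direction, I would assume $A_i = \sum_j u_{ij} B_j$ with $\sum_i u_{ij} u_{ij'}^{\ast} = \delta_{jj'}$ and compute directly:
\begin{align*}
\sum_i A_i \varrho A_i^{\dagger} = \sum_i \sum_{j,j'} u_{ij} u_{ij'}^{\ast}\, B_j \varrho B_{j'}^{\dagger} = \sum_{j,j'} \delta_{jj'}\, B_j \varrho B_{j'}^{\dagger} = \sum_j B_j \varrho B_j^{\dagger},
\end{align*}
so both sets implement the same channel on every input state $\varrho$.

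For the hard direction, I would pass to the Choi picture. Fix a basis $\{\ket{k}\}$ and define the (unnormalized) maximally entangled vector $\ket{\Omega} = \sum_k \ket{k}\ket{k}$ on an enlarged Hilbert space. Assuming $\sum_i A_i \varrho A_i^{\dagger} = \sum_j B_j \varrho B_j^{\dagger}$ for all $\varrho$, apply both Kraus realizations to $\dyad{\Omega}$ (on one tensor factor) and set $\ket{\psi_i} = (A_i \otimes \mathbb{1})\ket{\Omega}$, $\ket{\phi_j} = (B_j \otimes \mathbb{1})\ket{\Omega}$. Then $\sum_i \dyad{\psi_i} = \sum_j \dyad{\phi_j}$, i.e., we have two pure-state decompositions of the same (Choi) operator. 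Invoking Proposition \ref{relOfDecom}, after normalizing by setting $p_i = \braket{\psi_i}{\psi_i}$, $\ket{\Phi_i} = \ket{\psi_i}/\sqrt{p_i}$ and analogously for $\ket{\phi_j}$, we obtain complex numbers $u_{ij}$ with $\sum_i u_{ij} u_{ij'}^{\ast} = \delta_{jj'}$ such that $\ket{\psi_i} = \sum_j u_{ij} \ket{\phi_j}$.

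Finally, I would use the standard fact that the correspondence $X \mapsto (X \otimes \mathbb{1})\ket{\Omega} = \sum_k X\ket{k} \otimes \ket{k}$ is a linear bijection between operators and vectors on the enlarged space; hence $\ket{\psi_i} = \sum_j u_{ij} \ket{\phi_j}$ is equivalent to $A_i = \sum_j u_{ij} B_j$, which is the claim. The main obstacle is purely bookkeeping: the two sets may have different cardinalities $N \neq M$, and Proposition \ref{relOfDecom} is stated for normalized pure states with explicit probability weights, so I would handle zero Kraus operators (vanishing $\ket{\psi_i}$) by padding both collections with zeros up to a common size $\max(N,M)$ so that the freedom-in-decomposition proposition applies uniformly and yields the required rectangular coefficient matrix with orthonormal rows.
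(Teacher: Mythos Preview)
Your proof is correct and takes a genuinely different route from the paper's. For the nontrivial direction, the paper fixes an arbitrary unit vector $\ket{\Psi}$, applies Proposition~\ref{relOfDecom} to the two pure-state decompositions of $\varrho' = \sum_i A_i\dyad{\Psi}A_i^\dagger$, obtains $A_i\ket{\Psi} = \sum_j u_{ij} B_j\ket{\Psi}$, and then argues that since this holds for every $\ket{\Psi}$ the operator identity follows. Your approach instead applies the channel once to the maximally entangled state $\dyad{\Omega}$, invokes Proposition~\ref{relOfDecom} a single time on the Choi operator, and then uses the linear bijection $X \mapsto (X\otimes\mathbb{1})\ket{\Omega}$ to read off the operator relation directly. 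The advantage of your method is that it sidesteps a subtlety the paper glosses over: when one varies $\ket{\Psi}$, the coefficients $u_{ij}$ furnished by Proposition~\ref{relOfDecom} could in principle depend on $\ket{\Psi}$, so concluding a single operator identity requires an extra argument. By working with the Choi state you get one fixed matrix $(u_{ij})$ immediately. Your explicit handling of $N\neq M$ and vanishing Kraus operators via padding is also more careful than the paper, which leaves this implicit.
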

\begin{proof}
	At first, let us assume that $A_{i} = \sum_{j}u_{ij}B_{j}$ and $\sum_{i} u_{ij}u_{ij^{\prime}}^{\ast} = \delta_{jj^{\prime}}$. By direct calculation we can verify that $B$'s describe the same channel: $\sum\limits_{i} A_{i}\varrho A_{i}^{\dagger} = \sum\limits_{i,j,j^{\prime}}u_{ij} B_{j} \varrho u_{ij^{\prime}}^{\ast} B_{j^{\prime}}^{\dagger} = \sum\limits_{i,j,j^{\prime}} u_{ij}u_{ij^{\prime}}^{\ast} B_{j} \varrho B_{j^{\prime}}^{\dagger} = \sum\limits_{j} B_{j} \varrho B_{j}^{\dagger}$.
	
	Now, let us assume that $\sum_{i} A_{i} \varrho A_{i}^{\dagger} = \sum_{j} B_{j} \varrho B_{j}^{\dagger}$. Let us choose $\varrho =  \dyad{\Psi}$, where $\ket{\Psi}$ is a unit vector. Then we obtain that $\sum_{i}A_{i}\dyad{\Psi}A_{i}^{\dagger} = \sum_{j}B_{j}\dyad{\Psi}B_{j}^{\dagger}$. We can view this expression as a convex decomposition of a mixed state $\varrho^{\prime} = \sum_{i}A_{i}\dyad{\Psi}A_{i}^{\dagger} = \sum_{j}B_{j}\dyad{\Psi}B_{j}^{\dagger}$ and from lemma \ref{relOfDecom}, we obtain that $A_{i}\ket{\Psi} = \sum_{j}u_{ij}B_{j}\ket{\Psi}$ with $\sum_{i} u_{ij} u_{ij^{\prime}}^{\ast} = \delta_{jj^{\prime}}$. And because this holds for all the unit vectors, we arrive at the expression from the current lemma $A_{i} = \sum_{j}u_{ij}B_{j}$ with $\sum_{i}u_{ij}u_{ij^{\prime}}^{\ast} = \delta_{jj^{\prime}}$.
\end{proof}

The time to introduce two channels used in chapter \ref{QN}, when discussing robustness of PSAR regarding noise, has arrived.

\subsection{Depolarizing Channel}

Depolarizing channel maps input state into convex combination of itself and total mixture:
\begin{align}\label{Dchannel}
	{\cal{D}} = p \frac{\mathbb{1}}{d} + (1-p) \varrho,
\end{align}
where $0 \leq p \leq 1$ is a probability of depolarization of a state $\varrho$.

\subsection{Phase Damping Channel}

Phase damping channel causes the loss of quantum information, the decoherence from initial quantum superposition into a total mixture of orthogonal states over time. Kraus operators defining such a channel are the following \cite{QuantumComputationAndQuantumInformation}:
\begin{align*}
	E_{0} = \left(\begin{matrix}
		1 & 0\\
		0 & \sqrt{1 - \lambda}
	\end{matrix}\right) \qquad
	E_{1} = \left(\begin{matrix}
		0 & 0\\
		0 & \sqrt{\lambda}
	\end{matrix}\right),
\end{align*}
where $\lambda \in \left[0,1\right]$. Let us have quantum state $\varrho = a\ket{0} + b\ket{1}$, then the resulting state after applying phase damping channel ${\cal{N}}_{P}$ is:
\begin{align}\label{PDchannel}
	{\cal{P}} (\varrho) = \left(\begin{matrix}
		\abs{a}^{2} & ab^{\ast}\sqrt{1-\lambda}\\
		a^{\ast}b\sqrt{1-\lambda} & \abs{b}^{2}
	\end{matrix}\right).
\end{align}

\section{Instruments}

Quantum instruments ${\cal{I}}$ are devices that can have both quantum and classical outcome. They are formed by a set of quantum operations ${\cal{O}}_{i}$ such that these operations sum up to quantum channel $\sum_{i} {\cal{O}}_{i} = \cal{C}$. These devices take as input a quantum state and have two outputs - quantum state after the measurement and the outcome of the measurement. Let us give definition for discrete instrument, as we have only encountered discrete instruments in our work.
\begin{definition}
	Quantum Instrument
	\\
	Let $X$ be a countable set of all possible outcomes of measurement. Then, quantum instrument ${\cal{I}}$ is a set of quantum operations $\{{\cal{O}}_{i}\}_{i \in X}$ such that these sum up to linear completely positive trace preserving map $\sum_{i} {\cal{O}}_{i} = {\cal{C}}$.
\end{definition}

From point of view of an experiment, instrument describes a measuring device with multiple distinguishable outcomes. Let us imagine, that the outcome of measurement is $i$, then the state after measurement is given by $\varrho_{i}^{\prime} = \frac{{\cal{O}}_{i}(\varrho)}{\Tr\left[{\cal{O}}_{i}(\varrho)\right]}$, where $\varrho$ is the original state before measurement. Probability of measuring outcome $i$ is $\Tr\left[{\cal{O}}_{i}(\varrho)\right]$.

\section{Group Theory}

Symmetries in physics are mirrored in transformations of a system. Groups are mathematical objects used to describe symmetric transformations and provide tools for working with symmetries \cite{DifferentialGeometryAndLieGroupsForPhysicists, RepresentationTheoryAFirstCourse}. Quantum-mechanical system can be changed by a unitary transformation.
\begin{definition}
	Group
	\\
	Group $G$ is a set with binary operation $\circ$ on the set, such that the following conditions hold:
	\begin{itemize}
		\item \textcolor{gray}{closure}: for all $g_{1}, g_{2} \in G$, it holds that $g_{1} \circ g_{2} \in G$,
		\item \textcolor{gray}{associativity}: $(g_{1} \circ g_{2}) \circ g_{3} = g_{1} \circ (g_{2} \circ g_{3})$ holds for all $g_{1}, g_{2}, g_{3} \in G$,
		\item \textcolor{gray}{identity}: there exists a unique identity element $e \in G$ such that for all $g \in G$ the following is true $g \circ e = e \circ g = g$,
		\item \textcolor{gray}{inverse}: for every element $g \in G$, there exists a unique inverse element $g^{-1} \in G$ such that $g \circ g^{-1} = g^{-1} \circ g = e$.
	\end{itemize}
\end{definition}

\subsection{Representation Theory}

Representation of group means that to every element of group, a linear operator from some vector space $V$ is assigned. In other words, every element of group is represented by linear operator.
\begin{definition}
	Group Representation
	\\
	A representation of group $G$ in a vector space $V$ is given by homomorphism $\rho:G \rightarrow GL(V)$.
\end{definition}
Group $GL(V)$ is a group of all bijective linear transformations $V \rightarrow V$.  Homomorphism is a map between two groups $\rho:G \rightarrow GL(V)$ such that $\rho(g_{1} \circ g_{2}) = \rho(g_{1}) \bullet \rho(g_{2})$ holds for every $g_{1}, g_{2} \in G$, where $\circ$ is group operation in group $G$ and $\bullet$ in $GL(V)$.

Let us have representation $\rho: G \rightarrow GL(V)$ and subspace $W \subseteq V$. Then $W$ is called \emph{invariant} subspace if for every element $w \in W$, $\rho(g)w \in W$ is from the same subspace. That means, that representation does not "take" any element out of a subspace. Subspaces that are empty $W = \emptyset$ or are equal to the original space $W = V$, are called trivial. \emph{Irreducible} representation (irrep) only has trivial subspaces. Unitary group can always be represented by irreducible representations \cite{RepresentationTheoryAFirstCourse}.

Let us have two different representations $\rho_{1}$ and $\rho_{2}$ of the same group $G$ in vector spaces $V_{1}$ and $V_{2}$, respectively. Linear transformation $A:V_{1} \rightarrow V_{2}$ is called \emph{intertwining operator} if $\rho_{1}(g) A = A \rho_{2}(g)$ holds. Two representations are \emph{equivalent} if $\rho_{2}(g) = A\rho_{1}(g)A^{-1}$.

We shall state Schur's lemma (proof can be found in \cite{RepresentationTheoryAFirstCourse}), as it is used further in description of irreps of group $U(1)$.
\begin{lemma}\label{Schur}
	Schur's Lemma
	\\
	Let $\rho_{1}$ and $\rho_{2}$ be two irreps of group $G$, i.e., $\rho_{1}:G \rightarrow GL(V_{1})$ and $\rho_{2}:G \rightarrow GL(V_{2})$, where $V_{1}$ and $V_{2}$ are complex vector spaces. Let $A:V_{1} \rightarrow V_{2}$ be an intertwining operator and if:
	\begin{itemize}
		\item $\rho_{1}$ and $\rho_{2}$ are not equivalent, then $A = 0$,
		\item $\rho_{1}$ and $\rho_{2}$ are equivalent, then $A = \lambda \mathbb{1}$ for some $\lambda \in \mathbb{C}$.
	\end{itemize}
\end{lemma}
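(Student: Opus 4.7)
The plan is to prove both halves by exploiting invariance of the kernel and image of $A$ together with irreducibility, and then to use the algebraic closure of $\mathbb{C}$ for the second part.

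First I would set up the two natural subspaces attached to $A$: the kernel $\ker(A) \subseteq V_1$ and the image $\mathrm{im}(A) \subseteq V_2$. Using the intertwining relation, a direct check shows that $\ker(A)$ is $\rho_1$-invariant and $\mathrm{im}(A)$ is $\rho_2$-invariant: if $v \in \ker(A)$, then applying the intertwining identity to $\rho_1(g)v$ yields a vector mapped to $\rho_2(g) \cdot 0 = 0$, and if $w = Av$, then $\rho_2(g) w = A(\rho_1(g) v) \in \mathrm{im}(A)$. Since $\rho_1$ and $\rho_2$ are irreducible, each of these subspaces must be trivial or the whole space.

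For the first bullet, I would argue contrapositively. If $A \neq 0$, then $\ker(A) \neq V_1$, forcing $\ker(A) = \{0\}$, and $\mathrm{im}(A) \neq \{0\}$, forcing $\mathrm{im}(A) = V_2$. Hence $A$ is a bijective linear map realizing an equivalence $\rho_2(g) = A\rho_1(g) A^{-1}$. Therefore if $\rho_1$ and $\rho_2$ are \emph{not} equivalent, the only possibility is $A = 0$.

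For the second bullet, I would reduce to the case $V_1 = V_2$ and $\rho_1 = \rho_2$ by absorbing the equivalence isomorphism into $A$, so that $A$ commutes with every $\rho_1(g)$. Since $V_1$ is a finite-dimensional complex vector space, $A$ has at least one eigenvalue $\lambda \in \mathbb{C}$. The operator $A - \lambda \mathbb{1}$ also commutes with every $\rho_1(g)$ and therefore is itself an intertwining operator, but by construction it has a nontrivial kernel (the $\lambda$-eigenspace). Applying the first part of the lemma to $A - \lambda \mathbb{1}$ rules out the isomorphism case, so $A - \lambda \mathbb{1} = 0$, i.e. $A = \lambda \mathbb{1}$.

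The main obstacle is really conceptual rather than technical: the second part only makes literal sense once one identifies $V_1$ with $V_2$ via the equivalence, so one must be careful about what ``$A = \lambda \mathbb{1}$'' means when $V_1$ and $V_2$ are only abstractly isomorphic. Everything else is a straightforward application of invariance plus the fact that $\mathbb{C}$ is algebraically closed, which is why the statement is specifically about complex vector spaces.
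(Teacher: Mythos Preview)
Your argument is correct and is precisely the standard textbook proof of Schur's Lemma. The paper itself does not supply a proof of this lemma; it only states the result and refers the reader to Fulton and Harris, \emph{Representation Theory: A First Course}, where the proof given is essentially the one you have written (invariance of $\ker A$ and $\mathrm{im}\,A$, then the eigenvalue trick over $\mathbb{C}$). Your remark about the need to identify $V_1$ with $V_2$ before ``$A=\lambda\mathbb{1}$'' makes sense is well taken and in fact sharper than the paper's own phrasing; the implicit assumption of finite dimension in the eigenvalue step is also standard in this context.
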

Let us give corollary of Shur's lemma about commutative (i.e., abelian) groups. Commutativity in group $G$ means that $g_{1} \circ g_{2} = g_{2} \circ g_{1}$ for all $g_{1}, g_{2} \in G$.
\begin{corollary}\label{schurCor}
	If $G$ is an abelian group, then all its irreps are one-dimensional.
\end{corollary}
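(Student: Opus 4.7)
The plan is to apply Schur's lemma (Lemma \ref{Schur}) with $\rho_{1} = \rho_{2} = \rho$ for an arbitrary irrep $\rho: G \rightarrow GL(V)$ of the abelian group $G$, and then use the resulting scalar structure to argue that $V$ cannot have dimension greater than one without contradicting irreducibility.

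First I would fix an arbitrary element $g_{0} \in G$ and consider the operator $\rho(g_{0}) : V \rightarrow V$. The key observation is that abelianness of $G$ together with the homomorphism property of $\rho$ forces $\rho(g_{0})$ to be an intertwiner of $\rho$ with itself: for every $g \in G$,
\begin{align*}
\rho(g)\rho(g_{0}) = \rho(g \circ g_{0}) = \rho(g_{0} \circ g) = \rho(g_{0})\rho(g),
\end{align*}
so $\rho(g_{0})$ commutes with the entire representation. Since $\rho$ is trivially equivalent to itself (via $A = \mathbb{1}$), Schur's lemma in the equivalent case yields $\rho(g_{0}) = \lambda_{g_{0}} \mathbb{1}$ for some $\lambda_{g_{0}} \in \mathbb{C}$. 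As $g_{0}$ was arbitrary, every operator in the image of $\rho$ is a scalar multiple of the identity on $V$.

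Next I would exploit irreducibility. If $\dim V \geq 2$, pick any nonzero vector $v \in V$ and consider the one-dimensional subspace $W = \mathrm{span}\{v\} \subsetneq V$. Because $\rho(g)v = \lambda_{g} v \in W$ for every $g \in G$, the subspace $W$ is invariant under the representation. But $W$ is neither $\emptyset$ nor the full space $V$, so it is a nontrivial invariant subspace, contradicting the assumption that $\rho$ is irreducible. Hence $\dim V = 1$, which is the claim.

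The main (and really only) obstacle is purely conceptual: one must notice that commutativity of $G$ is exactly what promotes each $\rho(g_{0})$ to a self-intertwiner, which is what unlocks Schur's lemma. Once that is in place, the rest is automatic from the definitions of invariant subspace and irreducibility stated in the text.
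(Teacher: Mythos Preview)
Your proof is correct and follows essentially the same approach as the paper: use abelianness to see that each $\rho(g_{0})$ intertwines $\rho$ with itself, apply Schur's lemma to get $\rho(g_{0})=\lambda_{g_{0}}\mathbb{1}$, and then observe that in dimension at least two any one-dimensional subspace is a nontrivial invariant subspace, contradicting irreducibility. Your write-up of the final step is in fact cleaner than the paper's parenthetical illustration.
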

\begin{proof}
	Let $\rho$ be a representation of $G$, then from commutativity we have $\rho(g_{1})\circ\rho(g_{2}) = \rho(g_{2})\circ\rho(g_{1})$. Here, we can imagine, that $\rho(g_{2})$ is an intertwining operator and because commutativity holds for every element from $G$, using Schur's lemma \ref{Schur}, we get that $\rho(g) = \lambda_{g}\mathbb{1}$. For $\rho(g)$ to be irrep, it has to be one-dimensional. (Let us imagine that $\rho(g)$ is two-dimensional. Then $\rho(g) = 
	\mqty(\dmat[0]{\lambda_{1},\lambda_{2}})$, where clearly, there are two invariant subspaces and $\rho$ can be further reduced to $\rho = \lambda_{1} \bigoplus \lambda_{2}$, where $\bigoplus$ denotes direct sum.)
\end{proof}

\subsection{\texorpdfstring{$U(1)$}{U(1)} Group}

As was already mentioned, quantum systems are transformed by unitary operators. Let us illustrate that unitaries, indeed, preserve inner product in Hilbert space: $\braket{U\Psi}{U\Phi} = \braket{\Psi}{U^{\dagger}U\Phi} = \braket{\Psi}{\Phi}$. Specifically, $U(1)$ group is a group of complex numbers $e^{i\alpha}$ with operation being multiplication, i.e., it is a group of rotations on unit circle.

Let us show that irreps of $U(1)$ group are of form $e^{ik\Phi}$, where $k$ is an integer \cite{RepresentationTheoryAndQuantumMechanics}. Group $U(1)$ is abelian, therefore, from corollary of Schur's lemma \ref{schurCor}, we know that all irreducible representations $\rho$ of $U(1)$ must be one-dimensional. Let us now differentiate an irrep:
\begin{align}
	\frac{d}{d\Phi} \rho\left(e^{i\Phi}\right) &= \lim_{\Delta \rightarrow 0} \frac{\rho\left( e^{i(\Phi + \Delta\Phi)} \right) - \rho(e^{i\Phi}) }{\Delta\Phi} = \rho(e^{i\Phi}) \lim_{\Delta \rightarrow 0} \frac{\rho(e^{i\Delta\Phi}) - \rho(1)}{\Delta\Phi} \\&= \rho(e^{i\Phi}) \frac{d}{d\Phi}\left[\rho(e^{i\Phi})\right]_{\Phi = 0} = k \rho(e^{i\Phi}), \numberthis
\end{align}
where $k = \frac{d}{d\Phi}\left[\rho(e^{i\Phi})\right]_{\Phi = 0}$ and $\phi \in \left[0, 2\pi\right)$. Therefore, irrep of $U(1)$ must be $\rho(e^{i\Phi}) = e^{ik\Phi}$. Because representation is a homomorphism, it must hold that $1 = \varrho(1) = \varrho(e^{i\pi}e^{i\pi}) = \varrho(e^{i\pi})\varrho(e^{i\pi}) = e^{ik2\pi}$. Therefore, $k$ must be an integer.

\chapter{Quantum Processors}\label{QP}
Inspiration for quantum processors can be drawn from their classical counterparts. The role of processor in "classical" computers is to transform and manipulate the input data. Therefore, one can envision a device with similar function also for quantum computers. Quantum processors were firstly introduced by Nielsen and Chuang \cite{ProgrammableQuantumGateArrays}.

Quantum processor has two input registers as can be seen in figure \ref{QPimage} \cite{ProgrammableQuantumProcessors}. Data register takes data state, that one wishes to transform, as input. Whereas, program state, that selects the transformation applied on the data state, serves as input to the program register. One is able to choose program state and this act is called \emph{quantum programming}. Processor $G$ itself is formed by an array of quantum gates and as a whole it is a unitary operator, because evolution in closed quantum systems is governed by unitary transformations. And we can always expand Hilbert spaces of a processor to be encapsulating the entire considered system.
\begin{figure}[H]
	\begin{center}
		\includegraphics[scale=0.5]{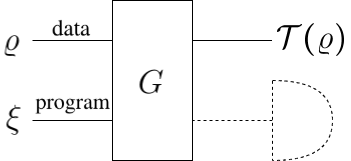}
	\end{center}
	\caption[font=small]{Quantum processor has two input registers - data register with input data state $\varrho$ that one desires to transform, and program register with input program state $\xi$ that controls which transformation $\cal{T}$ is applied on data state $\varrho$. Probabilistic processor has measurement at the end of program register that decides whether the device is successful in implementing the desired transformation on input data state or not.}
	\label{QPimage}
\end{figure}
Deterministic quantum processor implements quantum channel (linear completely positive trace preserving map) on data \cite{ImplementationOfQuantumMapsByProgrammableQuantumProcessors, ImplementabilityOfObservablesChannelsAndInstrumentsInQuantumTheoryOfMeasurement}. Nielsen and Chuang, in their seminal work, proved an important no-go theorem that forbids an existence of deterministic quantum processor able to implement all possible unitary transformations on data \cite{ProgrammableQuantumGateArrays}. Therefore, the theorem forbids the existence of truly universal quantum processor able to change the data completely according to our liking. The reason is that dimension of program space grows with every new unitary operator one wishes to implement. Let us repeat their proof for the completeness of our work.
\begin{theorem}\label{no-programming}
	No-programming
	\\
	There exists no universal deterministic quantum processor.
\end{theorem}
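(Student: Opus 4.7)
The plan is to proceed by contradiction: assume a universal deterministic quantum processor $G$ exists on finite-dimensional Hilbert spaces ${\cal H}_d \otimes {\cal H}_p$, and derive a dimension contradiction by showing that different unitaries require \emph{orthogonal} program states, while the set of inequivalent unitaries on ${\cal H}_d$ is uncountable.

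First I would reduce to pure program states. A universal processor by assumption implements every unitary $U$ on the data when loaded with some (possibly mixed) program $\xi_U$; by convexity and extremality of unitary channels among CPTP maps, each mixed program can be replaced by a pure one $\ket{\xi_U}$ in an enlarged program space. Since $G$ is itself unitary on ${\cal H}_d \otimes {\cal H}_p$, the requirement becomes
\begin{equation*}
G\bigl(\ket{\psi} \otimes \ket{\xi_U}\bigr) = \bigl(U\ket{\psi}\bigr) \otimes \ket{\xi'_{U,\psi}}
\end{equation*}
for every data state $\ket{\psi}$, where $\ket{\xi'_{U,\psi}}$ is the residual program. The next sub-step is to argue that $\ket{\xi'_{U,\psi}}$ is actually independent of $\ket{\psi}$: otherwise, applying $G$ to a superposition $\alpha\ket{\psi_1}+\beta\ket{\psi_2}$ and comparing with linearity of $G$ forces the output residual program to be the same vector in each branch (up to an overall phase).

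Now the core computation. Take two unitaries $U_1, U_2$ on the data that are not proportional to each other (i.e.\ $U_1^\dagger U_2 \neq e^{i\varphi}\mathbb{1}$), with program states $\ket{\xi_1}, \ket{\xi_2}$ and residual programs $\ket{\xi'_1}, \ket{\xi'_2}$. Using unitarity of $G$ to compute the inner product of $G(\ket{\psi}\otimes\ket{\xi_1})$ with $G(\ket{\psi}\otimes\ket{\xi_2})$ in two ways, we get
\begin{equation*}
\braket{\xi_1}{\xi_2} \;=\; \bra{\psi} U_1^\dagger U_2 \ket{\psi}\,\braket{\xi'_1}{\xi'_2}
\end{equation*}
for \emph{every} unit vector $\ket{\psi}$. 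Since the left-hand side is $\psi$-independent while $\bra{\psi}U_1^\dagger U_2\ket{\psi}$ is not constant on the unit sphere (as $U_1^\dagger U_2 \not\propto \mathbb{1}$), we must have $\braket{\xi'_1}{\xi'_2} = 0$, and consequently $\braket{\xi_1}{\xi_2} = 0$.

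Finally, I conclude by a cardinality argument: the unitary group $U(d)$ contains uncountably many pairwise inequivalent elements (any continuous one-parameter family $\{e^{i t H}\}_{t\in[0,2\pi)}$ for a non-scalar Hermitian $H$ produces such a family, with no two members proportional). By the previous step their associated program states form an orthonormal set in ${\cal H}_p$, forcing $\dim {\cal H}_p = \infty$, contradicting the assumption of a finite-dimensional program register. The main obstacle I expect is the justification that $\ket{\xi'_{U,\psi}}$ can be taken $\psi$-independent; this hinges on a careful linearity argument for $G$ on superpositions of data states, and is the one place where the proof is easy to mis-state if one is not attentive to relative phases.
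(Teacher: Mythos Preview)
Your proof is correct and follows essentially the same route as the paper (which reproduces the Nielsen--Chuang argument): compute the inner product of $G(\ket{\psi}\otimes\ket{\xi_1})$ with $G(\ket{\psi}\otimes\ket{\xi_2})$ using unitarity, conclude orthogonality of program states for inequivalent unitaries, and finish with the cardinality contradiction. You are in fact more careful than the paper on two points it glosses over---the reduction to pure program states and the $\psi$-independence of the residual program $\ket{\xi'_U}$---so your concern about the latter being ``easy to mis-state'' is well placed but already handled.
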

\begin{proof}
	Let us have two program states $\ket{P}$ and $\ket{Q}$ implementing two different unitaries $U_{p}$ and $U_{q}$:
	\begin{align*}
		G(\ket{D} \otimes \ket{P}) = U_{p}\ket{D} \otimes \ket{P^{\prime}} \qquad G(\ket{D} \otimes \ket{Q}) = U_{q}\ket{D} \otimes \ket{Q^{\prime}},
	\end{align*}
	where $\ket{D}$ is input state into data register. Now, let us calculate inner product of the previous equations:
	\begin{align}\label{noprog}
		\braket{Q}{P} = \braket{Q^{\prime}}{P^{\prime}} \mel{D}{U_{q}^{\dagger}U_{p}}{D},
	\end{align}
	where we have used the unitarity of processor $G$. By dividing the previous equation by $\braket{Q^{\prime}}{P^{\prime}} \neq 0$, we obtain:
	\begin{align*}
		\frac{\braket{Q}{P}}{\braket{Q^{\prime}}{P^{\prime}}} = \mel{D}{U_{q}^{\dagger}U_{p}}{D}.
	\end{align*}
	Left hand side does not depend on the state $\ket{D}$, therefore nor the right-hand side can depend on it. In consequence, $U_{q}^{\dagger}U_{p} = c\mathbb{1}$, where $c \in \mathbb{C}$, which in turn means that $U_{q}$ and $U_{p}$ can only differ by a global phase. But that is in contradiction with the assumption. Therefore, $\braket{Q^{\prime}}{P^{\prime}}$ must be equal to $0$. In that case, from equation (\ref{noprog}), we can see that also $\braket{Q}{P} = 0$, i.e., program states $\ket{P}$ and $\ket{Q}$ implementing different unitaries must be orthogonal. That means that for every new unitary we want to implement on the data state, program space of processor grows. And because there are uncountably many unitaries, universal deterministic processor, that would be able to implement any unitary at will, would have to have an uncountable number of dimensions in program space, which is not possible. 
\end{proof}
One can now choose two paths - either to introduce approximate quantum processor, which implements channels approximately \cite{ApproximateProgrammableQuantumProcessors,  OptimalUniversalProgrammingOfUnitaryGates, ResourceQuantificationForTheNoProgramingTheorem, AsymptoticTeleportationSchemeAsAUniversalProgrammableQuantumProcessor, AsymptoticPerformanceOfPortBasedTeleportation}, or go with probabilistic quantum processor, which is able to implement every unitary, albeit only with certain probability \cite{ProbabilisticImplementationOfUniversalQuantumProcessors, ProbabilisticProgrammableQuantumProcessors, ImprovingThePerformanceOfProbabilisticProgrammableQuantumProcessors, ProbabilisticProgrammableQuantumProcessorsWithMultipleCopiesOfProgram}. Probabilistic processor has measurement at the output of program register and implements quantum operation \cite{UniversalityAndOptimalityOfProgrammableQuantumProcessors, RealizationOfPositive-operator-valuedMeasuresUsingMeasurement-assistedProgrammableQuantumProcessors}. In probabilistic processor one cannot be certain that the implementation of the desired transformation shall be successful, but one always knows if it failed. Further on, we shall examine more closely deterministic and probabilistic quantum processors, for which we shall also investigate the notion of equivalence. Equivalence of approximate processors might not be so easily defined due to ambivalence in which measure to use to quantify fidelity of approximately transformed data in comparison to ideally transformed data.

In the following sections we shall denote dimension of program space ${\cal{H}}_{p}$ of processor $G$ with capital letter $P$ and dimension of data space ${\cal{H}}_{d}$ with capital letter $D$.

\section{Deterministic Processor}

In general, deterministic quantum processor can be expressed as follows:
\begin{align}\label{QPeq}
	G = \sum_{jk}^{P} A_{jk} \otimes \dyad{j}{k},
\end{align}
where states $\{\ket{j}\}$ form an orthonormal basis of program space ${\cal{H}}_{p}$ and operators $A_{jk}$ are applied on the data state. Because processor must be unitary:
\begin{align*}
	&GG^{\dagger} = \left(\sum^{P}\limits_{jk} A_{jk} \otimes \dyad{j}{k}\right) \left(\sum^{P}\limits_{j^{\prime}k^{\prime}} A_{j^{\prime}k^{\prime}}^{\dagger} \otimes \dyad{k^{\prime}}{j^{\prime}}\right) = \sum^{P}\limits_{jkj^{\prime}k^{\prime}} A_{jk}A_{j^{\prime}k^{\prime}}^{\dagger} \otimes \ket{j}\hspace*{-1.5mm}\braket{k}{k^{\prime}}\hspace*{-1.5mm}\bra{j^{\prime}} \\&= \sum^{P}\limits_{jj^{\prime}k} A_{jk}A_{j^{\prime}k}^{\dagger} \otimes \dyad{j}{j^{\prime}} \overset{!}{=} \mathbb{1},
\end{align*}
we obtain conditions on operators applied to data register:
\begin{align}\label{condA}
	\sum_{k}^{P} A_{jk}A_{j^{\prime}k}^{\dagger} = \mathbb{1}\delta_{jj^{\prime}}, \quad \sum_{j}^{P} A_{jk}^{\dagger}A_{jk^{\prime}} = \mathbb{1}\delta_{kk^{\prime}},
\end{align}
where the second condition was derived analogously to the first one, just now from the requirement $G^{\dagger}G \overset{!}{=} \mathbb{1}$. Set of channels that is implemented by deterministic quantum processor $G$ is the following:
\begin{align}\label{detQPimpl}
	\mathfrak{C}_{G}^{det} = \{\Tr_{p} \left[G (\varrho \otimes \xi) G^{\dagger}\right] \mid \xi \in {\cal{S}}({\cal{H}})\},
\end{align}
where ${\cal{S}}({\cal{H}})$ denotes the set of all quantum states and $\Tr_{p}$ denotes trace over program space. One traces out program space because the desired transformation is applied only on data. Let us evaluate an element from this set corresponding to a concrete program state $\xi$ denoted by $\mathfrak{C}_{G,\xi}^{det}$:
\begin{align*}
	\mathfrak{C}_{G,\xi}^{det}(\varrho) &= \Tr_{p} \left[G(\varrho \otimes \xi)G^{\dagger}\right] = \Tr_{p} \left[\left(\sum^{P}\limits_{jk} A_{jk} \otimes \dyad{j}{k}\right) \left(\varrho \otimes \xi\right) \left(\sum^{P}\limits_{j^{\prime}k^{\prime}} A_{j^{\prime}k^{\prime}}^{\dagger} \otimes \dyad{k^{\prime}}{j^{\prime}}\right)\right] \\&= \sum^{P}\limits_{jkj^{\prime}k^{\prime}} A_{jk}\varrho A_{j^{\prime}k^{\prime}}^{\dagger} \Tr\left(\dyad{j}{k} \xi \dyad{k^{\prime}}{j^{\prime}}\right) = \sum^{P}\limits_{jkk^{\prime}} \xi_{kk^{\prime}} A_{jk} \varrho A_{jk^{\prime}}^{\dagger},
\end{align*}
where $\xi_{kk^{\prime}} = \mel{k}{\xi}{k^{\prime}}$ and the dimension of program space is $\dim({\cal{H}}_{p}) = P$. Therefore, one element $\mathfrak{C}_{G,\xi}^{det}$ of set of all channels that are implementable by a processor $G$, governed by a program state $\xi$, can be expressed as:
\begin{align}\label{elemOfImplDet}
	\mathfrak{C}_{G,\xi}^{det}(\varrho) = \sum\limits_{jkk^{\prime}}^{P} \xi_{kk^{\prime}} A_{jk} \varrho A_{jk^{\prime}}^{\dagger}.
\end{align}

\section{Probabilistic Processor}\label{probProc}

Probabilistic processor can be written in the same manner as a deterministic one in equation (\ref{QPeq}) and the conditions put on operators in equation (\ref{condA}) remain unchanged. What distinguishes probabilistic processor from a deterministic one is a measurement at the output of program register. If we obtain the proper result, i.e., a state of program space corresponding to a successful measurement, the processor implements the desired transformation. On the other hand, if we obtain a different result, i.e., a state of program space that is orthogonal to states corresponding to a successful measurement, processor failed to apply the desired transformation on data state. Therefore, set of transformations a probabilistic processor $G$ implements is:
\begin{align}\label{prQPimpl}
	\mathfrak{G}_{G}^{pr} = \left\{\frac{1}{p} \Tr_{p} \left[G (\varrho \otimes \xi) G^{\dagger} \left(\mathbb{1} \otimes \dyad{\chi}\right)\right] \mid \xi \in \cal{S} ({\cal{H}}) \right\},
\end{align}
where we have chosen $\mathbb{1} \otimes \dyad{\chi} = \mathbb{1} \otimes \frac{1}{P} \sum_{nn^{\prime}}^{P} \dyad{n}{n^{\prime}}$ to be successful measurement expressed in a basis spanned by states $\{\ket{n}\}$ and $p$ is the probability of successfully implementing the desired transformation.

Let us now calculate concrete element $\mathfrak{G}_{G,\xi}^{pr}$ of set of possible transformations implemented on the data state $\varrho$ by a probabilistic processor $G$ that corresponds to program state $\xi$:
\begin{align*}
	\mathfrak{G}_{G,\xi}^{pr} &= \frac{1}{p} \Tr_{p} \left[G(\varrho \otimes \xi)G^{\dagger} \left(\mathbb{1} \otimes \dyad{\chi}\right)\right] \\&= \frac{1}{p} \Tr_{p} \left[\left(\sum\limits_{jk}^{P} A_{jk} \otimes \dyad{j}{k}\right) \left(\varrho \otimes \xi\right) \left(\sum\limits_{j^{\prime}k^{\prime}}^{P} A_{j^{\prime}k^{\prime}}^{\dagger} \otimes \dyad{k^{\prime}}{j^{\prime}}\right) \left(\mathbb{1} \otimes \frac{1}{P} \sum_{nn^{\prime}}^{P} \dyad{n}{n^{\prime}}\right)\right] \\&= \frac{1}{pP}\sum\limits_{jkj^{\prime}k^{\prime}}^{P} A_{jk}\varrho A_{j^{\prime}k^{\prime}}^{\dagger} \sum\limits_{nn^{\prime}}^{P} \Tr_{p}\left(\dyad{j}{k} \xi \dyad{k^{\prime}}{j^{\prime}}\hspace*{-1.5mm}\dyad{n}{n^{\prime}}\right) = \frac{1}{pP} \sum\limits_{jj^{\prime}kk^{\prime}}^{P} \xi_{kk^{\prime}} A_{jk} \varrho A_{j^{\prime}k^{\prime}}^{\dagger},
\end{align*}
where $\dim{{\cal{H}}_{p}} = P$ and $p$ is probability of implementing the desired transformation. Let us now, for clarity and future use, explicitly write the result of the previous calculation:
\begin{align}\label{elemOfImplPr}
		\mathfrak{G}_{G,\xi}^{pr} = \frac{1}{pP} \sum\limits_{jj^{\prime}kk^{\prime}}^{P} \xi_{kk^{\prime}} A_{jk} \varrho A_{j^{\prime}k^{\prime}}^{\dagger}.
\end{align}
We shall proceed by calculating probability with which a processor $G$ implements a given transformation on data state $\varrho$:
\begin{align*}
	p &= \Tr \left[G (\varrho \otimes \xi) G^{\dagger} \left(\mathbb{1} \otimes \dyad{\chi}\right)\right] \\
	&= \Tr \left[\left(\sum\limits_{jk}^{P} A_{jk} \otimes \dyad{j}{k}\right) \left(\varrho \otimes \xi\right) \left(\sum\limits_{j^{\prime}k^{\prime}}^{P} A_{j^{\prime}k^{\prime}}^{\dagger} \otimes \dyad{k^{\prime}}{j^{\prime}}\right) \left(\mathbb{1} \otimes \frac{1}{P} \sum_{nn^{\prime}}^{P} \dyad{n}{n^{\prime}}\right)\right] \\
	&= \frac{1}{P}\sum_{jkj^{\prime}k^{\prime}}^{P} \sum_{nn^{\prime}}^{P} \Tr \left(A_{jk} \varrho A_{j^{\prime}k^{\prime}}^{\dagger}\right) \Tr \left(\dyad{j}{k} \xi \dyad{k^{\prime}}{j^{\prime}}\hspace*{-1.5mm}\dyad{n}{n^{\prime}}\right) = \frac{1}{P} \sum_{jkj^{\prime}k^{\prime}}^{P} \xi_{kk^{\prime}} \Tr\left(A_{jk} \varrho A_{j^{\prime}k^{\prime}}^{\dagger}\right).
\end{align*}
Again, for clarity, let us express the probability of successful implementation:
\begin{align}\label{probQP}
	p = \frac{1}{P} \sum\limits_{jj^{\prime}kk^{\prime}}^{P} \xi_{kk^{\prime}} \Tr \left(A_{jk} \varrho A_{j^{\prime}k^{\prime}}^{\dagger}\right).
\end{align}
From this equation, we can clearly see that, in the general case, the probability $p$ depends on the input state $\varrho$. Therefore, probabilistic processor can implement both channels and measurements. Channels are linear operations; thus they are implemented only in case when the probability is not a function of data state $\varrho$ and can only depend on program state:
\begin{align*}
	p = \frac{1}{P} \sum\limits_{jj^{\prime}kk^{\prime}}^{P} \xi_{kk^{\prime}} \Tr \left(A_{jk} \varrho A_{j^{\prime}k^{\prime}}^{\dagger}\right) \overset{!}{=} \frac{k_{\xi}}{P},
\end{align*}
where $k_{\xi} \in \mathbb{R}$ and $\mathbb{R}$ denotes the set of real numbers. Let us take a closer look at the previous condition:
\begin{align*}
	&\sum\limits_{jj^{\prime}kk^{\prime}}^{P} \xi_{kk^{\prime}} \Tr \left(A_{jk} \varrho A_{j^{\prime}k^{\prime}}^{\dagger}\right) = \sum\limits_{jkk^{\prime}}^{P} \xi_{kk^{\prime}} \Tr \left(A_{jk} \varrho A_{jk^{\prime}}^{\dagger}\right) + \sum\limits_{j\neq j^{\prime}}^{P} \sum\limits_{kk^{\prime}}^{P} \xi_{kk^{\prime}} \Tr \left(A_{jk} \varrho A_{j^{\prime}k^{\prime}}^{\dagger}\right) \\
	&= \sum\limits_{jkk^{\prime}}^{P} \xi_{kk^{\prime}} \Tr \left(\varrho A_{jk^{\prime}}^{\dagger}A_{jk}\right) + \sum\limits_{j\neq j^{\prime}}^{P} \sum\limits_{kk^{\prime}}^{P} \xi_{kk^{\prime}} \Tr \left(\varrho A_{j^{\prime}k^{\prime}}^{\dagger}A_{jk}\right) \\
	&\overset{(\ref{condA})}{=} \sum\limits_{kk^{\prime}}^{P} \xi_{kk^{\prime}} \delta_{kk^{\prime}} \Tr \left(\varrho\right) + \sum\limits_{j\neq j^{\prime}}^{P} \sum\limits_{kk^{\prime}}^{P} \xi_{kk^{\prime}} \Tr \left(\varrho A_{j^{\prime}k^{\prime}}^{\dagger}A_{jk}\right) = 1 + \Tr \left(\varrho \sum\limits_{j\neq j^{\prime}}^{P} \sum\limits_{kk^{\prime}}^{P} \xi_{kk^{\prime}} A_{j^{\prime}k^{\prime}}^{\dagger}A_{jk}\right) \overset{!}{=} 1 + l_{\xi},
\end{align*}
where $\sum\limits_{j\neq j^{\prime}}$ denotes summation over both $j$ and $j^{\prime}$ but only for cases when $j\neq j^{\prime}$ and $l_{\xi} = k_{\xi} - 1$. We arrive at the condition, where both the program and the processor itself are instrumental:
\begin{align*}
	\sum\limits_{j\neq j^{\prime}}^{P} \sum\limits_{kk^{\prime}}^{P} \xi_{kk^{\prime}} A_{j^{\prime}k^{\prime}}^{\dagger}A_{jk} = l_{\xi} \mathbb{1}.
\end{align*}
Only if this condition holds, processor implements quantum channel. We shall define the set of channels implemented by probabilistic processor as:
\begin{align*}
	\mathfrak{C}_{G}^{pr} = \left\{\frac{1}{p} \Tr_{p} \left[G (\varrho \otimes \xi) G^{\dagger} \left(\mathbb{1} \otimes \dyad{\chi}\right)\right] \mid \xi \in {\cal{S}} ({\cal{H}}) \wedge \sum\limits_{j\neq j^{\prime}}^{P} \sum\limits_{kk^{\prime}}^{P} \xi_{kk^{\prime}} A_{j^{\prime}k^{\prime}}^{\dagger}A_{jk} = l_{\xi} \mathbb{1} \right\},
\end{align*}
where $l_{\xi} \in \mathbb{R}$. Elements of this set are expressed in the same way as in equation (\ref{elemOfImplPr}), only the choice of programs is limited. Let us now give an example of a processor unable to apply channel on a data state.
\begin{example}\label{noChannelProcessor}
	Processor under consideration:
	\begin{align*}
		G &= \frac{1}{\sqrt{2}}\left(\mathbb{1}\otimes\dyad{0}{0} + \sigma_{z}\otimes\dyad{1}{0} + \sigma_{x}\otimes\dyad{0}{1} - i\sigma_{y}\otimes\dyad{1}{1}\right),
	\end{align*}
	which implements:
	\begin{align*}
		\mathfrak{G}_{G,\xi}^{pr} &\overset{(\ref{elemOfImplPr})}{=} \frac{1}{2p} \left[\xi_{00} \sum_{j}^{2} A_{j0} \varrho \sum_{j^{\prime}}^{2} A_{j^{\prime}0}^{\dagger} + \xi_{01} \sum_{j}^{2} A_{j0} \varrho \sum_{j^{\prime}}^{2} A_{j^{\prime}1}^{\dagger} + \xi_{10} \sum_{j}^{2} A_{j1} \varrho \sum_{j^{\prime}}^{2} A_{j^{\prime}0}^{\dagger} + \xi_{11} \sum_{j}^{2} A_{j1} \varrho \sum_{j^{\prime}}^{2} A_{j^{\prime}1}^{\dagger} \right] \\
		&= \frac{1}{4p} [\xi_{00} \left(\mathbb{1} + \sigma_{z}\right) \varrho \left(\mathbb{1} + \sigma_{z}\right) + \xi_{01} \left(\mathbb{1} + \sigma_{z}\right) \varrho \left(\sigma_{x} + i\sigma_{y}\right) \\
		&\quad + \xi_{10} \left(\sigma_{x} - i\sigma_{y}\right) \varrho \left(\mathbb{1} + \sigma_{z}\right) + \xi_{11} \left(\sigma_{x} - i\sigma_{y}\right) \varrho \left(\sigma_{x} + i\sigma_{y}\right)] \\
		&= \frac{1}{2p} \left(\xi_{00} \dyad{0}{0}\varrho\dyad{0}{0} + \xi_{01} \dyad{0}{0}\varrho\dyad{0}{1} + \xi_{10} \dyad{1}{0}\varrho\dyad{0}{0} + \xi_{11} \dyad{1}{0}\varrho\dyad{0}{1}\right) = \frac{\varrho_{00}}{2p} \xi.
	\end{align*}
	If we calculate probability, we can see that it always depends on data state.
	\begin{align*}
		p = \frac{1}{2} \Tr \left(\xi \varrho_{00}\right) = \frac{1}{2} \varrho_{00}.
	\end{align*}
	Therefore, there exists no program state for which processor $G$ is able to implement a channel.
\end{example}

From a different point of view a processor implements a set of quantum operations.
\begin{align}\label{ProbQOp}
	\mathfrak{O}_{G}^{pr} = \left\{\Tr_{p} \left[G (\varrho \otimes \xi) G^{\dagger} \left(\mathbb{1} \otimes \dyad{\chi}\right)\right] \mid \xi \in \cal{S} ({\cal{H}}) \right\}.
\end{align}
We do not normalize the result with probability $p$, therefore we recover subnormalized quantum states on the output. Elementwise this translates to:
\begin{align*} \label{SprobQPelem}
	\mathfrak{O}_{G,\xi}^{pr} = \frac{1}{P}\sum\limits_{jkj^{\prime}k^{\prime}}^{P} \xi_{kk^{\prime}} A_{jk} \varrho A_{j{\prime}k^{\prime}}^{\dagger}. \numberthis
\end{align*}

\section{Equivalence of Quantum Processors}

Basic idea while considering equivalence of quantum processors is to find different processors able to transform input data in the same manner. Therefore, equivalent processors should be able to implement the same transformations on input states. However, for probabilistic case one must consider different scenarios and thus we are using multiple definitions of probabilistic equivalence. Nonetheless, let us firstly begin with a simpler task of defining deterministic equivalence.
\begin{definition}\label{detEqv}
	Equivalence of Deterministic Processors
	\\
	Two quantum deterministic processors $G$ and $\widetilde{G}$ are equivalent if $\mathfrak{C}_{G}^{det} = \mathfrak{C}_{\widetilde{G}}^{det}$. We denote deterministic equivalence with $G \sim_{det} \widetilde{G}$.
\end{definition}
In order for deterministic quantum processors to be equivalent, they must be able to implement the same set of channels. 

In case of probabilistic processor, one also has to take into account the probability of implementing individual channels. In case of equal probabilities, we define the strong equivalence.
\begin{definition}\label{strongEqv}
	Strong Equivalence of Probabilistic Processors
	\\
	Two quantum probabilistic processors $G$ and $\widetilde{G}$ with probabilities of successfully implementing the desired channels are $p$ and $\widetilde{p}$ respectively, are strongly equivalent if $\mathfrak{C}_{G}^{pr} = \mathfrak{C}_{\widetilde{G}}^{pr}$ and $p = \widetilde{p}$ holds for all possible channels implemented by given processors. We denote strong probabilistic equivalence with $G \sim_{pr} \widetilde{G}$.
\end{definition}
Here, we are not considering measurements that processors are able to execute, only channels they are able to implement. Therefore, strongly equivalent probabilistic processors implement the same channels with the same probabilities. For the differing probabilities, we define weak equivalence.
\begin{definition} \label{weakEqv}
	Weak Equivalence of Probabilistic Processors
	\\
	Two quantum probabilistic processors $G$ and $\widetilde{G}$ with probabilities of successfully implementing the desired channels are $p$ and $\widetilde{p}$ respectively, are weakly equivalent if $\mathfrak{C}_{G}^{pr} = \mathfrak{C}_{\widetilde{G}}^{pr}$ and probabilities $p$ and $\widetilde{p}$ are not always equal. We denote weak probabilistic equivalence with $G \approx_{pr} \widetilde{G}$.
\end{definition}
This means that both processors implement the same channels, but they might do it with different probabilities $p$ and $\widetilde{p}$. It is worth noting, that probabilities sometimes might be equal for certain programs, while they might differ for others, all the while still implementing the same channel (as might be noted in the example \ref{diffProbEx} further down). These probabilities depend not only on processors, but also on input program states as can be seen from equation (\ref{probQP}). We shall also define structural equivalence where we are not looking at the probabilities, but we are only interested in quantum operations that processors are able to realize.
\begin{definition}\label{structEqv}
	Structural Equivalence of Probabilistic Processors
	\\
	Two quantum probabilistic processors $G$ and $\widetilde{G}$ are structurally equivalent if $\mathfrak{O}_{G}^{pr} = K_{\xi,\widetilde{\xi}} \mathfrak{O}_{\widetilde{G}}^{pr}$, where $K_{\xi,\widetilde{\xi}} \in \mathbb{R}_{>0}$ depends on the particular program states $\xi$ and $\widetilde{\xi}$ of the respective processors $G$ and $\widetilde{G}$. We denote structural equivalence with $G \sim_{st} \widetilde{G}$.
\end{definition}
Equation $\mathfrak{O}_{G}^{pr} = K_{\xi,\widetilde{\xi}} \mathfrak{O}_{\widetilde{G}}^{pr}$ conveys that for every element from the set $\mathfrak{O}_{G}^{pr}$, there exists an element from $\mathfrak{O}_{\widetilde{G}}^{pr}$ that is equal, except for the multiplicative parameter $K_{\xi,\widetilde{\xi}}$ and vice versa. 

One can also consider having flexibility in measurements and allowing for two different processors to have different success measurements. Then, one can consider not only equivalence between processors but also between couples consisting of processor and measurement. However, in present work we shall not concern with such thoughts.

\subsection{Equivalence of Deterministic Processors}

Let us begin with the investigation of equivalence of deterministic processors as defined in the definition \ref{detEqv}. Due to decomposition of mixed quantum states into pure states, it is enough that we only consider pure program states (usually denoted with a Greek letter $\xi$). We state necessary and sufficient condition for the equivalence of deterministic processors for specific relations between the processors:
\begin{theorem}\label{SufNecCondDetProc}
	Sufficient and Necessary Condition
	\\
	Let us have quantum deterministic processor $G = \sum_{jk}^{P} A_{jk} \otimes \dyad{j}{k}$. Let us also assume that:
	\begin{itemize}
		\item[$1)$] a different processor $G_{L}$ can be expressed as $G_{L} = UG$, where $U = \sum_{rq}^{P} U_{rq} \otimes \dyad{r}{q}$ is a unitary operator. Then processors $G$ and $G_{L}$ are deterministically equivalent $G \sim_{det} G_{L}$ if and only if the following equation holds true:
		\begin{align*}
			\sum_{jk}^{P} \xi_{k} U_{rj} A_{jk} = \sum_{ikq}^{P} w_{ri} y_{kq} \xi_{k} A_{ik},
		\end{align*}
		where $\xi$ is a program state of $G_{L}$ and $\{w_{ri}\}$ and $\{y_{kq}\}$ form unitary operators,
		
		\item[$2)$] a different processor $G_{R}$ can be expressed as $G_{R} = GV$, where $V = \sum_{rq}^{P} V_{rq} \otimes \dyad{r}{q}$ is a unitary operator. Then processors $G$ and $G_{R}$ are deterministically equivalent $G \sim_{det} G_{R}$ if and only if the following equation holds true:
		\begin{align*}
			\sum_{kq}^{P} \xi_{q} A_{jk} V_{kq} = \sum_{jkq}^{P} w_{ji} y_{kq} \xi_{q} A_{ik},
		\end{align*}
		where $\xi$ is a program state of $G_{R}$ and $\{w_{ri}\}$ and $\{y_{kq}\}$ form unitary operators.
	\end{itemize}
\end{theorem}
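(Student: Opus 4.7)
The plan is to reduce the equivalence condition to a question about Kraus-operator sets, for which Proposition \ref{equalityOfChannels} provides the exact characterization. As a preliminary, I would observe that because $\mathfrak{C}_{G,\xi}^{det}$ depends linearly on the program density operator via the coefficients $\xi_{kk'}$ in (\ref{elemOfImplDet}), it suffices by convex decomposition to test equivalence on pure programs $\ket{\xi} = \sum_k \xi_k \ket{k}$, for which $\mathfrak{C}_{G,\xi}^{det}$ has Kraus operators $K_i(\xi) = \sum_k \xi_k A_{ik}$. A direct multiplication using $U = \sum_{rq} U_{rq} \otimes \dyad{r}{q}$ collapses one inner sum and gives
\begin{align*}
G_L = UG = \sum_{r,k}^{P} \Bigl(\sum_{j}^{P} U_{rj} A_{jk}\Bigr) \otimes \dyad{r}{k},
\end{align*}
so the Kraus operators of $\mathfrak{C}_{G_L,\xi}^{det}$ are $\sum_{jk}\xi_k U_{rj}A_{jk}$, i.e.\ the LHS of part~1. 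The analogous computation for $G_R = GV$ yields $G_R = \sum_{j,q}(\sum_k A_{jk} V_{kq}) \otimes \dyad{j}{q}$ and matches the LHS of part~2.

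For the necessity direction, assume $G \sim_{det} G_L$. Then for every pure program $\ket{\xi}$ of $G_L$ there exists a pure program $\ket{\tilde\xi}$ of $G$ implementing the same channel; because the unitary group acts transitively on unit vectors of ${\cal{H}}_{p}$, I can parametrize $\tilde\xi_k = \sum_q y_{kq}\xi_q$ with $\{y_{kq}\}$ a (generally $\xi$-dependent) unitary. Proposition \ref{equalityOfChannels} then supplies a unitary $\{w_{ri}\}$ linking the two Kraus sets, and spelling this relation out term by term produces precisely the operator identity in the theorem. Sufficiency runs the same chain backwards: the identity expresses each Kraus operator of $G_L$ with program $\xi$ as a unitary rotation of the Kraus operators of $G$ with program $y\xi$, so Proposition \ref{equalityOfChannels} forces the corresponding channels to coincide, yielding $\mathfrak{C}_{G_L}^{det} \subseteq \mathfrak{C}_{G}^{det}$. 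Part~2 proceeds verbatim with $V$ acting from the right on $G$ in place of $U$ acting from the left.

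The principal obstacle I anticipate is the reverse inclusion $\mathfrak{C}_{G}^{det} \subseteq \mathfrak{C}_{G_L}^{det}$ that is still needed to close the sufficiency direction. I would obtain it by applying the same reasoning to the factorization $G = U^{\dagger} G_L$: since $U^{\dagger}$ is unitary, the hypothesis is symmetric under this relabelling, and the ``mirror'' form of the identity (with $w$ and $y$ replaced by their adjoints, which remain unitary) directly produces the missing inclusion. A subsidiary technical point is that the two Kraus families may have different cardinalities, which is handled by padding the shorter list with zero operators before invoking Proposition \ref{equalityOfChannels}; and the passage from pure to mixed programs in both directions is a routine consequence of linearity in $\xi_{kk'}$ together with the convex structure of ${\cal{S}}({\cal{H}}_{p})$.
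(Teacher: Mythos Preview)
Your proposal is correct and follows essentially the same route as the paper: reduce to pure programs, write out the Kraus operators of $G$ (with program $Y\xi$) and of $G_L$ (with program $\xi$) explicitly, and invoke Proposition~\ref{equalityOfChannels} to obtain the stated identity as the exact criterion; part~2 is handled symmetrically. Your explicit treatment of the reverse inclusion via $G=U^{\dagger}G_L$ is a point the paper leaves implicit, and the padding remark is unnecessary here since both Kraus families are indexed by the same $P$-element set, but these are minor.
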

\begin{proof}
	\textbf{Freedom in Program States}
	
	As was already mentioned, it is enough to limit ourselves to pure states due to the decomposition of mixed states into pure ones. Relation between arbitrary two pure states $\xi$ and $\widetilde{\xi}$ is always given through unitary transformation $\widetilde{\xi} = Y^{p} \xi Y^{p\dagger}$, where $Y^{p} = \sum_{rq}^{P} y_{rq} \dyad{r}{q}$ is a unitary operator. Let us calculate what a processor $G$ with program $\widetilde{\xi}$ implements:
	\begin{align*} \label{prg}
		\mathfrak{C}^{det}_{G, \widetilde{\xi}} 
		&= \Tr_{p} \left[G \left(\varrho \otimes \widetilde{\xi}\right) G^{\dagger}\right] \\
		&= \Tr_{p} \left[G \left(\varrho \otimes Y^{p} \xi Y^{p\dagger}\right) G^{\dagger}\right] \\
		&= \Tr_{p} \left[\left(\sum_{jk}^{P} A_{jk} \otimes \dyad{j}{k}\right) \left(\varrho \otimes \sum_{rq}^{P} y_{rq} \dyad{r}{q} \xi \sum_{r^{\prime}q^{\prime}}^{P} y_{r^{\prime}q^{\prime}}^{\ast} \dyad{q^{\prime}}{r^{\prime}}\right) \left(\sum_{j^{\prime}k^{\prime}}^{P} A_{j^{\prime}k^{\prime}}^{\dagger} \otimes \dyad{k^{\prime}}{j^{\prime}}\right)\right] \\
		&= \sum_{jj^{\prime}kk^{\prime}}^{P} \sum_{rr^{\prime}qq^{\prime}}^{P} A_{jk} \varrho A_{j^{\prime}k^{\prime}}^{\dagger} y_{rq} y_{r^{\prime}q^{\prime}}^{\ast} \Tr\left(\dyad{j}{k}\hspace*{-1.5mm}\dyad{r}{q} \xi \dyad{q^{\prime}}{r^{\prime}}\hspace*{-1.5mm}\dyad{k^{\prime}}{j^{\prime}}\right) \\
		&= \sum_{jkk^{\prime}}^{P} \sum_{qq^{\prime}}^{P} \xi_{qq^{\prime}} y_{kq} y_{k^{\prime}q^{\prime}}^{\ast} A_{jk} \varrho A_{jk^{\prime}}^{\dagger} \\
		&= \sum_{j}^{P} \left(\sum_{kq}^{P} \xi_{q} y_{kq} A_{jk}\right) \varrho \left(\sum_{k^{\prime}q^{\prime}}^{P} \xi_{q^{\prime}}^{\ast} y_{k^{\prime}q^{\prime}}^{\ast} A_{j^{\prime}k^{\prime}}^{\dagger}\right) \\
		&= \sum_{j}^{P} \textphnc{\Aaleph}_{j} \varrho \textphnc{\Aaleph}_{j}^{\dagger}. \numberthis
	\end{align*}
	
	Let us show that $\{\textphnc{\Aaleph}_{j}\}$ are, indeed, Kraus operators, which will prove very useful on many occasions due to proposition \ref{equalityOfChannels}:
	\begin{align*}
		\sum_{j}^{P} \textphnc{\Aaleph}_{j}^{\dagger} \textphnc{\Aaleph}_{j} &= \sum_{j}^{P} \left(\sum_{kq}^{P} \xi_{q}^{\ast} y_{kq}^{\ast} A_{jk}^{\dagger}\right) \left(\sum_{k^{\prime}q^{\prime}}^{P} \xi_{q^{\prime}} y_{k^{\prime}q^{\prime}} A_{jk^{\prime}}\right) = \sum_{jkk^{\prime}}^{P} \sum_{qq^{\prime}}^{P} \xi_{q^{\prime}q} y_{kq}^{\ast} y_{k^{\prime}q^{\prime}} A_{jk}^{\dagger} A_{jk^{\prime}} \\ &\overset{(\ref{condA})}{=} \sum_{kqq^{\prime}}^{P} \xi_{q^{\prime}q} y_{kq}^{\ast} y_{kq^{\prime}} \mathbb{1} \overset{(i)}{=} \sum_{q}^{P} \xi_{qq} \mathbb{1} = \mathbb{1},
	\end{align*}
	where in $(i)$ we are using unitarity of $Y^{p}$.
	
	\textbf{Left Transformation}
	\\
	Let us begin with the first case - transformation $U = \sum_{rq}^{P} U_{rq} \otimes \dyad{r}{q}$ applied on processor $G$ from the left. Conditions on operators forming $U$ are the same as conditions on operators forming processor, explicitly $\sum_{r}^{P} U_{rq}^{\dagger} U_{rq^{\prime}} = \mathbb{1} \delta_{qq^{\prime}}$ and $\sum_{q}^{P} U_{rq} U_{r^{\prime}q}^{\dagger} = \mathbb{1} \delta{rr^{\prime}}$. Now we shall proceed with the calculation of channel implemented by the transformed processor $G_{L} = UG$:
	\begin{align*} \label{leftTransformation}
		\mathfrak{C}_{G_{L}, \xi}^{det} 
		&= \Tr_{p} \left[UG \left(\varrho \otimes \xi\right) G^{\dagger}U^{\dagger}\right] \\
		&= \Tr_{p} \left[\left(\sum_{rq}^{P} U_{rq} \otimes \dyad{r}{q}\right) \left(\sum_{jk}^{P} A_{jk} \otimes \dyad{j}{k}\right) \left(\varrho \otimes \xi\right) \left(\sum_{j^{\prime}k^{\prime}}^{P} A_{j^{\prime}k^{\prime}}^{\dagger} \otimes \dyad{k^{\prime}}{j^{\prime}}\right) \left(\sum_{r^{\prime}q^{\prime}}^{P} U_{r^{\prime}q^{\prime}}^{\dagger} \otimes \dyad{q^{\prime}}{r^{\prime}}\right)\right] \\
		&= \sum_{jj^{\prime}kk^{\prime}}^{P} \sum_{rr^{\prime}qq^{\prime}}^{P} U_{rq} A_{jk} \varrho A_{j^{\prime}k^{\prime}}^{\dagger} U_{r^{\prime}q^{\prime}}^{\dagger} \Tr\left(\dyad{r}{q}\hspace*{-1.5mm}\dyad{j}{k} \xi \dyad{k^{\prime}}{j^{\prime}}\hspace*{-1.5mm}\dyad{q^{\prime}}{r^{\prime}}\right) \\
		&= \sum_{r}^{P} \left( \sum_{jk}^{P} \xi_{k} U_{rj} A_{jk}\right) \varrho \left( \sum_{j^{\prime}k^{\prime}}^{P} \xi_{k^{\prime}}^{\ast} A_{j^{\prime}k^{\prime}}^{\dagger} U_{rj^{\prime}}^{\dagger} \right) \\
		&= \sum_{r}^{P} \textphnc{\ARdaleth}_{r} \varrho \textphnc{\ARdaleth}_{r}^{\dagger}. \numberthis
	\end{align*}
	Let us show that operators $\{\textphnc{\ARdaleth}_{r}\}$ are, in reality, Kraus operators:
	\begin{align*}
		\sum_{r}^{P} \textphnc{\ARdaleth}_{r}^{\dagger} \textphnc{\ARdaleth}_{r} &= \sum_{r}^{P} \left(\sum_{jk}^{P} \xi_{k}^{\ast} A_{jk}^{\dagger} U_{rj}^{\dagger}\right) \left(\sum_{j^{\prime}k^{\prime}}^{P} \xi_{k^{\prime}} U_{rj^{\prime}} A_{j^{\prime}k^{\prime}}\right) = \sum_{r}^{P} \sum_{jj^{\prime}kk^{\prime}}^{P} \xi_{k^{\prime}k} A_{jk}^{\dagger} U_{rj}^{\dagger} U_{rj^{\prime}} A_{j^{\prime}k^{\prime}} \\ &\overset{(i)}{=} \sum_{jkk^{\prime}}^{P} \xi_{k^{\prime}k} A_{jk}^{\dagger} A_{jk^{\prime}} \overset{(\ref{condA})}{=} \sum_{k}^{P} \xi_{kk} \mathbb{1} = \mathbb{1},
	\end{align*}
	where in $(i)$ we have used conditions on operators $U_{rq}$ derived from unitarity of $U$. 
	
	By comparing (\ref{leftTransformation}) with (\ref{prg}) we arrive at the equation:
	\begin{align*}
		\sum_{r}^{P} \textphnc{\ARdaleth}_{r} \varrho \textphnc{\ARdaleth}_{r}^{\dagger} \overset{!}{=} \sum_{j}^{P} \textphnc{\Aaleph}_{j} \varrho \textphnc{\Aaleph}_{j}^{\dagger}.
	\end{align*}
	From proposition (\ref{equalityOfChannels}), we know that two sets of Kraus operators $\{\textphnc{\ARdaleth}_{r}\}$ and $\{\textphnc{\Aaleph}_{j}\}$ define the same quantum channel if and only if the following condition holds:
	\begin{align*}
		\textphnc{\ARdaleth}_{r} = \sum_{j}^{P} w_{rj} \textphnc{\Aaleph}_{j}, \qquad \sum_{r}^{P} w_{rj}^{\ast} w_{rj^{\prime}} = \delta_{jj^{\prime}}.
	\end{align*}
	From this equation we recover the condition given in the theorem:
	\begin{align*}
		\sum_{jk}^{P} \xi_{k} U_{rj} A_{jk} = \sum_{jkq}^{P} w_{rj} y_{kq} \xi_{q} A_{jk}.
	\end{align*}
	
	\textbf{Right Transformation}
	By applying transformation $V = \sum_{rq}^{P} v_{rq} \otimes \dyad{r}{q}$, with $\sum_{r}^{P} V_{rq}^{\dagger} V_{rq^{\dagger}} = \delta_{qq^{\prime}} \mathbb{1}$ and $\sum_{q}^{P} V_{rq} V_{r^{\prime}q}^{\dagger} = \delta_{rr^{\prime}} \mathbb{1}$, from the right side of the processor $G$, we arrive at the following implemented channel:
	\begin{align*} \label{rightTransformation}
		\mathfrak{C}_{G_{R}, \xi}^{\det} &= \Tr_{p} \left[GV \left(\varrho \otimes \xi\right) V^{\dagger} G^{\dagger}\right] \\
		&= \Tr_{p} \left[\left(\sum_{jk}^{P} A_{jk} \otimes \dyad{j}{k}\right) \left(\sum_{rq}^{P} V_{rq} \otimes \dyad{r}{q}\right) \left(\varrho \otimes \xi\right) \left(\sum_{r^{\prime}q^{\prime}}^{P} V_{r^{\prime}q^{\prime}}^{\dagger} \otimes \dyad{q^{\prime}}{r^{\prime}}\right) \left(\sum_{j^{\prime}k^{\prime}}^{P} A_{j^{\prime}k^{\prime}}^{\dagger} \otimes \dyad{k^{\prime}}{j^{\prime}}\right)\right] \\
		&= \sum_{jj^{\prime}kk^{\prime}}^{P} \sum_{rr^{\prime}qq^{\prime}}^{P} A_{jk} V_{rq} \varrho V_{r^{\prime} q^{\prime}}^{\dagger} A_{j^{\prime} k^{\prime}}^{\dagger} \Tr\left(\dyad{j}{k}\hspace*{-1.5mm}\dyad{r}{q} \xi \dyad{q^{\prime}}{r^{\prime}}\hspace*{-1.5mm}\dyad{k^{\prime}}{j^{\prime}}\right) \\
		&= \sum_{j}^{P} \left(\sum_{kq}^{P} \xi_{q} A_{jk} V_{kq}\right) \varrho \left(\sum_{k^{\prime}q^{\prime}}^{P} \xi_{q^{\prime}}^{\ast} V_{k^{\prime}q^{\prime}}^{\dagger} A_{jk^{\prime}}^{\dagger}\right) \\
		&= \sum_{j}^{P} \textphnc{\ARgimel}_{j} \varrho \textphnc{\ARgimel}_{j}^{\dagger}. \numberthis
	\end{align*}
	Let us show that the operators $\{\textphnc{\ARgimel}_{j}\}$ are Kraus operators:
	\begin{align*}
		\sum_{j}^{P} \textphnc{\ARgimel}_{j}^{\dagger} \textphnc{\ARgimel}_{j} &= \sum_{j}^{P} \left(\sum_{kq}^{P} \xi_{q}^{\ast} V_{kq}^{\dagger} A_{jk}^{\dagger}\right) \left(\sum_{k^{\prime}q^{\prime}}^{P} \xi_{q^{\prime}} A_{jk^{\prime}} V_{k^{\prime}q^{\prime}}\right) = \sum_{jkk^{\prime}}^{P} \sum_{qq^{\prime}}^{P} \xi_{q^{\prime}q} V_{kq}^{\dagger} A_{jk}^{\dagger} A_{jk^{\prime}} V_{k^{\prime}q^{\prime}} \\
		&\overset{(\ref{condA})}{=} \sum_{kqq^{\prime}}^{P} \xi_{qq^{\prime}} V_{kq}^{\dagger} V_{kq^{\prime}} \overset{(i)}{=} \sum_{q}^{P} \xi_{qq} \mathbb{1} = \mathbb{1},
	\end{align*}
	where in $(i)$ we have used unitarity of $V$.
	
	Comparing equations (\ref{rightTransformation}) and (\ref{prg}), we receive the following equation:
	\begin{align*}
		\sum_{j}^{P} \textphnc{\ARgimel}_{j} \varrho \textphnc{\ARgimel}_{j}^{\dagger} \overset{!}{=} \sum_{i}^{P} \textphnc{\Aaleph}_{i} \varrho \textphnc{\Aaleph}_{i}^{\dagger}.
	\end{align*}
	Considering that both sides of the equation are expressed using Kraus operators, we can use proposition \ref{equalityOfChannels} and eventually recover condition used in the theorem:
	\begin{align*}
		\textphnc{\ARgimel}_{j} &= \sum_{i}^{p} w_{ji} \textphnc{\Aaleph}_{i} \\
		\sum_{kq}^{P} \xi_{q} A_{jk} V_{kq} &= \sum_{ikq}^{P} w_{ji} y_{kq} \xi_{q} A_{ik},
	\end{align*}
	with additional restriction being $\sum_{j}^{P} w_{ji}^{\ast} w_{ji^{\prime}} = \delta_{ii^{\prime}}$.
\end{proof}

\subparagraph{Concrete Solutions}
In the current section we shall reveal some more concrete expressions for unitaries $U$ and $V$. Let us start with the transformation from the left-hand side of the processor.
\begin{theorem}\label{concreteSolutionDetProcLeft}
	Sufficient Condition for Transformation from Left
	\\
	Let us have quantum deterministic processor $G = \sum_{jk}^{P} A_{jk} \otimes \dyad{j}{k}$. Then processor $G_{L} = UG$, where $U$ is unitary operator, and $G$ are deterministically equivalent $G \sim_{det} G_{L}$ if:
	\begin{itemize}
		\item[1)] $U = \mathbb{1} \otimes U^{p}$,
		\item[2)] $U = \sum_{r}^{P} U_{r} \otimes \dyad{r}{r}$, where $U_{r} = \left(\sum_{jkq}^{P} w_{rj} y_{kq} \xi_{k} A_{jk}\right) \left(\sum_{l}^{P} \xi_{l} A_{rl}\right)^{-1}$.
	\end{itemize}
\end{theorem}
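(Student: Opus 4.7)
The plan is to apply the necessary and sufficient criterion from Theorem~\ref{SufNecCondDetProc}, part~1, which reduces deterministic equivalence of $G$ and $G_L = UG$ to exhibiting unitary coefficient matrices $\{w_{ri}\}$ and $\{y_{kq}\}$ such that
$$\sum_{jk}^{P} \xi_k\, U_{rj}\, A_{jk} = \sum_{ikq}^{P} w_{ri}\, y_{kq}\, \xi_k\, A_{ik}$$
holds for every pure program $\xi$ of $G_L$. In both cases I will extract $U_{rq}$ from the assumed decomposition of $U$, substitute it into the left-hand side, and exhibit explicit unitaries $w$, $y$ for which the identity becomes tautological, so no genuinely new computation is needed beyond rearrangement.

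For case~1, the factorization $U = \mathbb{1} \otimes U^p$ forces $U_{rq} = u^p_{rq}\,\mathbb{1}$ with $u^p_{rq} = \bra{r}U^p\ket{q}$. I would choose $y_{kq} = \delta_{kq}$ (equivalently, leave the program unchanged, $\widetilde{\xi} = \xi$) and $w_{ri} = u^p_{ri}$. Then the left-hand side of the criterion becomes $\sum_{jk} \xi_k u^p_{rj} A_{jk}$, and the right-hand side collapses to the same expression; unitarity of $\{w_{ri}\}$ is inherited directly from unitarity of $U^p$. For case~2, the block-diagonal form $U = \sum_r U_r \otimes \dyad{r}{r}$ means $U_{rq} = U_r\,\delta_{rq}$, so the left-hand side reduces to $U_r \sum_k \xi_k A_{rk}$. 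To match the right-hand side $\sum_{ikq} w_{ri} y_{kq} \xi_k A_{ik}$, I would post-multiply by $(\sum_l \xi_l A_{rl})^{-1}$, which isolates $U_r$ and reproduces precisely the formula in the theorem, so the criterion is satisfied by construction.

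The main obstacle lies not in the algebra but in implicit well-posedness. In case~2 the operator $\sum_l \xi_l A_{rl}$ must be invertible for the formula to make sense, and, more subtly, the resulting $U_r$ must itself be unitary so that $U$ is a legitimate global unitary: from the block-diagonal shape of $U$ together with condition (\ref{condA}) one gets $U_r U_r^\dagger = U_r^\dagger U_r = \mathbb{1}$ for each $r$, which genuinely restricts the admissible choices of $w$ and $y$. I would flag these as tacit hypotheses that the statement relies on, verify them wherever the data allows, and emphasise that once they are in place the sufficiency claim is essentially a one-line substitution into Theorem~\ref{SufNecCondDetProc}.
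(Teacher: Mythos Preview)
Your approach is correct and somewhat more economical than the paper's. You reduce both cases to the criterion of Theorem~\ref{SufNecCondDetProc}, turning sufficiency into a substitution check. The paper instead argues each case from scratch: for case~1 it does not invoke Theorem~\ref{SufNecCondDetProc} at all but uses cyclicity of the partial trace to obtain $\Tr_p[(\mathbb{1}\otimes U^p) G(\varrho\otimes\xi)G^\dagger(\mathbb{1}\otimes U^{p\dagger})] = \Tr_p[G(\varrho\otimes\xi)G^\dagger]$ in a single line; for case~2 it explicitly computes the Kraus operators $\textphnc{\ARvav}_r = U_r\sum_k\xi_k A_{rk}$ of $G_L$, verifies $\sum_r\textphnc{\ARvav}_r^\dagger\textphnc{\ARvav}_r = \mathbb{1}$, compares with the $\textphnc{\Aaleph}_j$ of equation~(\ref{prg}) via Proposition~\ref{equalityOfChannels}, and then solves for $U_r$. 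In effect the paper re-derives the block-diagonal specialisation of Theorem~\ref{SufNecCondDetProc} rather than citing it. Your route is shorter; the paper's is more self-contained and makes the Kraus structure of the diagonal case explicit. Your observation that invertibility of $\sum_l \xi_l A_{rl}$ and unitarity of each $U_r$ are tacit hypotheses is on point --- the paper likewise records the constraints $U_r^\dagger U_r = U_r U_r^\dagger = \mathbb{1}$ as side conditions at the end of the proof without discharging them further.
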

\begin{proof}
	Let us calculate the first case:
	\begin{align*}
		\mathfrak{C}_{G_{L},\xi}^{\det} &= \Tr_{p} \left[UG \left(\varrho \otimes \xi\right) G^{\dagger}U^{\dagger}\right] \\
		&= \Tr_{p} \left[\left(\mathbb{1} \otimes U^{p}\right) G \left(\varrho \otimes \xi\right) G^{\dagger} \left(\mathbb{1} \otimes U^{p\dagger}\right)\right] \\
		&\overset{(i)}{=} \Tr_{p} \left[G \left(\varrho \otimes \xi\right) G^{\dagger}\right] = \mathfrak{C}_{G,\xi}^{\det}.
	\end{align*}
	In $(i)$ we have used cyclic property of trace.
	
	Now we shall continue with the second solution. As is tradition, we begin by calculating the channel that is implemented by a processor $G_{L} = UG$:
	\begin{align*} \label{leftTransformationDiagonal}
		\mathfrak{C}_{G_{L}, \xi}^{\det} &= \Tr_{p} \left[\left(\sum_{r}^{P} U_{r} \otimes \dyad{r}{r}\right) \left(\sum_{jk}^{P} A_{jk} \otimes \dyad{j}{k}\right) \left(\varrho \otimes \xi\right) \left(\sum_{j^{\prime}k^{\prime}}^{P} A_{j^{\prime}k^{\prime}}^{\dagger} \otimes \dyad{k^{\prime}}{j^{\prime}}\right) \left(\sum_{r^{\prime}}^{P} U_{r^{\prime}}^{\dagger} \otimes \dyad{r^{\prime}}{r^{\prime}}\right)\right] \\
		&= \sum_{rr^{\prime}}^{P} \sum_{jj^{\prime}kk^{\prime}}^{P} U_{r} A_{jk} \varrho A_{j^{\prime}k^{\prime}}^{\dagger} U_{r^{\prime}}^{\dagger} \Tr\left(\dyad{r}{r}\hspace*{-1.5mm}\dyad{j}{k} \xi \dyad{k^{\prime}}{j^{\prime}}\hspace*{-1.5mm}\dyad{r^{\prime}}{r^{\prime}}\right) \\
		&= \sum_{r}^{P} \left(U_{r} \sum_{k}^{P}\xi_{k} A_{rk}\right) \varrho \left(\sum_{k^{\prime}}^{P} \xi_{k^{\prime}}^{\ast} A_{rk^{\prime}}^{\dagger} U_{r}^{\dagger}\right) \\
		&= \sum_{r}^{P} \textphnc{\ARvav}_{r} \varrho \textphnc{\ARvav}_{r}^{\dagger}. \numberthis
 	\end{align*}
 	Let us show that $\{\textphnc{\ARvav}_{r}\}$ are Kraus operators:
 	\begin{align*}
 		\sum_{r}^{P} \textphnc{\ARvav}_{r}^{\dagger} \textphnc{\ARvav}_{r} = \sum_{r}^{P} \left(\sum_{k}^{P} \xi_{k} A_{rk}^{\dagger} U_{r}^{\dagger}\right) \left(U_{r} \sum_{k^{\prime}}^{P} \xi_{k^{\prime}} A_{rk^{\prime}}\right) \overset{(i)}{=} \sum_{rkk^{\prime}}^{P} \xi_{k^{\prime}k} A_{rk}^{\dagger} A_{rk^{\prime}} \overset{(\ref{condA})}{=} \sum_{k}^{P} \xi_{kk} \mathbb{1} = \mathbb{1}.
 	\end{align*}
 	In $(i)$ we have used the unitarity of $U_{r}$ which stems from unitarity of $U$. Let us now compare equations (\ref{leftTransformationDiagonal}) and (\ref{prg}):
 	\begin{align*}
 		\sum_{r}^{P} \textphnc{\ARvav}_{r} \varrho \textphnc{\ARvav}_{r}^{\dagger} = \sum_{j}^{P} \textphnc{\Aaleph}_{j} \varrho \textphnc{\Aaleph}_{j}.
 	\end{align*}
 	We are once again going to use proposition \ref{equalityOfChannels} and discover the condition:
 	\begin{align*} \label{concreteSolutionLeft}
 		\textphnc{\ARvav}_{r} &= \sum_{j}^{P} w_{rj} \textphnc{\Aaleph}_{j} \\
 		\sum_{l}^{P} \xi_{l} U_{r} A_{rl} &= \sum_{jkq}^{P} w_{rj} y_{kq} \xi_{q} A_{jk} \\
 		U_{r} &= \left(\sum_{jkq}^{P} w_{rj} y_{kq} \xi_{q} A_{jk}\right) \left(\sum_{l}^{P} \xi_{l} A_{rl}\right)^{-1}. \numberthis
 	\end{align*}
 	For completeness, let us also write conditions arising from unitarity of $U_{r}$:
 	\begin{align*}
 		U_{r}^{\dagger} U_{r} &= \left(\sum_{l}^{P} \xi_{l}^{\ast} A_{rl}^{\dagger}\right)^{-1} \left(\sum_{j}^{P} w_{rj}^{\ast} \textphnc{\Aaleph}_{j}^{\dagger}\right) \left(\sum_{j^{\prime}}^{P} w_{rj^{\prime}} \textphnc{\Aaleph}_{j^{\prime}}\right) \left(\sum_{l^{\prime}}^{P} \xi_{l^{\prime}} A_{rl^{\prime}}\right)^{-1} \overset{!}{=} \mathbb{1}, 
 		\\
 		U_{r} U_{r}^{\dagger} &=\left(\sum_{j}^{P} w_{rj} \textphnc{\Aaleph}_{j}\right) \left(\sum_{l}^{P} \xi_{l} A_{rl}\right)^{-1} \left(\sum_{l^{\prime}}^{P} \xi_{l^{\prime}}^{\ast} A_{rl^{\prime}}^{\dagger}\right)^{-1} \left(\sum_{j^{\prime}}^{P} w_{rj^{\prime}}^{\ast} \textphnc{\Aaleph}_{j^{\prime}}^{\dagger}\right) \overset{!}{=} \mathbb{1}.
 	\end{align*}
\end{proof}

And now we shall continue with the transformation from the right-hand side of the processor.
\begin{theorem} \label{concreteSolutionDetProcRight}
	Sufficient Condition for Transformation from Right
	\\
	Let us have deterministic quantum processor $G = \sum_{jk}^{P} A_{jk} \otimes \dyad{j}{k}$. Then processor $G_{R} = GV$, where $V$ is unitary operator, and $G$ are deterministically equivalent $G \sim_{det} G_{R}$ if:
	\begin{itemize}
		\item[1)] $V = \left(\mathbb{1} \otimes V^{p}\right)$,
		\item[2)] $V = V^{d} \otimes \mathbb{1}$, where $V^{d} = \left(\sum_{l}^{P} \xi_{l} A_{jl}\right)^{-1} \left(\sum_{ikq}^{P} w_{ji} y_{kq} \xi_{q} A_{ik}\right)$ for all $j$.
	\end{itemize}
\end{theorem}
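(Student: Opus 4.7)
The plan is to mirror the proof of Theorem \ref{concreteSolutionDetProcLeft} verbatim, treating the two cases separately and appealing to Proposition \ref{equalityOfChannels} for the second. For part 1), with $V = \mathbb{1} \otimes V^{p}$, I would insert the tensor factorization directly into $\Tr_{p}[GV(\varrho \otimes \xi)V^{\dagger}G^{\dagger}]$; since $V^{p}$ and $V^{p\dagger}$ act only on the program register, they flank $\xi$ to give $V^{p}\xi V^{p\dagger}$, so that $\mathfrak{C}_{G_{R},\xi}^{det} = \mathfrak{C}_{G, V^{p}\xi V^{p\dagger}}^{det}$. Because the state space $\mathcal{S}(\mathcal{H}_{p})$ is invariant under unitary conjugation, the two sets $\mathfrak{C}_{G_{R}}^{det}$ and $\mathfrak{C}_{G}^{det}$ coincide.

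For part 2), I would first rewrite $GV = \sum_{jk}^{P} A_{jk}V^{d} \otimes \dyad{j}{k}$ and then, in complete analogy with equation (\ref{leftTransformationDiagonal}), compute
\begin{align*}
\mathfrak{C}_{G_{R},\xi}^{det}(\varrho) = \sum_{j}^{P} \left(\sum_{k}^{P} \xi_{k} A_{jk} V^{d}\right) \varrho \left(\sum_{k^{\prime}}^{P} \xi_{k^{\prime}}^{\ast} V^{d\dagger} A_{jk^{\prime}}^{\dagger}\right),
\end{align*}
where the bracketed objects form a Kraus set by unitarity of $V^{d}$ combined with the conditions (\ref{condA}). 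Matching these operators against $\{\textphnc{\Aaleph}_{j}\}$ of equation (\ref{prg}) and applying Proposition \ref{equalityOfChannels} yields the linear identity $\sum_{k}^{P} \xi_{k} A_{jk} V^{d} = \sum_{ikq}^{P} w_{ji} y_{kq} \xi_{q} A_{ik}$, which, when solved for $V^{d}$, reproduces the closed form asserted in the theorem.

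The main obstacle lies in case 2): the displayed formula must yield the \emph{same} operator $V^{d}$ for every index $j$, and it implicitly requires $\sum_{l}^{P} \xi_{l} A_{jl}$ to be invertible for each $j$. Since this is a sufficient-condition statement, I would phrase both properties as standing consistency assumptions on the choice of auxiliary unitaries $\{w_{ji}\}$, $\{y_{kq}\}$ together with the program $\xi$ and the processor operators $\{A_{jk}\}$; once they are satisfied, the Kraus-level identity propagates back to $\mathfrak{C}_{G_{R}}^{det} = \mathfrak{C}_{G}^{det}$. Finally, for completeness I would record the two unitarity constraints on $V^{d}$ obtained from $V^{d\dagger}V^{d} = V^{d}V^{d\dagger} = \mathbb{1}$, exactly as was done for $U_{r}$ at the end of the proof of Theorem \ref{concreteSolutionDetProcLeft}.
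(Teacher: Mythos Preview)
Your proposal is correct and follows essentially the same approach as the paper: part 1) is handled by pushing $V^{p}$ onto the program state, and part 2) by expanding $G(V^{d}\otimes\mathbb{1})$, identifying the resulting Kraus operators, and invoking Proposition~\ref{equalityOfChannels} against the $\{\textphnc{\Aaleph}_{j}\}$ of equation~(\ref{prg}) to solve for $V^{d}$. The paper likewise notes that the resulting expression for $V^{d}$ must hold for all $j$ and records the unitarity constraints $V^{d\dagger}V^{d}=V^{d}V^{d\dagger}=\mathbb{1}$ at the end, exactly as you propose.
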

\begin{proof}
	We shall start with the first case and calculate the implemented channel.
	\begin{align*}
		\mathbb{C}_{G_{R},\xi}^{\det} &= \Tr_{p} \left[GV \left(\varrho \otimes \xi\right) V^{\dagger}G^{\dagger}\right] \\
		&= \Tr_{p} \left[G \left(\mathbb{1} \otimes V^{p}\right) \left(\varrho \otimes \xi\right) \left(\mathbb{1} \otimes V^{p\dagger}\right) G^{\dagger}\right] \\
		&= \Tr_{p} \left[G \left(\varrho \otimes V^{p}\xi V^{p\dagger}\right) G^{\dagger}\right] = \mathbb{C}_{G, \widetilde{\xi}}^{\det}.
	\end{align*}
	Processor $G_{R} = GV$ with program state $\xi$ implements precisely the same channel as processor $G$ with program $\widetilde{\xi} = V^{p} \xi V^{p\dagger}$.
	
	In the second case, we shall again, rather unsurprisingly, calculate the implemented channel by processor $G_{R} = GV = G(V^{d} \otimes \mathbb{1})$:
	\begin{align*} \label{rightTransformationDiagonal}
		\mathfrak{C}_{G_{R}, \xi}^{det} &= \Tr_{p} \left[\left(\sum_{jk}^{P} A_{jk} \otimes \dyad{j}{k}\right) \left(V^{d} \otimes \mathbb{1}\right) \left(\varrho \otimes \xi\right) \left(V^{d\dagger} \otimes \mathbb{1}\right) \left(\sum_{j^{\prime}k^{\prime}}^{P} A_{j^{\prime}k^{\prime}}^{\dagger} \otimes \dyad{k^{\prime}}{j^{\prime}}\right)\right] \\
		&= \sum_{j}^{P} \left(\sum_{k}^{P} \xi_{k} A_{jk} V^{d}\right) \varrho \left(V^{d\dagger} \sum_{k^{\prime}}^{P} \xi_{k^{\prime}}^{\ast} A_{jk^{\prime}}^{\dagger}\right) \\
		&= \sum_{j}^{P} \textphnc{\ARhe}_{j} \varrho \textphnc{\ARhe}_{j}^{\dagger}. \numberthis
	\end{align*}
	Once again, let us show that these are Kraus operators:
	\begin{align*}
		\sum_{j}^{P} \textphnc{\ARhe}_{j}^{\dagger} \textphnc{\ARhe}_{j} = \sum_{j}^{P} \left(V^{d\dagger} \sum_{k^{\prime}}^{P} \xi_{k}^{\ast} A_{jk^{\prime}}^{\dagger}\right) \left(\sum_{k}^{P} \xi_{k} A_{jk} V^{d}\right) = \sum_{j}^{P} \sum_{kk^{\prime}}^{P} \xi_{kk^{\prime}} V^{d\dagger} A_{jk}^{\dagger} A_{jk^{\prime}}^{\dagger} V^{d} = \sum_{k}^{P} \xi_{kk} V^{d\dagger} V^{d} \overset{(\ref{condA})}{=} \mathbb{1},
	\end{align*}
	where we have used unitarity of $V^{d}$. And now is the time to compare equation (\ref{rightTransformationDiagonal}) with (\ref{prg}):
	\begin{align*}
		\sum_{j}^{P} \textphnc{\ARhe}_{j} \varrho \textphnc{\ARhe}_{j}^{\dagger} \overset{!}{=} \sum_{i}^{P} \textphnc{\Aaleph}_{i} \varrho \textphnc{\Aaleph}_{i}^{\dagger}.
	\end{align*}
	Again, using proposition \ref{equalityOfChannels}, we recover:
	\begin{align*}
		\textphnc{\ARhe}_{j} &= \sum_{i}^{P} w_{ji} \textphnc{\Aaleph}_{i} \\
		\sum_{l} \xi_{l} A_{jl} V^{d} &= \sum_{ikq}^{P} w_{ji} y_{kq} \xi_{q} A_{ik} \\
		V^{d} &= \left(\sum_{l}^{P} \xi_{l} A_{jl}\right)^{-1} \left(\sum_{ikq}^{P} w_{ji} y_{kq} \xi_{q} A_{ik}\right).
	\end{align*}
	Previous equation has to be valid for all $j$'s. For the sake of completness, we shall also explicitly write the condition of unitarity:
	\begin{align*}
		V^{d\dagger}V^{d} &= \left(\sum_{i}^{P} w_{ji}^{\ast} \textphnc{\Aaleph}_{i}^{\dagger}\right) \left(\sum_{l}^{P} \xi_{l}^{\ast} A_{jl}^{\dagger}\right)^{-1} \left(\sum_{l^{\prime}}^{P} \xi_{l^{\prime}} A_{jl^{\prime}}\right)^{-1} \left(\sum_{i^{\prime}}^{P} w_{ji^{\prime}} \textphnc{\Aaleph}_{i^{\prime}}
		\right) \overset{!}{=} \mathbb{1}\\
		V^{d}V^{d\dagger} &= \left(\sum_{l^{\prime}}^{P} \xi_{l^{\prime}} A_{jl^{\prime}}\right)^{-1} \left(\sum_{i^{\prime}}^{P} w_{ji^{\prime}} \textphnc{\Aaleph}_{i^{\prime}}
		\right) \left(\sum_{i}^{P} w_{ji}^{\ast} \textphnc{\Aaleph}_{i}^{\dagger}\right)\left(\sum_{l^{\prime}}^{P} \xi_{l^{\prime}} A_{jl^{\prime}}\right)^{-1} \overset{!}{=} \mathbb{1}.
	\end{align*}
\end{proof}

We shall continue with simple example for the second solution from theorem \ref{concreteSolutionDetProcLeft}:
\begin{example}
	Let us have two processors:
	\begin{align*}
		G = \mathbb{1} \otimes \dyad{0}{0} + \sigma_{z} \otimes \dyad{1}{1},\qquad
		\widetilde{G} = \sigma_{z} \otimes \dyad{0}{0} + \mathbb{1} \otimes \dyad{1}{1}.
	\end{align*}
	These processors are clearly equivalent as they implement the following channels:
	\begin{align*}
		\mathfrak{C}_{G,\xi}^{det} \overset{(\ref{elemOfImplDet})}{=} \xi_{00} \mathbb{1} \varrho \mathbb{1} + \xi_{11} \sigma_{z} \varrho \sigma_{z},\qquad
		\mathfrak{C}_{\widetilde{G},\widetilde{\xi}}^{det} \overset{(\ref{elemOfImplDet})}{=} \widetilde{\xi}_{00} \sigma_{z} \varrho \sigma_{z} + \widetilde{\xi}_{11} \mathbb{1} \varrho \mathbb{1}.
	\end{align*}
	Relation between the processors is given in the following manner:
	\begin{align*}
		\widetilde{G} = \left(\sigma_{z} \otimes \dyad{0}{0} + \sigma_{z} \otimes \dyad{1}{1}\right) G = \left(\sigma_{z} \otimes \mathbb{1}\right) G.
	\end{align*}
	From the mentioned theorem \ref{concreteSolutionDetProcLeft}, we see that the transformation is given by equations:
	\begin{align*}
		U &= \sum_{r}^{P} U_{r} \otimes \dyad{r}{r} \\
		U_{r} &= \left(\sum_{jkq}^{P} w_{rj} y_{kq} \xi_{k} A_{jk}\right) \left(\sum_{l}^{P} \widetilde{\xi}_{l} A_{rl}\right)^{-1}.
	\end{align*}
	Therefore $U_{0} = U_{1} = \sigma_{z}$. Let us retroactively calculate operator $U_{0}$. Firstly, relation between programs is given by unitary transformation $\widetilde{\xi} = Y \xi Y^{\dagger} = \sigma_{x} \xi \sigma_{x}$.For matrix $Y$, we can also explicitly write particular elements $y_{01} = y_{10} = 1$ and $y_{00} = y_{11} = 0$. Elements $w_{rj}$ shall be obtained from relation between Kraus operators of the processors. Kraus operators for processor $G$ are $\textphnc{\Aaleph}_{j} = \sum_{k}^{2} \xi_{k} A_{jk}$, where $j = \{0,1\}$ and Kraus operators for processor $\widetilde{G}$ are $\textphnc{\Abeth}_{m} = \sum_{n}^{2} \widetilde{\xi}_{n} B_{mn}$, where $m = \{0,1\}$. Relation between such operators is given by equation $\textphnc{\Abeth}_{m} = \sum_{j}^{2} w_{mj} \textphnc{\Aaleph}_{j}$ with $\sum_{m}^{2} w_{mj}^{\ast}  w_{mj^{\prime}}= \delta_{jj^{\prime}}$. We shall only need first operator $\textphnc{\Abeth}_{0} = w_{00} \textphnc{\Aaleph}_{0} + w_{01} \textphnc{\Aaleph}_{1}$, or equivalently $\widetilde{\xi}_{0} \sigma_{z} = w_{00} \xi_{0} \mathbb{1} + w_{01} \xi_{1} \sigma_{z}$. Therefore $w_{00} = 0$ and $w_{01} = 1$. Now, we can put all these things together:
	\begin{align*}
		U_{0} = \left(\sum_{jkq}^{2} w_{0j} y_{kq} \xi_{k} A_{jk}\right) \left(\sum_{l}^{2} \widetilde{\xi}_{l} A_{0l}\right)^{-1} = \widetilde{\xi}_{0}^{-1} w_{01} y_{10} \xi_{1} A_{11} = \widetilde{\xi}_{0}^{-1} w_{01} \xi_{1} \sigma_{z} = \sigma_{z}.
	\end{align*}
	We have also used relation between Kraus operators, particularly $\widetilde{\xi}_{0} \sigma_{z} = w_{01} \xi_{1} \sigma_{z}$. Matrix $U_{1} = \sigma_{z}$ can be calculated in a similar fashion.
\end{example}

\subsubsection{Two-Dimensional Case}

We shall investigate processors with dimensions of both program and data space being $P = D = 2$. Quantum processor is a unitary matrix and any unitary matrix of dimension $2 \times 2$ can be expressed as \cite{OptimalCreationOfEntanglementUsingATwo-QubitGate}:
\begin{align*}
	G = (U \otimes V)W(U^{\prime} \otimes V^{\prime}),
\end{align*}
where $U$, $V$, $U^{\prime}$ and $V^{\prime}$ are two dimensional unitary matrices and $W = \exp[i(\sum\limits_{\alpha} \alpha \sigma_{\alpha} \otimes \sigma_{\alpha})]$ with $\alpha = \{x, y, z\}$. Let us use the following notation $\sigma_{\alpha} \otimes \sigma_{\alpha} = \varsigma_{\alpha}$ and $\mathbb{1} \otimes \mathbb{1} = \mathbb{I}$. Due to commutation between matrices $\varsigma_{\alpha}$, we can factor the exponential and obtain $W = \exp(ix\varsigma_{x})\exp(iy\varsigma_{y})\exp(iz\varsigma_{z})$. We shall use Taylor expansion of exponential function:
\begin{align*}
	\exp(ix\varsigma_{x}) &= \sum\limits_{n=0}^{\infty} \frac{(ix\varsigma_{x})^{n}}{n!} \\
	&= (ix\varsigma_{x})^{0} + (ix\varsigma_{x})^{1} + (ix\varsigma_{x})^{2} + (ix\varsigma_{x})^{3} + (ix\varsigma_{x})^{4} + (ix\varsigma_{x})^{5} + (ix\varsigma_{x})^{6} + (ix\varsigma_{x})^{7} + \ldots \\
	&= \mathbb{I} + ix\varsigma_{x} - \frac{x^{2}}{2!} \mathbb{I} - i\frac{x^{3}}{3!}\varsigma_{x} + \frac{x^{4}}{4!} \mathbb{I} + i\frac{x^{5}}{5!}\varsigma_{x} - \frac{x^{6}}{6!}\mathbb{I} - i\frac{x^{7}}{7!}\varsigma_{x} + \ldots \\
	&= \left(1-\frac{1}{2!}x^{2}+\frac{1}{4!}x^{4}-\frac{1}{6!}x^{6}+\ldots\right) \mathbb{I} + i \left(x-\frac{1}{3!}x^{3}+\frac{1}{5!}x^{5}-\frac{1}{7!}x^{7}+\ldots\right) \varsigma_{x} \\
	&= \cos(x)\mathbb{I} + i \sin(x)\varsigma_{x}.
\end{align*}
With this knowledge, we can reconstruct the entire matrix $W$:
\begin{align*}\label{W}
	W &= \exp(ix\varsigma_{x})\exp(iy\varsigma_{y})\exp(iz\varsigma_{z}) \\
	&= \left[\cos(x)\mathbb{I} + i \sin(x)\varsigma_{x}\right]\left[\cos(y)\mathbb{I} + i \sin(y)\varsigma_{y}\right]\left[\cos(z)\mathbb{I} + i \sin(z)\varsigma_{z}\right] \\
	&= \left[\cos(x)\cos(y)\cos(z) + i\sin(x)\sin(y)\sin(z)\right] \mathbb{I} \\
	&+ \left[\cos(x)\sin(y)\sin(z) + i\sin(x)\cos(y)\cos(z)\right] \mathbb{\varsigma_{x}} \\ &+ \left[\sin(x)\cos(y)\sin(z) + i\cos(x)\sin(y)\cos(z)\right] \mathbb{\varsigma_{y}}\\ 
	&+ \left[\sin(x)\sin(y)\cos(z) + i\cos(x)\cos(y)\sin(z)\right] \mathbb{\varsigma_{z}} \\
	&= \left[\cos(x-y)e^{iz}\right] \left(\dyad{00}{00} + \dyad{11}{11}\right) + \left[\cos(x+y)e^{-iz}\right] \left(\dyad{01}{01} + \dyad{10}{10}\right) \\
	&+ \left[i\sin(x-y)e^{iz}\right] \left(\dyad{00}{11} + \dyad{11}{00}\right) + \left[i\sin(x+y)e^{-iz}\right] \left(\dyad{01}{10} + \dyad{10}{01}\right). \numberthis
\end{align*}
As $W$ is a unitary matrix, we can view it also as a processor:
\begin{align*} \label{GW}
	W &\overset{(\ref{QPeq})}{=} A^{W}_{00} \otimes \dyad{0}{0} + A^{W}_{01} \otimes \dyad{0}{1} + A^{W}_{10} \otimes \dyad{1}{0} + A^{W}_{11} \otimes \dyad{1}{1} \\ 
	&= \left[\cos(x-y)e^{iz}\dyad{0}{0} + \cos(x+y)e^{-iz}\dyad{1}{1}\right] \otimes \dyad{0}{0} \\
	&+ i\left[\sin(x-y)e^{iz}\dyad{0}{1} + \sin(x+y)e^{-iz}\dyad{1}{0}\right] \otimes \dyad{0}{1} \\
	&+i\left[\sin(x-y)e^{iz}\dyad{1}{0} + \sin(x+y)e^{-iz}\dyad{0}{1}\right] \otimes \dyad{1}{0} \\
	&+ \left[\cos(x-y)e^{iz}\dyad{1}{1} + \cos(x+y)e^{-iz}\dyad{0}{0}\right] \otimes \dyad{1}{1}. \numberthis
\end{align*}
Furthermore, we shall derive equations, which fulfillment would mean that two processors are deterministically equivalent. For this reason, we shall calculate Choi matrix for processor $W$. Firstly, let us denote coefficients next to the individual matrices $\mathbb{I}$ and $\varsigma_{\alpha}$ from equation (\ref{W}) as follows:
\begin{align*}\label{ks}
	k_{i} &=  \cos(x)\cos(y)\cos(z) + i\sin(x)\sin(y)\sin(z), \\ k_{x} &= \cos(x)\sin(y)\sin(z) + i\sin(x)\cos(y)\cos(z), \\
	k_{y} &= \sin(x)\cos(y)\sin(z) + i\cos(x)\sin(y)\cos(z), \\ k_{z} &= \sin(x)\sin(y)\cos(z) + i\cos(x)\cos(y)\sin(z). \numberthis
\end{align*}
Let us now proceed with the Choi matrix itself, that is calculated as given in equation (\ref{choi}).
\begin{align*}
	&\left(\mathfrak{C}_{W,\xi}^{det} \otimes \mathcal{I}\right) \left(\sum_{i,i^{\prime}}^{D} \dyad{ii}{i^{\prime}i^{\prime}}\right) = \sum_{ii^{\prime}}^{D} \Tr_{p} \left[G \left(\dyad{i}{i^{\prime}} \otimes \xi\right) G^{\dagger}\right] \otimes \dyad{i}{i^{\prime}} \\
	&= \sum_{i,i^{\prime}}^{D} \Tr_{p} \left[\left(k_{i} \mathbb{I} + k_{x} \mathbb{\varsigma}_{x} + k_{y} \mathbb{\varsigma}_{y} + k_{z} \mathbb{\varsigma}_{z} \right)  \left(\dyad{i}{i^{\prime}} \otimes \xi\right) \left(k_{i}^{\ast} \mathbb{I} + k_{x}^{\ast} \mathbb{\varsigma}_{x} + k_{y}^{\ast} \mathbb{\varsigma}_{y} + k_{z}^{\ast} \mathbb{\varsigma}_{z} \right) \right] \otimes \dyad{i}{i^{\prime}} \\
	&= \sum_{ii^{\prime}}^{D} [k_{i}k_{i}^{\ast} \Tr\left(\mathbb{1} \xi \mathbb{1}\right) \mathbb{1} \dyad{i}{i^{\prime}} \mathbb{1} + k_{i}k_{x}^{\ast} \Tr\left(\mathbb{1} \xi \sigma_{x}\right) \mathbb{1}  \dyad{i}{i^{\prime}} \sigma_{x} \\
	&+ k_{i}k_{y}^{\ast} \Tr\left(\mathbb{1} \xi \sigma_{y}\right) \mathbb{1}  \dyad{i}{i^{\prime}} \sigma_{y} + k_{i}k_{z}^{\ast} \Tr\left(\mathbb{1} \xi \sigma_{z}\right) \mathbb{1}  \dyad{i}{i^{\prime}} \sigma_{z} \\
	&+ k_{x}k_{i}^{\ast} \Tr\left(\sigma_{x} \xi \mathbb{1}\right) \sigma_{x} \dyad{i}{i^{\prime}} \mathbb{1}
	+ k_{x}k_{x}^{\ast} \Tr\left(\sigma_{x} \xi \sigma_{x}\right) \sigma_{x} \dyad{i}{i^{\prime}} \sigma_{x} \\
	&+ k_{x}k_{y}^{\ast} \Tr\left(\sigma_{x} \xi \sigma_{y}\right) \sigma_{x} \dyad{i}{i^{\prime}} \sigma_{y}
	+ k_{x}k_{z}^{\ast} \Tr\left(\sigma_{x} \xi \sigma_{z}\right) \sigma_{x} \dyad{i}{i^{\prime}} \sigma_{z} \\
	&+ k_{y}k_{i}^{\ast} \Tr\left(\sigma_{y} \xi \mathbb{1}\right) \sigma_{y} \dyad{i}{i^{\prime}} \mathbb{1}
	+ k_{y}k_{x}^{\ast} \Tr\left(\sigma_{y} \xi \sigma_{x}\right) \sigma_{y} \dyad{i}{i^{\prime}} \sigma_{x} \\
	&+ k_{y}k_{y}^{\ast} \Tr\left(\sigma_{y} \xi \sigma_{y}\right) \sigma_{y} \dyad{i}{i^{\prime}} \sigma_{y}
	+ k_{y}k_{z}^{\ast} \Tr\left(\sigma_{y} \xi \sigma_{z}\right) \sigma_{y} \dyad{i}{i^{\prime}} \sigma_{z} \\
	&+ k_{z}k_{i}^{\ast} \Tr\left(\sigma_{z} \xi \mathbb{1}\right) \sigma_{z} \dyad{i}{i^{\prime}} \mathbb{1}
	+ k_{z}k_{x}^{\ast} \Tr\left(\sigma_{z} \xi \sigma_{x}\right) \sigma_{z} \dyad{i}{i^{\prime}} \sigma_{x} \\
	&+ k_{z}k_{y}^{\ast} \Tr\left(\sigma_{z} \xi \sigma_{y}\right) \sigma_{z} \dyad{i}{i^{\prime}} \sigma_{y}
	+ k_{z}k_{z}^{\ast} \Tr\left(\sigma_{z} \xi \sigma_{z}\right) \sigma_{z} \dyad{i}{i^{\prime}} \sigma_{z}] \otimes \dyad{i}{i^{\prime}},
\end{align*}
where $\mathcal{I}$ denotes identity channel. We can express two-dimensional quantum state in the following way $\xi = \frac{1}{2} (\mathbb{1} + \vec{e}\cdot\vec{\sigma}) = \frac{1}{2} (\mathbb{1} + e_{x}\sigma_{x} + e_{y}\sigma_{y} + e_{z}\sigma_{z})$, where $e_{x}, e_{y}, e_{z} \in \mathbb{R}$ and $e_{x}^{2} + e_{y}^{2} + e_{z}^{2} = 1$, and use this to calculate traces from the previous equation:
\begin{align*}
	\Tr\left(\mathbb{1}\xi\mathbb{1}\right) = \Tr\left(\sigma_{\alpha}\xi\sigma_{\alpha}\right) = 1, \qquad \Tr\left(\xi\sigma_{\alpha}\right) = \Tr\left(\sigma_{\alpha}\xi\right) = e_{\alpha},
\end{align*}
where $\alpha = \{x,y,z\}$ and we are also taking advantage of the fact that Pauli matrices are traceless. Additionally, by using following relations between Pauli matrices:
\begin{align*}
	\sigma_{x}\sigma_{y} = -\sigma_{y}\sigma_{x} = i\sigma_{z}, \qquad \sigma_{y}\sigma_{z} = -\sigma_{z}\sigma_{y} = i\sigma_{x}, \qquad \sigma_{z}\sigma_{x} = -\sigma_{x}\sigma_{z} = i\sigma_{y},
\end{align*}
we can arrive at the final expression for our Choi matrix.
\begin{align*}\label{ChoiOfG}
	&\left(\mathfrak{C}_{W,\xi}^{det} \otimes \mathcal{I}\right) \left(\sum_{i,i^{\prime}}^{D} \dyad{ii}{i^{\prime}i^{\prime}}\right) = \sum_{ii^{\prime}}^{D} [k_{i}k_{i}^{\ast} \mathbb{1}\dyad{i}{i^{\prime}}\mathbb{1} + k_{x}k_{x}^{\ast} \sigma_{x}\dyad{i}{i^{\prime}}\sigma_{x} + k_{y}k_{y}^{\ast} \sigma_{y}\dyad{i}{i^{\prime}}\sigma_{y} + k_{z}k_{z}^{\ast} \sigma_{z}\dyad{i}{i^{\prime}}\sigma_{z} \\ 
	&+ e_{x} \left(k_{i}k_{x}^{\ast} \mathbb{1}\dyad{i}{i^{\prime}}\sigma_{x} + k_{x}k_{i}^{\ast} \sigma_{x}\dyad{i}{i^{\prime}}\mathbb{1} + ik_{z}k_{y}^{\ast} \sigma_{z}\dyad{i}{i^{\prime}}\sigma_{y} - ik_{y}k_{z}^{\ast} \sigma_{y}\dyad{i}{i^{\prime}}\sigma_{z}\right) \\
	&+ e_{y} \left(k_{i}k_{y}^{\ast} \mathbb{1}\dyad{i}{i^{\prime}}\sigma_{y} + k_{y}k_{i}^{\ast} \sigma_{y}\dyad{i}{i^{\prime}}\mathbb{1} + ik_{x}k_{z}^{\ast} \sigma_{x}\dyad{i}{i^{\prime}}\sigma_{z} - ik_{z}k_{x}^{\ast} \sigma_{z}\dyad{i}{i^{\prime}}\sigma_{x}\right) \\
	&+ e_{z} \left(k_{i}k_{z}^{\ast} \mathbb{1}\dyad{i}{i^{\prime}}\sigma_{z} + k_{z}k_{i}^{\ast} \sigma_{z}\dyad{i}{i^{\prime}}\mathbb{1} + ik_{y}k_{x}^{\ast} \sigma_{y}\dyad{i}{i^{\prime}}\sigma_{x} - ik_{x}k_{y}^{\ast} \sigma_{x}\dyad{i}{i^{\prime}}\sigma_{y}\right)] \otimes \dyad{i}{i^{\prime}}. \numberthis
\end{align*}
We can now take a different processor $W^{\prime}$ with different parameters $x^{\prime}, y^{\prime}, z^{\prime}$ and derive equations for which two processors are equivalent. This is possible since two quantum channels are the same if and only if they have identical Choi matrices. Firstly, let us show, on two examples, the calculation of coefficients next to individual states from Choi matrix. We shall use the following notation $\cos = c$ and $\sin = s$.
\begin{align*}
	k_{i}k_{i}^{\ast} &\overset{(\ref{ks})}{=} \left[c(x)c(y)c(z) + is(x)s(y)s(z)\right] \left[c(x)c(y)c(z) - is(x)s(y)s(z)\right] \\ 
	&= c^{2}(x)c^{2}(y)c^{2}(z) + s^{2}(x)s^{2}(y)s^{2}(z) \\
	k_{i}k_{x}^{\ast} &\overset{(\ref{ks})}{=} \left[c(x)c(y)c(z) + is(x)s(y)s(z)\right] \left[c(x)s(y)s(z) - is(x)c(y)c(z)\right] \\
	&\overset{(i)}{=} c^{2}(x)\frac{1}{2}s(2y)\frac{1}{2}s(2z) + \frac{i}{2} s(2x)s^{2}(y)s^{2}(z) - \frac{i}{2}s(2x)c^{2}(y)c^{2}(z) + s^{2}(x)\frac{1}{2}s(2y)\frac{1}{2}s(2z) \\ 
	&= \frac{1}{4} \left[c^{2}(x) + s^{2}(x)\right] s(2y)s(2z) + \frac{i}{2} s(2x) \left[s^{2}(y)s^{2}(z) - c^{2}(y)c^{2}(z)\right] \\
	&\overset{(ii)}{=} \frac{1}{4} s(2y)s(2z) - \frac{i}{2} s(2x)c(y-z)c(y+z),
\end{align*}
where in $(i)$ we have used that $\cos(x)\sin(x)=\frac{1}{2}\sin(2x)$ and in $(ii)$ we have used $\sin^{2}(y)\sin^{2}(z) - \cos^{2}(y)\cos^{2}(z) = - \cos(y-z)\cos(y+z)$. Other coefficients can be calculated in a similar fashion. Now, we can take different processor $W^{\prime}$, characterized by parameters $x^{\prime}, y^{\prime}, z^{\prime}$, with different program state $\xi^{\prime} = \frac{1}{2} (\mathbb{1} + e_{x}^{\prime} \sigma_{x} + e_{y}^{\prime} \sigma_{y} + e_{z}^{\prime} \sigma_{z})$ and compare it with Choi matrix of processor $W$ expressed in equation (\ref{ChoiOfG}). We recover $10$ equations for the case when Choi matrices of $\mathfrak{C}_{W,\xi}^{det}$ and $\mathfrak{C}_{W^{\prime},\xi^{\prime}}^{det}$ are equal.
\begin{align*}
	c^{2}(x)c^{2}(y)c^{2}(z) + s^{2}(x)s^{2}(y)s^{2}(z) &\overset{!}{=} c^{2}(x^{\prime})c^{2}(y^{\prime})c^{2}(z^{\prime}) + s^{2}(x^{\prime})s^{2}(y^{\prime})s^{2}(z^{\prime}), \\
	c^{2}(x)s^{2}(y)s^{2}(z) + s^{2}(x)c^{2}(y)c^{2}(z) &\overset{!}{=} c^{2}(x^{\prime})s^{2}(y^{\prime})s^{2}(z^{\prime}) + s^{2}(x^{\prime})c^{2}(y^{\prime})c^{2}(z^{\prime}), \\
	s^{2}(x)c^{2}(y)s^{2}(z) + c^{2}(x)s^{2}(y)c^{2}(z) &\overset{!}{=} s^{2}(x^{\prime})c^{2}(y^{\prime})s^{2}(z^{\prime}) + c^{2}(x^{\prime})s^{2}(y^{\prime})c^{2}(z^{\prime}), \\
	s^{2}(x)s^{2}(y)c^{2}(z) + c^{2}(x)c^{2}(y)s^{2}(z) &\overset{!}{=} s^{2}(x^{\prime})s^{2}(y^{\prime})c^{2}(z^{\prime}) + c^{2}(x^{\prime})c^{2}(y^{\prime})s^{2}(z^{\prime}), \\
	e_{x} \left[\frac{1}{4} s(2y)s(2z) - \frac{i}{2} s(2x)c(y-z)c(y+z)\right] &\overset{!}{=} e_{x}^{\prime} \left[\frac{1}{4} s(2y^{\prime})s(2z^{\prime}) - \frac{i}{2} s(2x^{\prime})c(y^{\prime}-z^{\prime})c(y^{\prime}+z^{\prime})\right], \\
	e_{x} \left[\frac{1}{4} s(2y)s(2z) - \frac{i}{2} s(2x)s(y-z)s(y+z)\right] &\overset{!}{=} e_{x}^{\prime} \left[\frac{1}{4} s(2y^{\prime})s(2z^{\prime}) - \frac{i}{2} s(2x^{\prime})s(y^{\prime}-z^{\prime})s(y^{\prime}+z^{\prime})\right], \\
	e_{y} \left[\frac{1}{4} s(2z)s(2x) - \frac{i}{2} s(2y)c(z-x)c(z+x)\right] &\overset{!}{=} e_{y}^{\prime} \left[\frac{1}{4} s(2z^{\prime})s(2x^{\prime}) - \frac{i}{2} s(2y^{\prime})c(z^{\prime}-x^{\prime})c(z^{\prime}+x^{\prime})\right], \\
	e_{y} \left[\frac{1}{4} s(2z)s(2x) - \frac{i}{2} s(2y)s(z-x)s(z+x)\right] &\overset{!}{=} e_{y}^{\prime} \left[\frac{1}{4} s(2z^{\prime})s(2x^{\prime}) - \frac{i}{2} s(2y^{\prime})s(z^{\prime}-x^{\prime})s(z^{\prime}+x^{\prime})\right], \\
	e_{z} \left[\frac{1}{4} s(2x)s(2y) - \frac{i}{2} s(2z)c(x-y)c(x+y)\right] &\overset{!}{=} e_{z}^{\prime} \left[\frac{1}{4} s(2x^{\prime})s(2y^{\prime}) - \frac{i}{2} s(2z^{\prime})c(x^{\prime}-y^{\prime})c(x^{\prime}+y^{\prime})\right], \\
	e_{z} \left[\frac{1}{4} s(2x)s(2y) - \frac{i}{2} s(2z)s(x-y)s(x+y)\right] &\overset{!}{=} e_{z}^{\prime} \left[\frac{1}{4} s(2x^{\prime})s(2y^{\prime}) - \frac{i}{2} s(2z^{\prime})s(x^{\prime}-y^{\prime})s(x^{\prime}+y^{\prime})\right].
\end{align*}
Solving these equations would reveal when processors $W$ and $W^{\prime}$ are equivalent $W \sim_{det} W^{\prime}$.

\paragraph{SWAP Processor} SWAP gate, denoted with $S$, acting on two-dimensional quantum system swaps quantum state of the first subsystem with the quantum state of the second subsystem: $S(\ket{\psi} \otimes \ket{\Xi}) = \ket{\Xi} \otimes \ket{\psi}$. In computational basis, SWAP can be expressed as follows $S = \dyad{00}{00} + \dyad{01}{10} + \dyad{10}{01} + \dyad{11}{11}$. SWAP used as a deterministic processor results in program state on the output:
\begin{align*}
	\mathfrak{C}_{S,\xi}^{det} = \Tr_{p} \left[S\left(\varrho \otimes \xi\right)S^{\dagger}\right] = \Tr_{p} \left[\xi \otimes \varrho\right] = \xi \Tr\left(\varrho\right) = \xi.
\end{align*}
We shall strive to find all processors equivalent to SWAP processor for $P=D=2$. Let us begin by investigation of matrix $W = \exp(ix\varsigma_{x})\exp(iy\varsigma_{y})\exp(iz\varsigma_{z})$. Generally, $W$ used as a processor implements:
\begin{align*}\label{WimplementationA}
	\mathfrak{C}_{W,\xi}^{det} &\overset{(\ref{elemOfImplDet})}{=} \xi_{00} \sum_{j}^{2} \left(A_{j0}^{W} \varrho A_{j0}^{W\dagger}\right) + \xi_{01}  \sum_{j}^{2} \left(A_{j0}^{W} \varrho A_{j1}^{W\dagger}\right) \\
	&+ \xi_{10}  \sum_{j}^{2} \left(A_{j1}^{W} \varrho A_{j0}^{W\dagger}\right) + \xi_{11}  \sum_{j}^{2} \left(A_{j1}^{W} \varrho A_{j1}^{W\dagger}
	\right). \numberthis
\end{align*}
We know that the output should be program state $\xi$. Therefore, if we take program to be $\xi = \dyad{0}{0}$, we can check when the output state is be pure. We can use this assumption as it is weaker than to expect the exact state $\dyad{0}{0}$ on the output. Using this fact, we shall, hopefully, derive conditions on $x$, $y$ and $z$. Let us therefore explicitly calculate expression $\tau = A_{00}^{W} \varrho A_{00}^{W\dagger} + A_{10}^{W} \varrho A_{10}^{W\dagger}$ and figure out when state $\tau$ is pure. We shall, again, use notation $c = \cos$ and $s = \sin$. Moreover, before we begin with the calculation, let us remind that $A_{00}^{W} = c(x-y)e^{iz} \dyad{0}{0} + c(x+y)e^{-iz} \dyad{1}{1}$ and $A_{10}^{W} = s(x-y)e^{iz} \dyad{1}{0} + s(x+y)e^{-iz} \dyad{0}{1}$.
\begin{align*}
	\tau &= \left[c(x-y)e^{iz} \dyad{0}{0} + c(x+y)e^{-iz} \dyad{1}{1}\right] \varrho \left[c(x-y)e^{-iz} \dyad{0}{0} + c(x+y)e^{iz} \dyad{1}{1}\right] \\
	&+ \left[s(x-y)e^{iz} \dyad{1}{0} + s(x+y)e^{-iz} \dyad{0}{1}\right] \varrho \left[s(x-y)e^{-iz} \dyad{0}{1} + s(x+y)e^{iz} \dyad{1}{0}\right] \\
	&= c^{2}(x-y)\varrho_{00}\dyad{0}{0} + c(x-y)c(x+y)e^{2iz}\varrho_{01}\dyad{0}{1} \\
	&+ c(x+y)c(x-y)e^{-2iz}\varrho_{10}\dyad{1}{0} + c^{2}(x+y)\varrho_{11}\dyad{1}{1} \\
	&+ s^{2}(x-y)\varrho_{00}\dyad{1}{1} + s(x-y)s(x+y)e^{2iz}\varrho_{01}\dyad{1}{0} \\
	&+ s(x+y)s(x-y)e^{-2iz}\varrho_{10}\dyad{0}{1} + s^{2}(x+y)\varrho_{11}\dyad{0}{0} \\
	&= \left[c^{2}(x-y)\varrho_{00} + s^{2}(x+y)\varrho_{11}\right] \dyad{0}{0} \\
	&+ \left[c(x-y)c(x+y)e^{2iz}\varrho_{01} + s(x+y)s(x-y)e^{-2iz}\varrho_{10}\right] \dyad{0}{1} \\
	&+ \left[c(x+y)c(x-y)e^{-2iz}\varrho_{10} + s(x-y)s(x+y)e^{2iz}\varrho_{01}\right] \dyad{1}{0} \\
	&+ \left[c^{2}(x+y)\varrho_{11} + s^{2}(x-y)\varrho_{00}\right] \dyad{1}{1}.
\end{align*}
To check when the state $\tau$ is pure we shall use purity $\Tr(\tau^{2}) \overset{!}{=} 1$. Thus, we shall calculate $\tau^{2}$. However, it is enough to only calculate entries on the diagonal, as we are taking trace of the state $\tau^{2}$. Therefore, let us calculate matrix element $\bra{0}\tau^{2}\ket{0}$:
\begin{align*}
	&\left[c^{2}(x-y)\varrho_{00} + s^{2}(x+y)\varrho_{11}\right]^{2} + c^{2}(x-y)c^{2}(x+y)\varrho_{01}\varrho_{10} + s^{2}(x-y)s^{2}(x+y)\varrho_{10}\varrho_{01} \\
	&+ c(x-y)c(x+y)s(x-y)s(x+y)e^{4iz}\varrho_{01}^{2} + c(x-y)c(x+y)s(x-y)s(x+y)e^{-4iz}\varrho_{10}^{2} \\
	&= \left[c^{2}(x-y)\varrho_{00} + s^{2}(x+y)\varrho_{11}\right]^{2} + \left[c^{2}(x-y)c^{2}(x+y) + s^{2}(x-y)s^{2}(x+y)\right] \varrho_{01}\varrho_{10} \\
	&+ c(x-y)c(x+y)s(x-y)s(x+y) \left(e^{4iz}\varrho_{01}^{2} + e^{-4iz}\varrho_{10}^{2}\right) \\
	&\overset{(i)}{=} \left[c^{2}(x-y)\varrho_{00} + s^{2}(x+y)\varrho_{11}\right]^{2} + \frac{1}{4} \left[c(4x) + c(4y) + 2\right] \varrho_{01}\varrho_{10} \\
	&+ \frac{1}{8} \left[c(4y) - c(4x)\right] \left(e^{4iz}\varrho_{01}^{2} + e^{-4iz}\varrho_{10}^{2}\right).
\end{align*}
In $(i)$, we have used help from WolframAlpha to arrive at the final expression. We have calculated $\bra{1}\tau^{2}\ket{1}$ similarly and therefore, expression for trace is:
\begin{align*}
	\Tr(\tau^{2}) &= \left[c^{2}(x-y)\varrho_{00} + s^{2}(x+y)\varrho_{11}\right]^{2} + \frac{1}{4} \left[c(4x) + c(4y) + 2\right] \varrho_{01}\varrho_{10} \\
	&+ \frac{1}{8} \left[c(4y) - c(4x)\right] \left(e^{4iz}\varrho_{01}^{2} + e^{-4iz}\varrho_{10}^{2}\right) \\
	&+ \left[c^{2}(x+y)\varrho_{00} + s^{2}(x-y)\varrho_{11}\right]^{2} + \frac{1}{4} \left[c(4x) + c(4y) + 2\right] \varrho_{01}\varrho_{10} \\
	&+ \frac{1}{8} \left[c(4y) - c(4x)\right] \left(e^{4iz}\varrho_{01}^{2} + e^{-4iz}\varrho_{10}^{2}\right) \\
	&= \left[c^{2}(x+y)\varrho_{00} + s^{2}(x-y)\varrho_{11}\right]^{2} + \left[c^{2}(x-y)\varrho_{00} + s^{2}(x+y)\varrho_{11}\right]^{2} \\
	&+ \frac{1}{2} \left[c(4x) + c(4y) + 2\right] \varrho_{01}\varrho_{10} 
	+ \frac{1}{4} \left[c(4y) - c(4x)\right] \left(e^{4iz}\varrho_{01}^{2} + e^{-4iz}\varrho_{10}^{2}\right) \overset{!}{=} 1.
\end{align*}
We can have a look at $c(4x) + c(4y) + 2 \overset{!}{=} 0$ which needs to be $0$ as the result cannot depend on $\varrho$ and also has to be valid for any quantum state $\varrho$. Therefore:
\begin{align*}\label{thisLabelIsXY}
	x = \frac{\pi}{4} + \frac{n_{1}\pi}{2}, \qquad y = \frac{\pi}{4} + \frac{n_{2}\pi}{2}, \numberthis
\end{align*}
where $n_{1}, n_{2} \in \mathbb{Z}$ and $\mathbb{Z}$ denotes set of integers. Let us substitute our result back into trace, specifically for $x = \frac{\pi}{4}$ and $y = \frac{\pi}{4}$:
\begin{align*}
	\Tr(\tau^{2}) &= \left[c^{2}\left(\frac{\pi}{2}\right)\varrho_{00} + s^{2}(0)\varrho_{1}\right]^{2} + \left[c^{2}\left(0\right)\varrho_{00} + s^{2}(\frac{\pi}{2})\varrho_{1}\right]^{2} \\
	&+ \frac{1}{2} \left[c\left(\pi\right) + c\left(\pi\right) + 2\right] \varrho_{01}\varrho_{10} + \frac{1}{4} \left[c\left(\pi\right) - c\left(\pi\right)\right]\left(e^{4iz}\varrho_{01}^{2} + e^{-4iz}\varrho_{10}^{2}\right) \\
	&= \left(\varrho_{00} + \varrho_{11}\right)^{2} + \frac{1}{2} \left(-1-1+2\right)\varrho_{01}\varrho_{10} + \frac{1}{4}\left(-1+1\right)\left(e^{4iz}\varrho_{01}^{2} + e^{-4iz}\varrho_{10}^{2}\right) = 1
\end{align*}
We can also check our conclusion for a concrete quantum state $\varrho = \frac{1}{2} (\dyad{0}{0} + \dyad{1}{1})$. This reduces trace of $\tau^{2}$ to:
\begin{align*}
	\Trace(\tau^{2}) &= \frac{1}{4} \left\{\left[c^{2}(x+y) + s^{2}(x-y)\right]^{2} + \left[c^{2}(x-y) + s^{2}(x+y)\right]^{2}\right\} \\
	&= \frac{1}{4} [c^{4}(x+y) + s^{4}(x-y) + 2c^{2}(x+y)s^{2}(x-y) + c^{4}(x-y) + s^{4}(x+y) + 2c^{2}(x-y)s^{2}(x+y)] \\
	&\hspace*{-1mm}\overset{(i)}{=} \frac{1}{4} [c^{4}(x-y)+s^{4}(x-y) + + c^{4}(x+y) + s^{4}(x+y) - \frac{1}{2}c(4x) - \frac{1}{2}c(4y) + 1] \\
	&\hspace*{-1mm}\overset{(ii)}{=} \frac{1}{8} \left[-2s^{2}(2x)c(4y) - c(4x) + 5\right] \overset{!}{=} 1.
\end{align*}
In $(i)$ and $(ii)$ we have used WolframAlpha. We can see that $2s^{2}(2x)c(4y) + c(4x) \overset{!}{=} -3$ in order for the entire expression in square brackets to be $8$, so the trace can be equal to $1$. This is achieved only for case when $x = \frac{\pi}{4} + \frac{n_{1}\pi}{2}$ and $y = \frac{\pi}{4} + \frac{n_{2}\pi}{2}$, which confirms equation (\ref{thisLabelIsXY}). Now we shall substitute our solutions in equation (\ref{GW}) and obtain:
\begin{align*}
	W = e^{iz} \dyad{00}{00} + ie^{-iz} \dyad{10}{01} + ie^{-iz} \dyad{01}{10} + e^{iz} \dyad{11}{11}.
\end{align*}
Therefore, our new operators forming processor $W$ are $A_{00}^{W} = e^{iz}\dyad{0}{0}, A_{01}^{W} = ie^{-iz}\dyad{1}{0}, A_{10}^{W} = ie^{-iz}\dyad{0}{1}$ and $A_{11}^{W} = e^{iz}\dyad{1}{1}$. And we can substitute these operators into equation (\ref{WimplementationA}):
\begin{align*}\label{CWXI}
	\mathfrak{C}_{W,\xi}^{det} &= \xi_{00} \left(\dyad{0}{0}\varrho\dyad{0}{0} + \dyad{0}{1}\varrho\dyad{1}{0}\right) + \xi_{01} \left[e^{iz}\dyad{0}{0}\varrho\dyad{0}{1}(-i)e^{iz} + ie^{-iz}\dyad{0}{1}\varrho\dyad{1}{1}e^{-iz}\right] \\
	&+ \xi_{10} \left[ie^{-iz} \dyad{1}{0}\varrho\dyad{0}{0} e^{-iz} + e^{iz} \dyad{1}{1}\varrho\dyad{1}{0} (-i)e^{iz}\right] + \xi_{11} \left(\dyad{1}{0}\varrho\dyad{0}{1} + \dyad{1}{1}\varrho\dyad{1}{1}\right) \\
	&= \xi_{00} \dyad{0}{0} + i\xi_{01} \left(-e^{2iz}\varrho_{00} + e^{-2iz}\varrho_{11}\right) \dyad{0}{1} + i\xi_{10} \left(e^{-2iz}\varrho_{00} - e^{2iz}\varrho_{11}\right) \dyad{1}{0} + \xi_{11} \dyad{1}{1}. \numberthis
\end{align*}
For processor $W$ to be equivalent to SWAP processor $S$, we cannot have any dependence in $\mathfrak{C}_{W,\xi}^{det}$ on data state $\varrho$. Therefore, $e^{-2iz} \overset{!}{=} -e^{2iz}$ or considering that $e^{-2iz} = \cos(2z) - i\sin(2z)$ and $-e^{2iz} = -\cos(2z) - i\sin(2z)$ we arrive at the equation:
\begin{align*}
	\cos(2z) - i\sin(2z) = -\cos(2z) - i\sin(2z),
\end{align*}
and thus $2\cos(2z) \overset{!}{=} 0$, which is true for $z = \frac{\pi}{4} + \frac{n_{3}\pi}{2}$ with $n_{3} \in \mathbb{Z}$ being integer. By substituting this value for $z$ in equation $(\ref{CWXI})$, we can see that $\mathfrak{C}_{W,\xi}^{det}$ gives program state $\xi$ as a result, because $-e^{i\frac{\pi}{2}} = e^{-i\frac{\pi}{2}} = -i$ and thus:
\begin{align*}
	\mathfrak{C}_{W,\xi}^{det} = \xi_{00}\dyad{0}{0} + i\xi_{01} \left(-i\varrho_{00} - i\varrho_{11}\right) \dyad{0}{1} + i\xi_{10} \left(-i\varrho_{00} - i\varrho_{11}\right) \dyad{1}{0} + \xi_{11} \dyad{1}{1} = \xi.
\end{align*}
Therefore, only solutions where $W$ forms equivalent processor to the SWAP processor are:
\begin{align*}\label{xyzxyzxyz}
	x = \frac{\pi}{4} + \frac{n_{1}\pi}{2}, \qquad y = \frac{\pi}{4} + \frac{n_{2}\pi}{2}, \qquad z = \frac{\pi}{4} + \frac{n_{3}\pi}{2}, \numberthis
\end{align*}
where $n_{1}, n_{2}, n_{3} \in \mathbb{Z}$. Let us substitute $x$, $y$ and $z$ for $n_{1} = n_{2} = n_{3} = 0$ back in $W$:
\begin{align*}
	W &\overset{(\ref{W})}{=} \cos(0) e^{i\frac{\pi}{4}} (\dyad{00}{00} + \dyad{11}{11}) + \cos(\frac{\pi}{2}) e^{-i\frac{\pi}{4}} (\dyad{01}{01} + \dyad{10}{10}) \\
	&\hspace*{3mm}+ i\sin(0)e^{i\frac{\pi}{4}} (\dyad{00}{11} + \dyad{11}{00}) + i\sin(\frac{\pi}{2}) e^{-i\frac{\pi}{4}} (\dyad{01}{10} + \dyad{10}{01}) \\
	&= e^{i\frac{\pi}{4}} (\dyad{00}{00} + \dyad{11}{11}) + ie^{-i\frac{\pi}{4}} (\dyad{01}{10} + \dyad{10}{01}) = e^{i\frac{\pi}{4}} S.
\end{align*}

In the final paragraph of this section, let us only consider $W = e^{i\frac{\pi}{4}} S$. However, arbitrary unitary matrix $2 \times 2$ is given by $\left(U \otimes V\right) W \left(U^{\prime} \otimes V^{\prime}\right)$. From theorems \ref{concreteSolutionDetProcLeft} and \ref{concreteSolutionDetProcRight} we know that deterministic processors are equivalent under local unitary transformations applied on program space, therefore $S \sim_{det} \left(\mathbb{1} \otimes V\right) W \left(\mathbb{1} \otimes V^{\prime}\right)$. However, processor $W(U^{\prime} \otimes \mathbb{1})$ is also equivalent to SWAP processor $S$, because:
\begin{align*}
	\mathfrak{C}_{W(U^{\prime}\otimes\mathbb{1}),\xi}^{det} &= \Tr_{p} \left[W(U^{\prime}\otimes\mathbb{1}) (\varrho \otimes \xi) (U^{\prime\dagger}\otimes\mathbb{1})W^{\dagger}\right] = \Tr_{p} \left[W(U^{\prime}\varrho U^{\prime\dagger} \otimes \xi) W^{\dagger}\right] \\
	&= \Tr_{p} \left[W (\varrho^{\prime} \otimes \xi) W^{\dagger}\right] \overset{(i)}{=} \Tr_{p} \left(\xi \otimes \varrho^{\prime}\right) = \xi,
\end{align*}
where in $(i)$ we have used $W = e^{i\frac{\pi}{4}} S$. Result is still only program state, exactly as for SWAP processor $S$. And finally, considering $(U \otimes \mathbb{1}) W$, we are only applying unitary transformation on program state $U\xi U^{\dagger}$, which can be superseded by using different program state $\widetilde{\xi} = U^{\dagger} \xi U$:
\begin{align*}
	\mathfrak{C}_{(U \otimes \mathbb{1})W,\widetilde{\xi}}^{det} &= \Tr_{p} \left[(U \otimes \mathbb{1}) W (\varrho \otimes \widetilde{\xi}) W^{\dagger} (U^{\dagger} \otimes \mathbb{1})\right] = \Tr_{p} \left[(U \otimes \mathbb{1}) (\widetilde{\xi} \otimes \varrho) (U^{\dagger} \otimes \mathbb{1})\right] \\
	&= U\widetilde{\xi}U^{\dagger} = UU^{\dagger} \xi UU^{\dagger} = \xi.
\end{align*}
Thus, SWAP processor $S$ is equivalent to $S \sim_{det} \left(U \otimes V\right) W \left(U^{\prime} \otimes V^{\prime}\right)$ with values for $x, y$ and $z$ given in equation (\ref{xyzxyzxyz}).


\subsection{Equivalence of Probabilistic Processors}

Let us remind, that we shall consider measurement $M$ at the end of program register to be successful, if the outcome is measured by an element $\mathbb{1} \otimes \dyad{\chi}{\chi} = \mathbb{1} \otimes \frac{1}{P} \sum_{nn^{\prime}}^{P} \dyad{n}{n^{\prime}}$, where the non-trivial measurement is only on the program space. If the outcome corresponds to different element, the implementation of the desired transformation failed. Let us denote probability of successful implementation by processor $G$ with $p$ and by processor $\widetilde{G}$ with $\widetilde{p}$.
\begin{theorem}\label{necAndSucStruct}
	Necessary and Sufficient Condition
	\\
	Let us have a quantum processor $G = \sum\limits_{jk}^{P} A_{jk} \otimes \dyad{j}{k}$. Let us also assume that:
	\begin{itemize}
		\item[$1)$] a different processor $G_{L}$ can be expressed as $G_{L} = UG$, where $U = \sum_{rq}^{P} U_{rq} \otimes \dyad{r}{q}$ is a unitary operator. Then processors $G$ and $G_{L}$ are structurally probabilistically equivalent $G \sim_{st} UG$ if and only if the following equation is true:
		\begin{align*}
			\sum_{jkq}^{P} v_{kq} \xi_{q} A_{jk} \overset{!}{=} e^{i\phi} \sqrt{K_{\xi, \widetilde{\xi}}} \sum_{r}^{P} \sum_{jk}^{P} \xi_{k} U_{rj} A_{jk},
		\end{align*}
		where $\widetilde{\xi}$ and $\xi$ are pure program states of $G$ and $UG$ respectively, $K_{\xi, \widetilde{\xi}} \in \mathbb{R}_{> 0}$ is a real positive number depending on programs and $\phi \in \mathbb{R}$.
		
		\item[$2)$] a different processor $G_{R}$ can be expressed as $G_{R} = GV$, where $V = \sum_{rq}^{P} V_{rq} \otimes \dyad{r}{q}$ is a unitary operator. Then processors $G$ and $G_{R}$ are structurally probabilistically equivalent $G \sim_{st} GV$ if and only if the following equation is true:
		\begin{align*}
			\sum_{jkq}^{P} v_{kq} \xi_{q} A_{jk} = e^{i\psi} \sqrt{C_{\xi, \widetilde{\xi}}} \sum_{jkq}^{P} \xi_{q} A_{jk} V_{kq},
		\end{align*}
		where $\widetilde{\xi}$ and $\xi$ are pure program states of $G$ and $GV$respectively, $C_{\xi, \widetilde{\xi}} \in \mathbb{R}_{> 0}$ is a real positive number depending on programs and $\psi \in \mathbb{R}$.
	\end{itemize}
\end{theorem}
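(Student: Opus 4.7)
\subparagraph{Proof Plan:} The plan is to parallel the structure of Theorem \ref{SufNecCondDetProc}, exploiting one crucial simplification: for a pure program state, formula (\ref{SprobQPelem}) collapses from a four-fold double sum into a product of a single Kraus operator with its adjoint. Because $\mathfrak{O}^{pr}$ is linear in the program state and every mixed program decomposes convexly into pure ones, it is enough to prove the claim for pure $\ket{\xi} = \sum_q \xi_q \ket{q}$ and $\ket{\widetilde{\xi}} = \sum_k \widetilde{\xi}_k \ket{k}$ related by a unitary $V^p$ with matrix elements $v_{kq}$, so that $\widetilde{\xi}_k = \sum_q v_{kq}\xi_q$.

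Concretely, for processor $G$ with pure program $\widetilde{\xi}$, substituting $\widetilde{\xi}_{kk'} = \widetilde{\xi}_k \widetilde{\xi}_{k'}^{\ast}$ into (\ref{SprobQPelem}) yields $\mathfrak{O}_{G,\widetilde{\xi}}^{pr}(\varrho) = \frac{1}{P} M \varrho M^{\dagger}$, where $M = \sum_{jkq} v_{kq}\xi_q A_{jk}$. For the left-composed processor $G_{L} = UG = \sum_{rjk} U_{rj}A_{jk} \otimes \dyad{r}{k}$ used with program $\xi$, the analogous computation gives $\mathfrak{O}_{G_{L},\xi}^{pr}(\varrho) = \frac{1}{P} L \varrho L^{\dagger}$ with $L = \sum_{rjk} \xi_k U_{rj} A_{jk}$. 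The structural equivalence $\mathfrak{O}_{G,\widetilde{\xi}}^{pr} = K_{\xi,\widetilde{\xi}}\, \mathfrak{O}_{G_{L},\xi}^{pr}$ then reduces to $M \varrho M^{\dagger} = K_{\xi,\widetilde{\xi}} L \varrho L^{\dagger}$ for every state $\varrho$, equivalently $M \varrho M^{\dagger} = (\sqrt{K_{\xi,\widetilde{\xi}}}\, L)\, \varrho\, (\sqrt{K_{\xi,\widetilde{\xi}}}\, L)^{\dagger}$.

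I would then invoke Proposition \ref{equalityOfChannels} applied to the single-element Kraus sets $\{M\}$ and $\{\sqrt{K_{\xi,\widetilde{\xi}}}\,L\}$. The proposition forces $M = u\,\sqrt{K_{\xi,\widetilde{\xi}}}\,L$ with a scalar $u$ satisfying $|u|^{2}=1$, that is $u = e^{i\phi}$, which is precisely the stated identity. The converse direction is immediate: multiplying both sides by their adjoints absorbs the unit-modulus phase and reproduces the required proportionality of operations. The right-transformation case proceeds identically, with $G_{R} = GV = \sum_{jkq} A_{jk} V_{kq} \otimes \dyad{j}{q}$ producing the single Kraus operator $R = \sum_{jkq} \xi_q A_{jk} V_{kq}$ for $\mathfrak{O}_{G_{R},\xi}^{pr}$, which is then compared with the same $M$ to extract the analogous equation with phase $e^{i\psi}$ and scalar $C_{\xi,\widetilde{\xi}}$.

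The main obstacle I anticipate is bookkeeping: carefully tracking the index collapses when the external unitary is pushed through the block structure of $G$, so that for a pure program all three of the nominal Kraus sums (over $r$, $j$, $k$ on the left side, or over $j$, $k$, $q$ on the right side) fold into a single operator. A secondary concern is confirming that the correspondence $\xi \leftrightarrow \widetilde{\xi}$ runs both ways, so that the set equality in Definition \ref{structEqv} is genuine; this is secured by the invertibility of $V^{p}$. Unlike in the deterministic setting, no Kraus-completeness verification is needed here, because the single-operator form of $\mathfrak{O}^{pr}$ on pure programs bypasses normalization altogether.
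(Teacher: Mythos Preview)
Your proposal is correct and follows essentially the same approach as the paper: both arguments exploit that for a pure program the probabilistic operation collapses to a rank-one map $\varrho\mapsto \tfrac{1}{P}M\varrho M^{\dagger}$, then compare the single operators arising from $G$ (with program $\widetilde{\xi}$) and from $UG$ or $GV$ (with program $\xi$) to extract the phase-and-scale relation. The only cosmetic difference is that the paper carries out the comparison at the level of Choi matrices, whereas you work directly with the operation and invoke Proposition~\ref{equalityOfChannels} on the single-element Kraus sets; these are equivalent ways of saying that two rank-one completely positive maps coincide iff their defining operators differ by a unit-modulus scalar.
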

\begin{proof}
	We shall calculate Choi matrix of what a processor $G$ with pure program state $\widetilde{\xi} = V^{p} \xi V^{p\dagger}$, where $V^{p} = \sum_{rq}^{P} v_{rq} \dyad{r}{q}$ is unitary operator, implements:
	\begin{align*} \label{NandSCondEq1}
		&\left(\mathfrak{O}_{G, \widetilde{\xi}}^{pr} \otimes \mathcal{I}\right)\left(\sum_{ii^{\prime}}^{D} \dyad{ii}{i^{\prime}i^{\prime}}\right) \underset{(\ref{choi})}{\overset{(\ref{SprobQPelem})}{=}} \sum_{ii^{\prime}}^{D} \Tr_{p}\left[G \left(\dyad{i}{i^{\prime}} \otimes \widetilde{\xi}\right) G^{\dagger} \frac{1}{P}\left(\mathbb{1} \otimes \sum_{nn^{\prime}}^{P} \dyad{n}{n^{\prime}}\right)\right] \otimes \dyad{i}{i^{\prime}} \\
		&= \frac{1}{P} \sum_{ii^{\prime}}^{D} \Tr_{p} \bigg[\left(\sum_{jk}^{P} A_{jk} \otimes \dyad{j}{k}\right) \left(\dyad{i}{i^{\prime}} \otimes \widetilde{\xi}\right) \left(\sum_{j^{\prime}k^{\prime}}^{P} A_{j^{\prime}k^{\prime}}^{\dagger} \otimes \dyad{k^{\prime}}{j^{\prime}}\right) \left(\mathbb{1} \otimes \sum_{n,n^{\prime}}^{P} \dyad{n}{n^{\prime}}\right)\bigg] \otimes \dyad{i}{i^{\prime}} \\
		&= \frac{1}{P} \sum_{ii^{\prime}}^{D} \sum_{rr^{\prime}}^{P} \sum_{qq^{\prime}}^{P} \sum_{jj^{\prime}}^{P} \sum_{kk^{\prime}}^{P} \sum_{nn^{\prime}}^{P} v_{rq} v_{r^{\prime}q^{\prime}}^{\ast} A_{jk} \dyad{i}{i^{\prime}} A_{j^{\prime}k^{\prime}}^{\dagger} \Tr\left(\dyad{j}{k} \hspace*{-1.5mm} \dyad{r}{q} \xi \dyad{q^{\prime}}{r^{\prime}} \hspace*{-1.5mm} \dyad{k^{\prime}}{j^{\prime}} \hspace*{-1.5mm} \dyad{n}{n^{\prime}}\right) \otimes \dyad{i}{i^{\prime}} \\
		&\overset{(i)}{=} \frac{1}{P} \sum_{ii^{\prime}}^{D} \sum_{qq^{\prime}}^{P} \sum_{jj^{\prime}}^{P} \sum_{kk^{\prime}}^{P} v_{kq} \xi_{q} A_{jk} \dyad{i}{i^{\prime}} v_{k^{\prime}q^{\prime}}^{\ast} \xi_{q^{\prime}}^{\ast} A_{j^{\prime}k^{\prime}}^{\dagger} \otimes \dyad{i}{i^{\prime}}, \numberthis
	\end{align*}
	where in $(i)$ we have used that the program state is pure.	Similarly, we can calculate Choi matrix for processors $UG$ and $GV$ with pure program states:
	\begin{align*}
		&\left(\mathfrak{O}_{UG,\xi}^{pr} \otimes \mathcal{I}\right)\left(\sum_{i,i^{\prime}}^{D} \dyad{ii}{i^{\prime}i^{\prime}}\right) \underset{(\ref{choi})}{\overset{(\ref{SprobQPelem})}{=}} \frac{1}{P} \sum_{ii^{\prime}}^{D} \sum_{rr^{\prime}}^{P} \sum_{jj^{\prime}}^{P} \sum_{kk^{\prime}}^{P} \xi_{k} U_{rj} A_{jk} \dyad{i}{i^{\prime}} \xi_{k^{\prime}}^{\ast} A_{j^{\prime}k^{\prime}}^{\dagger} U_{r^{\prime}j^{\prime}}^{\dagger} \otimes \dyad{i}{i^{\prime}} \\
		&\left(\mathfrak{O}_{GV,\xi}^{pr} \otimes \mathcal{I}\right)\left(\sum_{i,i^{\prime}}^{D} \dyad{ii}{i^{\prime}i^{\prime}}\right) \underset{(\ref{choi})}{\overset{(\ref{SprobQPelem})}{=}} \frac{1}{P} \sum_{ii^{\prime}}^{D} \sum_{qq^{\prime}}^{P} \sum_{jj^{\prime}}^{P} \sum_{kk^{\prime}}^{P} \xi_{q} A_{jk} V_{kq} \dyad{i}{i^{\prime}} \xi_{q^{\prime}}^{\ast} V_{k^{\prime}q^{\prime}}^{\dagger} A_{j^{\prime}k^{\prime}}^{\dagger} \otimes \dyad{i}{i^{\prime}}.
	\end{align*}
	Because two Choi matrices must be the same to describe the same quantum channel, we can compare previous equations and recover conditions similar to those in our theorem:
	\begin{align*}
		\sum_{jkq}^{P} v_{kq} \xi_{q} A_{jk} &\overset{!}{=} e^{i\phi} \sum_{jkr}^{P} \xi_{k} U_{rj} A_{jk} \\
		\sum_{jkq}^{P} v_{kq} \xi_{q} A_{jk} &\overset{!}{=} e^{i\psi} \sum_{jkq}^{P} \xi_{q} A_{jk} V_{kq}.
	\end{align*}
	Due to the definition of structural equivalence, we can add a positive real number to the previous conditions and arrive at their final form:
	\begin{align*} \label{NandSCondEq2}
		\sum_{jkq}^{P} v_{kq} \xi_{q} A_{jk} &\overset{!}{=} e^{i\phi} \sqrt{K_{\xi, \widetilde{\xi}}} \sum_{jkr}^{P} \xi_{k} U_{rj} A_{jk} \numberthis \\ 
		\sum_{jkq}^{P} v_{kq} \xi_{q} A_{jk} &\overset{!}{=} e^{i\psi} \sqrt{C_{\xi, \widetilde{\xi}}} \sum_{jkq}^{P} \xi_{q} A_{jk} V_{kq}.
	\end{align*}
	Let us substitute equation (\ref{NandSCondEq2}) back into (\ref{NandSCondEq1}):
	\begin{align*}
		&\left(\mathfrak{O}_{G, \widetilde{\xi}}^{pr} \otimes \mathcal{I}\right)\left(\sum_{i,i^{\prime}}^{D} \dyad{ii}{i^{\prime}i^{\prime}}\right) = \frac{1}{P} \sum_{ii^{\prime}}^{D} \sum_{jkr}^{P} \sum_{j^{\prime}k^{\prime}r^{\prime}}^{P} e^{i\phi} \sqrt{K_{\xi, \widetilde{\xi}}} \xi_{k} U_{rj} A_{jk} \dyad{i}{i^{\prime}} e^{-i\phi} \sqrt{K_{\xi, \widetilde{\xi}}} \xi_{k^{\prime}}^{\ast} A_{j^{\prime}k^{\prime}}^{\dagger} U_{r^{\prime}j^{\prime}}^{\dagger} \otimes \dyad{i}{i^{\prime}} \\
		&= \frac{K_{\xi, \widetilde{\xi}}}{P} \sum_{ii^{\prime}}^{D} \sum_{jkr}^{P} \sum_{j^{\prime}k^{\prime}r^{\prime}}^{P} \xi_{k} U_{rj} A_{jk} \dyad{i}{i^{\prime}} \xi_{k^{\prime}}^{\ast} A_{j^{\prime}k^{\prime}}^{\dagger} U_{r^{\prime}j^{\prime}}^{\dagger} \otimes \dyad{i}{i^{\prime}} = K_{\xi, \widetilde{\xi}} \left(\mathfrak{O}_{UG,\xi}^{pr} \otimes \mathcal{I}\right)\left(\sum_{ii^{\prime}}^{D} \dyad{ii}{i^{\prime}i^{\prime}}\right).
	\end{align*}	
	Therefore, processors $G$ and $UG$ are, indeed, structurally equivalent $G \sim_{st} UG$. Similarly, one can show that also processors $G$ and $GV$ are structurally equivalent.
\end{proof}

In what follows, we shall make use of the following notation for operators $\sum\limits_{j}^{P} A_{jk} \equiv \mathscr{A}_{k}$ and $\sum\limits_{m}^{\widetilde{P}} B_{mn} = \mathscr{B}_{n}$. Therefore, set of implemented operations is $\mathfrak{G}^{pr}_{G,\xi} \overset{(\ref{prQPimpl})}{=} \frac{1}{P} \sum\limits_{jkj^{\prime}k^{\prime}}^{P} \xi_{kk^{\prime}} A_{jk} \varrho A_{j^{\prime}k^{\prime}}^{\dagger} = \frac{1}{P} \sum\limits_{kk^{\prime}}^{P} \xi_{kk^{\prime}} \mathscr{A}_{k} \varrho \mathscr{A}_{k^{\prime}}^{\dagger}$.
\begin{theorem} \label{spansOfOperators}
	Spans of Operators
	\\
	Let us have two quantum probabilistic processors $G = \sum\limits_{jk}^{P} A_{jk} \otimes \dyad{j}{k}$ and $\widetilde{G} = \sum\limits_{mn}^{\widetilde{P}} B_{mn} \otimes \dyad{m}{n}$. Processors are structurally equivalent $G \sim_{st} \widetilde{G}$ if and only if $\mathscr{B}_{n}  = \sum\limits_{k}^{\widetilde{P}} a_{nk} \mathscr{A}_{k}$ and $\mathscr{A}_{k} = \sum\limits_{n}^{P} b_{kn} \mathscr{B}_{n}$, where $a_{nk}, b_{kn} \in \mathbb{C}$ for all $n$ and $k$.
\end{theorem}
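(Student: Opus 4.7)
The plan is to exploit the particularly simple shape that $\mathfrak{O}_{G,\xi}^{pr}$ takes on pure program states. Writing $\xi = \dyad{\xi}$ with $\xi_{k} = \braket{k}{\xi}$ gives $\xi_{kk'} = \xi_{k}\xi_{k'}^{\ast}$, and the operation collapses to a single rank-one Kraus form
\begin{align*}
\mathfrak{O}_{G,\xi}^{pr}(\varrho) = \frac{1}{P}\mathcal{A}(\xi)\,\varrho\,\mathcal{A}(\xi)^{\dagger}, \qquad \mathcal{A}(\xi) := \sum_{k}^{P}\xi_{k}\,\mathscr{A}_{k},
\end{align*}
with the analogous expression $\mathfrak{O}_{\widetilde{G},\widetilde{\xi}}^{pr}(\varrho) = \frac{1}{\widetilde{P}}\mathcal{B}(\widetilde{\xi})\,\varrho\,\mathcal{B}(\widetilde{\xi})^{\dagger}$ and $\mathcal{B}(\widetilde{\xi}) = \sum_{n}^{\widetilde{P}}\widetilde{\xi}_{n}\mathscr{B}_{n}$ on the other side. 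Decomposing a mixed program $\xi = \sum_{l}p_{l}\dyad{\xi_{l}}$ into pure components then expresses a general element of $\mathfrak{O}_{G}^{pr}$ as the convex combination $\frac{1}{P}\sum_{l}p_{l}\mathcal{A}(\xi_{l})\varrho\mathcal{A}(\xi_{l})^{\dagger}$, so it is enough to handle the pure case and reassemble at the end.

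For the ``only if'' direction I would apply structural equivalence to the very simple pure program $\xi = \dyad{k}$, for which $\mathcal{A}(\xi) = \mathscr{A}_{k}$. There must then exist a program $\widetilde{\xi}$ of $\widetilde{G}$ and a scalar $K_{k}>0$ satisfying $\frac{1}{P}\mathscr{A}_{k}\varrho\mathscr{A}_{k}^{\dagger} = K_{k}\,\mathfrak{O}_{\widetilde{G},\widetilde{\xi}}^{pr}(\varrho)$. Expanding $\widetilde{\xi} = \sum_{l}\widetilde{p}_{l}\dyad{\widetilde{\xi}_{l}}$ into pure components gives the right-hand side as a sum of rank-one Kraus terms $\mathcal{B}(\widetilde{\xi}_{l})\varrho\mathcal{B}(\widetilde{\xi}_{l})^{\dagger}$, and proposition \ref{equalityOfChannels} equating the two Kraus representations of this single operation forces $\mathscr{A}_{k}$ to be a linear combination of the $\mathcal{B}(\widetilde{\xi}_{l})$'s; since each $\mathcal{B}(\widetilde{\xi}_{l})$ already lies in $\mathrm{span}\{\mathscr{B}_{n}\}$ by definition, I obtain $\mathscr{A}_{k} = \sum_{n}b_{kn}\mathscr{B}_{n}$. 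Swapping the roles of $G$ and $\widetilde{G}$ (which is legitimate because the definition of structural equivalence is symmetric in the two processors) delivers the dual expansion $\mathscr{B}_{n} = \sum_{k}a_{nk}\mathscr{A}_{k}$.

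For the ``if'' direction I would start from a program $\xi = \sum_{l}p_{l}\dyad{\xi_{l}}$ of $G$ and use $\mathscr{A}_{k} = \sum_{n}b_{kn}\mathscr{B}_{n}$ to rewrite each pure-component operator as
\begin{align*}
\mathcal{A}(\xi_{l}) = \sum_{n}\Bigl(\sum_{k}(\xi_{l})_{k}b_{kn}\Bigr)\mathscr{B}_{n} = \alpha_{l}\,\mathcal{B}(\widetilde{\xi}_{l}),
\end{align*}
where $\widetilde{\xi}_{l}$ is the unit vector with components proportional to $\sum_{k}(\xi_{l})_{k}b_{kn}$ and $\alpha_{l}\geq 0$ is the induced norm. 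Substituting back produces
\begin{align*}
\mathfrak{O}_{G,\xi}^{pr}(\varrho) = \frac{1}{P}\sum_{l}p_{l}|\alpha_{l}|^{2}\,\mathcal{B}(\widetilde{\xi}_{l})\,\varrho\,\mathcal{B}(\widetilde{\xi}_{l})^{\dagger},
\end{align*}
which coincides with $K_{\xi,\widetilde{\xi}}\,\mathfrak{O}_{\widetilde{G},\widetilde{\xi}}^{pr}(\varrho)$ on setting $\widetilde{\xi} = \sum_{l}q_{l}\dyad{\widetilde{\xi}_{l}}$ with $q_{l} \propto p_{l}|\alpha_{l}|^{2}$ and $K_{\xi,\widetilde{\xi}} = \frac{\widetilde{P}}{P}\sum_{l}p_{l}|\alpha_{l}|^{2} > 0$; the symmetric construction using $\mathscr{B}_{n} = \sum_{k}a_{nk}\mathscr{A}_{k}$ handles the reverse inclusion. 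The main technical obstacle I anticipate is the degenerate case in which some $\mathcal{A}(\xi_{l})$ vanishes: then $\alpha_{l}=0$, the corresponding $\widetilde{\xi}_{l}$ is undefined and the term must be dropped from $\widetilde{\xi}$, and one has to verify that $K_{\xi,\widetilde{\xi}}$ remains strictly positive and $\widetilde{\xi}$ a bona fide state, which is exactly what the combined spanning hypothesis $\mathrm{span}\{\mathscr{A}_{k}\} = \mathrm{span}\{\mathscr{B}_{n}\}$ is there to ensure.
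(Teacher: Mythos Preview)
Your proposal is correct and follows essentially the same strategy as the paper: both directions pivot on evaluating the operation at pure basis programs $\dyad{k}$ (respectively $\dyad{n}$) to isolate a single $\mathscr{A}_{k}$ (respectively $\mathscr{B}_{n}$), and then use the span relations to translate programs from one processor to the other. The only notable difference is presentational: for the ``only if'' direction the paper argues by contradiction (assuming some $\mathscr{B}_{v}\notin\mathrm{span}\{\mathscr{A}_{k}\}$ and claiming $\mathscr{B}_{v}\varrho\mathscr{B}_{v}^{\dagger}$ is unreachable by $G$) while you invoke Proposition~\ref{equalityOfChannels} directly to extract the linear combination, and for the ``if'' direction the paper substitutes into the matrix-element form $\frac{1}{\widetilde{P}}\sum_{nn'}\widetilde{\xi}_{nn'}\mathscr{B}_{n}\varrho\mathscr{B}_{n'}^{\dagger}$ and reads off $K_{\widetilde{\xi}}\xi_{kk'}=\frac{P}{\widetilde{P}}\sum_{nn'}\widetilde{\xi}_{nn'}a_{nk}a_{n'k'}^{\ast}$ rather than decomposing into pure components first---but the content is the same, and the degenerate case you flag ($\mathcal{A}(\xi_{l})=0$, forcing $K_{\xi,\widetilde{\xi}}=0$) is a genuine edge case that the paper's proof likewise does not address.
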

Symbol $\mathbb{C}$ denotes set of complex numbers.
\begin{proof}
	Firstly, let us assume that $\mathscr{B}_{n}  = \sum\limits_{k}^{P} a_{nk} \mathscr{A}_{k}$ and show that processors are structurally equivalent $G \sim_{st} \widetilde{G}$. Let us calculate what $\widetilde{G}$ implements:
	\begin{align*}
		\mathfrak{O}^{pr}_{\widetilde{G},\widetilde{\xi}} = \frac{1}{\widetilde{P}} \sum\limits_{nn^{\prime}}^{\widetilde{P}} \widetilde{\xi}_{nn^{\prime}} \mathscr{B}_{n} \varrho \mathscr{B}_{n^{\prime}}^{\dagger} = \frac{1}{\widetilde{P}} \sum\limits_{nn^{\prime}}^{\widetilde{P}} \sum\limits_{kk^{\prime}}^{P} \widetilde{\xi}_{nn^{\prime}} a_{nk} a_{n^{\prime}k^{\prime}}^{\ast} \mathscr{A}_{k} \varrho \mathscr{A}_{k^{\prime}} = \frac{K_{\widetilde{\xi}}}{P} \sum\limits_{kk^{\prime}}^{P} \xi_{kk^{\prime}} \mathscr{A}_{k} \varrho \mathscr{A}_{k^{\prime}}^{\dagger} = K_{\widetilde{\xi}} \mathfrak{O}_{G,\xi}^{pr},
	\end{align*}
	where $K_{\widetilde{\xi}} \xi_{kk^{\prime}} = \frac{P}{\widetilde{P}} \sum_{nn^{\prime}}^{\widetilde{P}} \xi_{nn^{\prime}} a_{nk} a_{n^{\prime}k^{\prime}}^{\ast}$ and $K_{\widetilde{\xi}} \in \mathbb{R}_{>0}$. Let us use normalization condition of a quantum state:
	\begin{align*}
		\sum_{k}^{P} \xi_{kk} = \frac{1}{K_{\widetilde{\xi}}} \frac{P}{\widetilde{P}} \sum_{k}^{P} \sum_{nn^{\prime}}^{\widetilde{P}} \widetilde{\xi}_{nn^{\prime}} a_{nk} a_{n^{\prime}k}^{\ast} \overset{!}{=} 1.
	\end{align*}
	Thus, relation for $K_{\widetilde{\xi}}$ is:
	\begin{align*}
		K_{\widetilde{\xi}} = \frac{P}{\widetilde{P}} \sum_{k}^{P} \sum_{nn^{\prime}}^{\widetilde{P}} \widetilde{\xi}_{nn^{\prime}} a_{nk} a_{n^{\prime}k}^{\ast}.
	\end{align*}
	However, this only shows that for every element from set $\mathfrak{O}_{\widetilde{G}}^{pr}$, it is possible to find corresponding element in $\mathfrak{O}_{G}^{pr}$. Let us now turn the situation around and use that also operators $\{\mathscr{A}_{k}\}$ must be linear combinations of $\{\mathscr{B}_{n}\}$. We shall not repeat the same calculation here, but only write the result:
	\begin{align*}
		\mathfrak{O}_{G,\xi}^{pr} = K_{\xi} \mathfrak{O}_{\widetilde{G}, \widetilde{\xi}}^{pr},
	\end{align*}
	where $K_{\xi} = \frac{\widetilde{P}}{P} \sum_{n}^{\widetilde{P}}\sum_{kk^{\prime}}^{P} \xi_{kk^{\prime}} b_{kn} b_{k^{\prime}n^{\prime}}^{\ast}$. Therefore, number $K_{\xi, \widetilde{\xi}}$ from the definition \ref{structEqv} of structural equivalence, can be sometimes equal to $K_{\xi}$ and sometimes to $\frac{1}{K_{\widetilde{\xi}}}$. We can conclude that processors are operationally equivalent $G \sim_{st} \widetilde{G}$.
	
	Now, we shall assume that processors are equivalent $G \sim_{st} \widetilde{G}$. Let us assume that there exists such an operator $\mathscr{B}_{v}$ that cannot be expressed as linear combination of operators from processor $G$, i.e., $\mathscr{B}_{v} \neq \sum\limits_{k}^{P} a_{vk} \mathscr{A}_{k}$. We shall choose program state to be $\widetilde{\xi} = \dyad{v}{v}$, therefore:
	\begin{align*}
		\mathfrak{O}_{\widetilde{G}, \widetilde{\xi}}^{pr} = \frac{1}{\widetilde{P}}  \mathscr{B}_{v} \varrho \mathscr{B}_{v}^{\dagger}.
	\end{align*}
	However, no such quantum operation and nor any multiple of this quantum operation can be implemented by processor $G$ as:
	\begin{align*}
		\frac{1}{\widetilde{P}} \mathscr{B}_{v} \varrho \mathscr{B}_{v}^{\dagger} \neq \frac{1}{\widetilde{P}} \sum\limits_{kk^{\prime}}^{P} \widetilde{\xi}_{vv} a_{vk} a_{vk^{\prime}}^{\ast} \mathscr{A}_{k} \varrho \mathscr{A}_{k^{\prime}}^{\dagger},
	\end{align*}
	where on the right-hand side there is $K_{\widetilde{\xi}} \mathfrak{O}_{G, \xi}^{pr}$. We could also repeat similar calculation for operators $\{\mathscr{A}_{k}\}$. Therefore, operators $\{\mathscr{B}_{n}\}$ must be linear combinations of $\{\mathscr{A}_{k}\}$ and vice versa.
\end{proof}
Previous theorem says that the spans of operators $\{\mathscr{A}_{k}\}$ and $\{\mathscr{B}_{n}\}$ must be the same. Let us proceed with a corollary, where we put restriction on operators $\mathscr{A}_{k}$ and $\mathscr{B}_{n}$.
\begin{corollary}
	Structural Equivalence with Orthogonal Operators
	\\
	Let us have processors $G = \sum_{jk}^{P} A_{jk} \otimes \dyad{j}{k}$ and $\widetilde{G} = \sum_{mn}^{\widetilde{P}} B_{mn} \otimes \dyad{m}{n}$ for which the following conditions hold: $\Tr\left(\mathscr{A}_{k}^{\dagger} \mathscr{A}_{k^{\prime}}\right) = D\delta_{kk^{\prime}}$ and $\Tr\left(\mathscr{B}_{n}^{\dagger} \mathscr{B}_{n^{\prime}}\right) = D \delta_{nn^{\prime}}$, where ${D}$ is dimension of data spaces of both processors $G$ and $\widetilde{G}$. These processors are structurally equivalent $G \sim_{st} \widetilde{G}$ if and only if $\mathscr{B}_{n} = \sum_{k}^{P} a_{nk} \mathscr{A}_{k}$ and $\mathscr{A}_{k} = \sum\limits_{n}^{\widetilde{P}} b_{kn} \mathscr{B}_{n}$, where $a_{nk}, b_{kn} \in \mathbb{C}$ for all $n$ and $k$. And also, $\{a_{nk}\}$ and $\{b_{kn}\}$ form co-isometries.
\end{corollary}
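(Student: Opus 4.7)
The plan is to build this corollary directly on top of Theorem \ref{spansOfOperators}, which already characterizes structural equivalence entirely through the existence of mutual linear combinations of $\{\mathscr{A}_{k}\}$ and $\{\mathscr{B}_{n}\}$. Under the additional orthogonality hypotheses $\Tr(\mathscr{A}_{k}^{\dagger}\mathscr{A}_{k^{\prime}}) = D\delta_{kk^{\prime}}$ and $\Tr(\mathscr{B}_{n}^{\dagger}\mathscr{B}_{n^{\prime}}) = D\delta_{nn^{\prime}}$, the coefficient matrices should be forced to carry an extra structure, namely that their rows are orthonormal, which is the only new content to prove.

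For the backward direction, assuming the linear combinations $\mathscr{B}_{n} = \sum_{k}^{P} a_{nk}\mathscr{A}_{k}$ and $\mathscr{A}_{k} = \sum_{n}^{\widetilde{P}} b_{kn}\mathscr{B}_{n}$ hold, the structural equivalence $G \sim_{st} \widetilde{G}$ is immediate from Theorem \ref{spansOfOperators}; in fact the co-isometry property is not even required here.

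For the forward direction, I would first invoke Theorem \ref{spansOfOperators} to secure the existence of the two mutual linear combinations. To extract the co-isometry property of $a$, I would substitute $\mathscr{B}_{n} = \sum_{k}^{P} a_{nk}\mathscr{A}_{k}$ directly into the orthogonality relation for the $\mathscr{B}$'s:
\[
D\delta_{nn^{\prime}} = \Tr(\mathscr{B}_{n}^{\dagger}\mathscr{B}_{n^{\prime}}) = \sum_{kk^{\prime}}^{P} a_{nk}^{\ast} a_{n^{\prime}k^{\prime}} \Tr(\mathscr{A}_{k}^{\dagger}\mathscr{A}_{k^{\prime}}) = D \sum_{k}^{P} a_{nk}^{\ast} a_{n^{\prime}k},
\]
which collapses to $\sum_{k}^{P} a_{n^{\prime}k} a_{nk}^{\ast} = \delta_{nn^{\prime}}$, i.e.\ the condition $a a^{\dagger} = \mathbb{1}$ on the matrix with entries $a_{nk}$. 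A fully symmetric argument, substituting $\mathscr{A}_{k} = \sum_{n}^{\widetilde{P}} b_{kn}\mathscr{B}_{n}$ into the orthogonality relation for the $\mathscr{A}$'s, gives $\sum_{n}^{\widetilde{P}} b_{k^{\prime}n} b_{kn}^{\ast} = \delta_{kk^{\prime}}$ and hence $b b^{\dagger} = \mathbb{1}$.

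There is no genuine obstacle in this argument, as the entire derivation is essentially one line of trace manipulation once the linear combinations from Theorem \ref{spansOfOperators} are in hand. The only conceptual point worth flagging is that $aa^{\dagger} = \mathbb{1}$ and $bb^{\dagger} = \mathbb{1}$ do not in general force $a$ and $b$ to be mutual inverses unless $P = \widetilde{P}$, since the sets $\{\mathscr{A}_{k}\}$ and $\{\mathscr{B}_{n}\}$ only need to span the same linear subspace of $\mathcal{L}(\mathcal{H}_{d})$, and the cardinalities $P$ and $\widetilde{P}$ may differ; thus the natural claim is precisely the co-isometry property (orthonormality of rows) and not full unitarity of the coefficient matrices.
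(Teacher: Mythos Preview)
Your proposal is correct and follows essentially the same route as the paper: both invoke Theorem~\ref{spansOfOperators} for the equivalence with mutual linear combinations, and then substitute $\mathscr{B}_{n} = \sum_{k} a_{nk}\mathscr{A}_{k}$ into the orthogonality relation $\Tr(\mathscr{B}_{n}^{\dagger}\mathscr{B}_{n^{\prime}}) = D\delta_{nn^{\prime}}$ to extract $\sum_{k} a_{nk}^{\ast} a_{n^{\prime}k} = \delta_{nn^{\prime}}$, with the symmetric argument for $b$. Your closing remark about why one can only conclude co-isometry rather than unitarity when $P \neq \widetilde{P}$ is a nice addition not present in the paper.
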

Co-isometric matrix $M$ fulfills the following equation: $MM^{\dagger} = \mathbb{1}$.
\begin{proof}
	Proposition that if $\mathscr{B}_{n} = \sum_{k}^{P} a_{nk} \mathscr{A}_{k}$ and $\mathscr{A}_{k} = \sum\limits_{n}^{P} b_{kn} \mathscr{B}_{n}$, then $G \sim_{st} \widetilde{G}$, was already proven in the previous theorem \ref{spansOfOperators}.
	
	Let us now assume that the processors are equivalent. From previous theorem it follows that in such a case, the relation between operators is $\mathscr{B}_{n} = \sum\limits_{k}^{P} a_{nk} \mathscr{A}_{k}$. Let us also assume that the following conditions on operators hold:
	\begin{align*}
		\Tr\left(\mathscr{A}_{k}^{\dagger} \mathscr{A}_{k^{\prime}}\right) = D\delta_{kk^{\prime}}, \qquad
		\Tr\left(\mathscr{B}_{n}^{\dagger} \mathscr{B}_{n^{\prime}}\right) = D \delta_{nn^{\prime}}.
	\end{align*}
	Let us now use previous restrictions:
	\begin{align*}
		\Tr\left(\mathscr{B}_{n}^{\dagger} \mathscr{B}_{n^{\prime}}\right) &= \Tr\left[\left(\sum_{k}^{P} a_{nk}^{\ast} \mathscr{A}_{k}^{\dagger}\right) \left(\sum_{k^{\prime}}^{P} a_{n^{\prime}k^{\prime}} \mathscr{A}_{k^{\prime}}\right)\right] = \sum_{kk^{\prime}}^{P} a_{nk}^{\ast} a_{n^{\prime}k^{\prime}} \Tr\left(\mathscr{A}_{k}^{\dagger} \mathscr{A}_{k^{\prime}}\right) \overset{(i)}{=} \sum_{kk^{\prime}}^{P} a_{nk}^{\ast} a_{n^{\prime}k^{\prime}} \delta_{kk^{\prime}} D \\&= \sum_{k}^{P} a_{nk}^{\ast} a_{n^{\prime}k} D \overset{(ii)}{=} \delta_{nn^{\prime}} D,
	\end{align*}
	where in $(i)$ we have used the assumption and from $(ii)$ it follows that $\sum_{k}^{P} a_{nk}^{\ast} a_{n^{\prime}k} = \delta_{nn^{\prime}}$ and therefore, $\{a_{nk}\}$ form co-isometry. Calculation for $\{b_{kn}\}$ forming co-isometry is, basically, identical. 
\end{proof}

Furthermore, let us examine equivalence of two probabilistic processors which can be found in literature. Nielsen and Chuang proposed probabilistic processor $G_{NC}$ based on quantum teleportation able to implement arbitrary unitary \cite{ProgrammableQuantumGateArrays} channel, while in the book written by Teiko Heinosaari and M\'{a}rio Ziman \cite{TheMathematicalLanguageOfQuantumTheoryFromUncertaintyToEntanglement} one can found processor $G_{S}$ using SWAP gate that is able to implement arbitrary channel. However, these processors make use of different success measurements compared to the one chosen in present work. Their success measurement is $M = \mathbb{1} \otimes \frac{1}{L} \sum_{mn}^{L} \dyad{mm}{nn}$ with $L = \log_{2}\left[\dim\left({\cal{H}}_{p}\right)\right]$. Let us rewrite general expression of processor from equation (\ref{QPeq}) to better mirror notation for chosen measurement: $G = \sum_{jk}^{P} A_{jk} \otimes \dyad{j}{k} = \sum_{ijkl}^{L} A_{ij,kl} \otimes \dyad{ij}{kl}$. Now we shall proceed with calculation of what a processor with such a measurement implements.
\begin{align*}\label{QPImplDiffM}
	\mathfrak{G}_{G,\xi}^{pr} &= \frac{1}{p} \Tr_{p} \left[G \left(\varrho \otimes \xi\right) G^{\dagger} M\right] \\
	&= \frac{1}{p} \Tr_{p} \left[\left(\sum\limits_{ijkl}^{L} A_{ij,kl} \otimes \dyad{ij}{kl}\right) \left(\varrho \otimes \xi\right) \left(\sum\limits_{i^{\prime}j^{\prime}k^{\prime}l^{\prime}}^{L} A_{i^{\prime}j^{\prime},k^{\prime}l^{\prime}}^{\dagger} \otimes \dyad{i^{\prime}j^{\prime}}{k^{\prime}l^{\prime}}\right) \left(\mathbb{1} \otimes \frac{1}{L} \sum\limits_{mn}^{L} \dyad{mm}{nn}\right)\right] \\
	&= \frac{1}{pL} \sum\limits_{ijkl}^{L} \sum\limits_{i^{\prime}j^{\prime}k^{\prime}l^{\prime}}^{L} \sum\limits_{mn}^{L} A_{ij,kl} \varrho A_{i^{\prime}j^{\prime},k^{\prime}l^{\prime}}^{\dagger} \Tr\left(\dyad{ij}{kl} \xi \dyad{k^{\prime}l^{\prime}}{i^{\prime}j^{\prime}} \hspace*{-1.5mm} \dyad{mm}{nn}\right) \\
	&= \frac{1}{pL} \sum\limits_{klk^{\prime}l^{\prime}}^{L} \sum\limits_{jj^{\prime}}^{L} \xi_{kl,k^{\prime}l^{\prime}} A_{jj,kl} \varrho A_{j^{\prime}j^{\prime},k^{\prime}l^{\prime}}^{\dagger}, \numberthis
\end{align*}
where $p$ denotes probability of successful implementation. We can finally devote our attention to the processors themselves.
\begin{example}
	Universal Processors
	\\
	Processor $G_{NC}$ is formed by three CNOT gates which we shall denote with $C_{ij}$ where $i$ is control and $j$ denotes target qubit.
	\begin{align*}
		G_{NC} &= C_{02}C_{20}C_{02} \\
		&= \dyad{0}{0} \otimes \dyad{00}{00} + \dyad{0}{1} \otimes \dyad{01}{00} + \dyad{0}{0} \otimes \dyad{10}{10} + \dyad{0}{1} \otimes \dyad{11}{10} \\
		&+ \dyad{1}{0} \otimes \dyad{00}{01} + \dyad{1}{1} \otimes \dyad{01}{01} + \dyad{1}{0} \otimes \dyad{10}{11} + \dyad{1}{1} \otimes \dyad{11}{11} \\
		&= A_{00,00}^{NC} \otimes \dyad{00}{00}  + A_{01,00}^{NC} \otimes \dyad{01}{00} + A_{10,10}^{NC} \otimes \dyad{10}{10} + A_{11,10}^{NC} \otimes \dyad{11}{10} \\
		&+ A_{00,01}^{NC} \otimes \dyad{00}{01} + A_{01,01}^{NC} \otimes \dyad{01}{01} + A_{10,11}^{NC} \otimes \dyad{10}{11} + A_{11,11}^{NC} \otimes \dyad{11}{11}.
	\end{align*}

	The second processor $G_{S}$ is based on SWAP gate, which we shall denote with $S$.
	\begin{align*}
		G_{S} &= S \otimes \mathbb{1} \\
		&= \dyad{0}{0} \otimes \dyad{00}{00} + \dyad{0}{1} \otimes \dyad{10}{00} + \dyad{1}{0} \otimes \dyad{00}{10} + \dyad{1}{1} \otimes \dyad{10}{10} \\
		&+ \dyad{0}{0} \otimes \dyad{01}{01} + \dyad{0}{1} \otimes \dyad{11}{01} + \dyad{1}{0} \otimes \dyad{01}{11} + \dyad{1}{1} \otimes \dyad{11}{11} \\
		&= A_{00,00}^{S} \otimes \dyad{00}{00}  + A_{10,00}^{S} \otimes \dyad{10}{00} + A_{00,10}^{S} \otimes \dyad{00}{10} + A_{10,10}^{S} \otimes \dyad{10}{10} \\
		&+ A_{01,01}^{S} \otimes \dyad{01}{01} + A_{11,01}^{S} \otimes \dyad{11}{01} + A_{01,11}^{S} \otimes \dyad{01}{11} + A_{11,11}^{S} \otimes \dyad{11}{11}.
	\end{align*}

	Nielsen-Chuang processor implements the following transformations:
	\begin{align*}
		\mathfrak{G}_{G_{NC},\xi}^{pr} &\overset{(\ref{QPImplDiffM})}{=} \frac{1}{2p} \sum\limits_{klk^{\prime}l^{\prime}}^{2} \xi_{kl,k^{\prime}l^{\prime}} \left(A_{00,kl}^{NC} \varrho A_{00,k^{\prime}l^{\prime}}^{NC\dagger} + A_{00,kl}^{NC} \varrho A_{11,k^{\prime}l^{\prime}}^{NC\dagger} + A_{11,kl}^{NC} \varrho A_{00,k^{\prime}l^{\prime}}^{NC\dagger} + A_{11,kl}^{NC} \varrho A_{11,k^{\prime}l^{\prime}}^{NC\dagger}\right) \\
		&\hspace*{3mm}= \frac{1}{2p} (\xi_{00,00} \dyad{0}{0}\varrho\dyad{0}{0} + \xi_{00,01} \dyad{0}{0}\varrho\dyad{0}{1} + \xi_{01,00} \dyad{1}{0}\varrho\dyad{0}{0} + \xi_{01,01} \dyad{1}{0}\varrho\dyad{0}{1} \\
		&\qquad + \xi_{00,10} \dyad{0}{0}\varrho\dyad{1}{0} + \xi_{00,11} \dyad{0}{0}\varrho\dyad{1}{1} + \xi_{01,10} \dyad{1}{0}\varrho\dyad{1}{0} + \xi_{01,11} \dyad{1}{0}\varrho\dyad{1}{1} \\
		&\qquad + \xi_{10,00} \dyad{0}{1}\varrho\dyad{0}{0} + \xi_{10,01} \dyad{0}{1}\varrho\dyad{0}{1} + \xi_{11,00} \dyad{1}{1}\varrho\dyad{0}{0} + \xi_{11,01} \dyad{1}{1}\varrho\dyad{0}{1} \\
		&\qquad + \xi_{10,10} \dyad{0}{1}\varrho\dyad{1}{0} + \xi_{10,11} \dyad{0}{1}\varrho\dyad{1}{1} + \xi_{11,10} \dyad{1}{1}\varrho\dyad{1}{0} + \xi_{11,11} \dyad{1}{1}\varrho\dyad{1}{1}).
	\end{align*}
	And SWAP processor implements the following transformations:
	\begin{align*}
		\mathfrak{G}_{G_{S},\widetilde{\xi}}^{pr} &\overset{(\ref{QPImplDiffM})}{=} \frac{1}{2p} \sum\limits_{klk^{\prime}l^{\prime}}^{2} \xi_{kl,k^{\prime}l^{\prime}} \left(A_{00,kl}^{S} \varrho A_{00,k^{\prime}l^{\prime}}^{S\dagger} + A_{00,kl}^{S} \varrho A_{11,k^{\prime}l^{\prime}}^{S\dagger} + A_{11,kl}^{S} \varrho A_{00,k^{\prime}l^{\prime}}^{S\dagger} + A_{11,kl}^{S} \varrho A_{11,k^{\prime}l^{\prime}}^{S\dagger}\right) \\
		&\hspace*{3mm}= \frac{1}{2p} (\widetilde{\xi}_{00,00} \dyad{0}{0}\varrho\dyad{0}{0} + \widetilde{\xi}_{00,10} \dyad{0}{0}\varrho\dyad{0}{1} + \widetilde{\xi}_{10,00} \dyad{1}{0}\varrho\dyad{0}{0} + \widetilde{\xi}_{10,10} \dyad{1}{0}\varrho\dyad{0}{1} \\
		&\qquad + \widetilde{\xi}_{00,01} \dyad{0}{0}\varrho\dyad{1}{0} + \widetilde{\xi}_{00,11} \dyad{0}{0}\varrho\dyad{1}{1} + \widetilde{\xi}_{10,01} \dyad{1}{0}\varrho\dyad{1}{0} + \widetilde{\xi}_{10,11} \dyad{1}{0}\varrho\dyad{1}{1} \\
		&\qquad + \widetilde{\xi}_{01,00} \dyad{0}{1}\varrho\dyad{0}{0} + \widetilde{\xi}_{01,10} \dyad{0}{1}\varrho\dyad{0}{1} + \widetilde{\xi}_{11,00} \dyad{1}{1}\varrho\dyad{0}{0} + \widetilde{\xi}_{11,10} \dyad{1}{1}\varrho\dyad{0}{1} \\
		&\qquad + \widetilde{\xi}_{01,01} \dyad{0}{1}\varrho\dyad{1}{0} + \widetilde{\xi}_{01,11} \dyad{0}{1}\varrho\dyad{1}{1} + \widetilde{\xi}_{11,01} \dyad{1}{1}\varrho\dyad{1}{0} + \widetilde{\xi}_{11,11} \dyad{1}{1}\varrho\dyad{1}{1}).
	\end{align*}
	Relation between program states is given by SWAP $\widetilde{\xi} = S \xi S$. Therefore, all possible implemented channels are implemented with the same probabilities. These processors are strongly probabilistically equivalent $G_{NC} \sim_{pr} G_{S}$. On top of that, these processors are also structurally equivalent $G_{NC} \sim_{st} G_{S}$.
	
	Relation between processors $G_{NC}$ and $G_{S}$ is given by unitary $U$:
	\begin{align*}
		U &= \dyad{000}{000} + \dyad{001}{010} + \dyad{010}{001} + \dyad{011}{011}\\
		&+ \dyad{100}{100} + \dyad{101}{110} + \dyad{110}{101} + \dyad{111}{111} \\
		&= \left(\dyad{0}{0}\right) \otimes \left(\dyad{00}{00} + \dyad{01}{10} + \dyad{10}{01} + \dyad{11}{11}\right) = \mathbb{1} \otimes S,
	\end{align*}
	where $UG_{NC}U^{\dagger} = G_{S}$.
	
	In conclusion, we realize that not only processor $G_{S}$, but also $G_{NC}$ is able to implement arbitrary quantum channel.
\end{example}

We shall proceed with investigation of equivalence of probabilistic processors when the probabilities of successful measurements might differ as per definition \ref{weakEqv}. In the following theorem, we provide sufficient condition on weak equivalence of processors with different dimensions of program spaces. 
\begin{theorem}\label{sufNecDimProbWeak}
	Relation between Probabilistically Equivalent Processors
	\\
	Let us have processors $G = \sum_{jk}^{P} A_{jk} \otimes \dyad{j}{k}$ and $\widetilde{G} = \left(\sum_{r}^{D} \dyad{r}{r} \otimes U_{r}\right) G \left(\mathbb{1} \otimes V^{\dagger}\right)$ with pure program states $\xi$ and $\widetilde{\xi}$ respectively and where $U_{r} = \sum_{x}^{\widetilde{P}} \sum_{j}^{P} u_{xj}^{r} \dyad{x}{j}$ are isometric operators for all $r$ and operator $V$ is co-isometry. Then, these processors are weakly probabilistically equivalent $G \approx_{pr} \widetilde{G}$, if $\sum_{xx^{\prime}}^{\widetilde{P}} u_{xj}^{r} u_{x^{\prime}j^{\prime}}^{r^{\prime}\star} = 1$. Furthermore, these processors are strongly probabilistically equivalent $G \sim_{pr} \widetilde{G}$, if also $P = \widetilde{P}$.
\end{theorem}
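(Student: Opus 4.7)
The approach is to compute the subnormalized operation $\mathfrak{O}^{pr}_{\widetilde{G},\widetilde{\xi}} = \Tr_p[\widetilde{G}(\varrho\otimes\widetilde{\xi})\widetilde{G}^\dagger(\mathbb{1}\otimes\dyad{\widetilde{\chi}})]$ directly, unfold it using the explicit form of $\widetilde{G}$, and reduce it to the corresponding expression for $G$ with the induced program state $\xi := V^\dagger\widetilde{\xi}V$. First I would substitute $\widetilde{G} = W_L G(\mathbb{1}\otimes V^\dagger)$, with $W_L = \sum_{r}^{D}\dyad{r}{r}\otimes U_r$, into the definition. The right-hand factors $(\mathbb{1}\otimes V^\dagger)$ and $(\mathbb{1}\otimes V)$ commute past the data side and absorb into the program, leaving $\varrho\otimes\xi$ where $\xi = V^\dagger\widetilde{\xi}V$. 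Because $V$ is a co-isometry, $\Tr(\xi)=\Tr(\widetilde{\xi}VV^\dagger)=\Tr(\widetilde{\xi})=1$, and purity of $\widetilde{\xi}$ carries over to $\xi$ since $V^\dagger$ is an isometry.

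Next I would expand $G(\varrho\otimes\xi)G^\dagger = \sum_{jkj'k'}^{P}\xi_{kk'}A_{jk}\varrho A_{j'k'}^\dagger\otimes\dyad{j}{j'}$, conjugate by $W_L$, and apply the success measurement $\dyad{\widetilde{\chi}} = \frac{1}{\widetilde{P}}\sum_{nn'}^{\widetilde{P}}\dyad{n}{n'}$. The program-side partial trace produces the factor
\begin{equation*}
\sum_{nn'}^{\widetilde{P}}\langle n'|U_r|j\rangle\langle j'|U_{r'}^\dagger|n\rangle \;=\; \sum_{nn'}^{\widetilde{P}} u^{r}_{n'j}\,u^{r'\star}_{nj'},
\end{equation*}
which by the hypothesis of the theorem equals $1$ for every choice of $r,r',j,j'$. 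Simultaneously, the data-side expression $\sum_{rr'}^{D}\dyad{r}{r}M\dyad{r'}{r'}$ reconstructs any operator $M$ on the data space from all of its matrix elements, so this double sum collapses to the bare operator. Combining these two reductions leaves
\begin{equation*}
\mathfrak{O}^{pr}_{\widetilde{G},\widetilde{\xi}} \;=\; \frac{1}{\widetilde{P}}\sum_{jj'kk'}^{P}\xi_{kk'}\,A_{jk}\varrho A_{j'k'}^\dagger \;=\; \frac{P}{\widetilde{P}}\,\mathfrak{O}^{pr}_{G,\xi},
\end{equation*}
using equation~(\ref{SprobQPelem}) in the last step.

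Taking the trace of both sides gives the probability relation $\widetilde{p}=\tfrac{P}{\widetilde{P}}\,p$, and dividing the subnormalized operations by their respective probabilities yields the normalized channel identity $\mathfrak{G}^{pr}_{\widetilde{G},\widetilde{\xi}}=\mathfrak{G}^{pr}_{G,\xi}$. This is precisely weak probabilistic equivalence in the sense of definition~\ref{weakEqv}. The extra assumption $P=\widetilde{P}$ immediately forces $\widetilde{p}=p$ for every choice of programs, upgrading this to strong probabilistic equivalence per definition~\ref{strongEqv}.

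The main technical point is verifying that the hypothesis $\sum_{xx'}^{\widetilde{P}} u^r_{xj}u^{r'\star}_{x'j'}=1$ slots in at exactly the right place to annihilate the program-side structure introduced by $W_L$; the rest is routine index-chasing. A minor subtlety is that the isometry condition on $U_r$ and the co-isometry condition on $V$ jointly require $\widetilde{P}\geq P$, so $\widetilde{G}$ is in general only an isometry and becomes unitary precisely when $P=\widetilde{P}$ — which is consistent with strong equivalence being recovered only in that case. Normalization of states and of the success outcome is computed self-consistently in both regimes, so the argument goes through without modification.
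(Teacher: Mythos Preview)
Your proposal is correct and follows essentially the same approach as the paper's proof: both substitute the explicit form of $\widetilde{G}$, absorb the $(\mathbb{1}\otimes V^\dagger)$ factors into the program state to obtain $\xi = V^\dagger\widetilde{\xi}V$, expand the trace over the program register, and invoke the hypothesis $\sum_{xx'}^{\widetilde{P}} u^r_{xj}u^{r'\star}_{x'j'}=1$ at the same point to collapse the $W_L$-conjugation, arriving at $\widetilde{p}=\tfrac{P}{\widetilde{P}}p$ and the channel identity. Your version is marginally cleaner in that you compute the subnormalized operation $\mathfrak{O}^{pr}$ first and normalize at the end (whereas the paper carries the probability in the denominator throughout), and you add the useful observation that the isometry/co-isometry conditions force $\widetilde{P}\geq P$; but these are cosmetic differences, not a different route.
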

\begin{proof}
	Relation between pure program states is always chaperoned by a unitary matrix $\widetilde{\xi} = V \xi V^{\dagger}$. Let us calculate what $\widetilde{G}$ implements:
	\begin{align*}
		\mathfrak{C}_{\widetilde{G}, \widetilde{\xi}}^{pr}
		&= \frac{1}{\widetilde{p}} \Tr_{p} \left[ \left(\sum_{r}^{D} \dyad{r}{r} \otimes U_{r}\right) G \left(\mathbb{1} \otimes V^{\dagger}\right) \left(\varrho \otimes \widetilde{\xi}\right) \left(\mathbb{1} \otimes V\right) G^{\dagger} \left(\sum_{r^{\prime}}^{D} \dyad{r^{\prime}}{r^{\prime}} \otimes U_{r^{\prime}}^{\dagger}\right) \left(\mathbb{1} \otimes \frac{1}{\widetilde{P}} \sum_{nn^{\prime}}^{\widetilde{P}} \dyad{n}{n^{\prime}}\right) \right]\\
		&= \frac{1}{\widetilde{p}\widetilde{P}} \Tr_{p} \Bigg[ \left(\sum_{r}^{D} \dyad{r}{r} \otimes U_{r}\right) \left(\sum_{jk}^{P} A_{jk} \otimes \dyad{j}{k}\right) \left(\mathbb{1} \otimes V^{\dagger}\right) \left(\varrho \otimes V \xi V^{\dagger}\right) \left(\mathbb{1} \otimes V\right) \\ &\quad\left(\sum_{j^{\prime}k^{\prime}}^{P} A_{j^{\prime}k^{\prime}}^{\dagger} \otimes \dyad{k^{\prime}}{j^{\prime}}\right) \left(\sum_{r^{\prime}}^{D} \dyad{r^{\prime}}{r^{\prime}}  \otimes U_{r^{\prime}}^{\dagger}\right) \left(\mathbb{1} \otimes \sum_{n,n^{\prime}}^{\widetilde{P}} \dyad{n}{n^{\prime}}\right) \Bigg]\\
		&= \frac{1}{\widetilde{p}\widetilde{P}} \sum_{rr^{\prime}}^{D}\sum_{jk}^{P}\sum_{j^{\prime}k^{\prime}}^{P} \dyad{r}{r} A_{jk} \varrho A_{j^{\prime}k^{\prime}}^{\dagger} \dyad{r^{\prime}}{r^{\prime}} \sum_{nn^{\prime}}^{\widetilde{P}} \Tr \left[U_{r} \dyad{j}{k} \xi \dyad{k^{\prime}}{j^{\prime}} U_{r^{\prime}}^{\dagger} \dyad{n}{n^{\prime}}\right]\\
		&= \frac{1}{\widetilde{p}\widetilde{P}} \sum_{rr^{\prime}}^{D}\sum_{jk}^{P}\sum_{j^{\prime}k^{\prime}}^{P} \dyad{r}{r} A_{jk} \varrho A_{j^{\prime}k^{\prime}}^{\dagger} \dyad{r^{\prime}}{r^{\prime}}  \sum_{nn^{\prime}}^{\widetilde{P}} \sum_{xx^{\prime}}^{\widetilde{P}} \sum_{yy^{\prime}}^{P} u_{xy}^{r} u_{x^{\prime}y^{\prime}}^{r^{\prime}\star} \Tr \left[\dyad{x}{y} \hspace*{-1.5mm} \dyad{j}{k} \xi \dyad{k^{\prime}}{j^{\prime}} \hspace*{-1.5mm} \dyad{y^{\prime}}{x^{\prime}} \hspace*{-1.5mm} \dyad{n}{n^{\prime}}\right]\\
		&= \frac{1}{\widetilde{p}\widetilde{P}} \sum_{rr^{\prime}}^{D}\sum_{jk}^{P}\sum_{j^{\prime}k^{\prime}}^{P} \xi_{kk^{\prime}} \dyad{r}{r} A_{jk} \varrho A_{j^{\prime}k^{\prime}}^{\dagger} \dyad{r^{\prime}}{r^{\prime}} \sum_{xx^{\prime}}^{\widetilde{P}} u_{xj}^{r} u_{x^{\prime}j^{\prime}}^{r^{\prime}\star}\\
		&\overset{(i)}{=} \frac{1}{\widetilde{p}\widetilde{P}} \sum_{jk}^{P} \sum_{j^{\prime}k^{\prime}}^{P} \xi_{kk^{\prime}} \sum_{r}^{P} \dyad{r}{r} A_{jk} \varrho A_{j^{\prime}k^{\prime}}^{\dagger} \sum_{r^{\prime}} \dyad{r^{\prime}}{r^{\prime}} = \frac{1}{\widetilde{p}\widetilde{P}} \sum_{jk}^{P} \sum_{j^{\prime}k^{\prime}}^{P} \xi_{kk^{\prime}} A_{jk} \varrho A_{j^{\prime}k^{\prime}}^{\dagger}\\
		&\overset{(ii)}{=} \frac{1}{pP} \sum_{jkj^{\prime}k^{\prime}}^{P} \xi_{kk^{\prime}} A_{jk} \varrho A_{j^{\prime}k^{\prime}}^{\dagger} = \mathfrak{C}_{G, \xi}^{pr}.
	\end{align*}
	In $(i)$ we have used the assumption that $\sum_{xx^{\prime}}^{\widetilde{P}} u_{xj}^{r} u_{x^{\prime}j^{\prime}}^{r^{\prime}\star} = 1$ and in $(ii)$ we are using the relation between probabilities:
	\begin{align*}
		\widetilde{p} \overset{(\ref{probQP})}{=} \Tr \left[\frac{1}{\widetilde{P}} \sum_{jkj^{\prime}k^{\prime}}^{P} \xi_{kk^{\prime}} A_{jk} \varrho A_{k^{\prime}j^{\prime}}^{\dagger}\right] = \frac{1}{\widetilde{P}} P \Tr \left[\frac{1}{P} \sum_{jkj^{\prime}k^{\prime}}^{P} \xi_{kk^{\prime}} A_{jk} \varrho A_{k^{\prime}j^{\prime}}^{\dagger}\right] = \frac{P}{\widetilde{P}} p.
	\end{align*}
	We can see that if $P = \widetilde{P}$, then probabilities are equal $p = \widetilde{p}$ and thus in such a case, processors are strongly probabilistically equivalent $G \sim_{pr} \widetilde{G}$. 
	Moreover, $U_{r}$ must be isometry $U_{r}^{\dagger} U_{r} = \mathbb{1}$ and $V$ must be co-isometry $VV^{\dagger} = \mathbb{1}$ to accommodate different dimensions of program registers of $G$ and $\widetilde{G}$.
\end{proof}

Probability of implementing the desired operation depends not only on processors $G$ and $\widetilde{G}$, but also on chosen program states. Let us present an example, where one time $p > \widetilde{p}$ and then the next time $\widetilde{p} > p$.
\begin{example}\label{diffProbEx}
	Processors $G = \mathbb{1} \otimes \dyad{0}{0} + \sigma_{z} \otimes \dyad{1}{1}$ and $\widetilde{G} = \mathbb{1} \otimes \dyad{0}{0} + \mathbb{1} \otimes \dyad{1}{1} + \sigma_{z} \otimes \dyad{2}{2}$ are weakly probabilistically equivalent as they implement the following transformations:
	\begin{align*}
		&\mathfrak{C}_{G,\xi}^{pr} \overset{(\ref{elemOfImplPr})}{=} \frac{1}{2p} (\xi_{00} \mathbb{1} \varrho \mathbb{1} + \xi_{01} \mathbb{1} \varrho \sigma_{z} + \xi_{10} \sigma_{z} \varrho \mathbb{1} + \xi_{11} \sigma_{z} \varrho \sigma_{z}) \\
		& \mathfrak{C}_{G,\widetilde{\xi}}^{pr} \overset{(\ref{elemOfImplPr})}{=} \frac{1}{3\widetilde{p}} \left[(\widetilde{\xi}_{00} + \widetilde{\xi}_{01} + \widetilde{\xi}_{10} + \widetilde{\xi}_{11}) \mathbb{1} \varrho \mathbb{1} + (\widetilde{\xi}_{02} + \widetilde{\xi}_{12}) \mathbb{1} \varrho \sigma_{z} + (\widetilde{\xi}_{20} + \widetilde{\xi}_{21}) \sigma_{z} \varrho \mathbb{1} + \widetilde{\xi}_{22} \sigma_{z} \varrho \sigma_{z}\right].
	\end{align*}
	As our first option, let us choose $\widetilde{\xi}_{00} = \widetilde{\xi}_{11} = \frac{1}{2}$. Corresponding program for processor $G$ is $\xi_{00} = 1$. Then $\mathfrak{C}_{G,\xi}^{pr} = \frac{1}{2p} \mathbb{1} \varrho \mathbb{1}$ with $p = \frac{1}{2}$ and $\mathfrak{C}_{\widetilde{G},\widetilde{\xi}}^{pr} = \frac{1}{3\widetilde{p}} \mathbb{1} \varrho \mathbb{1}$ with $\widetilde{p} = \frac{1}{3}$. In this case $p > \widetilde{p}$.
	
	Let us now choose $\widetilde{\xi}_{00} = \widetilde{\xi}_{01} = \widetilde{\xi}_{10} = \widetilde{\xi}_{11} = \frac{1}{2}$, while again $\xi_{00} = 1$. Then $\mathfrak{C}_{\widetilde{G},\widetilde{\xi}}^{pr} = \frac{2}{3\widetilde{p}} \mathbb{1} \varrho \mathbb{1}$ with $\widetilde{p} = \frac{2}{3}$. And because $p$ is the same as previously calculated, now $p < \widetilde{p}$, because non-diagonal elements of $\widetilde{\xi}$ also contribute to implementation of an identity channel.
\end{example}

\subsection{Relations between Types of Equivalences}

At first, we shall direct our focus to the examination of relations between deterministic and probabilistic equivalences. Let us start with two deterministically equivalent processors and show that deterministic equivalence does not imply any of the probabilistic types of equivalence.
\begin{example}
	Let us have two processors:
	\begin{align*}
		G_{1} &= \frac{1}{\sqrt{2}}\left(\mathbb{1}\otimes\dyad{0}{0} + \sigma_{z}\otimes\dyad{1}{0} + \sigma_{x}\otimes\dyad{0}{1} - i\sigma_{y}\otimes\dyad{1}{1}\right) \\
		G_{2} &= \dyad{0}{0}\otimes\dyad{0}{0} + \dyad{1}{1}\otimes\dyad{1}{0} +  \dyad{1}{0}\otimes\dyad{0}{1} + \dyad{0}{1}\otimes\dyad{1}{1}.
	\end{align*}
	Given processors are deterministically equivalent:
	\begin{align*}
		\mathfrak{C}_{G_{1},\xi}^{det} = \mathfrak{C}_{G_{2},\xi}^{det} \overset{(\ref{detQPimpl})}{=} &\xi_{00} \left(\dyad{0}{0}\varrho\dyad{0}{0} + \dyad{1}{1}\varrho\dyad{1}{1}\right) + \xi_{01} \left(\dyad{0}{0}\varrho\dyad{0}{1} + \dyad{1}{1}\varrho\dyad{1}{0}\right) + \\&\xi_{10} \left(\dyad{0}{1}\varrho\dyad{1}{1} + \dyad{1}{0}\varrho\dyad{0}{0}\right) + \xi_{11} \left(\dyad{0}{1}\varrho\dyad{1}{0} + \dyad{1}{0}\varrho\dyad{0}{1}\right).
	\end{align*}
	However, by calculating what processors $G_{1}$ and $G_{2}$ implement probabilistically, we discover that they are not probabilistically equivalent, regardless of the definition in use:
	\begin{align*}
		\mathfrak{G}_{G_{1},\xi}^{pr} &\overset{(\ref{elemOfImplPr})}{=} \frac{1}{p} \left(\xi_{00} \dyad{0}{0}\varrho\dyad{0}{0} + \xi_{01} \dyad{0}{0}\varrho\dyad{0}{1} + \xi_{10} \dyad{1}{0}\varrho\dyad{0}{0} + \xi_{11} \dyad{1}{0}\varrho\dyad{0}{1}\right) \\
		\mathfrak{G}_{G_{2},\widetilde{\xi}}^{pr} &\overset{(\ref{elemOfImplPr})}{=} \frac{1}{p} [\widetilde{\xi}_{00} \left(\dyad{0}{0}\varrho\dyad{0}{0} + \dyad{0}{0}\varrho\dyad{1}{1} + \dyad{1}{1}\varrho\dyad{0}{0} + \dyad{1}{1}\varrho\dyad{1}{1}\right) 
		\\&+ \widetilde{\xi}_{01} \left(\dyad{0}{0}\varrho\dyad{0}{1} + \dyad{0}{0}\varrho\dyad{1}{0} + \dyad{1}{1}\varrho\dyad{0}{1} + \dyad{1}{1}\varrho\dyad{1}{0}\right) 
		\\&+ \widetilde{\xi}_{10} \left(\dyad{1}{0}\varrho\dyad{0}{0} + \dyad{1}{0}\varrho\dyad{1}{1} + \dyad{0}{1}\varrho\dyad{0}{0} + \dyad{0}{1}\varrho\dyad{1}{1}\right) 
		\\&+ \widetilde{\xi}_{11} \left(\dyad{1}{0}\varrho\dyad{0}{1} + \dyad{1}{0}\varrho\dyad{1}{0} + \dyad{0}{1}\varrho\dyad{0}{1} + \dyad{0}{1}\varrho\dyad{1}{0}\right)].
	\end{align*}
	We have already encountered processor $G_{1}$ in example \ref{noChannelProcessor} as a probabilistic processor not able to implement any quantum channel. On the other hand, processor $G_{2}$ is able to apply quantum channel on data state $\varrho$. Therefore, they are not probabilistically equivalent in any of the three defined ways. And therefore, deterministic equivalence of processors does not guarantee any probabilistic equivalence.
\end{example}
Now, let us provide counterexample for statement that probabilistic equivalence implies deterministic one.
\begin{example}
	Let us have two processors:
	\begin{align*}
		G_{1} &= \dyad{0}{0}\otimes\dyad{0}{0} + \dyad{1}{1}\otimes\dyad{1}{0} +  \dyad{1}{1}\otimes\dyad{0}{1} + \dyad{0}{0}\otimes\dyad{1}{1}, \\
		G_{2} &= \mathbb{1}\otimes\dyad{0}{0} + \mathbb{1}\otimes\dyad{1}{1}.
	\end{align*}
	Probabilistically, they both implement identity channel with the same probability, meaning that they are strongly probabilistically $G \sim_{pr} \widetilde{G}$, as well as structurally $G \sim_{st} \widetilde{G}$ equivalent:
	\begin{align*}
		\mathfrak{G}_{G_{1},\xi}^{pr} = \mathfrak{G}_{G_{2},\xi}^{pr} \overset{(\ref{elemOfImplPr})}{=} \frac{1}{p} \left(\xi_{00}+\xi_{01}+\xi_{10}+\xi_{11}\right) \mathbb{1}\varrho\mathbb{1}.
	\end{align*}
	However deterministically, they implement two distinct sets of channels:
	\begin{align*}
		\mathfrak{C}_{G_{1},\xi}^{det} &\overset{(\ref{detQPimpl})}{=} \left(\xi_{00}+\xi_{11}\right)\left(\dyad{0}{0}\varrho\dyad{0}{0} + \dyad{1}{1}\varrho\dyad{1}{1}\right) + \left(\xi_{01}+\xi_{10}\right)\left(\dyad{0}{0}\varrho\dyad{1}{1} + \dyad{1}{1}\varrho\dyad{0}{0}\right) \\
		\mathfrak{C}_{G_{2},\xi}^{det} &\overset{(\ref{detQPimpl})}{=} \left(\xi_{00}+\xi_{11}\right) \mathbb{1}\varrho\mathbb{1}.
	\end{align*}
	Processor $G_{1}$ with the right program state is able to implement the same transformation as processor $G_{2}$, however it is not true the other way around. We can also consider another processor $G_{3} = \mathbb{1} \otimes \left(\dyad{0}{0} + \dyad{1}{1} + \dyad{2}{2}\right)$ that is weakly, but not deterministically equivalent with $G_{1}$ (it is deterministically equivalent with $G_{2}$). Therefore, probabilistic equivalence of any type does not imply deterministic equivalence.
\end{example}

Now, we shall continue with investigation of relations between probabilistic equivalences. Already from the definition \ref{strongEqv} of strong and the definition \ref{weakEqv} of weak equivalence, we can see that they are mutually exclusive. Either processors always implement every channel with the same probability (strong equivalence) or they do not (weak equivalence). Furthermore, structural equivalence does not imply that processors are definitively strongly or definitively weakly equivalent.
\begin{example}
	Let us consider three processors that are structurally equivalent.
	\begin{align*}
		G_{1} &= \mathbb{1} \otimes \dyad{0}{0} + \sigma_{z} \otimes \dyad{1}{1}, \\
		G_{2} &= \sigma_{z} \otimes \dyad{0}{0} + \mathbb{1} \otimes \dyad{1}{1}, \\
		G_{3} &= \mathbb{1} \otimes \left(\dyad{0}{0} + \dyad{2}{2}\right) + \sigma_{z} \otimes \left(\dyad{1}{1} + \dyad{3}{3}\right).
	\end{align*}
	We can show that these processors implement the following transformations:
	\begin{align*}
		\mathfrak{O}_{G_{1},\xi^{1}}^{pr} &\overset{(\ref{SprobQPelem})}{=} \frac{1}{2} \left(\xi_{00}^{1} \mathbb{1} \varrho \mathbb{1} + \xi_{01}^{1} \mathbb{1} \varrho \sigma_{z} + \xi_{10}^{1} \sigma_{z} \varrho \mathbb{1} + \xi_{11}^{1} \sigma_{z} \varrho \sigma_{z}\right), \\
		\mathfrak{O}_{G_{2},\xi^{2}}^{pr} &\overset{(\ref{SprobQPelem})}{=} \frac{1}{2} \left(\xi_{00}^{2} \sigma_{z} \varrho \sigma_{z} + \xi_{01}^{2} \sigma_{z} \varrho \mathbb{1} + \xi_{10}^{2} \mathbb{1} \varrho \sigma_{z} + \xi_{11}^{2} \mathbb{1} \varrho \mathbb{1}\right), \\
		\mathfrak{O}_{G_{3},\xi^{3}}^{pr} &\overset{(\ref{SprobQPelem})}{=} \frac{1}{4} [\left(\xi_{00}^{3} + \xi_{02}^{3} + \xi_{20}^{3} + \xi_{22}^{3}\right) \mathbb{1} \varrho \mathbb{1} + \left(\xi_{01}^{3} + \xi_{03}^{3} + \xi_{21}^{3} + \xi_{23}^{3}\right) \mathbb{1} \varrho \sigma_{z} \\
		&+ \left(\xi_{10}^{3} + \xi_{12}^{3} + \xi_{30}^{3} + \xi_{32}^{3}\right) \sigma_{z} \varrho \mathbb{1} + \left(\xi_{11}^{3} + \xi_{13}^{3} + \xi_{31}^{3} + \xi_{33}^{3}\right) \sigma_{z} \varrho \sigma_{z}].
	\end{align*}
	Processors are structurally equivalent as they are all able to implement identical operations. However, processors $G_{1}$ and $G_{2}$ are strongly equivalent $G_{1} \sim_{pr} G_{2}$, while processors $G_{1}$ and $G_{3}$ are weakly equivalent $G_{1} \approx_{pr} G_{3}$. E.g. $G_{1}$ realizes $\mathbb{1}\varrho\mathbb{1}$ with probability $p_{1} = \frac{1}{2}$, while $G_{3}$ realizes the same channel with probability $p_{3} = \frac{1}{4}$. Thus, we cannot say that structural equivalence implies either strong or weak probabilistic equivalence. 
\end{example}
However, let us examine whether structural equivalence implies that two processors are either strongly or weakly equivalent and not only exclusively strongly or exclusively weakly equivalent. We shall consider two processors $G = \sum_{jk}^{P} A_{jk} \otimes \dyad{j}{k}$ and $\widetilde{G} = \sum_{mn}^{\widetilde{P}} B_{mn} \otimes \dyad{m}{n}$ that are structurally equivalent. Let us consider that $\mathfrak{G}_{G}^{pr} = \frac{1}{p} \mathfrak{O}_{G}^{pr}$ and realize that for structurally equivalent processors, it holds that $\mathfrak{O}_{G}^{pr} = K_{\xi, \widetilde{\xi}} \mathfrak{O}_{G}^{pr}$, which means that the following equation is true:
\begin{align*}
	\frac{1}{P} \sum_{kk^{\prime}}^{P} \xi_{kk^{\prime}} \mathscr{A}_{k} \varrho \mathscr{A}_{k^{\prime}}^{\dagger} = K_{\xi, \widetilde{\xi}} \frac{1}{\widetilde{P}} \sum_{nn^{\prime}}^{\widetilde{P}} \widetilde{\xi}_{nn^{\prime}} \mathscr{B}_{n} \varrho \mathscr{B}_{n^{\prime}}^{\dagger}.
\end{align*}
We can now take a look at the channels and measurements that $G$ implements:
\begin{align*}
	\mathfrak{G}_{G, \xi}^{pr} \overset{(\ref{elemOfImplPr})}{=} \frac{1}{pP} \sum_{kk^{\prime}}^{P} \xi_{kk^{\prime}} \mathscr{A}_{k}\varrho\mathscr{A}_{k^{\prime}}^{\dagger} = \frac{K_{\xi, \widetilde{\xi}}}{p\widetilde{P}} \sum_{nn^{\prime}}^{\widetilde{P}} \widetilde{\xi}_{nn^{\prime}} \mathscr{B}_{n} \varrho \mathscr{B}_{n^{\prime}}^{\dagger}.
\end{align*}
We can also check what processor $\widetilde{G}$ implements:
\begin{align*}
	\mathfrak{G}_{\widetilde{G}, \widetilde{\xi}}^{pr} \overset{(\ref{elemOfImplPr})}{=} \frac{1}{\widetilde{p}\widetilde{P}} \sum_{nn^{\prime}}^{\widetilde{P}} \widetilde{\xi}_{nn^{\prime}} \mathscr{B}_{n} \varrho \mathscr{B}_{n^{\prime}}^{\dagger}.
\end{align*}
We can see that the implemented measurements and channels for $G$ and $\widetilde{G}$ are that same, albeit the probabilities might differ:
\begin{align*}
	p &= \frac{K_{\xi, \widetilde{\xi}}}{\widetilde{P}} \sum_{nn^{\prime}}^{\widetilde{P}} \widetilde{\xi}_{nn^{\prime}} \Tr\left(\mathscr{B}_{n}\varrho\mathscr{B}_{n^{\prime}}^{\dagger}\right),\\
	\widetilde{p} &= \frac{1}{\widetilde{P}} \sum_{nn^{\prime}}^{\widetilde{P}} \widetilde{\xi}_{nn^{\prime}} \Tr\left(\mathscr{B}_{n}\varrho\mathscr{B}_{n^{\prime}}^{\dagger}\right).
\end{align*}
We also do not need to explicitly calculate sets of implemented channels $\mathfrak{C}_{G}^{pr}$ and $\mathfrak{C}_{\widetilde{G}}^{pr}$ as our condition is already even stronger than that. And thus, structural equivalence between processors implies that they are either strongly or weakly probabilistically equivalent.

Let us now give an example when weak probabilistic equivalence does not imply structural equivalence.
\begin{example}
	Consider the following processors:
	\begin{align*}
		G &= \frac{1}{\sqrt{2}} \left(\mathbb{1} \otimes \dyad{0}{0} + \sigma_{z} \otimes \dyad{1}{0} + \sigma_{x} \otimes \dyad{0}{1} - i \sigma_{y} \otimes \dyad{1}{1}\right) + \sigma_{z} \otimes \dyad{2}{2} \\
		\widetilde{G} &= \sigma_{z} \otimes \dyad{0}{0}.
	\end{align*}
	Processor $G$ is formed from processor that does not implement any channel and was already used in the example \ref{noChannelProcessor} and added to it is processor $\widetilde{G}$. This processor implements:
	\begin{align*}\label{randomProcessor}
		&\mathfrak{G}_{G,\xi}^{pr} \overset{(\ref{elemOfImplPr})}{=}  \numberthis \\&\frac{1}{3p} [\dyad{0}{0} \left(2 \xi_{00} \varrho_{00} + \sqrt{2} \xi_{02} \varrho_{00} + \sqrt{2} \xi_{20} \varrho_{00} + \xi_{22} \varrho_{00}\right) + \dyad{0}{1} \left(2 \xi_{01} \varrho_{00} - \sqrt{2} \xi_{02} \varrho_{01} + \sqrt{2} \xi_{21} \varrho_{00} - \xi_{22} \varrho_{01}\right) \\
		&\hspace*{2mm}+ \dyad{1}{0} \left(2 \xi_{00} \varrho_{00} + \sqrt{2} \xi_{12} \varrho_{00} - \sqrt{2} \xi_{20} \varrho_{10} - \xi_{22} \varrho_{10}\right) + \dyad{1}{1} \left(2 \xi_{11} \varrho_{00} - \sqrt{2} \xi_{12} \varrho_{01} - \sqrt{2} \xi_{21} \varrho_{10} + \xi_{22} \varrho_{11}\right)]. 
	\end{align*}
	Let us now figure out when the processor is able to implement channels.
	\begin{align*}
		p &\overset{(\ref{probQP})}{=} \frac{1}{3} \left(2 \xi_{00} \varrho_{00} + \sqrt{2} \xi_{02} \varrho_{00} + \sqrt{2} \xi_{20} \varrho_{00} + \xi_{22} \varrho_{00} + 2 \xi_{11} \varrho_{00} - \sqrt{2} \xi_{12} \varrho_{01} - \sqrt{2} \xi_{21} \varrho_{10} + \xi_{22} \varrho_{11}\right) \\
		&= \frac{1}{3} \left[\varrho_{00} \left(2\xi_{00} + \sqrt{2}\xi_{02} + \sqrt{2}\xi_{20} + 2\xi_{11}\right) + \xi_{22} \varrho_{11} - \sqrt{2}\xi_{12} \varrho_{01} - \sqrt{2}\xi_{21}\varrho_{10}\right].
	\end{align*}
	For processor to implement quantum channel, probability cannot depend on data state $\varrho$. Therefore, we want this to be equal to $C(\varrho_{00} + \varrho_{11})$, where $C \in \mathbb{R}$ is any real number and therefore $\xi_{00} = \xi_{11} = 0$, which in turn also means that $\xi_{02} = \xi_{20} = \xi_{12} = \xi_{21} = 0$. Thus, processor $G$ implements channel only for program state $\xi_{22} = 1$ with probability $p = \frac{1}{3}$. With the chosen program $\xi_{22} = 1$, $G$ implements $\frac{1}{3p} \sigma_{z} \varrho \sigma_{z}$. Processor $\widetilde{G}$ always implements $\frac{1}{\widetilde{p}} \sigma_{z} \varrho \sigma_{z}$ with probability $\widetilde{p} = 1$. Therefore, these processors are weakly probabilistically equivalent; however they are not structurally equivalent. Let us choose program state $\xi_{00}$ and let us take a look at what is applied on data state by processor $G$: $\mathfrak{G}_{G,\xi}^{pr} = \frac{1}{3p} \varrho_{00} \dyad{0}{0}$. There is no way in which processor $\widetilde{G}$ could implement the same operation multiplied by a positive real number $K_{\xi,\widetilde{\xi}}$. Therefore, weak equivalence does not imply structural equivalence.
\end{example}
Let us modify previous example to show that neither does strong equivalence guarantee structural equivalence.
\begin{example}
	We shall look at two following processors:
	\begin{align*}
		G &= \frac{1}{\sqrt{2}} \left(\mathbb{1} \otimes \dyad{0}{0} + \sigma_{z} \otimes \dyad{1}{0} + \sigma_{x} \otimes \dyad{0}{1} - i \sigma_{y} \otimes \dyad{1}{1}\right) + \sigma_{z} \otimes \dyad{2}{2} \\
		\widetilde{G} &= \frac{1}{\sqrt{2}} \left(\mathbb{1} \otimes \dyad{0}{0} - \sigma_{z} \otimes \dyad{1}{0} + \sigma_{x} \otimes \dyad{0}{1} + i \sigma_{y} \otimes \dyad{1}{1}\right) + \sigma_{z} \otimes \dyad{2}{2},
	\end{align*}
	which are formed in a similar fashion - as a composition of processor not being able to implement any channel and a processor being able to apply only Pauli matrix $\sigma_{z}$ on the input data state. Let us now look at what the second processor implements:
	\begin{align*}
		&\mathfrak{G}_{\widetilde{G},\widetilde{\xi}}^{pr} \overset{(\ref{elemOfImplPr})}{=} \\&\frac{1}{3\widetilde{p}} [\dyad{0}{0} \left(2 \widetilde{\xi}_{11} \varrho_{11} + \sqrt{2} \widetilde{\xi}_{12} \varrho_{10} + \sqrt{2} \widetilde{\xi}_{21} \varrho_{01} + \widetilde{\xi}_{22} \varrho_{00}\right) + \dyad{0}{1} \left(2 \widetilde{\xi}_{10} \varrho_{11} - \sqrt{2} \widetilde{\xi}_{12} \varrho_{11} + \sqrt{2} \widetilde{\xi}_{20} \varrho_{01} - \widetilde{\xi}_{22} \varrho_{01}\right) \\
		&\hspace*{0.3cm}+ \dyad{1}{0} \left(2 \widetilde{\xi}_{01} \varrho_{11} + \sqrt{2} \widetilde{\xi}_{02} \varrho_{10} - \sqrt{2} \widetilde{\xi}_{21} \varrho_{11} - \widetilde{\xi}_{22} \varrho_{10}\right) + \dyad{1}{1} \left(2 \widetilde{\xi}_{00} \varrho_{11} - \sqrt{2} \widetilde{\xi}_{02} \varrho_{11} - \sqrt{2} \widetilde{\xi}_{20} \varrho_{11} + \widetilde{\xi}_{22} \varrho_{11}\right)].
	\end{align*}
	This processor, exactly as processor $G$, is able to apply channel on data state only in case of program state being $\widetilde{\xi}_{22} = 1$. In such a case it implements $\frac{1}{3\widetilde{p}} \sigma_{z} \varrho \sigma_{z}$ with probability being $\widetilde{p} = \frac{1}{3}$. Thus, processors are strongly equivalent $G \sim_{pr} \widetilde{G}$. However, if we take program $\xi_{00} = 1$ for processor $G$, we can see from equation (\ref{randomProcessor}) that the implemented measurement is $\frac{1}{3p} \varrho_{00} \dyad{0}{0}$ which is impossible to obtain from processor $\widetilde{G}$ even if it is multiplied by $K_{\xi,\widetilde{\xi}}$. Therefore, strong probabilistic equivalence does not imply structural one.
\end{example}

In conclusion, the only type of equivalence that provides some additional information about other types of equivalences is structural one. If processors are structurally equivalent, we can say that they are also either strongly or weakly probabilistically equivalent.

\chapter{Quantum Networks}\label{QN}
Mathematical formalism describing elementary quantum circuits is already well established. Density matrices, quantum channels, quantum instruments and POVMs are at its heart. However, once one starts to combine these circuits together into complex quantum networks, their analysis becomes much more difficult and convoluted. There exist infinite possibilities to order various quantum gates in circuits, to combine them with various measurements and to combine quantum circuits into more complicated quantum networks. Furthermore, one has to take into account that every such network can be probabilistic. This means that optimizing every possible network for various tasks described by multitude of possible networks becomes rather complicated. Therefore, one would like to devise a more unifying formalism to simplify their description. Luckily, generalization of density matrices, quantum channels, quantum instruments and POVMs, that allow for easier manipulation of quantum networks, do exist \cite{TheoreticalFrameworkForQuantumNetworks, QuantumNetworks:GeneralTheoryAndApplications, QuantumCircuitArchitecture, OptimalQuantumNetworksAndOneShotEntropies}.

This generalization is based on Choi-Jamio\l{}kowski isomorphism linking linear maps with linear operators, or at the basic level, quantum channels with density matrices. One also has to take into account that one task can be executed by multiple circuits that form equivalence class. The entire equivalence class of networks, can, in the end, be described by only one operator called Choi operator. This simplifies optimization, because one does not need to specify all channels and measuring devices forming quantum network, but only one operator describing the entire network.

The new formalism is used in solving problems such as quantum channel discrimination \cite{TheoreticalFrameworkForQuantumNetworks, PerfectDiscriminationOfNoSignallingChannelsViaQuantumSuperpositionOfCausalStructures}, quantum tomography \cite{OptimalQuantumTomography}, cloning of unitary transformation \cite{OptimalCloningOfUnitaryTransformation}, or to study the causality \cite{DeterminismWithoutCausality, CausalInfluenceInOperationalProbabilisticTheories} or quantum learning of unitary transformation \cite{OptimalProbabilisticStorageAndRetrievalOfUnitaryChannels} on which we shall build further in this thesis.


\section{Choi-Jamio\l{}kowski Isomorphism}

Choi-Jamio\l{}kowski isomorphism is at the center of our generalized treatment of quantum networks. It introduces one-to-one correspondence between linear maps and linear operators.

But firstly, we shall examine isomorphism between states and operators. From now on, let ${\cal{L(H)}}$ denote set of linear operators on a Hilbert space ${\cal{H}}$ and ${\cal{L}}({\cal{H}}_{a}, {\cal{H}}_{b})$ denote linear operator from Hilbert space ${\cal{H}}_{a}$ to Hilbert space ${\cal{H}}_{b}$. Then, let us have an operator $A \in {\cal{L}}({\cal{H}}_{a}, {\cal{H}}_{b})$ and a quantum state $\dket{A}_{ab} \in {\cal{H}}_{a} \otimes {\cal{H}}_{b}$ (called double ket):
\begin{align*}
	A &= \sum_{n,m} \ket{n}_{b}\hspace*{-1mm}\bra{n}A\ket{m}_{a}\hspace*{-1mm}\bra{m}
\end{align*}
\begin{align}\label{opAndDKet}
	\dket{A}_{ab} &= \sum_{n,m} \bra{n}A\ket{m} \ket{n}_{b}\ket{m}_{a},
\end{align}
where $\{\ket{m}_{a}\}, \{\ket{n}_{b}\}$ denote orthonormal bases of ${\cal{H}}_{a}$ and ${\cal{H}}_{b}$ respectively. If dimensions of Hilbert spaces $\dim({\cal{H}}_{a}) = d_{a}$ and $\dim({\cal{H}}_{a^{\prime}}) = d_{a^{\prime}}$ are equal $d_{a} = d_{a^{\prime}}$, it means that ${\cal{H}}_{a} \cong {\cal{H}}_{a^{\prime}}$ are isomorphic and the relation:
\begin{align}\label{stateChannelDuality}
	\dket{A}_{ab} = (A_{ba} \otimes \mathbb{1}_{a^{\prime}})\dket{I}_{aa^{\prime}},
\end{align}
where $\mathbb{1}_{a^{\prime}}$ denotes identity on ${\cal{H}}_{a^{\prime}}$ and $\dket{I}_{aa^{\prime}} = \sum_{n}^{d_{a}}\ket{n}_{a}\ket{n}_{a^{\prime}}$ is nonnormalized maximally entangled state on Hilbert space ${\cal{H}}_{a} \otimes {\cal{H}}_{a^{\prime}}$, defines isomorphism between quantum states and quantum operators. By simple substitution, we are able to retrieve double ket from right-hand side of equation (\ref{opAndDKet}):
\begin{align*}
	(A \otimes \mathbb{1}_{a^{\prime}})\dket{I}_{aa^{\prime}} &= \sum_{nmm^{\prime}} (\ket{n}_{b}\hspace*{-1mm}\bra{n}A\ket{m^{\prime}}_{a}\hspace*{-1mm}\bra{m^{\prime}} \otimes \mathbb{1}_{a^{\prime}})\ket{m}_{a}\ket{m}_{a^{\prime}} \\&=\sum_{nmm^{\prime}} (\bra{n}A\ket{m^{\prime}}\ket{n}_{b}\prescript{}{a}{\bra{m^{\prime}}} \otimes \mathbb{1}_{a^{\prime}})\ket{m}_{a}\ket{m}_{a^{\prime}} = \sum_{nm} \bra{n}A\ket{m}\ket{n}_{b}\ket{m}_{a^{\prime}} \\&\hspace*{-4mm}\overset{{\cal{H}}_{a} \cong {\cal{H}}_{a^{\prime}}}{=} \sum_{nm} \bra{n}A\ket{m}\ket{n}_{b}\ket{m}_{a} = \dket{A}_{ab}.
\end{align*}

Choi-Jamio\l{}kowski isomorphism is a correspondence between linear maps (or superoperators) and linear operators. In previous paragraph, analogous role to the role of linear maps was taken by linear operators and to the role of linear operators by quantum states. Let ${\cal{L}}({\cal{L}}({\cal{H}}_{a}),{\cal{L}}({\cal{H}}_{b}))$ denote the set of linear maps from the set of linear operators ${\cal{L}}({\cal{H}}_{a})$ to the set of linear operators ${\cal{L}}({\cal{H}}_{b})$. Also, let $X^{T}$ denote the transposition of operator $X$. Then one-to-one correspondence between linear maps ${\cal{M}} \in {\cal{L}}({\cal{L}}({\cal{H}}_{a}),{\cal{L}}({\cal{H}}_{b}))$ and linear operators $M_{ba}$ on Hilbert space ${\cal{L}}({\cal{H}}_{a}) \otimes {\cal{L}}({\cal{H}}_{b})$ is given by the following definition:
\begin{definition} 
	Choi-Jamio\l{}kowski Isomorphism
	\\
	Let us have map ${\cal{C}}: {\cal{L}}({\cal{L}}({\cal{H}}_{a}),{\cal{L}}({\cal{H}}_{b})) \rightarrow {\cal{L}}({\cal{H}}_{a}) \otimes {\cal{L}}({\cal{H}}_{b})$. Then, Choi-Jamio\l{}kowski isomorphism is given by  
	\begin{align}\label{choi}
		M_{ba} = {\cal{C}}({\cal{M}}) = ({\cal{M}} \otimes {\cal{I}}_{a})(\dket{I}\dbra{I}),
	\end{align}
	where ${\cal{I}}_{a}$ is an identity map on ${\cal{L}}({\cal{H}}_{a})$ and $M_{ba}$ is called Choi operator of map ${\cal{M}}$.
\end{definition}
Inverse of the previous correspondence (\ref{choi}) is given by:
\begin{align}\label{inverseChoi}
	{\cal{M}}(X) = {\cal{C}}^{-1}(M_{ba})(X) = \Tr_{a}[(\mathbb{1}_{b} \otimes X^{T}_{a})M_{ba}],
\end{align}
where $X_{a}^{T} \in {\cal{L}}({\cal{H}}_{a})$.
\begin{proof}
	We shall prove Choi-Jamio\l{}kowski isomorphism by verifying the inverse relation (\ref{inverseChoi}) for any linear operator $X\in {\cal{L}}({\cal{H}}_{a})$:
	\begin{align*}
		{\cal{M}}(X) &\overset{(\ref{inverseChoi})}{=} \Tr_{a}\left[M_{ba}\left(\mathbb{1}_{b}\otimes X_{a}^{T}\right)\right] \overset{(\ref{choi})}{=} \Tr_{a} \left\{\left[\left({\cal{M}} \otimes {\cal{I}}_{a}\right) \sum_{i,j}\ket{i}_{a}\ket{i}_{b}\hspace*{1mm}\prescript{}{a}{\bra{j}}\prescript{}{b}{\bra{j}}\right]\left(\mathbb{1}_{b} \otimes X^{T}_{a}\right)\right\} \\&\hspace*{2mm}= \sum_{i,j} \left[{\cal{M}}(\ket{i}_{b}\hspace*{-1mm}\bra{j}) \Tr_{a}\left(\ket{i}_{a}\hspace*{-1mm}\bra{j}X^{T}_{a}\right)\right] = \sum_{i,j} {\cal{M}}(\ket{i}_{b}\hspace*{-1mm}\bra{j}) \bra{j}X^{T}\ket{i} = \sum_{i,j} {\cal{M}}(\ket{i}_{b}\hspace*{-1mm}\bra{j}) \bra{i}X\ket{j} \\&\hspace*{-4.5mm}\overset{\text{linearity of } {\cal{M}}}{=} {\cal{M}}\left(\sum_{i,j} \ket{i}_{b}\bra{i}X\ket{j}_{b}\bra{j}\right) = {\cal{M}}(X).
	\end{align*}
	This is true for any $X\in {\cal{L}}({\cal{H}}_{a})$. Therefore, we have shown that the Choi-Jamio\l{}kowski isomorphism is true.
\end{proof}

Quantum channels are completely positive, trace-preserving linear transformations and therefore, we shall examine how these qualities translate into Choi operators.
\begin{lemma}\label{TPlemma}
	Trace-preservation
	\\
	Linear map ${\cal{M}} \in {\cal{L}}({\cal{L}}({\cal{H}}_{a}),{\cal{L}}({\cal{H}}_{b}))$ is trace-preserving, if and only if for its corresponding Choi operator $M_{ba}$ the following equation $\Tr_{b}(M_{ba}) = \mathbb{1}_{a}$ is true.
\end{lemma}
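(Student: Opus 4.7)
The plan is to reduce the statement to a direct calculation using the inverse Choi--Jamio\l{}kowski formula (\ref{inverseChoi}) that was just verified, namely ${\cal{M}}(X) = \Tr_{a}[(\mathbb{1}_{b} \otimes X^{T}_{a}) M_{ba}]$ for every $X \in {\cal{L}}({\cal{H}}_{a})$. Trace-preservation is the statement that $\Tr[{\cal{M}}(X)] = \Tr(X)$ for all $X$, so I would simply take the total trace of both sides of the inverse formula and isolate $\Tr_{b}(M_{ba})$.

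First step: apply the full trace on both sides of (\ref{inverseChoi}) and split it as $\Tr = \Tr_{a} \circ \Tr_{b}$. Because $X^{T}_{a}$ is supported only on ${\cal{H}}_{a}$ and $\mathbb{1}_{b}$ is supported only on ${\cal{H}}_{b}$, the partial trace over $b$ passes through the factor $\mathbb{1}_{b} \otimes X^{T}_{a}$ and acts only on $M_{ba}$, yielding $\Tr[{\cal{M}}(X)] = \Tr_{a}\bigl[X^{T}_{a}\, \Tr_{b}(M_{ba})\bigr]$. Using $\Tr_{a}(X^{T} Y) = \Tr_{a}(X\, Y^{T})$, this becomes $\Tr\bigl(X\,[\Tr_{b}(M_{ba})]^{T}\bigr)$. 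So the trace-preservation condition is equivalent to $\Tr\bigl(X \cdot [\Tr_{b}(M_{ba})]^{T}\bigr) = \Tr(X\, \mathbb{1}_{a})$ holding for every $X \in {\cal{L}}({\cal{H}}_{a})$.

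Second step: both directions now fall out. For the "if" direction, substitute the hypothesis $\Tr_{b}(M_{ba}) = \mathbb{1}_{a}$ directly into the reduced expression and observe $\Tr[{\cal{M}}(X)] = \Tr_{a}(X^{T}_{a}\, \mathbb{1}_{a}) = \Tr(X)$. For the "only if" direction, I would test the identity on the basis operators $X = \ket{j}\hspace*{-1mm}\bra{i}$; this extracts the $(i,j)$ matrix element of $[\Tr_{b}(M_{ba})]^{T}$ on the left and the corresponding matrix element of $\mathbb{1}_{a}$ on the right, forcing $\Tr_{b}(M_{ba}) = \mathbb{1}_{a}$ entrywise. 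This is essentially the nondegeneracy of the Hilbert--Schmidt pairing on ${\cal{L}}({\cal{H}}_{a})$.

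I do not expect a substantive obstacle here: the proof is a bookkeeping exercise once (\ref{inverseChoi}) is available. The only thing that needs care is the transpose --- keeping track of whether $\Tr_{b}(M_{ba})$ or its transpose appears after moving things in and out of the trace --- which is why I would state the key intermediate identity $\Tr[{\cal{M}}(X)] = \Tr_{a}[X^{T}_{a}\, \Tr_{b}(M_{ba})]$ explicitly before concluding in either direction.
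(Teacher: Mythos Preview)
Your proposal is correct and follows essentially the same route as the paper: both apply the inverse Choi--Jamio\l{}kowski formula, take the full trace, and reduce trace-preservation to the identity $\Tr[{\cal{M}}(X)] = \Tr_{a}\bigl[X^{T}_{a}\,\Tr_{b}(M_{ba})\bigr]$ holding for all $X$. Your treatment of the ``only if'' direction (explicitly testing on basis operators to invoke nondegeneracy of the Hilbert--Schmidt pairing) is in fact slightly more careful than the paper's, which simply asserts that the equality forces $\Tr_{b}(M_{ba}) = \mathbb{1}_{a}$.
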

\begin{proof}
	Map is trace-preserving if $\Tr\left[{\cal{M}}(X)\right] = \Tr(X)$. Let us now calculate trace of ${\cal{M}} \in {\cal{L}}({\cal{L}}({\cal{H}}_{a}),{\cal{L}}({\cal{H}}_{b}))$ on an arbitrary input $X \in {\cal{L}}({\cal{H}}_{a})$:
	\begin{align*}
		\Tr\left[{\cal{M}}(X)\right] \overset{(\ref{inverseChoi})}{=} \Tr\left[M_{ba}\left(\mathbb{1}_{b} \otimes X_{a}^{T}\right)\right] = \Tr_{a}\left[\Tr_{b}(M_{ba})X_{a}^{T}\right] \overset{!}{=} \Tr_{a}(X_{a}^{T}) \overset{(i)}{=} \Tr(X).
	\end{align*}
	where we require the expression on the left of the exclamation mark to be equal to the expression on its right side. In $(i)$ we are using the invariance of trace under transposition $\Tr(X^{T}_{a}) = \Tr(X_{a})$. For this equation to hold, the equation $\Tr_{b}(M_{ba}) = \mathbb{1}_{a}$ must be true. 
	
	On the other hand, if $\Tr_{b}(M_{ba}) = \mathbb{1}_{a}$ holds, then ${\cal{M}}$ is trace-preserving.
\end{proof}

\begin{lemma}\label{CPlemma}
	Complete Positivity
	\\
	Linear map ${\cal{M}} \in {\cal{L}}({\cal{L}}({\cal{H}}_{a}),{\cal{L}}({\cal{H}}_{b}))$ is completely positive, if and only if its corresponding Choi operator $M_{ba}$ is positive semi-definite.
\end{lemma}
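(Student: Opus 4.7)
The plan is to prove the two directions separately, exploiting the explicit formula \eqref{choi} for the Choi operator in the forward direction and a spectral-decomposition/Kraus argument in the reverse direction.

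For the forward implication, I would start by observing that the (unnormalised) maximally entangled vector $\dket{I}_{aa'}$ defines a positive rank-one operator $\dket{I}\dbra{I}$ on ${\cal{H}}_{a'} \otimes {\cal{H}}_{a}$. By the definition of complete positivity, the extended map ${\cal{M}} \otimes {\cal{I}}_{a}$ sends positive operators to positive operators on ${\cal{L}}({\cal{H}}_{b}) \otimes {\cal{L}}({\cal{H}}_{a})$. Applying this to $\dket{I}\dbra{I}$ and using \eqref{choi} immediately gives $M_{ba} = ({\cal{M}} \otimes {\cal{I}}_{a})(\dket{I}\dbra{I}) \geq 0$. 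This direction is essentially one line.

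For the reverse implication, I would assume $M_{ba} \geq 0$ and use its spectral decomposition $M_{ba} = \sum_{k} \dket{A_{k}}\dbra{A_{k}}$, where each eigenvector of $M_{ba}$ is identified, via the state-operator correspondence \eqref{opAndDKet}, with an operator $A_{k} \in {\cal{L}}({\cal{H}}_{a},{\cal{H}}_{b})$. Substituting this into the inverse formula \eqref{inverseChoi} and using the contraction identity $\Tr_{a}[(\mathbb{1}_{b} \otimes X^{T}_{a}) \dket{A_{k}}\dbra{A_{k}}] = A_{k} X A_{k}^{\dagger}$ (which follows from \eqref{stateChannelDuality} by direct computation) yields the Kraus form ${\cal{M}}(X) = \sum_{k} A_{k} X A_{k}^{\dagger}$. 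To conclude complete positivity, I would consider an arbitrary auxiliary system ${\cal{H}}_{c}$ and a positive $P \in {\cal{L}}({\cal{H}}_{a}\otimes {\cal{H}}_{c})$, and verify that
\begin{align*}
({\cal{M}} \otimes {\cal{I}}_{c})(P) = \sum_{k} (A_{k} \otimes \mathbb{1}_{c}) P (A_{k}^{\dagger} \otimes \mathbb{1}_{c}),
\end{align*}
which is a sum of terms of the form $B P B^{\dagger}$ with $P \geq 0$, hence positive semi-definite.

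The main obstacle I expect is the bookkeeping in the reverse direction: the identification between eigenvectors of $M_{ba}$ (living in ${\cal{H}}_{b} \otimes {\cal{H}}_{a}$) and operators $A_{k} : {\cal{H}}_{a} \to {\cal{H}}_{b}$ via the double-ket convention must be applied consistently with the transposition that appears in \eqref{inverseChoi}, and it is easy to get the subsystem labels or the transpose misplaced. Once the identity $\Tr_{a}[(\mathbb{1}_{b} \otimes X^{T}) \dket{A}\dbra{A}] = A X A^{\dagger}$ is established cleanly, the rest of the argument reduces to the trivial fact that conjugation by $A_{k} \otimes \mathbb{1}_{c}$ preserves positivity, so no further analytical work is needed.
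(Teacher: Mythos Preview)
Your forward direction is identical to the paper's. For the reverse direction your argument is correct, but the paper takes a genuinely different route. Instead of spectrally decomposing $M_{ba}$ into Kraus vectors $\dket{A_k}$ and invoking the operator-sum form, the paper works directly with the inverse formula \eqref{inverseChoi}: it rewrites $(\mathcal{M}\otimes\mathcal{I}_c)(X_{ac})$ as $\prescript{}{aa'}{\dbra{I}}\, M_{ba'}\otimes X_{ac}\,\dket{I}_{aa'}$ by inserting a resolution of the identity, relabelling one copy of $\mathcal{H}_a$ to an isomorphic $\mathcal{H}_{a'}$, and undoing the partial transpose inside the trace. Positivity then follows because $M_{ba'}\otimes X_{ac}\geq 0$ whenever both factors are positive, and sandwiching a positive operator between $\dbra{I}$ and $\dket{I}$ still yields a positive operator on $\mathcal{H}_b\otimes\mathcal{H}_c$. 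Your Kraus route is more constructive and ties the lemma directly to the operator-sum representation already introduced in the thesis (Proposition~\ref{opSumRepr}); the paper's route avoids choosing an eigenbasis and the associated index bookkeeping you flagged as the main obstacle, at the cost of a slightly more delicate relabelling of Hilbert spaces.
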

\begin{proof}
	Firstly, let us suppose that linear map ${\cal{M}} \in {\cal{L}}({\cal{L}}({\cal{H}}_{a}),{\cal{L}}({\cal{H}}_{b}))$ is completely positive $({\cal{M}} \otimes {\cal{I}}_{c})(X) \geq 0$, where ${\cal{I}}_{c}$ is identity operator on ${\cal{L}}({\cal{H}}_{c})$. Now, let us put ${\cal{L}}({\cal{H}}_{c}) = {\cal{L}}({\cal{H}}_{a})$ and take $X = \dket{I}\dbra{I}$. Then, we obtain Choi operator of a given map as follows $({\cal{M}} \otimes {\cal{I}}_{a})(\dket{I}\dbra{I}) \overset{(\ref{choi})}{=} M_{ba} \geq 0$, which means, that $M_{ba}$ is positive semi-definite.
	
	Now, let us assume that $M_{ba} \geq 0$ is positive semi-definite and that ${\cal{M}} \in {\cal{L}}({\cal{L}}({\cal{H}}_{a}),{\cal{L}}({\cal{H}}_{b}))$:
	\begin{align*}
		&\left({\cal{M}} \otimes {\cal{I}}_{c}\right)(X_{ac}) \overset{(\ref{inverseChoi})}{=} \Tr_{a} \left[\left(M_{ba} \otimes \mathbb{1}_{c}\right)\left(\mathbb{1}_{b} \otimes X_{ac}^{T_{a}}\right)\right] \overset{(i)}{=} 
		\\&\sum_{m,n} \prescript{}{a}{\bra{m}}\left(M_{ba} \otimes \mathbb{1}_{c}\right) \ket{n}_{a}\prescript{}{a}{\bra{n}}(\mathbb{1}_{b} \otimes X_{ac}^{T_{a}})\ket{m}_{a} \overset{(ii)}{=} \sum_{m,n} \prescript{}{a}{\bra{m}}M_{ba}\ket{n}_{a}\prescript{}{a}{\bra{m}}X_{ac}\ket{n}_{a} \overset{(iii)}{=} 
		\\&\sum_{m,n} \prescript{}{a^{\prime}}{\bra{m}}M_{ba^{\prime}}\ket{n}_{a^{\prime}}\prescript{}{a}{\bra{m}}X_{ac} \ket{n}_{a} = \prescript{}{aa^{\prime}}{\dbra{I}}M_{ba^{\prime}} \otimes X_{ac} \dket{I}_{aa^{\prime}}.
	\end{align*}
	In $(i)$ we have expressed trace through summation and added identity, in $(ii)$ we have partially transposed operator $X$ over space ${\cal{L}}({\cal{H}}_{a})$ and in $(iii)$ we renamed first pair of indices $a$, which we are allowed to do, because it does not change the value of scalar that is given by "sandwiching" operator with bra and ket vectors, with isomorphic Hilbert spaces ${\cal{H}}_{a} \cong {\cal{H}}_{a^{\prime}}$. But what remains is still an operator on space ${\cal{L}}({\cal{H}}_{b} \otimes {\cal{H}}_{c})$. We have to show that this is positive semi-definite, which means to show $\prescript{}{bc}{\bra{\Psi}}\prescript{}{aa^{\prime}}{\dbra{I}}M_{ba^{\prime}} \otimes X_{ac} \dket{I}_{aa^{\prime}}\ket{\Psi}_{bc} \geq 0$ for any vector $\ket{\Psi}_{bc}$. But the operator $M_{ba^{\prime}}$ is positive semi-definite from the assumption and therefore, if and only if $X_{ac} \geq 0$, also their tensor product is positive semi-definite. That means that if $X_{ac}$ is positive then the whole expression is positive semi-definite and that in turn means that $\left({\cal{M}} \otimes {\cal{I}}_{c}\right) \geq 0$, which means that ${\cal{M}}$ is completely positive.	
\end{proof}
And because density operators are Hermitian, and density operators describe quantum states, we also add the following lemma. Let us also note that $M^{\dagger}$ denotes Hermitian conjugate of $M$.
\begin{lemma}\label{Hlemma}
	Hermiticity
	\\
	Linear map ${\cal{M}} \in {\cal{L}}({\cal{L}}({\cal{H}}_{a}),{\cal{L}}({\cal{H}}_{b}))$ is Hermitian, if and only if its corresponding Choi operator $M_{ba}$ is Hermitian.
\end{lemma}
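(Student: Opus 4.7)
The plan is to mirror the structure of Lemmas \ref{TPlemma} and \ref{CPlemma} and exploit the two sides of the Choi-Jamio\l{}kowski correspondence (\ref{choi}) and (\ref{inverseChoi}). The key observation, common to both directions, is that the maximally entangled state $\dket{I}\dbra{I}$ is manifestly Hermitian, and that the Hermiticity-preserving property ${\cal{M}}(X^{\dagger}) = {\cal{M}}(X)^{\dagger}$ is what one expects to trade for $M_{ba} = M_{ba}^{\dagger}$.

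For the forward direction, assuming ${\cal{M}}$ Hermitian, I would expand the Choi operator explicitly: $M_{ba} = \sum_{i,j} {\cal{M}}(\ket{i}\hspace*{-1mm}\bra{j}) \otimes \ket{i}\hspace*{-1mm}\bra{j}$. Taking the adjoint and swapping the dummy indices $i\leftrightarrow j$ yields $M_{ba}^{\dagger} = \sum_{i,j} {\cal{M}}(\ket{j}\hspace*{-1mm}\bra{i})^{\dagger} \otimes \ket{i}\hspace*{-1mm}\bra{j}$. Since $(\ket{j}\hspace*{-1mm}\bra{i})^{\dagger} = \ket{i}\hspace*{-1mm}\bra{j}$ and ${\cal{M}}$ is Hermiticity-preserving, ${\cal{M}}(\ket{j}\hspace*{-1mm}\bra{i})^{\dagger} = {\cal{M}}(\ket{i}\hspace*{-1mm}\bra{j})$, so $M_{ba}^{\dagger}$ and $M_{ba}$ coincide term by term.

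For the backward direction, assuming $M_{ba} = M_{ba}^{\dagger}$, I would start from (\ref{inverseChoi}) and compute ${\cal{M}}(X)^{\dagger} = \Tr_{a}\bigl[(\mathbb{1}_{b} \otimes X^{T}) M_{ba}\bigr]^{\dagger}$. Pushing the adjoint through the partial trace, using cyclicity, applying Hermiticity of $M_{ba}$, and invoking the identity $(X^{T})^{\dagger} = (X^{\dagger})^{T}$ turns the right-hand side into $\Tr_{a}\bigl[(\mathbb{1}_{b} \otimes (X^{\dagger})^{T}) M_{ba}\bigr] = {\cal{M}}(X^{\dagger})$, which is exactly the Hermiticity-preserving condition.

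The argument is largely bookkeeping, and the only subtlety to watch is the commutation of transposition with Hermitian conjugation (both equal complex conjugation), together with the fact that the partial trace over the $a$ subsystem respects adjoints. I do not expect any real obstacle, since Hermiticity of $\dket{I}\dbra{I}$ does the essential work in the forward direction and Hermiticity of $M_{ba}$ does it in the backward direction.
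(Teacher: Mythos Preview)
Your proposal is correct and essentially follows the paper's approach. The paper performs the single computation $[{\cal{M}}(X)]^{\dagger} = \Tr_{a}\bigl[M_{ba}^{\dagger}\bigl(\mathbb{1}_{b}\otimes (X^{\dagger})^{T}\bigr)\bigr]$ via (\ref{inverseChoi}) and reads off both implications from it (your backward direction is verbatim this calculation); for the forward direction you instead expand $M_{ba}$ explicitly as $\sum_{i,j}{\cal{M}}(\ket{i}\hspace*{-1mm}\bra{j})\otimes\ket{i}\hspace*{-1mm}\bra{j}$ and take the adjoint term by term, which is a minor variation that avoids the implicit appeal to injectivity of the Choi map but is otherwise the same idea.
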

\begin{proof}
	For a map to be Hermitian, it must hold the following ${\cal{M}}(X^{\dagger}) = \left[{\cal{M}}(X)\right]^{\dagger}$. Firstly, we shall calculate:
	\begin{align}\label{hermCJ}
		\left[{\cal{M}}(X)\right]^{\dagger} \overset{(\ref{inverseChoi})}{=} \left\{\Tr_{a}\left[M_{ba}\left(\mathbb{1}_{b}\otimes X_{a}^{T}\right)\right]\right\}^{\dagger} = \Tr_{a}\left[\left(\mathbb{1}_{b} \otimes X_{a}^{\ast}\right) M_{ba}^{\dagger}\right] \overset{\text{cyclicity}}{\underset{\text{of trace}}{=}} \Tr_{a} \left[M_{ba}^{\dagger}\left(\mathbb{1}_{b} \otimes (X_{a}^{\dagger})^{T}\right)\right],
	\end{align}
	where $X_{a}^{\ast}$ denotes complex conjugate of $X_{a}$. By comparing the expression after the last equal sign with ${\cal{M}}(X^{\dagger}) = \Tr_{a}\left[M_{ba} \left(\mathbb{1}_{b} \otimes (X^{\dagger}_{a})^{T}\right)\right]$, it can be seen, that in order for ${\cal{M}}$ to be Hermitian, $M_{ba} = M_{ba}^{\dagger}$ must be true.
	
	On the other hand, if $M_{ba}$ is Hermitian, then by substituting $M_{ba}$ for $M_{ba}^{\dagger}$ into the previous equation (\ref{hermCJ}), it can be seen that ${\cal{M}}$ is also Hermitian.
\end{proof}

\subsection{Link Product}

Multiple linear maps can be composed into one. Naturally, we shall examine how this property translates into composition of Choi operators. In the beginning, let us consider composition of two linear maps ${\cal{A}} \in {\cal{L}}({\cal{L}}({\cal{H}}_{a}), {\cal{L}}({\cal{H}}_{b}))$ and ${\cal{B}} \in {\cal{L}}({\cal{L}}({\cal{H}}_{b}), {\cal{L}}({\cal{H}}_{c}))$, using inverse Choi-Jamio\l{}kowski isomorphism (\ref{inverseChoi}) we obtain:
\begin{align*}
	({\cal{B}} \circ {\cal{A}})(X_{a}) = {\cal{B}}({\cal{A}}(X_{a})) &= \Tr_{b} \left\{ \left[\mathbb{1}_{c} \otimes \Tr_{a} \left(\left(\mathbb{1}_{b} \otimes X_{a}^{T_{a}}\right) A_{ab}\right)^{T_{b}} \right] B_{bc} \right\} \\&= \Tr_{a}\left\{\Tr_{b}\left[\left(\mathbb{1}_{c} \otimes \mathbb{1}_{b} \otimes X_{a}^{T_{a}}\right)(\mathbb{1}_{c} \otimes A_{ab}^{T_{b}})(\mathbb{1}_{a} \otimes B_{bc})\right]\right\} \\&= \Tr_{a}\left\{ \Tr_{b}\left[\left(\mathbb{1}_{c} \otimes A_{ab}^{T_{b}}\right)(\mathbb{1}_{a} \otimes B_{bc})\right] \left(\mathbb{1}_{c} \otimes X_{a}^{T_{a}}\right) \right\},
\end{align*}
where $A_{ab}^{T_{b}}$ denotes partial transposition of $A_{ab}$ over the space ${\cal{L}}({\cal{H}}_{b})$. Comparing this equation with the inverse of Choi-Jamio\l{}kowski (\ref{inverseChoi}) we can see that Choi operator of the composition of two maps is:
\begin{align*}
	B \star A \equiv {\cal{C}}({\cal{B}} \circ {\cal{A}}) = \Tr_{b}\left[\left(\mathbb{1}_{c} \otimes A_{ab}^{T_{b}}\right)(\mathbb{1}_{a} \otimes B_{bc})\right],
\end{align*}
where $B \star A$ is called link product, and it denotes Choi operator of composition of maps ${\cal{B}} \circ {\cal{A}}$. In this link product, trace and transposition were both taken over shared Hilbert space ${\cal{L}}({\cal{H}}_{b})$ of the respective maps ${\cal{A}}$ and ${\cal{B}}$.

Let us take more general example with first map being ${\cal{A}}: {\cal{L}}({\cal{H}}_{\alpha}) \rightarrow ({\cal{H}}_{\beta})$ and the second one being ${\cal{B}}: {\cal{L}}({\cal{H}}_{\gamma}) \rightarrow ({\cal{H}}_{\delta})$, where $\alpha, \beta, \gamma$ and $\delta$ denote set of indices. Then, composition of these maps is ${\cal{B}} \circ {\cal{A}}: {\cal{L}}({\cal{H}}_{(\alpha \otimes \gamma)\setminus(\beta \cap \gamma)}) \rightarrow ({\cal{H}}_{(\beta \otimes \delta)\setminus(\beta \cap \gamma)})$, where we have discarded the overlapping space $\beta \cap \gamma$ as can be seen from the figure \ref{genLink}.
\begin{figure}[H]
	\begin{center}
		\begin{quantikz}
			\rstick{\hspace{-0.1cm}$\alpha$\\~} \qw &\gate[wires=3][2cm]{\mathcal{A}}  & \qw \rstick{\hspace{-0.7cm}$\beta$\\~} & & &\\
			& &\qw & \\
			& &\qw &\qw \rstick{\hspace{-0.1cm}$\gamma$\\~}  &\gate[wires=3][2cm]{\mathcal{B}} &\qw \rstick{\hspace{-0.7cm}$\delta$\\~} \\
			& & &\qw & &\qw \\
			& & &\qw & &\qw
		\end{quantikz}
	\end{center}
	\caption[font = small]{Schematic depiction of composition of maps ${\cal{A}}: {\cal{L}}({\cal{H}}_{\alpha}) \rightarrow ({\cal{H}}_{\beta})$ and ${\cal{B}}: {\cal{L}}({\cal{H}}_{\gamma}) \rightarrow ({\cal{H}}_{\delta})$. Greek letters $\alpha, \beta, \gamma$ and $\delta$ denote set of indices. In this example, one of the outputs of ${\cal{A}}$ serves as one of the inputs into ${\cal{B}}$.}
	\label{genLink}
\end{figure}
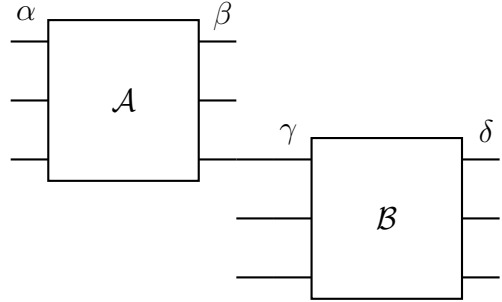
Let us now directly calculate the composition of ${\cal{A}}$ with ${\cal{B}}$:
\begin{align*}
	&({\cal{B}} \circ {\cal{A}})(X) = \left({\cal{B}}_{\delta\gamma} \otimes {\cal{I}}_{\beta \setminus \gamma}\right)\left({\cal{A}}_{\beta\alpha} \otimes {\cal{I}}_{\gamma \setminus \beta}\right)(X) \\&\overset{(\ref{inverseChoi})}{=} \left({\cal{B}}_{\delta\gamma} \otimes {\cal{I}}_{\beta \setminus \gamma}\right) \left\{\Tr_{\alpha}\left[\left(\mathbb{1}_{\gamma\setminus\beta} \otimes A_{\beta\alpha}\right)\left(\mathbb{1}_{\beta} \otimes X_{\alpha, \gamma\setminus\beta}^{T_{\alpha}}\right)\right]\right\} \\&\overset{(\ref{inverseChoi})}{=} \Tr_{\gamma}\left\{\left(\mathbb{1}_{\beta \setminus \gamma} \otimes {\cal{B}}_{\delta\gamma}\right)\left[\mathbb{1}_{\delta} \otimes \Tr_{\alpha}\left[\left(\mathbb{1}_{\gamma\setminus\beta} \otimes A_{\beta\alpha}\right)\left(\mathbb{1}_{\beta} \otimes X_{\alpha, \gamma\setminus\beta}^{T_{\alpha}}\right)\right]\right]^{T_{\gamma}}\right\} \\&\overset{T_{\gamma} \text{ goes inside the bracket}}{=} \Tr_{\gamma}\left\{\left(\mathbb{1}_{\beta \setminus \gamma} \otimes {\cal{B}}_{\delta\gamma}\right)\left[\mathbb{1}_{\delta} \otimes \Tr_{\alpha}\left[\left(\mathbb{1}_{\gamma\setminus\beta} \otimes A_{\beta\alpha}^{T_{\beta \cap \gamma}}\right)\left(\mathbb{1}_{\beta} \otimes X_{\alpha, \gamma\setminus\beta}^{T_{\alpha \cup \gamma\setminus\beta}}\right)\right]\right]\right\} \\&\overset{(i)}{=} \Tr_{\alpha} \underbrace{\Tr_{\gamma\setminus\beta} \Tr_{\beta\cap\gamma}}_{\Tr_{\gamma}} \left[\left(\mathbb{1}_{\alpha} \otimes \mathbb{1}_{\beta\setminus\gamma} \otimes {\cal{B}}_{\delta\gamma}\right) \left(\mathbb{1}_{\delta} \otimes \mathbb{1}_{\gamma\setminus\beta} \otimes A_{\beta\alpha}^{T_{\beta\cap\gamma}}\right) \left(\mathbb{1}_{\delta} \otimes \mathbb{1}_{\beta\cap\gamma} \otimes \mathbb{1}_{\beta\setminus\gamma} \otimes X_{\alpha, \gamma\setminus\beta}^{T_{\alpha \cup \gamma\setminus\beta}}\right)\right] \\&\overset{(ii)}{=} \Tr_{\alpha \cup \gamma\setminus\beta} \bigg\{\underbrace{\Tr_{\beta \cap \gamma} \left[\left(\mathbb{1}_{\alpha} \otimes \mathbb{1}_{\beta \setminus \gamma} \otimes B_{\delta\gamma}\right)\left(\mathbb{1}_{\delta} \otimes \mathbb{1}_{\gamma\setminus\beta} \otimes A_{\beta\alpha}^{T_{\beta\cap\gamma}}\right)\right]}_{\text{Choi operator of }{\cal{B}}\circ{\cal{A}}}\left(\mathbb{1}_{\delta} \otimes \mathbb{1}_{\beta\setminus\gamma} \otimes X_{\alpha, \gamma\setminus\beta}^{T_{\alpha \cup \gamma\setminus\beta}}\right)\bigg\}
\end{align*}
In $(i)$, $Tr_{\alpha}$ was taken out from the bracket and in $(ii)$, the identity $\mathbb{1}_{\beta\cap\gamma}$ disappears from the last set of parentheses because $\Tr_{\beta \cap \gamma}$ no longer concerns them. We can see, that in Choi operator of composed maps, there is trace and transposition taken over the overlapping space. Using this general example, we define link product:
\begin{definition}
	Link Product
	\\
	Let us have two Choi operators $A \in {\cal{L}}(\otimes_{a\in \alpha}{\cal{H}}_{a})$ and $B \in {\cal{L}}(\otimes_{b\in \beta}{\cal{H}}_{b})$, where $\alpha$ and $\beta$ denote finite set of indices. Then the link product is:
	\begin{align}\label{linkProduct}
		A \star B = {\cal{C}}({\cal{B}} \circ {\cal{A}}) = \Tr_{\alpha \cap \beta} \left[\left(\mathbb{1}_{\beta\setminus \alpha} \otimes A^{T_{\alpha \cap \beta}}\right)\left(B \otimes \mathbb{1}_{\alpha\setminus \beta}\right)\right].
	\end{align}
\end{definition}
The result of link product $A \star B$ is an operator in space ${\cal{L}}({\cal{H}}_{\alpha\setminus \beta} \otimes {\cal{H}}_{\beta\setminus \alpha})$. In the link product, transposition is applied over overlapping space, which is also traced out and operators $A$ and $B$ are expanded by such a Hilbert space so that they both span the same space in the end. If the intersection $\alpha \cap \beta$ is empty (i.e., there is no overlapping space), then link product reduces to tensor product $A \star B = A \otimes B$. Let us list basic properties of the link product:
\begin{lemma}\label{propLink}
	Properties of Link Product
	\\
	Let us have three operators $A \in {\cal{L}}(\otimes_{a\in \alpha}{\cal{H}}_{a})$, $B \in {\cal{L}}(\otimes_{b\in \beta}{\cal{H}}_{b})$ and $C \in {\cal{L}}(\otimes_{c\in \gamma}{\cal{H}}_{c})$. Then the following properties hold:
	\begin{itemize}
		\item symmetry: $A \star B = B \star A$
		\item linearity: $(a A + b B) \star C = a(A \star C) + b(B \star C)$ for any $a, b \in \mathbb{C}$
		\item hermiticity: if $A = A^{\dagger}$, $B = B^{\dagger}$ are Hermitian then also their link product $A \star B = (A \star B)^{\dagger}$ is Hermitian
		\item associativity: if $\alpha \cap \beta \cap \gamma = \emptyset$, then $A \star (B \star C) = (A \star B) \star C$
		\item positivity: if $A \geq 0$ and $B \geq 0$ are positive semi-definite, then also $A \star B \geq 0$ is positive semi-definite.
	\end{itemize}
\end{lemma}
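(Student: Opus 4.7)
The plan is to verify each of the five properties in turn, working directly from the definition (\ref{linkProduct}). Linearity is immediate since partial trace, tensor product, and partial transposition are all linear in their arguments. For symmetry, the crucial identity is $\Tr_{X}[M^{T_{X}}N]=\Tr_{X}[MN^{T_{X}}]$ for any operators $M,N$ sharing the subsystem $X$, obtained by writing the trace in a product basis. Applying this on $\alpha\cap\beta$ transfers the partial transpose from $A$ to $B$, and cyclicity of the (partial) trace together with commutativity of tensor factors acting on disjoint spaces rearranges the factors into $B\star A$.

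For hermiticity I would first observe that the partial transpose preserves Hermiticity: if $M=M^{\dagger}$ then $(M^{T_{X}})^{\dagger}=M^{T_{X}}$, immediate from the elementwise definition of partial transpose together with $M_{ij}^{\ast}=M_{ji}$. Taking the adjoint of the right-hand side of (\ref{linkProduct}) and using $(XY)^{\dagger}=Y^{\dagger}X^{\dagger}$ flips the order of the two factors; cyclicity of $\Tr_{\alpha\cap\beta}$ applies here because one factor is supported only on $(\alpha\cap\beta)\cup(\beta\setminus\alpha)$ and the other only on $(\alpha\cap\beta)\cup(\alpha\setminus\beta)$, so outside the traced subsystem they act on disjoint spaces and commute trivially. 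This restores the original form of (\ref{linkProduct}) and yields $(A\star B)^{\dagger}=A\star B$.

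Associativity under the disjointness hypothesis $\alpha\cap\beta\cap\gamma=\emptyset$ is bookkeeping-heavy but not conceptually subtle. The hypothesis guarantees that the three pairwise intersections $\alpha\cap\beta$, $\beta\cap\gamma$, $\alpha\cap\gamma$ are themselves pairwise disjoint, so expanding both $(A\star B)\star C$ and $A\star (B\star C)$ via (\ref{linkProduct}) produces partial traces and partial transpositions on exactly the same collection of subsystems. The plan is to write both bracketings out, commute the various partial traces and partial transposes (which commute since they act on disjoint subsystems), and match terms; as a sanity check one can reinterpret the claim on the level of maps using Lemmas \ref{TPlemma}--\ref{CPlemma}, where it reduces to associativity of ordinary composition of linear maps.

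Positivity is the point I expect to be the main obstacle, because the partial transpose of a positive operator is not in general positive, so one cannot factor the product in (\ref{linkProduct}) as a product of two positive operators. My plan is to reduce to the rank-one case via spectral decomposition: write $A=\sum_{k}\dyad{\psi_{k}}$ with $\ket{\psi_{k}}\in{\cal H}_{\alpha\setminus\beta}\otimes{\cal H}_{\alpha\cap\beta}$, and $B=\sum_{l}\dyad{\varphi_{l}}$ with $\ket{\varphi_{l}}\in{\cal H}_{\alpha\cap\beta}\otimes{\cal H}_{\beta\setminus\alpha}$, absorbing the non-negative eigenvalues into the kets. Bilinearity reduces the question to showing $\dyad{\psi}\star\dyad{\varphi}\ge 0$ for arbitrary $\ket{\psi},\ket{\varphi}$. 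Expanding both kets in product bases with coefficient tensors $\psi_{xz}$ and $\varphi_{zy}$ and substituting into (\ref{linkProduct}), the partial transpose on $Z:=\alpha\cap\beta$ swaps the two $Z$-indices of $\dyad{\psi}$ and the subsequent $\Tr_{Z}$ contracts the shared index, producing $\dyad{\chi}$ with $\ket{\chi}=\sum_{xyz}\psi_{xz}\varphi_{zy}\ket{y}\ket{x}$. This rank-one operator is manifestly positive semi-definite, and by linearity so is $A\star B$.
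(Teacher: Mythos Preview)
Your proposal is correct. For the first four properties you give essentially the same argument the paper has in mind---the paper simply says ``can be seen from the definition of the link product'' and leaves the details to the reader, which you have filled in.

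The one substantive divergence is your treatment of positivity. The paper argues via the Choi--Jamio\l{}kowski correspondence: by Lemma~\ref{CPlemma}, $A\geq 0$ and $B\geq 0$ mean the corresponding linear maps $\mathcal{A}$ and $\mathcal{B}$ are completely positive, hence so is the composition $\mathcal{B}\circ\mathcal{A}$, and applying Lemma~\ref{CPlemma} again gives $A\star B=\mathcal{C}(\mathcal{B}\circ\mathcal{A})\geq 0$. Your route---spectral decomposition plus an explicit rank-one computation showing $\dyad{\psi}\star\dyad{\varphi}=\dyad{\chi}$ with $\chi_{xy}=\sum_{z}\psi_{xz}\varphi_{zy}$---is entirely self-contained and does not appeal to the map-level picture at all. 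The paper's argument is shorter and more conceptual once Lemma~\ref{CPlemma} is in hand; yours is more elementary and makes explicit the underlying tensor contraction, which is useful if one later wants to manipulate link products at the index level. Interestingly, you do mention the map-level reinterpretation as a sanity check for associativity; that same idea is exactly what the paper uses for positivity.
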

\begin{proof}
	First four properties can be seen from the definition of the link product. We shall take closer look at the positivity. Let operators $A \geq 0$ and $B \geq 0$ be positive semi-definite. From lemma \ref{CPlemma}, it follows that ${\cal{A}}$ and ${\cal{B}}$ are completely positive and therefore also its composition ${\cal{B}}\circ{\cal{A}}$ is completely positive. Again, from lemma \ref{CPlemma}, Choi operator of said composition $A \star B \geq 0$ is positive semi-definite.
\end{proof}

\section{Diagrammatic Representation}

Graphical representation of quantum circuits is an extremely useful tool for understanding concrete circuits. Linear map ${\cal{A}}: {\cal{L}}({\cal{H}}_{0} \otimes {\cal{H}}_{1}) \rightarrow {\cal{L}}({\cal{H}}_{2} \otimes {\cal{H}}_{3})$ can be drawn in several equivalent ways, where we can permutate individual spaces. We shall depict three examples here:
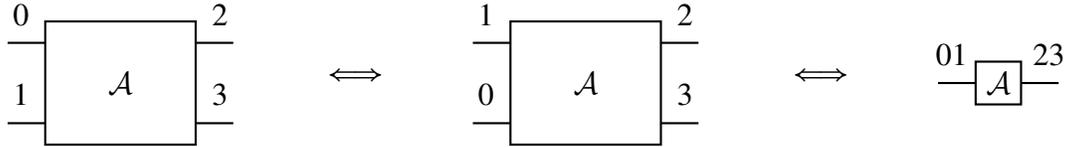
\begin{figure}[H]
	\begin{center}
		\begin{quantikz}
			\rstick{\hspace{-0.1cm}0\\~} \qw&\gate[wires=2][2cm]{\mathcal{A}}  & \qw \rstick{\hspace{-0.7cm}2\\~}\\
			\rstick{\hspace{-0.1cm}1\\~} & &\qw \rstick{\hspace{-0.7cm}3\\~}
		\end{quantikz}
		\qquad $\iff$ \qquad
		\begin{quantikz}
			\rstick{\hspace{-0.1cm}1\\~} \qw&\gate[wires=2][2cm]{\mathcal{A}}  & \qw \rstick{\hspace{-0.7cm}2\\~}\\
			\rstick{\hspace{-0.1cm}0\\~} & &\qw \rstick{\hspace{-0.7cm}3\\~}
		\end{quantikz}
		\qquad $\iff$ \qquad
		\begin{quantikz}
			\rstick{\hspace{-0.2cm}01\\~} \qw&\gate{\mathcal{A}}  & \qw \rstick{\hspace{-0.6cm}23\\~}
		\end{quantikz}
	\end{center}
	\caption[font=small]{Several possible pictorial representations of linear map ${\cal{A}}: {\cal{L}}({\cal{H}}_{0} \otimes {\cal{H}}_{1}) \rightarrow {\cal{L}}({\cal{H}}_{2} \otimes {\cal{H}}_{3})$. It is also possible to permutate only $0$ with $1$ or only $2$ with $3$.}
\end{figure}

Let us now consider composition ${\cal{B}} \circ {\cal{A}}$ of two linear maps ${\cal{A}}: {\cal{L}}({\cal{H}}_{0} \otimes {\cal{H}}_{2}) \rightarrow {\cal{L}}({\cal{H}}_{1} \otimes {\cal{H}}_{3})$ and ${\cal{B}}: {\cal{L}}({\cal{H}}_{3} \otimes {\cal{H}}_{4}) \rightarrow {\cal{L}}({\cal{H}}_{5} \otimes {\cal{H}}_{6})$, where map ${\cal{A}}$ is applied on the system as first one, followed by applying map ${\cal{B}}$ on the outcome. For map ${\cal{B}}$ to be meaningful, no its output can be an input of ${\cal{A}}$, because it would disrupt causality.
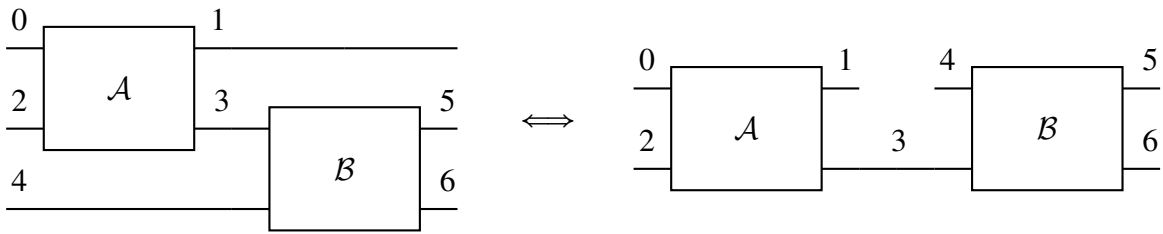
\begin{figure}[H]
	\begin{center}
\begin{quantikz}
	\rstick{\hspace{-0.1cm}0\\~}\qw &\gate[wires=2][2cm]{\mathcal{A}}  &\qw \rstick{\hspace{-0.7cm}1\\~} &\qw &\qw  \\
	\rstick{\hspace{-0.1cm}2\\~}\qw & &\qw \rstick{\hspace{-0.6cm}3\\~} &\gate[wires=2][2cm]{\mathcal{B}}&\qw\rstick{\hspace{-0.6cm}5\\~} \\
	\rstick{\hspace{-0.1cm}4\\~}\qw &\qw &\qw & &\qw\rstick{\hspace{-0.6cm}6\\~}
\end{quantikz}
		\quad $\iff$ \quad
\begin{quantikz}
	\rstick{\hspace{-0.1cm}0\\~} \qw&\gate[wires=2][2cm]{\mathcal{A}}  & \qw \rstick{\hspace{-0.7cm}1\\~} & &   \rstick{\hspace{-0.1cm}4\\~}&\gate[wires=2][2cm]{\mathcal{B}}&\qw\rstick{\hspace{-0.6cm}5\\~}\\
	\rstick{\hspace{-0.1cm}2\\~} &                     &\qw \rstick{\hspace{0.25cm}3\\~}& \qw & \qw &  &\qw\rstick{\hspace{-0.6cm}6\\~}
\end{quantikz}
	\end{center}
\caption[font=small]{Two equivalent ways of drawing composition of linear map ${\cal{A}}: {\cal{L}}({\cal{H}}_{0} \otimes {\cal{H}}_{2}) \rightarrow {\cal{L}}({\cal{H}}_{1} \otimes {\cal{H}}_{3})$ with linear map ${\cal{B}}: {\cal{L}}({\cal{H}}_{3} \otimes {\cal{H}}_{4}) \rightarrow {\cal{L}}({\cal{H}}_{5} \otimes {\cal{H}}_{6})$.}
\end{figure}

Let us now depict state preparation and measurement. State $\varrho$ can be viewed as Choi operator of preparation device, which can be viewed as channel ${\cal{C}}_{\varrho}: \mathbb{C} \rightarrow {\cal{L}}({\cal{H}}_{0})$ from one-dimensional Hilbert space (one-dimensional Hilbert space is isomorphic to space of complex numbers $\mathbb{C}$) into space ${\cal{L}}({\cal{H}}_{0})$. Channel ${\cal{C}}_{\varrho}$ is such that ${\cal{C}}_{\varrho}(c) = c\varrho$ for all $c \in \mathbb{C}$. Then state $\varrho$ can be expressed as Choi operator of said channel $C_{\varrho} \overset{(\ref{choi})}{=} ({\cal{C}}_{\varrho} \otimes {\cal{I}}_{\mathbb{C}}) (1 \otimes 1) = {\cal{C}}_{\varrho}(1) = \varrho$. Measurement is a map with one-dimensional output ${\cal{M}}: {\cal{L}}({\cal{H}}_{0}) \rightarrow \mathbb{C}$.
\begin{figure}[H]
	\begin{center}
	\includegraphics[scale=0.5]{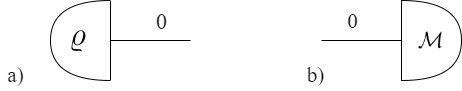}
	\end{center}
	\caption[font=small]{On the left $a)$, there is a depiction of state preparation $\varrho:\mathbb{C} \rightarrow {\cal{L}}({\cal{H}}_{0})$, where $\varrho$ is Choi operator of preparation device and on the right $b)$, there is a depiction of quantum measurement  ${\cal{M}}: {\cal{L}}({\cal{H}}_{0}) \rightarrow \mathbb{C}$.}
\end{figure}

\section{States, Channels, Instruments and POVMs}

Quantum circuits are described by quantum states, POVMs, quantum channels and quantum instruments, which were already introduced in chapter \ref{MF}. Here, we shall introduce corresponding Choi operators for these objects.

States are described by density matrices that are Hermitian, positive semi-definite operators with trace equal to one. Choi operators corresponding to quantum states must have one-dimensional input Hilbert space due to normalization condition. Let us have quantum state $\varrho \in {\cal{L}}({\cal{H}}_{in} \otimes {\cal{H}}_{out})$, where $\dim({\cal{H}}_{in}) = 1$. And let $C_{\varrho}$ be the corresponding Choi operator to state $\varrho$. Then using lemma \ref{TPlemma}, we obtain $\Tr_{out}C_{\varrho} = \mathbb{1}_{in}$. And because $\mathbb{1}_{in}$ is one-dimensional projector, its trace is $\Tr(\mathbb{1}_{in}) = 1$, which reflects that one quantum state can be seen as having one use of preparation device at our disposal.

Quantum channels, describing transformations in quantum circuits, are completely positive trace-preserving linear maps. Due to lemmas \ref{TPlemma} and \ref{CPlemma} their expression through Choi operators is straightforward. Let us have Choi operator of quantum channel $C_{{\cal{C}}} \in {\cal{L}}({\cal{H}}_{a} \otimes {\cal{H}}_{b})$, then normalization condition can be rewritten in the following way: $\Tr_{b}C_{{\cal{C}}} = \Tr_{b} \left[C_{{\cal{C}}}(\mathbb{1}_{a} \otimes \mathbb{1}_{b})\right] \overset{(\ref{linkProduct})}{=} C_{{\cal{C}}} \star \mathbb{1}_{b} \overset{\text{lemma } \ref{TPlemma}}{=} \mathbb{1}_{a}$.

Every quantum channel can be realized by an isometry on a larger Hilbert space. Isometry is an operator $V: {\cal{H}}_{a} \rightarrow {\cal{H}}_{b}$, where $V^{\dagger}V = \mathbb{1}_{a}$, that can form isometric channel ${\cal{V}}(\varrho) = V \varrho V^{\dagger}$.
\begin{theorem}\label{Stinespring}
	Stinespring Dilation
	\\
	Let ${\cal{C}} \in {\cal{L}}({\cal{L}}({\cal{H}}_{a}),{\cal{L}}({\cal{H}}_{b}))$ be a quantum channel and let $C$ be its Choi operator. Then there exists an ancillary Hilbert space ${\cal{H}}_{A}$ and isometry $V: {\cal{L}}({\cal{H}}_{a}) \rightarrow {\cal{L}}({\cal{H}}_{b} \otimes {\cal{H}}_{A})$ such that ${\cal{C}}(\varrho) = \Tr_{A}(V\varrho V^{\dagger})$.
\end{theorem}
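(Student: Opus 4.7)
My plan is to reduce the theorem to the operator-sum (Kraus) representation and then to produce the isometry explicitly. First, I would invoke lemma \ref{CPlemma}: because $\cal C$ is completely positive, its Choi operator $C \in {\cal L}({\cal H}_b \otimes {\cal H}_a)$ is positive semi-definite, so it admits a spectral decomposition $C = \sum_i \ket{\psi_i}\!\bra{\psi_i}$ (with eigenvalues absorbed into the vectors). Using the state-operator correspondence of equation (\ref{stateChannelDuality}), each $\ket{\psi_i} \in {\cal H}_b \otimes {\cal H}_a$ can be written as $\dket{A_i}_{ba} = (A_i \otimes \mathbb{1}_a)\dket{I}_{aa'}$ for a unique $A_i \in {\cal L}({\cal H}_a, {\cal H}_b)$. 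Substituting this expansion into the inverse Choi formula (\ref{inverseChoi}) and carrying out the trace against $\mathbb{1}_b \otimes \varrho^T$ gives exactly ${\cal C}(\varrho) = \sum_i A_i \varrho A_i^\dagger$, i.e.\ Kraus form as in proposition \ref{opSumRepr}. Moreover, trace-preservation of $\cal C$ combined with lemma \ref{TPlemma} forces $\sum_i A_i^\dagger A_i = \mathbb{1}_a$.

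With the Kraus operators in hand, the construction of the dilation is immediate. Let $\{\ket{i}_A\}_{i}$ be an orthonormal basis of an auxiliary Hilbert space ${\cal H}_A$ whose dimension equals the number of Kraus operators (equivalently, the rank of $C$), and define
\begin{align*}
V := \sum_i A_i \otimes \ket{i}_A : {\cal H}_a \longrightarrow {\cal H}_b \otimes {\cal H}_A.
\end{align*}
A direct calculation gives $V^\dagger V = \sum_{i,j} A_j^\dagger A_i \otimes \braket{j}{i} = \sum_i A_i^\dagger A_i = \mathbb{1}_a$, so $V$ is an isometry. Likewise,
\begin{align*}
\Tr_A(V \varrho V^\dagger) = \sum_{i,j} A_i \varrho A_j^\dagger \, \Tr(\dyad{i}{j}) = \sum_i A_i \varrho A_i^\dagger = {\cal C}(\varrho),
\end{align*}
which is the desired identity.

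The only real subtlety is justifying the passage from the spectral decomposition of $C$ to the Kraus representation of $\cal C$; the rest is bookkeeping. I would therefore present this step carefully, emphasising that $\dket{\cdot}$ is a bijection between ${\cal L}({\cal H}_a, {\cal H}_b)$ and ${\cal H}_b \otimes {\cal H}_a$, so each eigenvector of $C$ unambiguously defines a Kraus operator $A_i$. Everything else (isometry property, trace identity, the choice of $\dim{\cal H}_A$) then follows mechanically from equations (\ref{choi})--(\ref{inverseChoi}) together with lemmas \ref{TPlemma} and \ref{CPlemma}, and no further structural argument is required.
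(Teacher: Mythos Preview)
Your argument is correct and is in fact the standard textbook route to Stinespring: spectrally decompose the Choi operator, read off Kraus operators via the vectorisation bijection, and stack them into the block isometry $V=\sum_i A_i\otimes\ket{i}_A$. All the steps you outline (positivity from lemma~\ref{CPlemma}, $\sum_i A_i^\dagger A_i=\mathbb{1}_a$ from lemma~\ref{TPlemma}, the trace identity) go through without difficulty.

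The paper, however, proceeds differently: instead of decomposing $C$ into rank-one pieces, it sets ${\cal H}_A=\supp(C^{\ast})$ and writes down the single closed-form isometry
\[
V=\bigl(\mathbb{1}_b\otimes C^{\frac{1}{2}\ast}_{b'a'}\bigr)\bigl(\dket{I}_{bb'}\otimes T_{a\to a'}\bigr),
\]
then verifies $V^\dagger V=\mathbb{1}_a$ and $\Tr_A(V\varrho V^\dagger)={\cal C}(\varrho)$ directly from the normalization $\Tr_b C=\mathbb{1}_a$ and the inverse Choi formula~(\ref{inverseChoi}), never passing through a Kraus list. Your construction is arguably more elementary and ties nicely to proposition~\ref{opSumRepr}; the paper's construction is basis-free and, more importantly, is the template that is reused verbatim in the later realization theorems for quantum networks (theorems~\ref{realitzationTheoremDQN}, \ref{realizationPQN}, \ref{realizationGQI}), where the analogous isometries are built from $R^{k\,\frac{1}{2}\ast}$ and $R^{(k-1)\,(-\frac{1}{2})\ast}$. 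So while both proofs are sound and yield ancilla spaces of the same dimension $\mathrm{rank}\,C$, the paper's choice is made with those subsequent applications in mind.
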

In the proof, we shall make use of the operator $T_{a \rightarrow a^{\prime}} = \sum_{i} \ket{i}_{a^{\prime}} \prescript{}{a}{\bra{i}}$ that swaps the isomorphic Hilbert spaces ${\cal{H}}_{a} \cong {\cal{H}}_{a^{\prime}}$. Applying this operator to $O_{a} \in {\cal{L}}({\cal{H}}_{A})$ yields $O_{a^{\prime}} = T_{a \rightarrow a^{\prime}} O_{a} T_{a^{\prime} \rightarrow a}$.
\begin{figure}[H]
	\begin{center}
	\includegraphics[scale=0.4]{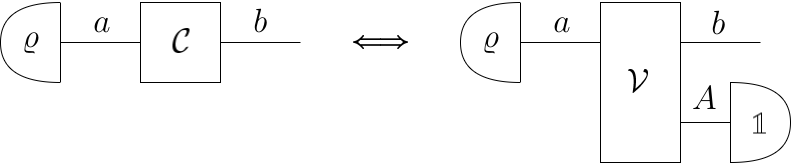}
	\end{center}
	\caption[font=small]{Graphical depiction of Stinespring dilation. On the left, there is a depiction of quantum channel ${\cal{C}}$ and on the right, there is a realization of the same channel through isometry $V$ and ancillary Hilbert space ${\cal{H}}_{A}$.}
\end{figure}
\begin{proof}
	Let us consider ancillary Hilbert space that spans the support of Choi operator of channel ${\cal{C}}$: ${\cal{H}}_{A} = \supp(C_{b^{\prime}a^{\prime}}^{\star})$, where $C^{\star}$ denotes complex conjugate of $C$. Therefore, ${\cal{H}}_{A} \subseteq {\cal{H}}_{a^{\prime}} \otimes {\cal{H}}_{b^{\prime}}$, with spaces ${\cal{H}}_{a} \cong {\cal{H}}_{a^{\prime}}$ and ${\cal{H}}_{b} \cong {\cal{H}}_{b^{\prime}}$ being isomorphic. Let us now define isometry:
	\begin{align}\label{isomStinespring}
		V = \left(\mathbb{1}_{b} \otimes C_{b^{\prime}a^{\prime}}^{\frac{1}{2}\star}\right)\left(\dket{I}_{bb^{\prime}} \otimes T_{a \rightarrow a^{\prime}}\right).
	\end{align}
	Let us check that this is really an isometry:
	\begin{align*}
		&V^{\dagger}V = \left(\prescript{}{{bb^{\prime}}}{\dbra{I}} \otimes T_{a^{\prime} \rightarrow a}\right) \left(\mathbb{1}_{b} \otimes C_{b^{\prime}a^{\prime}}^{\frac{1}{2}\star\dagger}\right) \left(\mathbb{1}_{b} \otimes C_{b^{\prime}a^{\prime}}^{\frac{1}{2}\star}\right)\dket{I}_{bb^{\prime}} \otimes T_{a \rightarrow a^{\prime}} = \prescript{}{{bb^{\prime}}}{\dbra{I}} \left(\mathbb{1}_{b} \otimes C_{b^{\prime}a}^{\star}\right) \dket{I}_{bb^{\prime}} \\&\overset{(\ref{opAndDKet})}{=} \sum_{m,n} \underbrace{\mel{n}{\mathbb{1}}{m}}_{\delta_{mn}}  \prescript{}{b}{\bra{n}} \prescript{}{b^{\prime}}{\bra{m}} (\mathbb{1}_{b} \otimes C_{b^{\prime}a}^{\ast}) \sum_{m^{\prime},n^{\prime}} \underbrace{\mel{n^{\prime}}{\mathbb{1}}{m^{\prime}}}_{\delta_{m^{\prime} n^{\prime}}} \ket{n^{\prime}}_{b} \ket{m^{\prime}}_{b^{\prime}} \\&= \sum_{m,m^{\prime}} \prescript{}{b}{\bra{m}} \prescript{}{b^{\prime}}{\bra{m}} (\mathbb{1}_{b} \otimes C_{b^{\prime}a}^{\ast}) \ket{m^{\prime}}_{b} \ket{m^{\prime}}_{b^{\prime}} = \sum_{m,m^{\prime}} \underbrace{\mel{m}{\mathbb{1}_{b}}{m^{\prime}}}_{\delta_{mm^{\prime}}} \prescript{}{b^{\prime}\hspace*{-1mm}}{\mel{m}{C_{b^{\prime}a}^{\ast}}{m}_{b^{\prime}}} \\&= \sum_{m} \prescript{}{b^{\prime}\hspace*{-1mm}}{\mel{m}{C_{b^{\prime}a}^{\ast}}{m}_{b^{\prime}}} = \Tr_{b^{\prime}}(C_{b^{\prime}a}^{\star}) \underset{\star \text{ and }\Tr}{\overset{\text{commutation of}}{=}}  \left[\Tr_{b^{\prime}}(C_{b^{\prime}a})\right]^{\star} \overset{\text{lemma } \ref{TPlemma}}{=} \mathbb{1}_{a}^{\star} = \mathbb{1}_{a}.
	\end{align*}
	Therefore, $V$ is really an isometry. Let us now show that $V$ and ancillary space can be used to realize a channel ${\cal{C}}$:
	\begin{align*}
		&\Tr_{A}(V\varrho V^{\dagger}) = \Tr_{A} \left[ \left(\mathbb{1}_{b} \otimes C_{b^{\prime}a^{\prime}}^{\frac{1}{2}\star}\right) \big(\dket{I}_{bb^{\prime}} \otimes T_{a \rightarrow a^{\prime}}\big) \varrho_{a} \big(\prescript{}{{bb^{\prime}}}{\dbra{I}} \otimes T_{a^{\prime} \rightarrow a}\big) \left(\mathbb{1}_{b} \otimes C_{b^{\prime}a^{\prime}}^{\frac{1}{2}\star\dagger}\right)\right] \\&\overset{{\cal{H}}_{A} \subseteq {\cal{H}}_{a^{\prime}} \otimes {\cal{H}}_{b^{\prime}}}{=} \Tr_{a^{\prime}b^{\prime}} \left[\left(\mathbb{1}_{b} \otimes C_{b^{\prime}a^{\prime}}^{\frac{1}{2}\star}\right) \left(\dket{I}_{bb^{\prime}}\dbra{I} \otimes \varrho_{a^{\prime}}\right) \left(\mathbb{1}_{b} \otimes C_{b^{\prime}a^{\prime}}^{\frac{1}{2}\star\dagger}\right)\right] \\&\underset{\text{and cyclicity of trace}}{\overset{\text{Hermiticity of } C_{b^{\prime}a^{\prime}}^{\frac{1}{2}\star}}{=}} \Tr_{a^{\prime}b^{\prime}} \left[\left(\mathbb{1}_{b} \otimes C_{b^{\prime}a^{\prime}}^{\star}\right) \left(\dket{I}_{bb^{\prime}}\dbra{I} \otimes \varrho_{a^{\prime}}\right)\right] \\&\overset{(i)}{=} \Tr_{a^{\prime}b^{\prime}} \left[\left(\mathbb{1}_{b} \otimes C_{b^{\prime}a^{\prime}}\right) \left(\dket{I}_{bb^{\prime}}\dbra{I}^{T_{b^{\prime}}} \otimes \varrho_{a^{\prime}}^{T_{a^{\prime}}}\right)\right] = \Tr_{a^{\prime}b^{\prime}} \left[\left(\mathbb{1}_{b} \otimes C_{b^{\prime}a^{\prime}}\right) \left(\sum_{ij}(\ket{i}_{b}\ket{i}_{b^{\prime}}\prescript{}{b}{\bra{j}}\prescript{}{b^{\prime}}{\bra{j}})^{T_{b^{\prime}}} \otimes \varrho_{a^{\prime}}^{T_{a^{\prime}}}\right)\right] \\&= \sum_{ij}\Tr_{a^{\prime}b^{\prime}} \left[\ket{i}_{b}\hspace*{-1mm}\bra{j} \otimes C_{b^{\prime}a^{\prime}}(\ket{j}_{b^{\prime}}\hspace*{-1mm}\bra{i} \otimes \varrho_{a^{\prime}}^{T_{a^{\prime}}})\right] \overset{(ii)}{=} \sum_{ij} \Tr_{a^{\prime}} \Big[\underbrace{\ket{i}_{b}\prescript{}{b^{\prime}}{\bra{i}} C_{b^{\prime}a^{\prime}} \ket{j}_{b^{\prime}}\prescript{}{b}{\bra{j}}}_{T_{b^{\prime} \rightarrow b} C_{b^{\prime}a^{\prime}} T_{b \rightarrow b^{\prime}}}  \otimes \varrho_{a^{\prime}}^{T_{a^{\prime}}}\Big] \\&= \Tr_{a^{\prime}} \left[C_{ba^{\prime}} \left(\mathbb{1}_{b} \otimes \varrho_{a^{\prime}}^{T_{a^{\prime}}}\right)\right] \overset{(\ref{inverseChoi})}{=} {\cal{C}}(\varrho).
	\end{align*}
	In $(i)$ we have used the invariance of trace with respect to transposition and the fact that $C_{b^{\prime}a^{\prime}}^{\dagger} = C_{b^{\prime}a^{\prime}}$ is Hermitian, because it is Choi operator of channel and channel is Hermitian (as can be seen from lemma \ref{Hlemma}). In $(ii)$, we have evaluated $\Tr_{b^{\prime}}$.
\end{proof}

Quantum operation ${\cal{O}}$ is a completely positive trace non-increasing linear map. Collection of quantum operations $\{{\cal{O}}_{i}\} \in {\cal{L}}({\cal{L}}({\cal{H}}_{a}), {\cal{L}}({\cal{H}}_{b}))$ form quantum instrument ${\cal{I}}$, if they sum up to quantum channel $\sum_{i}{\cal{O}}_{i} = {\cal{C}}$. Choi operators $C_{{\cal{O}}_{i}}$ correspond to every quantum operation and sum up to Choi operator of quantum channel $C_{{\cal{C}}} = \sum_{i}C_{{\cal{O}}_{i}}$ with normalization of channel being $C \star \mathbb{1}_{b} = \mathbb{1}_{a}$.

Choi operators describing quantum instrument with one-dimensional output space ${\cal{L}}({\cal{H}}_{b}) \cong \mathbb{C}$ form a POVM. POVM is composed of effects that sum up to identity $\sum_{i}E_{i} = \mathbb{1}_{b}$. Probability of measuring outcome $i$ given state $\varrho$ is given by: $p(i|\varrho) = \varrho \star E_{i} \overset{(\ref{linkProduct})}{=} \Tr(\varrho E_{i}^{T})$. The only difference between this expression and Born's rule is in transposition of an effect. But transposition can be absorbed in a definition of a POVM.

Any quantum instrument can be realized using an isometry and a POVM on a larger Hilbert space.
\begin{theorem}
	Realization of Quantum Instrument
	\\
	Let quantum operations $\{{\cal{O}}_{i}\}, {\cal{O}}_{i} \in {\cal{L}}({\cal{L}}({\cal{H}}_{a}),{\cal{L}}({\cal{H}}_{b}))$ form a quantum instrument. Then, there exist an ancillary Hilbert space ${\cal{H}}_{A}$, channel ${\cal{C}} \in {\cal{L}}({\cal{L}}({\cal{H}}_{a}),{\cal{L}}({\cal{H}}_{b} \otimes {\cal{H}}_{A}))$ and a POVM $\{P_{i}\}, P_{i} \in {\cal{L}}({\cal{H}}_{A})$ such that ${\cal{O}}_{i}(\varrho) = \Tr_{A}\left[{\cal{C}}(\varrho) \otimes P_{i,A}\right]$, where $P_{i,A}$ denotes element of POVM $P_{i} \in {\cal{L}}({\cal{H}}_{A})$.
\end{theorem}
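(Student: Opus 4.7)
The plan is to mimic the Stinespring-type construction of Theorem \ref{Stinespring} but to keep a classical record of which operation occurred, encoded into an orthogonal direct sum on the ancilla. First, since each ${\cal{O}}_{i}$ is a completely positive linear map, I invoke Proposition \ref{opSumRepr} applied to each operation separately to obtain Kraus operators $\{A_{ik}\}_{k}$ with ${\cal{O}}_{i}(\varrho) = \sum_{k} A_{ik}\varrho A_{ik}^{\dagger}$. Because $\sum_{i} {\cal{O}}_{i} = {\cal{C}}^{\prime}$ is a channel, the double-indexed family $\{A_{ik}\}_{i,k}$ constitutes a valid Kraus system for ${\cal{C}}^{\prime}$, so $\sum_{i,k} A_{ik}^{\dagger} A_{ik} = \mathbb{1}_{a}$.

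Next, I choose the ancilla ${\cal{H}}_{A}$ to have an orthonormal basis $\{\ket{i,k}\}$ indexed by the Kraus labels and define the linear operator
\begin{align*}
	V = \sum_{i,k} A_{ik} \otimes \ket{i,k}_{A}.
\end{align*}
A direct computation gives $V^{\dagger} V = \sum_{i,k,i^{\prime},k^{\prime}} A_{i^{\prime}k^{\prime}}^{\dagger} A_{ik}\braket{i^{\prime},k^{\prime}}{i,k} = \sum_{i,k} A_{ik}^{\dagger} A_{ik} = \mathbb{1}_{a}$, so $V$ is an isometry ${\cal{H}}_{a} \to {\cal{H}}_{b} \otimes {\cal{H}}_{A}$. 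Then ${\cal{C}}(\varrho) = V\varrho V^{\dagger}$ is a legitimate channel from ${\cal{L}}({\cal{H}}_{a})$ to ${\cal{L}}({\cal{H}}_{b} \otimes {\cal{H}}_{A})$.

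For the classical record, I define the projection-valued measure $P_{i} = \sum_{k} \dyad{i,k}{i,k}$ on ${\cal{H}}_{A}$. These are manifestly positive, and $\sum_{i} P_{i} = \sum_{i,k} \dyad{i,k}{i,k} = \mathbb{1}_{A}$, so $\{P_{i}\}$ is indeed a POVM (in fact a PVM). Interpreting the expression in the theorem as $\Tr_{A}\left[{\cal{C}}(\varrho)(\mathbb{1}_{b} \otimes P_{i})\right]$, I compute
\begin{align*}
	\Tr_{A}\left[V\varrho V^{\dagger}(\mathbb{1}_{b} \otimes P_{i})\right]
	&= \sum_{j,k,j^{\prime},k^{\prime}} A_{jk}\varrho A_{j^{\prime}k^{\prime}}^{\dagger} \Tr_{A}\left[\dyad{j,k}{j^{\prime},k^{\prime}} P_{i}\right] \\
	&= \sum_{j,k,j^{\prime},k^{\prime},l} A_{jk}\varrho A_{j^{\prime}k^{\prime}}^{\dagger} \braket{i,l}{j,k}\braket{j^{\prime},k^{\prime}}{i,l} \\
	&= \sum_{k} A_{ik}\varrho A_{ik}^{\dagger} = {\cal{O}}_{i}(\varrho),
\end{align*}
which is the claimed realization.

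The construction is essentially forced by the structure of the problem, so no step is truly an obstacle; the only delicate point is recognizing that pooling the Kraus operators across different $i$ produces a Kraus system for the underlying channel ${\cal{C}}^{\prime}$, which is what ensures the trace-preservation condition $\sum_{i,k} A_{ik}^{\dagger} A_{ik} = \mathbb{1}_{a}$ and hence the isometry property of $V$. The remaining computations are mechanical manipulations of bras and kets.
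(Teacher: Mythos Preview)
Your proof is correct and takes a genuinely different route from the paper. The paper works entirely in the Choi-operator formalism: it sets ${\cal{H}}_{A} = \supp(C^{\ast})$ for $C$ the Choi operator of the total channel ${\cal{C}}^{\prime} = \sum_{i}{\cal{O}}_{i}$, reuses the Stinespring isometry $V = (\mathbb{1}_{b} \otimes C^{\frac{1}{2}\ast})(\dket{I}\otimes T_{a\rightarrow a^{\prime}})$ from Theorem~\ref{Stinespring}, and then defines a genuinely non-projective POVM $P_{i} = C^{-\frac{1}{2}T} O_{i}^{T} C^{-\frac{1}{2}T}$ from the individual Choi operators $O_{i}$. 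Your construction instead pools the Kraus operators of the separate ${\cal{O}}_{i}$, builds the ancilla as the index space of all Kraus labels, and obtains a projection-valued measure rather than a general POVM.

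What each buys: your argument is more elementary and self-contained, and it makes transparent that a PVM always suffices. The paper's construction yields a smaller ancilla (the support of $C^{\ast}$ rather than one basis vector per Kraus operator) and, more importantly for the thesis, stays inside the Choi/link-product machinery that the chapter is developing toward quantum networks. One minor quibble: Proposition~\ref{opSumRepr} as stated in the paper gives a sufficient condition for a set of operators to form a \emph{channel}; what you actually need is the existence of a Kraus decomposition for any completely positive map, which is standard but not literally what that proposition asserts.
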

\begin{figure}[H]
	\begin{center}
		\includegraphics[scale=0.4]{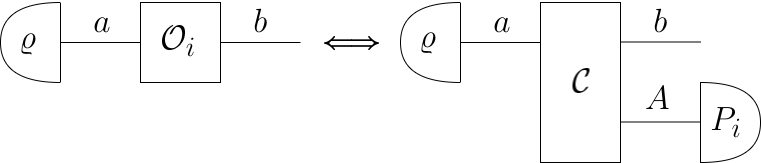}
	\end{center}
	\caption[font=small]{Graphical depiction of realization of quantum instrument. On the left, there is a depiction of quantum operation ${\cal{O}}_{i}$ and on the right, there is a realization of the same operation through quantum channel ${\cal{C}}$, element of POVM $P_{i}$ and an ancillary Hilbert space ${\cal{H}}_{A}$.}
\end{figure}
\begin{proof}
	Let ${\cal{C}} = \sum_{i}{\cal{O}}_{i}$ be a quantum channel. Let us denote its Choi operator by $C$ and Choi operators of individual quantum operations $O_{i}$. Based on Stinespring dilation \ref{Stinespring}, we define the same isometry $V = \left(\mathbb{1}_{b} \otimes C_{b^{\prime}a^{\prime}}^{\frac{1}{2}\star}\right)\dket{I}_{bb^{\prime}} \otimes T_{a \rightarrow a^{\prime}}$ as in (\ref{isomStinespring}) and the ancillary Hilbert space ${\cal{H}}_{A} = \supp(C)$. Further, let us define POVM $P_{i} = C_{b^{\prime}a^{\prime}}^{\frac{-1}{2}T} O_{i,b^{\prime}a^{\prime}}^{T} C_{b^{\prime}a^{\prime}}^{\frac{-1}{2}T}$ and let us show that it is, indeed, a POVM: $\sum_{i} P_{i} = C_{b^{\prime}a^{\prime}}^{\frac{-1}{2}T} \sum_{i} O_{i,b^{\prime}a^{\prime}}^{T} C_{b^{\prime}a^{\prime}}^{\frac{-1}{2}T} = C_{b^{\prime}a^{\prime}}^{\frac{-1}{2}T} C_{b^{\prime}a^{\prime}}^{T} C_{b^{\prime}a^{\prime}}^{\frac{-1}{2}T} = \mathbb{1}_{a^{\prime}b^{\prime}}$ and $P_{i}^{\dagger} = P_{i}$. Now, let us verify that this isometry and POVM can realize quantum instrument:
	\begin{align*}
		&\Tr_{A}\left[{\cal{C}}(\varrho)\left(\mathbb{1}_{b} \otimes P_{i,A}\right)\right] \overset{\text{theorem }\ref{Stinespring}}{=} \Tr_{A} \left[V\varrho V^{\dagger} \left(\mathbb{1}_{b} \otimes P_{i,A}\right)\right] \\&= \Tr_{a^{\prime}b^{\prime}} \left[\left(\mathbb{1}_{b} \otimes \left(C_{b^{\prime}a^{\prime}}^{\frac{1}{2}T}\right)^{\dagger}\right)\left(\varrho_{a^{\prime}} \otimes \dket{I}_{bb^{\prime}}\dbra{I}\right) \left(\mathbb{1}_{b} \otimes C_{b^{\prime}a^{\prime}}^{\frac{1}{2}T}\right) \left(\mathbb{1}_{b} \otimes C_{b^{\prime}a^{\prime}}^{\frac{-1}{2}T} O_{i,b^{\prime}a^{\prime}}^{T} C_{b^{\prime}a^{\prime}}^{\frac{-1}{2}T}\right)\right] \\&\overset{(i)}{=} \Tr_{a^{\prime}b^{\prime}}\left[\left(\varrho_{a^{\prime}} \otimes \dket{I}_{bb^{\prime}}\dbra{I}\right) \left(\mathbb{1}_{b} \otimes O_{i,b^{\prime}a^{\prime}}^{T}\right)\right] \overset{(ii)}{=}  \Tr_{a^{\prime}b^{\prime}} \left[\left(\varrho_{a^{\prime}} \otimes \sum_{jk} \ket{k}_{b}\ket{k}_{b^{\prime}} \prescript{}{b}{\bra{j}}\prescript{}{b^{\prime}}{\bra{j}}\right)^{T_{a^{\prime}b^{\prime}}} \hspace*{-0.1cm}\left(\mathbb{1}_{b} \otimes O_{i,b^{\prime}a^{\prime}}\right)\right] \\&= \Tr_{a^{\prime}b^{\prime}} \left[\left(\varrho_{a^{\prime}}^{T_{a^{\prime}}} \otimes \sum_{jk} \ket{k}_{b}\ket{j}_{b^{\prime}} \prescript{}{b}{\bra{j}}\prescript{}{b^{\prime}}{\bra{k}}\right)  \left(\mathbb{1}_{b} \otimes O_{i,b^{\prime}a^{\prime}}\right)\right] \overset{(iii)}{=} \Tr_{a^{\prime}} \left[\varrho_{a^{\prime}}^{T_{a^{\prime}}} \otimes \sum_{jk} \ket{k}_{b}\prescript{}{b^{\prime}}{\bra{k}} O_{i,b^{\prime}a^{\prime}} \ket{j}_{b^{\prime}}\prescript{}{b}{\bra{j}} \right] \\&= \Tr_{a^{\prime}} \left[O_{i,ba^{\prime}} \left(\mathbb{1}_{b} \otimes \varrho_{a^{\prime}}^{T_{a^{\prime}}}\right)\right] \overset{(\ref{inverseChoi})}{=} {\cal{O}}_{i}(\varrho).
	\end{align*}
	In $(i)$ we have used cyclicity of trace and Hermiticity of $C$, in $(ii)$ the invariance of trace under transposition and in $(iii)$ we have evaluated $\Tr_{b^{\prime}}$.
\end{proof}


\section{Graphical Construction}

We shall endeavor to explain our formalism with graphical tools. Quantum networks are combined quantum circuits, where the outputs of particular circuit serve as inputs into other quantum circuit. 

Simple graph theory can help us describe quantum networks as they can be viewed as directed acyclic graphs (DAGs). DAG is an ordered pair $G = (E,V)$, where $V$ is set of vertices and $E$ is set of ordered pairs of vertices called directed edges. Notation $e(v_{1},v_{2}) \in E$ means that there is an edge going from vertex $v_{1} \in V$ to vertex $v_{2} \in V$ \cite{GraphsTheoryAndAlgorithms}. Requirement for the graph to be without loops, or acyclic, just means that individual vertices have causal relations (for quantum networks that means, that individual operations are applied one at a time).

DAGs naturally possesses partial order $\preccurlyeq$ regarding vertices that can be changed in total order $\leq$. In general, this transition from partial order into total order is not unique. In an example given here through figure \ref{DAG}, there exists partial ordering of vertices $1 \preccurlyeq 2 \preccurlyeq 3 \preccurlyeq 4$ that can be changed in two different total orders $1 \leq 2 \leq 3 \leq 4$ or $1 \leq 3 \leq 2 \leq 4$.

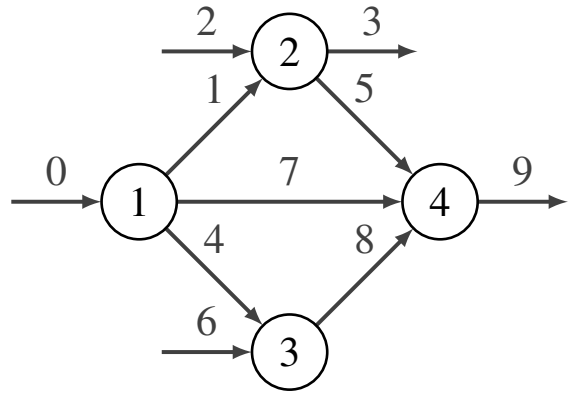
\begin{figure}[H]
	\begin{center}
	\begin{tikzpicture}[node distance={10mm}, main/.style = {draw, circle}]
		\Vertex[label = 1, fontscale = 2, x=2,y=0, opacity = 0, size = 1]{1}
		\Vertex[label = 2, fontscale = 2,x=4,y=2, opacity = 0, size = 1]{2}
		\Vertex[label = 3, fontscale = 2,x=4,y=-2, opacity = 0, size = 1]{3}
		\Vertex[label = 4, fontscale = 2,x=6,y=0, opacity = 0, size = 1]{4}
		
		\Vertex[Pseudo, x=0,y=0]{A}
		\Vertex[Pseudo, x=2,y=2]{B}
		\Vertex[Pseudo, x=2,y=-2]{C}
		\Vertex[Pseudo, x=8,y=0]{D}
		\Vertex[Pseudo, x=6,y=2]{E}
		
		\Edge[Direct,label=1,position={above=2mm},fontscale=2](1)(2)
		\Edge[Direct,label=4,position={above=2mm},fontscale=2](1)(3)
		\Edge[Direct,label=7,position={above=1mm},fontscale=2](1)(4)
		\Edge[Direct,label=5,position={above=2mm},fontscale=2](2)(4)
		\Edge[Direct,label=8,position={above=2mm},fontscale=2](3)(4)
		
		\Edge[Direct,label=0,position={above=1mm},fontscale=2](A)(1)
		\Edge[Direct,label=2,position={above=1mm},fontscale=2](B)(2)
		\Edge[Direct,label=3,position={above=1mm},fontscale=2](2)(E)
		\Edge[Direct,label=6,position={above=1mm},fontscale=2](C)(3)
		\Edge[Direct,label=9,position={above=1mm},fontscale=2](4)(D)
	\end{tikzpicture} 
	\end{center}
	\caption[font=small]{Example of directed acyclic graph (DAG) with partial order $1 \preccurlyeq 2 \preccurlyeq 3 \preccurlyeq 4$. This particular graph allows for two total orderings of vertices: $1 \leq 2 \leq 3 \leq 4$ or $1 \leq 3 \leq 2 \leq 4$.}
	\label{DAG}
\end{figure}

If we desire to interpret quantum networks as DAGs, then individual vertices take function of quantum channels (or quantum operations) and edges correspond to wires in quantum circuits. Let us redraw given graph with these substitutions on our mind:
\begin{figure}[H]
	\begin{center}
		\begin{quantikz}
			\rstick{\hspace{-0.1cm}0\\~}&\gate[5, nwires = {2,3,4,5}][2cm]{\mathcal{C}_{(1)}} &\rstick{\hspace{-0.1cm}2\\~} &\gate[2][2cm]{\mathcal{C}_{(2)}} &\rstick{\hspace{-0.6cm}3\\~}\qw &\gate[5, nwires = {1,4}][2cm]{\mathcal{C}_{(4)}} &\rstick{\hspace{-0.6cm}9\\~}\qw \\
			& &\rstick{\hspace{-0.1cm}1\\~}\qw & &\rstick{\hspace{-0.3cm}5\\~}\qw & & \\
			& &\ghost{X}\qw &\rstick{\hspace{-0.3cm}7\\~}\qw &\qw & & & \\
			& &\rstick{\hspace{-0.1cm}6\\~} &\gate[2][2cm]{\mathcal{C}_{(3)}} & & & &\\
			& &\rstick{\hspace{-0.3cm}4\\~}\qw & &\rstick{\hspace{-0.3cm}8\\~}\qw & & &
		\end{quantikz}
	\end{center}
	\label{DQNasDAG}
\end{figure}\vspace*{-1cm}
Here, quantum channel is denoted by ${\cal{C}}_{(i)}$. If we choose the total order to be $1 \leq 2 \leq 3 \leq 4$, we can redraw this by expanding gates ${\cal{C}}_{(2)}$ and ${\cal{C}}_{(3)}$. We create a new gate by expanding the old one with identities acting on Hilbert spaces ${\cal{H}}_{4}$ and ${\cal{H}}_{7}$: ${\cal{C}}_{(2)}^{\prime} = {\cal{C}}_{(2)} \otimes \mathbb{1}_{4} \otimes \mathbb{1}_{7}$. In the same fashion, we are also able to expand  ${\cal{C}}_{(3)}$ to create ${\cal{C}}_{(3)}^{\prime} = {\cal{C}}_{(3)} \otimes \mathbb{1}_{5} \otimes \mathbb{1}_{7}$:
\begin{figure}[H]
	\begin{center}
		\begin{quantikz}
			\rstick{\hspace{-0.1cm}0\\~}
			&\gate[4, nwires = {2,3,4}][2cm]{\mathcal{C}_{(1)}} &\rstick{\hspace{-0.1cm}2\\~} 
			&\gate[4][2cm]{\mathcal{C}_{(2)}^{\prime}} &\rstick{\hspace{-0.6cm}3\\~}\qw 
			& 
			&\rstick{\hspace{-0.1cm}6\\~} &\gate[4][2cm]{\mathcal{C}_{(3)}^{\prime}} 
			& 
			&\gate[4, nwires={1}][2cm]{\mathcal{C}_{(4)}} 
			&\rstick{\hspace{-0.6cm}9\\~}\qw 
			\\ 
			& 
			&\rstick{\hspace{-0.4cm}1\\~}\qw
			&\ghost{X}\qw 
			&\qw 
			&\rstick{\hspace{-0.3cm}5\\~}\qw 
			&\qw 
			&\qw 
			&\rstick{\hspace{-0.4cm}5\\~}\qw 
			& 
			& 
			\\ 
			& 
			&\rstick{\hspace{-0.4cm}4\\~}\qw  
			&\ghost{X}\qw  
			&\qw 
			&\rstick{\hspace{-0.3cm}4\\~}\qw 
			&\qw 
			&\qw 
			&\rstick{\hspace{-0.4cm}8\\~}\qw 
			& 
			& 
			\\ 
			& 
			&\rstick{\hspace{-0.4cm}7\\~}\qw  
			&\qw  
			&\qw 
			&\rstick{\hspace{-0.3cm}7\\~}\qw 
			&\qw 
			&\qw 
			&\rstick{\hspace{-0.4cm}7\\~}\qw 
			& 
			& 
		\end{quantikz}
	\end{center}
	\label{orderedDQN}
\end{figure}\vspace*{-1cm}
We can further draw this diagram more compactly by grouping together wires $1$, $4$ and $7$, then $5$, $4$ and $7$ and finally $5$, $8$ and $7$. This means that we are denoting all Hilbert spaces corresponding to these wires in a new way. Therefore, we rename Hilbert spaces accordingly ${\cal{H}}_{A_{1}} = {\cal{H}}_{1} \otimes {\cal{H}}_{4} \otimes {\cal{H}}_{5}$, ${\cal{H}}_{A_{2}} = {\cal{H}}_{4} \otimes {\cal{H}}_{5} \otimes {\cal{H}}_{7}$ and ${\cal{H}}_{A_{3}} = {\cal{H}}_{5} \otimes {\cal{H}}_{7} \otimes {\cal{H}}_{8}$:
\begin{figure}[H]
	\begin{center}
		\begin{quantikz}
			\rstick{\hspace{-0.1cm}0\\~}
			&\gate[2, nwires = {2}][2cm]{\mathcal{C}_{(1)}} &\rstick{\hspace{-0.1cm}2\\~} 
			&\gate[2][2cm]{\mathcal{C}_{(2)}^{\prime}} 
			&\rstick{\hspace{-0.6cm}3\\~}\qw 
			& 
			&\rstick{\hspace{-0.1cm}6\\~} 
			&\gate[2][2cm]{\mathcal{C}_{(3)}^{\prime}} 
			& 
			&\gate[2, nwires = {1}][2cm]{\mathcal{C}_{(4)}} &\rstick{\hspace{-0.6cm}9\\~}\qw 
			\\
			& 
			&\rstick{\hspace{-0.4cm}$A_{1}$\\~}\qw 
			&\qw 
			&\qw 
			&\rstick{\hspace{-0.3cm}$A_{2}$\\~}\qw 
			&\qw 
			&\qw 
			&\rstick{\hspace{-0.4cm}$A_{3}$\\~}\qw 
			& 
			&
		\end{quantikz}
	\end{center}
	\label{finalDQN}
\end{figure}\vspace*{-1cm}

In this fashion, we are able to redraw every possible quantum network with $N$ vertices in such a way that is equivalent to a concatenation of $N$ quantum operations (and which looks similarly to the last diagram). Therefore, Choi operator of quantum network is given by the link product of Choi operators of the individual maps $R^{N} = C_{1} \star C_{2} \star \cdots \star C_{N}$. Henceforth, the entire quantum network, which can be composed of myriad quantum circuits, is describable by only one operator. In this description, we also encompass that physical implementations of the whole quantum network can differ, albeit the final outcome stays the same. All these possible different implementations form class of equivalence of quantum networks.

\section{Deterministic Quantum Networks}

Deterministic quantum networks (DQN) are composed of quantum channels as they preserve traces of input quantum states, therefore the outcome of the DQN is not random. We can look at DQNs as DAGs:

\begin{remark} \label{DQNDAG}
	DQN as DAG
	\\
	DQNs are linear maps that correspond to DAGs, where:
	\begin{itemize}
		\item each edge is numbered by a unique integer
		\item edge numbered by an integer $j$ represents Hilbert space ${\cal{H}}_{j}$
		\item each vertex is numbered by a unique integer
		\item vertex numbered by an integer $i$ represents channel $C_{(i)}: {\cal{L}}({\cal{H}}_{in_{i}}) \rightarrow {\cal{L}}({\cal{H}}_{out_{i}})$
		\item edge between vertices $i$ and $i^{\prime}$ represents composition of channels $C_{(i)} \circ C_{(i^{\prime})}$.
	\end{itemize}
\end{remark}

Because these networks are formed by concatenation of quantum channels, their corresponding Choi operators are given by link product of the individual Choi operators of channels. Let us have a DQN ${\cal{R}}^{(N)}$ depicted in figure \ref{taco}, that maps input from ${\cal{L}}({\cal{H}}_{in})$ into ${\cal{L}}({\cal{H}}_{out})$, where $N$ is a finite number of vertices (i.e., channels), ${\cal{H}}_{in} = \otimes_{i=0}^{N-1} {\cal{H}}_{2i}$ denotes all input Hilbert spaces and ${\cal{H}}_{out} = \otimes_{i=0}^{N-1} {\cal{H}}_{2i+1}$ denotes all output Hilbert spaces. Then the Choi operator of such a DQN is:
\begin{align*}
	R^{N} = C_{1} \star C_{2} \star \cdots \star C_{N},
\end{align*}
where $C_{i}$'s are Choi operators of quantum channels ${\cal{C}}_{(i)}$ that form DQN ${\cal{R}}^{(N)}$.
\begin{figure}[H]
	\begin{center}
		\begin{quantikz}
			\rstick{\hspace{-0.1cm}0\\~}
			&\gate[2, nwires = {2}][2cm]{\mathcal{C}_{(1)}}
			&\qw \rstick{\hspace{-0.6cm}1\\~}
			& 
			&\rstick{\hspace{-0.1cm}2\\~}
			&\gate[2][2cm]{\mathcal{C}_{(2)}}
			&\rstick{\hspace{-0.4cm}3\\~}\qw
			&\ \ldots\ \qw			
			&\rstick{\hspace{-0.6cm}2N-2\\~}
			&\gate[2][2cm]{\mathcal{C}_{(N)}}
			&\rstick{\hspace{-0.6cm}2N-1\\~}\qw 
			\\ 
			&
			&\qw \rstick{\hspace{0.2cm}$A_{1}$\\~}
			&\qw
			&\qw
			&
			&\rstick{\hspace{-0.4cm}$A_{2}$\\~}\qw
			&\ \ldots\ \qw			
			&\rstick{\hspace{-0.6cm}$A_{N-1}$\\~}
			&
			&
		\end{quantikz}
	\end{center}
	\caption[font=small]{Depiction of DQN formed by concatenation of $N$ channels. Input spaces are ${\cal{H}}_{in} = \otimes_{i=0}^{N-1} {\cal{H}}_{2i}$, while output spaces are ${\cal{H}}_{out} = \otimes_{i=0}^{N-1} {\cal{H}}_{2i+1}$.}
	\label{taco}
\end{figure}
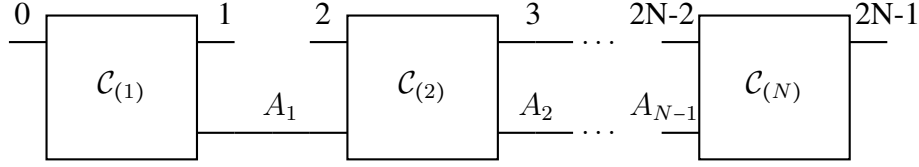

We shall state normalization condition for a DQN:
\begin{lemma}\label{normalizationConditionDQN}
	Normalization Condition for DQN
	\\
	Let us have Choi operator $R^{N}$ corresponding to the DQN ${\cal{R}}^{(N)} \in {\cal{L}}(\otimes_{i=0}^{2N-1}{\cal{H}}_{i})$. Then, $R^{N}$ is positive semi-definite and the following relation holds:
	\begin{align}\label{normalizationDQN}
		\Tr_{2k-1}(R^{k}) = \mathbb{1}_{2k-2} \otimes R^{k-1} \qquad \text{for } k = 1, \cdots, N,
	\end{align}
	where $R^{k-1}$ is a Choi operator of the DQN ${\cal{R}}^{(k-1)}$.
\end{lemma}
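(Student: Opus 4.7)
The plan is to establish the two claims separately and both by induction, leveraging the decomposition $R^k = R^{k-1}\star C_k$ of the DQN Choi operator through the link product, where the shared space is the ancillary wire $\mathcal{H}_{A_{k-1}}$ connecting gates $k-1$ and $k$.

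For positive semi-definiteness, I first note that each constituent Choi operator $C_k$ is positive semi-definite by Lemma~\ref{CPlemma}, since quantum channels are completely positive. Applying the positivity clause of Lemma~\ref{propLink} once per gate then propagates positivity through each link product, yielding $R^k\geq 0$ for every $k\leq N$.

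For the normalization recursion I would induct on $k$. The base case $k=1$ is exactly the trace-preservation of $\mathcal{C}_{(1)}$, namely $\Tr_1(C_1)=\mathbb{1}_0$, by Lemma~\ref{TPlemma}, with the convention that $R^0$ is the unit scalar on the trivial Hilbert space. For the inductive step, I would rewrite the partial trace as a link product with the identity, $\Tr_{2k-1}(X)=X\star\mathbb{1}_{2k-1}$, and then use associativity of the link product (which applies because the index $2k-1$ appears only in $C_k$, so the triple intersection is empty) to compute
\begin{align*}
\Tr_{2k-1}(R^k)=(R^{k-1}\star C_k)\star\mathbb{1}_{2k-1}=R^{k-1}\star\bigl(\Tr_{2k-1}C_k\bigr).
\end{align*}
Trace-preservation of $\mathcal{C}_{(k)}$ via Lemma~\ref{TPlemma} gives $\Tr_{2k-1}C_k=\mathbb{1}_{2k-2}\otimes\mathbb{1}_{A_{k-1}}$; linking $R^{k-1}$ with the $\mathbb{1}_{A_{k-1}}$ factor just closes off the dangling ancillary of $R^{k-1}$ (leaving $R^{k-1}$ intact on the external legs), while $\mathbb{1}_{2k-2}$ is disjoint from the indices of $R^{k-1}$ and so passes through the link product as a tensor factor, producing $\mathbb{1}_{2k-2}\otimes R^{k-1}$ as required.

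The principal difficulty will be notational rather than conceptual: one must track which ancillary spaces are open at each intermediate stage of the induction and verify the shared-index conventions needed for each application of associativity and of the identity-as-partial-trace rewriting. Once that bookkeeping is pinned down, the proof collapses to the physical tautology that tracing out the output of a trace-preserving final gate is indistinguishable from deleting that gate altogether and leaving its input slot empty.
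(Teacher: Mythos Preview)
Your proposal is correct and follows essentially the same route as the paper: decompose $R^{k}$ as a link product of the constituent channel Choi operators, use associativity to push $\Tr_{2k-1}$ onto the last factor, apply Lemma~\ref{TPlemma} to produce $\mathbb{1}_{2k-2}\otimes\mathbb{1}_{A_{k-1}}$, and then absorb $\mathbb{1}_{A_{k-1}}$ into the preceding factors to recover $R^{k-1}$. Your explicit treatment of positive semi-definiteness via Lemma~\ref{CPlemma} and the positivity clause of Lemma~\ref{propLink} is a welcome addition, as the paper's proof leaves that part implicit; the notational ambiguity you flag (whether ``$R^{k-1}$'' carries the open ancilla $A_{k-1}$ or not) is exactly the same loose end the paper glosses over by silently replacing $C_{k-1}$ with $\widetilde{C}_{k-1}=C_{k-1}\star\mathbb{1}_{A_{k-1}}$.
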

If $k = N$, then we obtain $\Tr_{2N-1}(R^{N}) = \mathbb{1}_{2N-2} \otimes R^{N-1}$ and if $k = 1$, then $\Tr_{1}(R^{1}) = \mathbb{1}_{0} \otimes R^{0} = \mathbb{1}_{0}$.
\begin{proof}
	DQN ${\cal{R}}^{(N)}$ with $N$ vertices can be expressed as a concatenation of $N$ channels ${\cal{R}}^{(N)} = {\cal{C}}_{(1)} \star {\cal{C}}_{(2)} \star \cdots \star {\cal{C}}_{(N)}$, where ${\cal{C}}_{(i)}: {\cal{L}}({\cal{H}}_{2i-2} \otimes {\cal{H}}_{A_{i-1}}) \rightarrow {\cal{L}}({\cal{H}}_{2i-1} \otimes {\cal{H}}_{A_{i}})$ and where we put first and last ancillary spaces to be one-dimensional ${\cal{H}}_{A_{0}} \cong {\cal{H}}_{A_{N}} \cong \mathbb{C}$. Let $C_{i} \in {\cal{L}}(\otimes_{k\in {\cal{J}}_{i}})$ denote Choi operator of channel ${\cal{C}}_{(i)}$, where ${\cal{J}}_{i} = \{2i-2, A_{i-1}, 2i-1, A_{i}\}$. Noting that ${\cal{J}}_{i} \cap {\cal{J}}_{j} \cap {\cal{J}}_{k} = \emptyset$, we can use associativity of link product from lemma \ref{propLink}, to express Choi operator of DQN as $R^{N} = C_{1} \star C_{2} \star \cdots \star C_{N}$. Let us now evaluate trace of DQN:
	\begin{align*}
		\Tr_{2N-1}(R^{N}) &= C_{1} \star C_{2} \star \cdots \star \Tr_{2N-1}(C_{N}) \overset{\text{lemma } \ref{TPlemma}}{=} C_{1} \star C_{2} \star \cdots \underbrace{C_{N-1} \star \mathbb{1}_{A_{N-1}}}_{\widetilde{C_{N-1}}} \otimes \mathbb{1}_{2N-2} \\&= C_{1} \star C_{2} \star \cdots \star \widetilde{C_{N-1}} \otimes \mathbb{1}_{2N-2} = R^{N-1} \otimes \mathbb{1}_{2N-2}.
	\end{align*}
	From here, we can proceed iteratively. Let us also explicitly write the normalization for DQN with one vertex: $\Tr_{1}(R^{1}) = \Tr_{1}(C_{1} \star \mathbb{1}_{A_{1}}) \overset{(\ref{linkProduct})}{=} \Tr_{1}\Tr_{A_{1}}(C_{1}) \overset{\text{lemma } \ref{TPlemma}}{=} \mathbb{1}_{0}$.
\end{proof}
In the following, we provide a "recipe" for realization of DQN using isometries, which is a consequence of Stinespring dilation \ref{Stinespring}:
\begin{theorem}\label{realitzationTheoremDQN}
	Realization Theorem for DQN
	\\
	Let us have positive semi-definite operator $R^{(N)} \in {\cal{L}}({\cal{H}}_{out} \otimes {\cal{H}}_{in})$, where ${\cal{H}}_{in} = \otimes_{i=0}^{N-1} {\cal{H}}_{2i}$ and ${\cal{H}}_{out} = \otimes_{i=0}^{N-1} {\cal{H}}_{2i+1}$ satisfying $\Tr_{2k-1}(R^{k}) = \mathbb{1}_{2k-2} \otimes R^{k-1}$ for $k = 1, \cdots, N$. Then, $R^{N}$ is Choi operator of DQN ${\cal{R}}^{(N)}$. Moreover, DQN ${\cal{R}}^{(N)}$ can be realized as a concatenation of isometries $V_{i}$ and tracing out the ancillary space:
	\begin{align}\label{realizationDQN}
		{\cal{R}}^{(N)}(\varrho) = \Tr_{A_{N}} \left(V_{N} V_{N-1} \cdots V_{1} \varrho V_{1}^{\dagger} \cdots V_{N-1}^{\dagger} V_{N}^{\dagger}\right).
	\end{align}
\end{theorem}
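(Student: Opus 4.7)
The plan is to proceed by induction on the number of vertices $N$, peeling off one channel at a time and invoking Stinespring's Theorem \ref{Stinespring} at each stage. The base case $N=1$ is essentially immediate: the hypothesis $\Tr_{1}(R^{1})=\mathbb{1}_{0}$ combined with $R^{1}\ge 0$ means, by Lemmas \ref{TPlemma} and \ref{CPlemma}, that $R^{1}$ is the Choi operator of a CPTP map ${\cal C}_{(1)}$, which is a one-vertex DQN, and Stinespring dilation supplies an isometry $V_{1}:{\cal H}_{0}\rightarrow{\cal H}_{1}\otimes{\cal H}_{A_{1}}$ with ${\cal R}^{(1)}(\varrho)=\Tr_{A_{1}}(V_{1}\varrho V_{1}^{\dagger})$.

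For the inductive step I would assume the statement for $N-1$ and reduce $R^{N}$ to $R^{N-1}$ using the top-level normalization condition $\Tr_{2N-1}(R^{N})=\mathbb{1}_{2N-2}\otimes R^{N-1}$. Because $R^{N}\ge 0$, the partial trace $R^{N-1}$ inherits positive semi-definiteness, and the remaining constraints $\Tr_{2k-1}(R^{k})=\mathbb{1}_{2k-2}\otimes R^{k-1}$ for $k\le N-1$ are exactly the hypotheses needed to apply the inductive assumption to $R^{N-1}$. Hence $R^{N-1}$ is the Choi operator of a DQN ${\cal R}^{(N-1)}$ realized by isometries $V_{1},\ldots,V_{N-1}$ with accumulated memory space ${\cal H}_{A_{N-1}}$ chosen, via Stinespring, as $\supp(R^{N-1\,\star})$.

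The heart of the step is the construction of the final channel ${\cal C}_{(N)}:{\cal L}({\cal H}_{2N-2}\otimes{\cal H}_{A_{N-1}})\rightarrow{\cal L}({\cal H}_{2N-1}\otimes{\cal H}_{A_{N}})$ whose Choi operator $C_{N}$ glues onto the previous DQN so that $R^{N}=R^{N-1}\star C_{N}$. Concretely I would define $C_{N}$ through the pseudo-inverse of $R^{N-1}$ on its support, roughly in the form $C_{N}=(R^{N-1})^{-\frac{1}{2}T}\,R^{N}\,(R^{N-1})^{-\frac{1}{2}T}\otimes \mathbb{1}_{A_{N}}$ (mirroring the isometry formula \eqref{isomStinespring} used in the Stinespring proof), and then extend it to all of ${\cal H}_{A_{N-1}}$ by identity on the orthogonal complement. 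Positivity of $C_{N}$ follows because it is a conjugation of $R^{N}\ge 0$ by a Hermitian operator, hence by Lemma \ref{CPlemma} defines a CP map; trace-preservation, i.e., $\Tr_{2N-1,A_{N}}(C_{N})=\mathbb{1}_{2N-2,A_{N-1}}$, is where the normalization hypothesis $\Tr_{2N-1}(R^{N})=\mathbb{1}_{2N-2}\otimes R^{N-1}$ is precisely what I need, so that by Lemma \ref{TPlemma} the map is CPTP.

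Once ${\cal C}_{(N)}$ is established as a bona fide quantum channel, Stinespring's theorem yields an isometry $V_{N}:{\cal H}_{2N-2}\otimes{\cal H}_{A_{N-1}}\rightarrow{\cal H}_{2N-1}\otimes{\cal H}_{A_{N}}$ such that ${\cal C}_{(N)}(\sigma)=\Tr_{A_{N}}(V_{N}\sigma V_{N}^{\dagger})$. Composing with the inductive realization of ${\cal R}^{(N-1)}$ and using that the link product is the Choi counterpart of channel composition (so that $R^{N}=R^{N-1}\star C_{N}$ makes ${\cal R}^{(N)}={\cal C}_{(N)}\circ{\cal R}^{(N-1)}$ in the augmented-memory picture) produces the announced formula \eqref{realizationDQN}. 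The main obstacle I foresee is the bookkeeping around the pseudo-inverse in the definition of $C_{N}$, namely verifying that the extension off $\supp(R^{N-1})$ can be chosen so that $C_{N}$ is simultaneously positive, trace-preserving, and recovers $R^{N}$ via link product; this is delicate but entirely parallel to the manipulations already carried out in the proof of Theorem \ref{Stinespring}, and the normalization tower $\Tr_{2k-1}(R^{k})=\mathbb{1}_{2k-2}\otimes R^{k-1}$ is precisely what makes every marginalization consistent.
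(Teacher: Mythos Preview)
Your inductive strategy is correct and rests on the same key construction as the paper—sandwiching $R^{N}$ by the pseudo-inverse square root of $R^{N-1}$ to extract the last channel—but the paper organizes the argument non-inductively: it writes down all the isometries
\[
V_{k}=\mathbb{1}_{2k-1}\otimes R^{k\,\frac{1}{2}\ast}_{A_{k}}\,R^{(k-1)\,(-\frac{1}{2})\ast}_{A_{k-1}}\,\dket{I}_{(2k-1)(2k-1)'}\otimes T_{(2k-2)\to(2k-2)'}
\]
at once, verifies each is an isometry directly from the normalization condition, and then observes that the product $W_{N}=V_{N}\cdots V_{1}$ telescopes (the $(R^{k-1})^{\pm 1/2}$ factors cancel in consecutive pairs) to the single Stinespring isometry of the full channel with Choi $R^{N}$. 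Your induction hides this telescoping inside the hypothesis, which is arguably cleaner conceptually; the paper's explicit computation buys a closed formula for every $V_{k}$ and sidesteps the pseudo-inverse bookkeeping you flag at the end.

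Two small corrections to your sketch: the stray $\otimes\,\mathbb{1}_{A_{N}}$ in your formula for $C_{N}$ should be dropped, since ${\cal H}_{A_{N}}$ only enters after you Stinespring-dilate $C_{N}$; and your link-product identity should really read $R^{N}=\widehat{R}^{N-1}\star C_{N}$ with $\widehat{R}^{N-1}=\dket{W_{N-1}}\dbra{W_{N-1}}$ the Choi of the \emph{isometric} (memory-retaining) channel, not of the traced ${\cal R}^{(N-1)}$—otherwise the ancilla leg on which the link product acts is absent. With those fixes your argument goes through.
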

\begin{figure}[H]
	\begin{center}
		\begin{quantikz}
			\rstick{\hspace{-0.1cm}0\\~}
			&\gate[2, nwires = {2}][2cm]{\mathcal{V}_{(1)}}
			&\qw \rstick{\hspace{-0.6cm}1\\~}
			& 
			&\rstick{\hspace{-0.1cm}2\\~}
			&\gate[2][2cm]{\mathcal{V}_{(2)}}
			&\rstick{\hspace{-0.4cm}3\\~}\qw
			&\ \ldots\ \qw			
			&\rstick{\hspace{-0.6cm}2N-2\\~}
			&\gate[2][2cm]{\mathcal{V}_{(N)}}
			&\rstick{\hspace{-0.6cm}2N-1\\~}\qw 
			&
			\\ 
			{}
			&
			&\qw \rstick{\hspace{0.2cm}$A_{1}$\\~}
			&\qw
			&\qw
			&
			&\rstick{\hspace{-0.4cm}$A_{2}$\\~}\qw
			&\ \ldots\ \qw			
			&\rstick{\hspace{-0.6cm}$A_{N-1}$\\~}
			&
			&\rstick{\hspace{-0.6cm}$A_{N}$\\~}\qw 
			&\meterD{\mathbb{1}}
		\end{quantikz}
	\end{center}
	\caption[font=small]{Depiction of DQN with $N$ channels realized by an isometric channels ${\cal{V}}_{(i)}$, that are defined as ${\cal{V}}_{(i)}(\varrho) = V_{i}\varrho V_{i}^{\dagger}$ with $V_{i} \in {\cal{L}}({\cal{H}}_{2i-2} \otimes {\cal{H}}_{A_{i-1}}, {\cal{H}}_{2i-1} \otimes {\cal{H}}_{A_{i}})$ being an isometry.}
	\label{realDQN}
\end{figure}
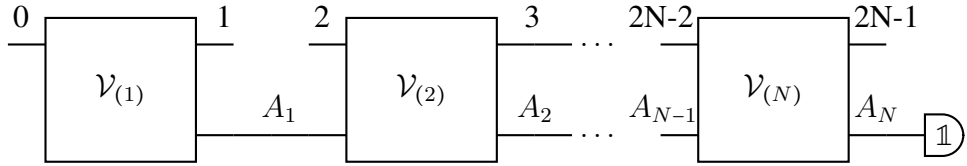
\begin{proof}
	This proof is similar to the proof of Stinespring dilation \ref{Stinespring}. Let us define Hilbert space ${\cal{H}}_{A_{k}} = \supp(R^{k\ast}_{A_{k}})$, where $\supp(R^{k\ast}_{A_{k}})$ denotes support of $R^{k\ast}_{A_{k}}$, and isometry:
	\begin{align*}
		V_{k} = \mathbb{1}_{2k-1} \otimes 	R^{k\frac{1}{2}\ast}_{A_{k}}R^{(k-1)(\frac{-1}{2})\ast}_{A_{(k-1)}} \dket{I}_{\substack{(2k-1)\\(2k-1)^{\prime}}} \otimes T_{(2k-2)\rightarrow(2k-2)^{\prime}}.
	\end{align*}  
	Let us remind that $T_{(2k-2)\rightarrow(2k-2)^{\prime}} = T_{\substack{(2k-2)\rightarrow\\(2k-2)^{\prime}}} = \sum_{k} \ket{k}_{(2k-2)^{\prime}}\hspace*{1mm}\prescript{}{(2k-2)}{\bra{k}}$ exchanges the isomorphic Hilbert spaces. Let us check that this really is an isometry:
	\begin{align*}
		&V_{k}^{\dagger}V_{k} \\&\overset{(i)}{=} \left[ T_{\substack{(2k-2)^{\prime}\rightarrow\\(2k-2)}} \otimes \prescript{}{\substack{(2k-1)\\(2k-1)^{\prime}}}{\dbra{I}} \mathbb{1}_{2k-1} \otimes R^{(k-1)\left(\frac{-1}{2}\right)\ast}_{A_{(k-1)}}R^{k\frac{1}{2}\ast}_{A_{k}} \right] \left[\mathbb{1}_{2k-1} \otimes R^{k\frac{1}{2}\ast}_{A_{k}}R^{(k-1)\left(\frac{-1}{2}\right)\ast}_{A_{(k-1)}} \dket{I}_{\substack{(2k-1)\\(2k-1)^{\prime}}} \otimes T_{\substack{(2k-2)\rightarrow\\(2k-2)^{\prime}}}\right] \\&\overset{(ii)}{=} T_{\substack{(2k-2)^{\prime}\rightarrow\\(2k-2)}} \hspace*{1mm} \Tr_{(2k-1)^{\prime}} \left[R^{(k-1)\left(\frac{-1}{2}\right)\ast}_{A_{(k-1)}}R^{k\ast}_{A_{k}}R^{(k-1)\left(\frac{-1}{2}\right)\ast}_{A_{(k-1)}}\right] T_{\substack{(2k-2)\rightarrow\\(2k-2)^{\prime}}} \\&\overset{(iii)}{=} T_{\substack{(2k-2)^{\prime}\rightarrow\\(2k-2)}} \left[\mathbb{1}_{(2k-2)^{\prime}} \otimes \mathbb{1}_{A_{(k-1)}}\right] T_{\substack{(2k-2)\rightarrow\\(2k-2)^{\prime}}} = \mathbb{1}_{2k-2} \otimes \mathbb{1}_{A_{(k-1)}}.
	\end{align*}
	Thus, we have proved that $V$ is indeed an isometry. In $(i)$ we have used the fact that $R^{(k)}$ is an Hermitian operator, in $(ii)$ we have used equation (\ref{stateChannelDuality}) and similar procedure as in proof of Stinespring dilation \ref{Stinespring}. And in $(iii)$ we have used the normalization condition \ref{normalizationConditionDQN}: $\Tr_{(2k-1)^{\prime}}(R^{k\ast}_{A_{k}}) = \mathbb{1}_{(2k-2)^{\prime}} \otimes R^{(k-1)\ast}_{A_{(k-1)}}$, where we have also used that ${\cal{H}}_{A_{k}} = \supp(R^{k\ast}_{A_{k}})$ and therefore ${\cal{H}}_{A_{k}} \subseteq \otimes_{i=0}^{2k-1}{\cal{H}}_{i^{\prime}}$.
	
	Let us now define an isometry $W_{N} = V_{N}\cdots V_{1}$ and evaluate the following expression:
	\begin{align*}
		V_{k}V_{k-1} &=
		\left[\mathbb{1}_{\substack{(2k-1)\\(2k-3)}} \otimes R^{k\frac{1}{2}\ast}_{A_{k}} R^{(k-1)\left(\frac{-1}{2}\right)\ast}_{A_{(k-1)}} \dket{I}_{\substack{(2k-1)\\(2k-1)^{\prime}}} \otimes T_{\substack{(2k-2)\rightarrow\\(2k-2)^{\prime}}} \right]\\ &\hspace*{0.37cm}\left[\mathbb{1}_{\substack{(2k-3)\\(2k-1)}} \otimes R^{(k-1)\frac{1}{2}\ast}_{A_{(k-1)}} R^{(k-2)(\frac{-1}{2})\ast}_{A_{(k-2)}} \dket{I}_{\substack{(2k-3)\\(2k-3)^{\prime}}} \otimes T_{\substack{(2k-4)\rightarrow\\(2k-4)^{\prime}}} \right] \\
		&=\underbrace{\mathbb{1}_{\substack{(2k-1)\\(2k-3)}}}_{\mathbb{1}_{out}} \otimes R^{k\frac{1}{2}}_{A_{k}} \otimes \underbrace{\mathbb{1}_{Supp(R^{(k-1)\ast})}}_{\subseteq \otimes_{i=0}^{2k-1}{\cal{H}}_{i^{\prime}}} \otimes R^{(k-2)\left(\frac{-1}{2}\right)\ast}_{A_{(k-2)}} \underbrace{\dket{I}_{\substack{(2k-1)\\(2k-1)^{\prime}}} \dket{I}_{\substack{(2k-3)\\(2k-3)^{\prime}}}}_{\dket{I}_{\substack{out\\out^{\prime}}}} \otimes \underbrace{T_{\substack{(2k-2)\rightarrow\\(2k-2)^{\prime}}} \otimes T_{\substack{(2k-4)\rightarrow\\(2k-4)^{\prime}}}}_{T_{in\rightarrow in^{\prime}}}.
	\end{align*}
	By evaluating the entire product of operators $V_{N}\cdots V_{1}$, we get the outcome (which is already foreshadowed in the last equation): $W_{N} = \mathbb{1}_{out} \otimes R^{N\frac{1}{2}\ast}_{A_{N}} \dket{I}_{\substack{out\\out^{\prime}}} \otimes T_{in \rightarrow in^{\prime}}$. Using Stinespring dilation \ref{Stinespring}, we can see that this is an isometry for channel ${\cal{R}}^{(N)}$ and that the entire DQN can be realized as ${\cal{R}}^{(N)}(\varrho) = \Tr_{A_{N}}\left(W_{N} \varrho W_{N}^{\dagger}\right)$.
\end{proof}

\section{Probabilistic Quantum Networks}

The role of quantum channels is in probabilistic quantum networks (PQNs) substituted by completely positive trace non-increasing linear maps (i.e., quantum operations). These networks produce stochastic output based on the outcome of measurement.

PQNs can also be viewed as DAGs. The only difference in comparison to interpreting DQNs as DAGs (as in \ref{DQNDAG}) is the substitution of quantum channels for quantum operations.

The following lemma generalizes relation between quantum operations and channels to quantum networks.
\begin{lemma}
	Sub-normalization of PQN
	\\
	If $R^{N}$ is a Choi operator of PQN ${\cal{R}}^{(N)}$, then there exists DQN ${\cal{S}}^{(N)}$ with corresponding Choi operator $S^{N}$ such that $R^{N} \leq S^{N}$.
\end{lemma}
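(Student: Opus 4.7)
\subparagraph{Proof Plan:}
My plan is to build the dominating DQN vertex by vertex. A PQN with $N$ nodes can be written as a concatenation ${\cal{R}}^{(N)} = {\cal{O}}_{(1)} \star {\cal{O}}_{(2)} \star \cdots \star {\cal{O}}_{(N)}$ of quantum operations, with Choi operators $O_1, \ldots, O_N$. For each quantum operation ${\cal{O}}_{(i)}$ there exists, by the very definition of an instrument, a complementary operation ${\cal{O}}'_{(i)}$ such that ${\cal{S}}_{(i)} := {\cal{O}}_{(i)} + {\cal{O}}'_{(i)}$ is a CPTP map; concretely one may take ${\cal{O}}'_{(i)}(\varrho) = \Tr[(\mathbb{1} - \sum_k A_k^\dagger A_k)\varrho]\,\sigma$ for an arbitrary fixed state $\sigma$, where $\{A_k\}$ are Kraus operators of ${\cal{O}}_{(i)}$. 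Passing to Choi operators, this yields $S_i = O_i + O'_i$ with $O'_i \geq 0$, so in the Loewner order
\begin{align*}
    O_i \leq S_i \qquad \text{for every } i = 1, \ldots, N.
\end{align*}

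The DQN ${\cal{S}}^{(N)} := {\cal{S}}_{(1)} \star \cdots \star {\cal{S}}_{(N)}$ is a legitimate DQN by construction, and its Choi operator is $S^N = S_1 \star \cdots \star S_N$. It remains to show that these inequalities survive the link product, i.e.\ that $R^N = O_1 \star \cdots \star O_N \leq S_1 \star \cdots \star S_N = S^N$.

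The key ingredient is monotonicity of the link product in each slot with respect to the Loewner order. This follows at once from linearity and positivity (lemma \ref{propLink}): if $A \leq A'$ and $B \geq 0$ then $A' - A \geq 0$, so $(A' - A) \star B \geq 0$ and by linearity $A \star B \leq A' \star B$ (and symmetrically in the second argument). Applying this telescopically,
\begin{align*}
    O_1 \star O_2 \star \cdots \star O_N
    &\leq S_1 \star O_2 \star \cdots \star O_N \\
    &\leq S_1 \star S_2 \star \cdots \star O_N
    \leq \cdots \leq S_1 \star S_2 \star \cdots \star S_N,
\end{align*}
which gives $R^N \leq S^N$.

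The one step that deserves care, and which I expect to be the main obstacle, is verifying that each intermediate partial link product $S_1 \star \cdots \star S_k \star O_{k+1} \star \cdots \star O_N$ really is a well-defined positive semi-definite operator on the correct tensor factors. Since all the $S_i$'s and $O_i$'s act on the same sequence of Hilbert spaces and link together along the same shared legs (they come from a common DAG, merely with channels replaced by operations at some vertices), associativity of the link product on disjoint-overlap families, together with the positivity property of lemma \ref{propLink}, ensures that each such intermediate expression is positive and the telescoping argument goes through cleanly. The conclusion $R^N \leq S^N$ then identifies ${\cal{S}}^{(N)}$ as the desired deterministic network.
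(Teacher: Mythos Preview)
Your argument is correct and follows essentially the same route as the paper: decompose the PQN as $O_1\star\cdots\star O_N$, complete each quantum operation to a channel $S_i\geq O_i$, and use linearity and positivity of the link product to push the inequality through the concatenation. The only cosmetic difference is that you telescope explicitly while the paper packages the same monotonicity step as an induction on $N$; your version is in fact slightly more careful in spelling out why $A\leq A'$ and $B\geq 0$ imply $A\star B\leq A'\star B$.
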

\begin{proof}
	For $N = 1$ we have $R^{1}$, which is just a quantum operation. And for quantum operation there is always a channel $S^{1}$ such that $R^{1} \leq S^{1}$.
	
	We shall prove previous lemma through mathematical induction. Let us assume that for $N-1$, for all PQNs $R^{N-1}$ there exists DQN $S^{N-1}$ such that $R^{N-1} \leq S^{N-1}$. Choi operator of PQN is a concatenation of Choi operators of quantum operations $R^{N} = O_{1} \star \cdots \star O_{N}$. For every individual $O_{i}$ there exists a channel $C_{i}$ such that $O_{i} \leq C_{i}$. This fact together with assumption gives us $R^{N-1} \star O_{N} \leq S^{N-1} \star C_{N}$, but $R^{N} = R^{N-1} \star O_{N}$ and $S^{N} = S^{N-1} \star C_{N}$. Thus, previous lemma is proved.
\end{proof}

We shall provide realization theorem also for PQNs.
\begin{theorem}\label{realizationPQN}
	Realization Theorem for PQN
	\\
	Let us have positive semi-definite operator $R^{N} \in {\cal{L}}({\cal{H}}_{out} \otimes {\cal{H}}_{in})$, where ${\cal{H}}_{in} = \otimes_{i=0}^{N-1} {\cal{H}}_{2i}$ and ${\cal{H}}_{out} = \otimes_{i=0}^{N-1} {\cal{H}}_{2i+1}$ and suitable Choi operator of DQN $S^{N}$ such that $R^{N} \leq S^{N}$. Then, PQN ${\cal{R}}^{(N)}$ can be realized as a concatenation of $N$ isometries followed by an effect on an ancillary space:
	\begin{align}\label{RealizationPQN}
		{\cal{R}}^{(N)}(\varrho) = \Tr_{A_{N}} \left(V_{N} V_{N-1} \cdots V_{1} \varrho V_{1}^{\dagger} \cdots V_{N-1}^{\dagger} V_{N}^{\dagger} \mathbb{1}_{2N-1} \otimes E_{A_{N}}\right),
	\end{align}
	where $E_{A_{N}}$ denotes mentioned effect.
\end{theorem}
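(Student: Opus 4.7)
The plan is to bootstrap from the realization theorem for DQNs (Theorem \ref{realitzationTheoremDQN}) applied to the dominating Choi operator $S^N$, and then to absorb the gap between $R^N$ and $S^N$ into a single effect on the terminal ancillary space, in direct analogy with how a quantum operation is recovered from its dominating channel via a POVM element in the instrument realization theorem.

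First I would apply Theorem \ref{realitzationTheoremDQN} to $S^N$: since $S^N$ is positive semi-definite and satisfies the DQN normalization condition (\ref{normalizationDQN}) by assumption, it factorizes as a concatenation of isometries $V_1,\ldots,V_N$ with $V_k\in{\cal L}({\cal H}_{2k-2}\otimes{\cal H}_{A_{k-1}},{\cal H}_{2k-1}\otimes{\cal H}_{A_k})$ and ancillary spaces ${\cal H}_{A_k}=\supp(S^{k\ast}_{A_k})$. Composing them yields the total isometry $W_N=V_N\cdots V_1$ from ${\cal H}_{in}$ to ${\cal H}_{out}\otimes{\cal H}_{A_N}$ with $W_N=\mathbb{1}_{out}\otimes S^{N\frac12\ast}_{A_N}\dket{I}_{\substack{out\\out'}}\otimes T_{in\to in'}$, realizing the dominating DQN as $\mathcal{S}^{(N)}(\varrho)=\Tr_{A_N}(W_N\varrho W_N^{\dagger})$.

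Next I would define the candidate effect on the ancilla by
\begin{align*}
E_{A_N}=\bigl(S^N_{A_N}\bigr)^{-\frac{1}{2}T}\,R^{NT}\,\bigl(S^N_{A_N}\bigr)^{-\frac{1}{2}T},
\end{align*}
where the inverse is the Moore--Penrose pseudoinverse on $\supp(S^{N\ast}_{A_N})$ (well defined because $\supp(R^N)\subseteq\supp(S^N)$ is forced by $R^N\le S^N$). Positivity $E_{A_N}\ge 0$ is immediate from $R^N\ge 0$, while the bound $E_{A_N}\le\mathbb{1}_{A_N}$ follows from sandwiching the inequality $R^N\le S^N$ by $\bigl(S^N_{A_N}\bigr)^{-\frac{1}{2}T}$ on both sides, so $E_{A_N}$ is a legitimate effect.

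The verification that $\mathcal{R}^{(N)}(\varrho)=\Tr_{A_N}(W_N\varrho W_N^{\dagger}(\mathbb{1}_{2N-1}\otimes E_{A_N}))$ then proceeds by plugging in the explicit form of $W_N$ and repeating, almost verbatim, the chain of manipulations used at the end of the proofs of Theorem \ref{Stinespring} and of the instrument realization theorem: use Hermiticity and cyclicity of trace to collapse the two copies of $S^{N\frac12\ast}_{A_N}$ around $E_{A_N}$ into $R^{N\ast}$, convert complex conjugation on the state-channel-duality bracket into a partial transposition using $\Tr(X^{\ast})=[\Tr(X)]^{\ast}$, and finally recognize the resulting expression as the inverse Choi formula (\ref{inverseChoi}) $\Tr_{in}[R^N(\mathbb{1}_{out}\otimes\varrho^T)]=\mathcal{R}^{(N)}(\varrho)$.

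The main obstacle I anticipate is bookkeeping around supports and pseudoinverses: one has to make sure the ancillary space ${\cal H}_{A_N}$ inherited from the DQN realization is large enough so that $E_{A_N}$ is defined on the whole of it (which is why choosing ${\cal H}_{A_N}=\supp(S^{N\ast}_{A_N})$ rather than some smaller space is essential), and that the partial transpositions sprinkled across the isometry $W_N$, the effect $E_{A_N}$, and the inverse Choi map all conspire consistently; once that is in place, the algebraic core of the argument is routine.
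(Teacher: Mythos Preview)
Your proposal is correct and follows essentially the same approach as the paper: both apply the DQN realization (Theorem \ref{realitzationTheoremDQN}) to the dominating $S^N$ to obtain the isometries $V_k$ and the composite $W_N$, define the effect $E_{A_N}=(S^N)^{-\frac12\ast}R^{N\ast}(S^N)^{-\frac12\ast}$ on the ancilla (your transposed version agrees with the paper's complex-conjugated one since all operators involved are Hermitian), and then verify the formula by the same trace manipulations. Your explicit check that $E_{A_N}$ is a genuine effect and your remark on pseudoinverses/supports are details the paper glosses over.
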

\begin{figure}[H]
	\begin{center}
		\begin{quantikz}
			\rstick{\hspace{-0.1cm}$0$\\~}
			&\gate[wires=2, nwires = {2}][2cm]{\mathcal{V}_{(1)}}
			&\qw \rstick{\hspace{-0.6cm}$1$\\~}
			& 
			&\rstick{\hspace{-0.1cm}$2$\\~}
			&\gate[wires=2][2cm]{\mathcal{V}_{(2)}}
			&\rstick{\hspace{-0.6cm}$3$\\~}\qw
			&\ \ldots\ 
			&\qw\rstick{\hspace{-0.9cm}$2N-2$\\~}
			&\gate[wires=2][2cm]{\mathcal{V}_{(N)}}
			&\rstick{\hspace{-0.6cm}$2N-1$\\~}\qw 
			&
			\\ 
			{}
			&
			&\qw \rstick{\hspace{0.2cm}$A_{1}$\\~}
			&\qw
			&\qw
			&
			&\rstick{\hspace{-0.6cm}$A_{2}$\\~}\qw
			&\ \ldots\ 
			&\qw\rstick{\hspace{-0.6cm}$A_{N-1}$\\~}
			&
			&\rstick{\hspace{-0.6cm}$A_{N}$\\~}\qw 
			&\meterD{E_{A_{N}}}
		\end{quantikz}
	\end{center}
	\caption[font=small]{Depiction of PQN with $N$ channels realized by isometric channels ${\cal{V}}_{(i)}$ and an effect $E_{A_{N}}$.}
	\label{realPQN}
\end{figure}
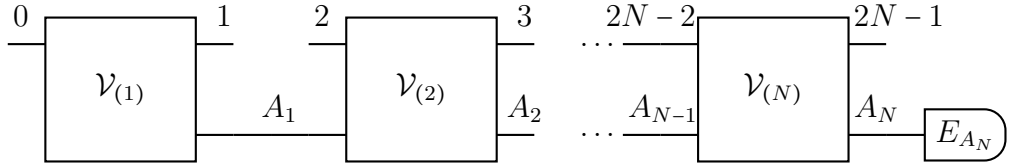
\begin{proof}
	Let us denote PQN with $N$ vertices as ${\cal{R}}^{(N)}$ and DQN as ${\cal{S}}^{(N)}$. Similarly, as in proof of realization theorem for DQN \ref{realitzationTheoremDQN}, we define ${\cal{H}}_{A_{k}} = Supp(S^{k\ast}_{A_{k}})$ and isometry:
	\begin{align}\label{isomPQN}
		V_{k} = \mathbb{1}_{2k-1} \otimes 	S^{k\frac{1}{2}\ast}_{A_{k}}S^{(k-1)(\frac{-1}{2})\ast}_{A_{(k-1)}} \dket{I}_{\substack{(2k-1)\\(2k-1)^{\prime}}} \otimes T_{(2k-2)\rightarrow(2k-2)^{\prime}}.
	\end{align}
	 For simpler notation, let us also define $W_{N} = V_{N}\cdots V_{1} = \mathbb{1}_{out} \otimes S^{N\frac{1}{2}\ast}_{A_{N}} \dket{I}_{\substack{out\\out^{\prime}}} \otimes T_{in \rightarrow in^{\prime}}$. In addition, we also need an effect on an ancillary space $E_{A_{N}} = S^{N\left(\frac{-1}{2}\right)\ast}_{A_{N}} R^{N\ast}_{A_{N}} S^{N\left(\frac{-1}{2}\right)\ast}_{A_{N}}$. And now, only the calculation of expression \ref{RealizationPQN} awaits us:
	\begin{align*}
		&\Tr_{A_{N}}\left[W_{N} \varrho W_{N}^{\dagger} E_{A_{N}}\right] \\&= \Tr_{A_{N}} \left[\left( \mathbb{1}_{out} \otimes S^{N\frac{1}{2}\ast}_{A_{N}} \dket{I}_{\substack{out\\out^{\prime}}} \otimes T_{\substack{in \rightarrow\\in^{\prime}}} \right) \varrho_{in} \left( T_{\substack{in^{\prime} \rightarrow\\in}} \otimes \prescript{}{\substack{out\\out^{\prime}}}{\dbra{I}} \mathbb{1}_{out} \otimes S^{N\frac{1}{2}\ast}_{A_{N}} \right) \left( S^{N\left(\frac{-1}{2}\right)\ast}_{A_{N}} R^{N\ast}_{A_{N}} S^{N\left(\frac{-1}{2}\right)\ast}_{A_{N}} \right)\right] \\&\overset{(i)}{=} \Tr_{A_{N}}\left[\left(\mathbb{1}_{out} \otimes R^{N\ast}_{A_{N}}\right)  \left(\dket{I}_{\substack{out\\out^{\prime}}}\dbra{I} \otimes \varrho_{in^{\prime}}\right)\right] \overset{(ii)}{=} \Tr_{\substack{in^{\prime}\\out^{\prime}}} \left[\left(\mathbb{1}_{out} \otimes R^{N}_{\substack{in^{\prime}\\out^{\prime}}}\right) \left(\dket{I}_{\substack{out\\out^{\prime}}}\dbra{I}^{T_{out^{\prime}}} \otimes \varrho_{in^{\prime}}^{T_{in^{\prime}}}\right)\right] \\&\overset{(iii)}{=} \sum_{i,j} \Tr_{in^{\prime}} \left[ \ket{i}_{out} \prescript{}{out^{\prime}}{\bra{i}} R^{N}_{\substack{in^{\prime}\\out^{\prime}}} \ket{j}_{out^{\prime}} \prescript{}{out}{\bra{j}} \otimes \varrho_{in^{\prime}}^{T_{in^{\prime}}} \right] = \Tr_{in^{\prime}} \left[ R^{N}_{\substack{in^{\prime}\\out}} \left(\mathbb{1}_{out} \otimes \varrho_{in^{\prime}}^{T_{in^{\prime}}}\right)\right] \overset{(\ref{inverseChoi})}{=} {\cal{R}}^{N}(\varrho),
	\end{align*}
		where in $(i)$ we have used cyclicity of $Tr_{A_{N}}$. In $(ii)$ we could change space, because ${\cal{H}}_{A_{N}} \subseteq \otimes_{i=0}^{2N-1}{\cal{H}}_{i^{\prime}} \cong {\cal{H}}_{in^{\prime}} \otimes {\cal{H}}_{out^{\prime}}$, we have also used the invariance of trace under transposition and Hermiticity of $R^{N}$. In $(iii)$ we have transposed the expression and traced out the space ${\cal{H}}_{out^{\prime}}$.
\end{proof}

\section{Generalized Quantum Instrument and Quantum Tester}

Generalized quantum instrument (GQI), as already the name indicates, fulfills the analogous function for quantum networks as does quantum instrument for channels. Quantum tester generalizes the notion of a POVM for quantum networks. Therefore, GQI has similar relation with quantum tester, where we are measuring quantum networks rather than quantum states, as quantum instrument has with POVM.
\begin{definition}
	Generalized Quantum Instrument
	\\
	Collection of probabilistic quantum networks $\{{\cal{R}}_{i}^{(N)}\}$, such that they sum up to deterministic quantum network $\sum_{i}{\cal{R}}_{i}^{(N)} = {\cal{R}}^{(N)}_{DQN}$, form generalized quantum instrument.
\end{definition}
Let us also state realization theorem for GQI:
\begin{theorem}\label{realizationGQI}
	Realization Theorem for Generalized Quantum Instrument
	\\
	Let $\{{\cal{R}}_{i}^{(N)}, {\cal{R}}_{i}^{(N)} \in {\cal{L}}({\cal{L}}({\cal{H}}_{in}),{\cal{L}}({\cal{H}}_{out}))\}, {\cal{R}}^{(N)} = \sum_{i}{\cal{R}}_{i}^{(N)}$ be a GQI. Then there exists a Hilbert space ${\cal{H}}_{A_{N}}$, deterministic quantum network ${\cal{S}}^{(N)} \in {\cal{L}}({\cal{L}}({\cal{H}}_{in}),{\cal{L}}({\cal{H}}_{out} \otimes {\cal{H}}_{A_{N}}))$ and a POVM $P_{i} \in {\cal{L}}({\cal{H}}_{A_{N}})$ such that for any $\varrho$:
	\begin{align*}
		{\cal{R}}_{i}^{(N)}(\varrho) = \Tr_{A_{N}}\left[S^{N}(\varrho) \left(\mathbb{1}_{out} \otimes P_{i}\right) \right].
	\end{align*}
\end{theorem}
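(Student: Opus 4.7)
The plan is to reduce the statement to two already-established results: the realization theorem for DQN (Theorem \ref{realitzationTheoremDQN}), which supplies the isometric dilation, and the POVM-extraction trick used in the single-channel quantum-instrument realization theorem, which is what turns one network plus one measurement into a whole family of sub-normalized ones. Because $\{\mathcal{R}_{i}^{(N)}\}$ is a GQI, the sum $\mathcal{R}^{(N)} := \sum_{i} \mathcal{R}_{i}^{(N)}$ is itself a DQN, so its Choi operator $S^{N}$ satisfies the normalization condition (\ref{normalizationDQN}). This is the object I would feed into Theorem \ref{realitzationTheoremDQN}.

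First, applying that theorem to $\mathcal{R}^{(N)}$ produces the ancillary Hilbert space $\mathcal{H}_{A_{N}} = \supp(S^{N\ast}_{A_{N}})$ together with isometries $V_{1},\ldots,V_{N}$ whose concatenation $W_{N} = V_{N}\cdots V_{1} = \mathbb{1}_{out}\otimes S^{N\frac{1}{2}\ast}_{A_{N}}\dket{I}_{out\,out^{\prime}}\otimes T_{in\to in^{\prime}}$ satisfies $\mathcal{R}^{(N)}(\varrho) = \Tr_{A_{N}}(W_{N}\varrho W_{N}^{\dagger})$. I would then \emph{define} the DQN of the theorem by
$$
\mathcal{S}^{(N)}(\varrho) := W_{N}\varrho W_{N}^{\dagger},
$$
i.e.\ the same dilation but without tracing out $\mathcal{H}_{A_{N}}$; being an isometric channel, $\mathcal{S}^{(N)}$ is automatically a DQN from $\mathcal{L}(\mathcal{H}_{in})$ into $\mathcal{L}(\mathcal{H}_{out}\otimes\mathcal{H}_{A_{N}})$. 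Next, borrowing the construction from the proof of the quantum-instrument realization theorem, I would define
$$
P_{i} := S^{N(-\frac{1}{2})\ast}_{A_{N}}\, R^{N\ast}_{i,A_{N}}\, S^{N(-\frac{1}{2})\ast}_{A_{N}},
$$
where $R_{i}^{N}$ is the Choi operator of $\mathcal{R}_{i}^{(N)}$. Positivity of each $P_{i}$ is inherited from $R_{i}^{N}\geq 0$ via $X\mapsto Y^{\dagger}XY$, and completeness $\sum_{i}P_{i}=\mathbb{1}_{A_{N}}$ follows from $\sum_{i}R_{i}^{N}=S^{N}$ together with the fact that $S^{N(-\frac{1}{2})\ast}_{A_{N}}$ is a genuine inverse square root on $\mathcal{H}_{A_{N}}=\supp(S^{N\ast}_{A_{N}})$.

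Finally, the identity $\mathcal{R}_{i}^{(N)}(\varrho) = \Tr_{A_{N}}[\mathcal{S}^{(N)}(\varrho)(\mathbb{1}_{out}\otimes P_{i})]$ would be checked by direct substitution of $W_{N}$ and $P_{i}$, which is essentially the same computation as in the proof of Theorem \ref{realizationPQN} with the effect $E_{A_{N}}$ replaced by $P_{i}$: the two factors $S^{N\frac{1}{2}\ast}_{A_{N}}$ coming from $W_{N}$ and $W_{N}^{\dagger}$ cancel the two inverse square roots hidden in $P_{i}$, leaving $R^{N\ast}_{i,A_{N}}$ sandwiched between $\dket{I}_{out\,out^{\prime}}\dbra{I}$ and $\varrho_{in^{\prime}}$, and a final application of the inverse Choi–Jamio\l{}kowski isomorphism (\ref{inverseChoi}) collapses this to $\mathcal{R}_{i}^{(N)}(\varrho)$. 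The main obstacle I expect is not conceptual but bookkeeping: keeping track of the $T_{a\to a^{\prime}}$ swap operators, partial transpositions, and traces across all $N$ tensor factors, and in particular making sure every use of $S^{N(-\frac{1}{2})\ast}_{A_{N}}$ is genuinely on $\supp(S^{N\ast}_{A_{N}})$ so that it is well-defined; once one has digested the single-channel instrument realization, the $N$-step version is just that calculation carried out layer by layer.
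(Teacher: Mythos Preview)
Your proposal is correct and follows essentially the same approach as the paper: dilate the sum DQN via the isometries of Theorem \ref{realitzationTheoremDQN}, define $P_{i}$ as the inverse-square-root sandwich of $R_{i}^{N\ast}$, and verify the identity by the same computation as in Theorem \ref{realizationPQN}. The only cosmetic difference is that the paper writes $R^{N}$ for the Choi operator of the sum $\mathcal{R}^{(N)}$ where you write $S^{N}$, but the construction and verification are identical.
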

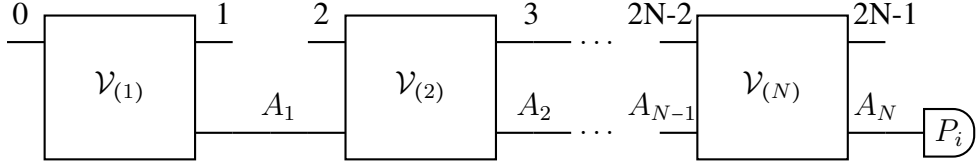
\begin{figure}[H]
	\begin{center}
		\begin{quantikz}
			\rstick{\hspace{-0.1cm}0\\~}
			&\gate[2, nwires = {2}][2cm]{\mathcal{V}_{(1)}}
			&\qw \rstick{\hspace{-0.6cm}1\\~}
			& 
			&\rstick{\hspace{-0.1cm}2\\~}
			&\gate[2][2cm]{\mathcal{V}_{(2)}}
			&\rstick{\hspace{-0.4cm}3\\~}\qw
			&\ \ldots\ \qw			
			&\rstick{\hspace{-0.6cm}2N-2\\~}
			&\gate[2][2cm]{\mathcal{V}_{(N)}}
			&\rstick{\hspace{-0.6cm}2N-1\\~}\qw 
			&
			\\ 
			{}
			&
			&\qw \rstick{\hspace{0.2cm}$A_{1}$\\~}
			&\qw
			&\qw
			&
			&\rstick{\hspace{-0.4cm}$A_{2}$\\~}\qw
			&\ \ldots\ \qw			
			&\rstick{\hspace{-0.6cm}$A_{N-1}$\\~}
			&
			&\rstick{\hspace{-0.6cm}$A_{N}$\\~}\qw 
			&\meterD{P_{i}}
		\end{quantikz}
	\end{center}
	\caption[font=small]{Depiction of a generalized quantum instrument with $N$ channels realized by isometric channels ${\cal{V}}_{(i)}$ and a POVM on an ancillary space $P_{i}$.}
	\label{realGQI}
\end{figure}
\begin{proof}
	Proof is analogous to the proof of realization theorem of PQN \ref{realizationPQN}. Let us define ${\cal{H}}_{A_{k}} = \supp(R^{k\ast}_{A_{k}})$, isometry $V_{k}$ exactly like in the proof of realization theorem for PQN in (\ref{isomPQN}), DQN ${\cal{S}}^{(N)} = {\cal{V}}_{(1)}\star \cdots \star{\cal{V}}_{(N)}$ and POVM $P_{i, A_{N}} = R^{N\left(\frac{-1}{2}\right)\ast}_{A_{N}} R^{N\ast}_{i, A_{N}} R^{N\left(\frac{-1}{2}\right)\ast}_{A_{N}}$, where $R^{N}_{i, A_{N}}$ denotes PQN corresponding to the $i$-th element of the POVM $P_{i,A_{N}}$ acting on Hilbert space ${\cal{H}}_{A_{N}}$. Then, we just have to verify, similarly to \ref{realizationPQN}, the following:
	\begin{align*}
		&\Tr_{A_{N}} \left[S^{N}(\varrho) P_{i}\right] = \Tr_{A_{N}} \left[V_{N} \cdots V_{1} \varrho V_{1}^{\dagger} \cdots V_{N}^{\dagger} P_{i, A_{N}}\right] \\&= \Tr_{A_{N}} \left[\left( \mathbb{1}_{out} \otimes R^{N\frac{1}{2}\ast}_{A_{N}} \right)\left( \varrho_{in^{\prime}} \otimes \dket{I}_{\substack{out\\out^{\prime}}}\dbra{I} \right)\left( \mathbb{1}_{out} \otimes R^{N\frac{1}{2}\ast}_{A_{N}} \right)\left( \mathbb{1}_{out} \otimes R^{N\left(\frac{-1}{2}\right)\ast}_{A_{N}} R^{N\ast}_{i, A_{N}} R^{N\left(\frac{-1}{2}\right)\ast}_{A_{N}} \right)\right] \\&\overset{(i)}{=} \Tr_{{\substack{in^{\prime}\\out^{\prime}}}} \left[\left( \varrho_{in^{\prime}} \otimes \dket{I}_{\substack{out\\out^{\prime}}}\dbra{I} \right) \left( \mathbb{1}_{out} \otimes R_{i, \substack{in^{\prime}\\out^{\prime}}}^{N\ast}\right)\right] \overset{(ii)}{=} \Tr_{in^{\prime}} \left[\left(\mathbb{1}_{out} \otimes \varrho_{in^{\prime}}^{T}\right)R_{i,\substack{in^{\prime}\\out}}^{N}\right] \overset{(\ref{inverseChoi})}{=} R_{i}^{(N)}(\varrho).
	\end{align*}
	In $(i)$ we have used that ${\cal{H}}_{A_{N}} \subseteq \otimes_{i=0}^{2N-1} H_{i^{\prime}} \cong {\cal{H}}_{in^{\prime}} \otimes           {\cal{H}}_{out^{\prime}}$ and cyclic property of trace. In $(ii)$ we have used the invariance of trace under transposition and Hermiticity of $R^{N}$ and we have also evaluated $\Tr_{out^{\prime}}$.
\end{proof}

As was already mentioned, quantum tester generalizes measurements for QNs. It is just a GQI with one-dimensional output. Quantum $N$-tester takes as an input quantum network with $N$ vertices and outputs probability.
\begin{definition}
	Quantum $N$-tester
	\\
	A quantum $N$-tester ${\cal{T}}^{(N)}$ is a generalized quantum instrument $\left\{{\cal{R}}_{i}^{(N)}\right\}$ such that $\dim({\cal{H}}_{0}) = \dim({\cal{H}}_{out}) = 1$, where ${\cal{H}}_{0}$ is the very first input and ${\cal{H}}_{out}$ is the output of tester.
\end{definition}
Based on normalization lemma of DQN \ref{normalizationConditionDQN}, quantum tester must fulfill the following conditions:
\begin{align*}
	&T^{N} = \mathbb{1}_{2N-2} \otimes T^{N-1},\\
	&\Tr_{2k-1}(T^{k}) = \mathbb{1}_{2k-2} \otimes T^{k-1} \quad\text{for } k = 2, \cdots (N-1), \\
	&\Tr_{1}(T^{1}) \overset{\text{one-dimensional first input}}{=} 1.
\end{align*}
We can see that $T^{1}$ is a quantum state because it is positive semi-definite operator with trace one. And let us also, unsurprisingly, state the realization theorem for quantum tester:
\begin{theorem}\label{realizationTester}
	Realization Theorem for Quantum $N$-tester
	\\
	Quantum $N$-tester ${\cal{T}}^{(N)}$ can be realized by a deterministic quantum network ${\cal{R}}^{(N)}$ with first input $\dim({\cal{H}}_{0}) = 1$ followed by a POVM on output Hilbert space ${\cal{H}}_{out}$.
\end{theorem}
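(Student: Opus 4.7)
The plan is to deduce this realization theorem as a direct specialization of the realization theorem for generalized quantum instruments (Theorem \ref{realizationGQI}), exploiting the two extra dimensional constraints that define a tester, namely $\dim({\cal{H}}_{0}) = 1$ and $\dim({\cal{H}}_{out}) = 1$. Since, by definition, every quantum $N$-tester ${\cal{T}}^{(N)} = \{{\cal{R}}_{i}^{(N)}\}$ is a GQI, the bulk of the work has already been done in Theorem \ref{realizationGQI}, and what remains is to absorb the trivial output space into the ancilla.

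First I would apply Theorem \ref{realizationGQI} to $\{{\cal{R}}_{i}^{(N)}\}$. This supplies an ancillary Hilbert space ${\cal{H}}_{A_{N}}$, a DQN ${\cal{S}}^{(N)} \in {\cal{L}}({\cal{L}}({\cal{H}}_{in}),{\cal{L}}({\cal{H}}_{out} \otimes {\cal{H}}_{A_{N}}))$, and a POVM $\{P_{i}\} \subset {\cal{L}}({\cal{H}}_{A_{N}})$ satisfying
\begin{align*}
{\cal{R}}_{i}^{(N)}(\varrho) = \Tr_{A_{N}}\!\left[S^{N}(\varrho)\,(\mathbb{1}_{out} \otimes P_{i})\right].
\end{align*}
Using the tester constraint $\dim({\cal{H}}_{out}) = 1$, one has the canonical isomorphism ${\cal{H}}_{out} \otimes {\cal{H}}_{A_{N}} \cong {\cal{H}}_{A_{N}}$ and $\mathbb{1}_{out}$ is simply the scalar $1$, so the formula collapses to ${\cal{R}}_{i}^{(N)}(\varrho) = \Tr_{A_{N}}[S^{N}(\varrho)\,P_{i}]$. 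The other tester constraint $\dim({\cal{H}}_{0}) = 1$ is preserved automatically, because the realization procedure of Theorem \ref{realizationGQI} does not modify the input wires of the network.

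Next I would relabel ${\cal{H}}_{A_{N}}$ as the output Hilbert space of the realizing deterministic network, setting ${\cal{R}}^{(N)} := {\cal{S}}^{(N)}$ viewed as a DQN from ${\cal{L}}({\cal{H}}_{in})$ into ${\cal{L}}({\cal{H}}_{A_{N}})$. The normalization chain $\Tr_{2k-1}(S^{k}) = \mathbb{1}_{2k-2} \otimes S^{k-1}$ from Lemma \ref{normalizationConditionDQN} is inherited by this relabeled network, and the marginalization at the final step $\Tr_{A_{N}}$ replaces the old $\Tr_{2N-1}$, so ${\cal{R}}^{(N)}$ is genuinely a DQN whose first input is one-dimensional. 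By construction, composing this DQN with the POVM $\{P_{i}\}$ on its output reproduces the probabilities $\Tr[{\cal{R}}_{i}^{(N)}(\varrho)] = \Tr_{A_{N}}[R^{(N)}(\varrho)\,P_{i}]$ prescribed by the tester on any input $\varrho$.

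The argument is almost entirely bookkeeping, and I do not expect any genuine obstacle beyond keeping track of Hilbert space identifications. The only point requiring mild care is checking that, after ${\cal{H}}_{out}$ is absorbed into the ancilla, the tester normalization conditions stated just above Theorem \ref{realizationTester} are equivalent to the DQN normalization conditions of Lemma \ref{normalizationConditionDQN} applied to the relabeled network ${\cal{R}}^{(N)}$; this follows immediately from the one-dimensionality of ${\cal{H}}_{0}$ and ${\cal{H}}_{out}$ together with $\sum_{i} P_{i} = \mathbb{1}_{A_{N}}$.
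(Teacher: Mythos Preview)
Your proposal is correct and follows essentially the same approach as the paper: both invoke the GQI realization theorem \ref{realizationGQI} (since a tester is by definition a GQI), then use the one-dimensionality of ${\cal{H}}_{out}$ to absorb it into the ancilla and relabel ${\cal{H}}_{A_{N}}$ as the new output space. The paper adds one small observation you omit, namely that $\dim({\cal{H}}_{0})=1$ forces the first isometry $V_{1}$ to be just a state preparation, but this is a cosmetic remark rather than an additional proof step.
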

\begin{figure}[H]
	\begin{center}
		\begin{quantikz}
			\rstick{\hspace{-0.1cm}0\\~}
			&\gate[2, nwires = {2}][2cm]{\mathcal{V}_{(1)}}
			&\qw \rstick{\hspace{-0.6cm}1\\~}
			& 
			&\rstick{\hspace{-0.1cm}2\\~}
			&\gate[2][2cm]{\mathcal{V}_{(2)}}
			&\rstick{\hspace{-0.4cm}3\\~}\qw
			&\ \ldots\ \qw			
			&\rstick{\hspace{-0.6cm}2N-2\\~}
			&\gate[2][2cm]{\mathcal{V}_{(N)}}
			&\rstick{\hspace{-0.6cm}$A_{N}$\\~}\qw 
			&\meterD{P_{i}}
			\\ 
			{}
			&
			&\qw \rstick{\hspace{0.2cm}$A_{1}$\\~}
			&\qw
			&\qw
			&
			&\rstick{\hspace{-0.4cm}$A_{2}$\\~}\qw
			&\ \ldots\ \qw			
			&\rstick{\hspace{-0.6cm}$A_{N-1}$\\~}
			&
			&
			&
		\end{quantikz}
	\end{center}
	\caption[font=small]{Depiction of a generalized quantum $N$-tester realized through isometric channels ${\cal{V}}_{(i)}$ with POVM $P_{i}$ on an ancillary space ${\cal{H}}_{A_{N}}$. Compared with figure \ref{realGQI}, ancillary space ${\cal{H}}_{A_{N}}$ is in a place where originally there was a Hilbert space ${\cal{H}}_{2N-1}$.}
	\label{realTester}
\end{figure}
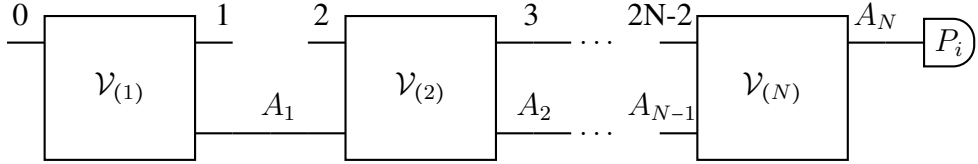
\begin{proof}
	Because quantum tester is only a generalized quantum instrument, the proof is the same as is proof of realization theorem of generalized quantum instrument \ref{realizationGQI}. We shall only relabel ancillary space as ${\cal{H}}_{A_{N}} = {\cal{H}}_{2N-1}$ as can be seen from the schemes of the respective realizations in figures \ref{realGQI} and \ref{realTester}. Since the first input space ${\cal{H}}_{0}$ is one-dimensional, the first isometry is only a preparation of a state $\dket{\Psi} = \left(\mathbb{1}_{1} \otimes R^{1\frac{1}{2}\ast}_{A_{1}}\right) \dket{I}_{11^{\prime}}$, which comes from state-channel duality from equation (\ref{stateChannelDuality}).
\end{proof}

\section{Composition of Quantum Networks}

In this section, we shall investigate the outcome of composition of quantum networks. They can be viewed as directed acyclic graphs, which means that also their composition has to remain DAG. If there are two directed edges connecting the same vertices, we denote them by the same integer as can be seen from example in figure \ref{compositionDAG}.
\begin{figure}
	\vspace*{-2cm}
\begin{center}
	\begin{tikzpicture}
		\Vertex[x=-5.5,opacity=0,label=$2$,fontscale=1.5]{2}
		\Vertex[x=-4,opacity=0,label=$5$,fontscale=1.5]{5}
		
		\Edge[Direct,label=2,position=above,fontscale=1.5](2)(5)
		
		\Vertex[x=-6.5,y=-1.5,style={color=white}]{E}
		\Vertex[x=-2,y=-1.5,style={color=white}]{F}
		\Vertex[x=-2,style={color=white}]{G}
		
		\Edge[Direct,label=1,position={above left=1mm},fontscale=1.5](E)(2)
		\Edge[Direct,label=8,position={below left=1mm},fontscale=1.5](5)(F)
		\Edge[Direct,label=9,position=above,fontscale=1.5](5)(G)
		
		\Text[x=-2,fontsize=\LARGE]{$\star$}
		
		\Vertex[x=0,opacity=0,label=$1$,fontscale=1.5]{1}
		\Vertex[x=1.5,opacity=0,label=$3$,fontscale=1.5]{3}
		\Vertex[x=1.5,y=1.5,opacity=0,label=$4$,fontscale=1.5]{4}
		\Vertex[x=3,opacity=0,label=$6$,fontscale=1.5]{6}
		
		\Edge[Direct,label=3,position=above,fontscale=1.5](1)(3)
		\Edge[Direct,label=5,position=left,fontscale=1.5](3)(4)
		\Edge[Direct,label=7,position=above,fontscale=1.5](3)(6)
		
		\Vertex[x=-1.5,style={color=white}]{A}
		\Vertex[x=0.5,y=1.5,style={color=white}]{B}
		\Vertex[x=1,y=-1.5,style={color=white}]{C}
		\Vertex[x=3,y=1.5,style={color=white}]{D}
		\Vertex[x=3,y=1.5,style={color=white}]{H}
		\Vertex[x=4.5,style={color=white}]{I}
		
		\Edge[Direct,label=0,position=above,fontscale=1.5](A)(1)
		\Edge[Direct,label=1,position=left,fontscale=1.5](1)(B)
		\Edge[Direct,label=4,position=left,fontscale=1.5](C)(3)
		\Edge[Direct,label=6,position=above,fontscale=1.5](4)(D)
		\Edge[Direct,label=8,position=right,fontscale=1.5](H)(6)
		\Edge[Direct,label=10,position=above,fontscale=1.5](6)(I)
		
		\Text[x=5,fontsize=\LARGE]{$=$}
		
		\Vertex[x=-3,y=-5,opacity=0,label=$1$,fontscale=1.5]{01}
		\Vertex[x=-2,y=-3.5,opacity=0,label=$2$,fontscale=1.5]{02}
		\Vertex[x=-2,y=-6.5,opacity=0,label=$3$,fontscale=1.5]{03}
		\Vertex[x=-1,y=-5,opacity=0,label=$4$,fontscale=1.5]{04}
		\Vertex[x=1,y=-4,opacity=0,label=$5$,fontscale=1.5]{05}
		\Vertex[x=1,y=-6.5,opacity=0,label=$6$,fontscale=1.5]{06}
		
		\Edge[Direct,label=1,position=left,fontscale=1.5](01)(02)
		\Edge[Direct,label=3,position=right,fontscale=1.5](01)(03)
		\Edge[Direct,label=5,position={above left=2mm},fontscale=1.5](03)(04)
		\Edge[Direct,label=2,position=above,fontscale=1.5](02)(05)
		\Edge[Direct,label=7,position=above,fontscale=1.5](03)(06)
		\Edge[Direct,label=8,position=right,fontscale=1.5](05)(06)
		
		\Vertex[x=-4.5,y=-5,style={color=white}]{H}
		\Vertex[x=-3,y=-7.5,style={color=white}]{I}
		\Vertex[x=0.5,y=-5,style={color=white}]{J}
		\Vertex[x=2.5,y=-3.5,style={color=white}]{K}
		\Vertex[x=2.5,y=-6.5,style={color=white}]{L}
		
		\Edge[Direct,label=0,position=above,fontscale=1.5](H)(01)
		\Edge[Direct,label=4,position={above left=1mm},fontscale=1.5](I)(03)
		\Edge[Direct,label=6,position=above,fontscale=1.5](04)(J)
		\Edge[Direct,label=9,position=above,fontscale=1.5](05)(K)
		\Edge[Direct,label=10,position=above,fontscale=1.5](06)(L)
	\end{tikzpicture}
\end{center}
\caption[font=small]{Composition of two DAGs has to, again, end up being a DAG in order to obtain relevant quantum network. On both of the first two DAGs, there are edges denoted with $1$, because in the final DAG, they denote the same edge.}
\label{compositionDAG}
\end{figure}
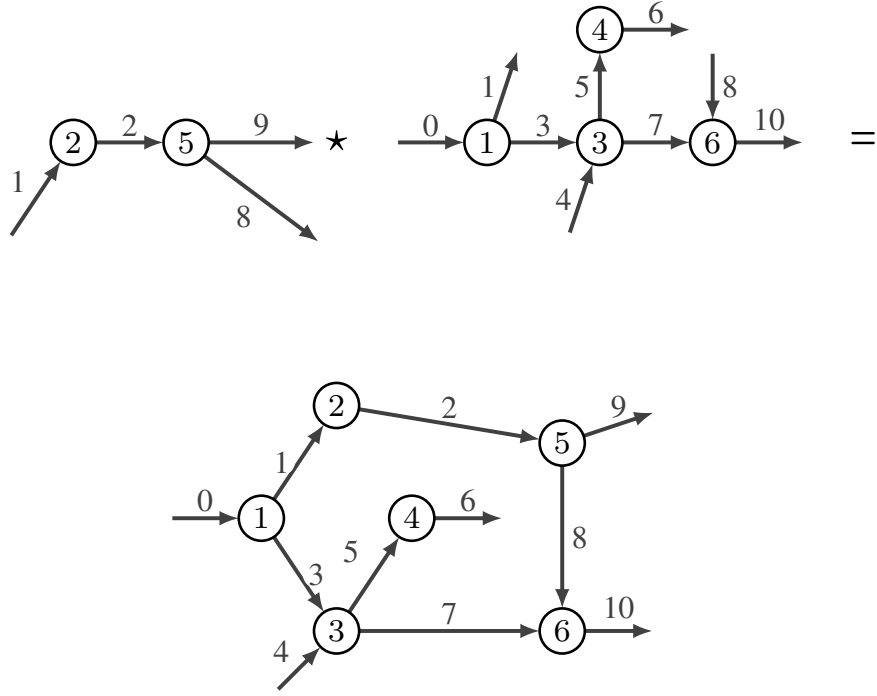
There exists a partial ordering in DAGs that can be changed to total order. But there is no such relation between the vertices of two different quantum networks ${\cal{R}}^{(M)}$ and ${\cal{S}}^{(N)}$. Fortunately, it is possible to define a total ordering of vertices from the union of sets of vertices ${\cal{R}}^{(M)} \cup {\cal{S}}^{(N)}$. Therefore, we can obtain new quantum network from composition of two previous ones as is depicted in the figure \ref{compositionQN}.
\begin{figure}[H]
	\begin{center}
		\begin{quantikz}
			\rstick{\hspace{-0.1cm}1\\~}
			&\gate[2, nwires = {2}][1cm]{\mathcal{C}_{(2)}}
			&\rstick{\hspace{-0.2cm}2\\~}\qw
			&\gate[2, nwires = {2}][1cm]{\mathcal{C}_{(5)}}
			&\rstick{\hspace{-1.5cm}9\\~}\qw
			&\rstick{\hspace{0cm}0\\~}
			&\gate[2, nwires = {2}][1cm]{\mathcal{C}_{(1)}} \rstick{\hspace{-0cm}1\\~}  
			&\rstick{\hspace{1cm}4\\~}\qw
			&
			&\gate[2][1cm]{\mathcal{C}_{(3)}}
			&\rstick{\hspace{-0.3cm}5\\~}\qw 
			&\gate[2][1cm]{\mathcal{C}_{(4)}} 
			&\rstick{\hspace{-1cm}6\\~}\qw
			&\gate[2, nwires = {1}][1cm]{\mathcal{C}_{(6)}} &\rstick{\hspace{-0.7cm}10\\~}\qw 
			\\
			&
			&
			&
			&\rstick{\hspace{-1.5cm}8\\~}\qw
			&\rstick{\hspace{-1cm}\huge$\star$\\~}\hspace{-1cm}
			&\rstick{\hspace{1cm}3\\~}
			&\qw
			&\rstick{\hspace{1.7cm}7\\~}\qw
			&\qw 
			&\qw 
			&\rstick{\hspace{0.2cm}7\\~}
			&\qw 
			&\qw
			&\rstick{\hspace{-0.2cm}\large$=$\\~}
			\\
			&
			&\rstick{\hspace{-0.1cm}0\\~}
			&\gate[2, nwires = {2}][1cm]{\mathcal{C}_{(1)}}
			&\gate[2, nwires = {1}][1cm]{\mathcal{C}_{(2)}}
			&\rstick{\hspace{-0.1cm}4\\~} 
			&\gate[2][1cm]{\mathcal{C}_{(3)}} 
			&\gate[2, nwires = {1}][1cm]{\mathcal{C}_{(4)}}\qw
			&\rstick{\hspace{-0.8cm}6\\~}\qw
			&\gate[2, nwires={1}][1cm]{\mathcal{C}_{(5)}} 
			&\rstick{\hspace{-0.8cm}9\\~}\qw
			&\gate[2, nwires = {1}][1cm]{\mathcal{C}_{(6)}}
			&\rstick{\hspace{-0.9cm}10\\~}\qw
			&
			&
			\\
			&
			&
			&
			& 
			&\qw
			&
			&
			&\qw
			& 
			&\qw 
			& 
			&
			&
			&
		\end{quantikz}
	\end{center}
	\caption[font=small]{Composition of two QNs corresponding to previous DAGs in the \ref{compositionDAG}}
	\label{compositionQN}
\end{figure}
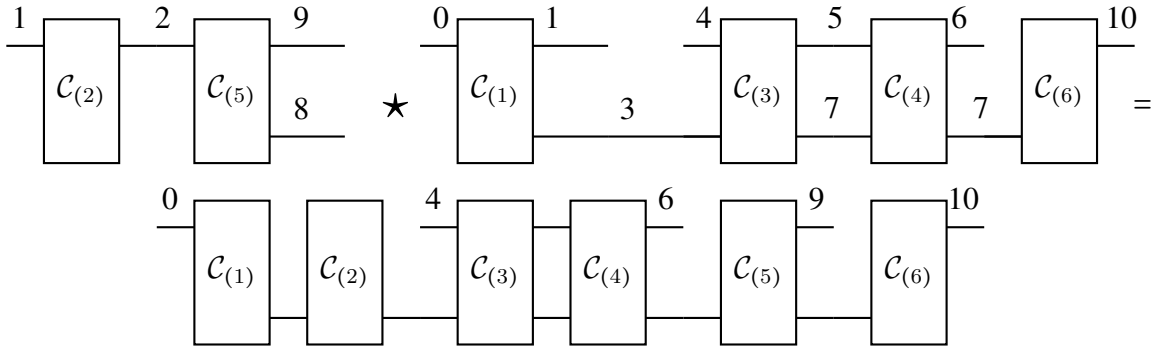
From the same figure \ref{compositionQN}, we can gather that composition of quantum networks can be made from concatenation of Choi operators forming the individual networks. Let us suppose that $R^{M} \in {\cal{L}}(\otimes_{i}{\cal{H}}_{i})$ is a Choi operator of quantum network ${\cal{R}}^{(M)}$ and $S^{N} \in {\cal{L}}(\otimes_{j}{\cal{H}}_{j})$ is a Choi operator of quantum network ${\cal{S}}^{(N)}$, then their composition is ${\cal{C}}({\cal{S}}^{(N)} \circ {\cal{R}}^{(M)}) = S^{N} \star R^{M}$ (which is a consequence of associativity of the link product \ref{propLink}).

Important case is a composition of quantum network ${\cal{R}}^{(N)}$ with quantum $(N+1)$-tester $\{{\cal{T}}_{i}^{(N+1)}\}$ as is depicted in figure \ref{compQNtester}. Then, probability of an outcome $i$ is given as follows $p(i|{\cal{R}}^{(N)}) = R^{N} \star T_{i}^{N+1} = \Tr\left[R^{N} T_{i}^{(N+1)T}\right]$, which is a generalization of a measurement to quantum networks. Tester $\{T_{i}^{(N+1)T}\}$ fulfills the role of a POVM and quantum network ${\cal{R}}^{(N)}$ the role of a state.
\begin{figure}[H]
	\begin{center}
	\includegraphics[scale=0.5]{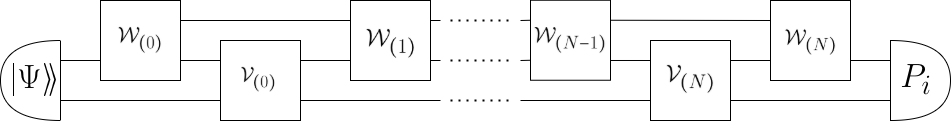}
	\end{center}
	\caption[font=small]{Composition of quantum network ${\cal{R}}^{(N)}$ that is being realized by isometries $W_{i}$ with quantum $(N+1)$-tester $\{T_{i}^{(N+1)}\}$ that is being realized by isometries $V_{i}$.}
	\label{compQNtester}
\end{figure}

\section{Relation with Quantum Processor}

Quantum programmable processors, which were the main interest of our investigation in chapter \ref{QP}, can be described using quantum networks. Certain DQNs can be interpreted as deterministic quantum processors, while GQIs can be viewed as probabilistic ones, as is depicted in figure \ref{QNasQP} (note that here, we are denoting program state with $\varrho$ and data state with $\xi$ as opposed to the notation used in chapter \ref{QP}).
\begin{figure}[H]
	\begin{center}
		\includegraphics[scale = 0.4]{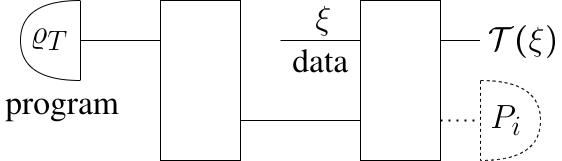}
	\end{center}
	\caption[font=small]{DQNs can be regarded as deterministic quantum processors, where quantum state $\varrho_{T}$ fulfills the function of program of quantum processor and the desired transformation ${\cal{T}}$ is applied on an input data state $\xi$. In case when we also add a POVM $\{P_{i}\}$ at the end of program register we obtain probabilistic quantum processor that can be viewed as a GQI.}
	\label{QNasQP}
\end{figure}
Quantum channels can also take the role of program in quantum processor as can be seen in figure \ref{QNasQPchannel}. Through Choi-Jamio\l{}kowski isomorphism one is able to encode information of the desired transformation in quantum state. 
\begin{figure}[H]
	\begin{center}
		\includegraphics[scale = 0.4]{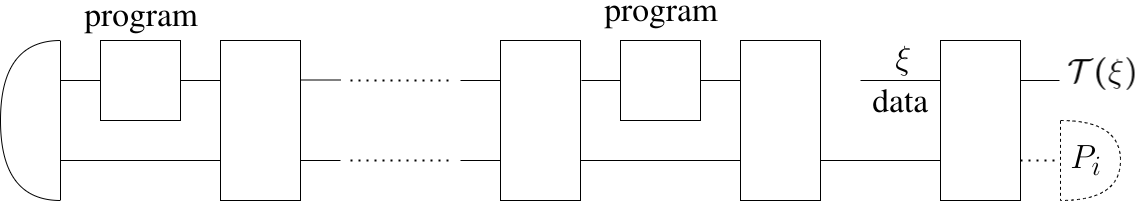}
	\end{center}
	\caption[font=small]{Quantum channels can function as programs of quantum processors if they are encoded in quantum state through Choi-Jamio\l{}kowski isomorphism. We desire to implement transformation ${\cal{T}}$ on input data state $\xi$. Again, by adding measurement, we obtain probabilistic quantum processor.}
	\label{QNasQPchannel}
\end{figure}

\section{Probabilistic Storage and Retrieving of Unitary Transformation}

In this section we shall take a closer look on the task of probabilistic storage and retrieval (PSAR) of unitary transformation (sometimes also called quantum learning) \cite{OptimalQuantumLearningOfAUnitaryTransformation, StoringQuantumDynamicsInQuantumStatesAStochasticProgrammableGate}. Let us imagine having access to unitary channel $N$ times but to lose this access in the future. The task itself consists of two parts. In the first one, one shall store the channel in a state and in the second part, one shall try to retrieve the stored channel and apply the retrieved channel on an unknown quantum state. Naturally, we shall make use of Choi-Jamio\l{}kowski isomorphism for storing the dynamics of the system in a quantum state. The questions now are how to optimally store given channel in the present and how to optimally access it later?   The device for storing and retrieving of transformation can be interpreted as a probabilistic quantum processor (due to no-programming theorem \ref{no-programming} \cite{ProgrammableQuantumGateArrays}, such a processor cannot be deterministic). The scheme of such a device is depicted in figure \ref{PSAR}. It is a generalized quantum instrument ${\cal{R}} = \{{\cal{R}}_{s}, {\cal{R}}_{f}\}$ with the two possible outcomes - either successful measurement which results in applying the desired transformation on the unknown state $\ket{\xi}$, or failure with implementation of some different transformation. One additional constraint is applied on PSAR device - the device is covariant, which means that probability of success $p_{success}$ is invariant under unitary transformation. The consequence of covariant property is that the probability of successfully retrieving the desired channel is equal for all considered channels.
\begin{figure}[H]
	\begin{center}
	\includegraphics[scale = 0.3]{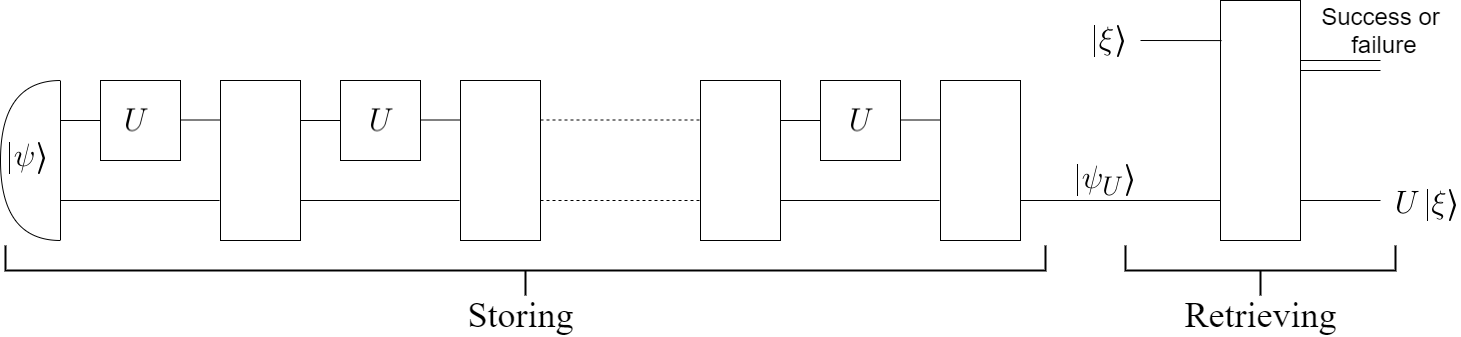}
	\end{center}
	\caption[font=small]{Schematic depiction of probabilistic storage and retrieving of unitary transformation. During the storing phase, unitary transformation $U$ is applied on input state $\ket{\psi}$ creating state $\ket{\psi_{U}}$ which is then used as input (program) state into the retrieving phase, where on the output we wish to find a state $U\ket{\xi}$.}
	\label{PSAR}
\end{figure}

Sedlák, Bisio and Ziman \cite{OptimalProbabilisticStorageAndRetrievalOfUnitaryChannels} showed the optimal probability of retrieving unitary channel being $p_{suc} = \frac{N}{N-1+d^{2}}$, where $N$ is the number of times one has access to a given channel and $d$ is dimension of Hilbert space to which the unitary channel belongs to.

In \cite{ProbabilisticStorageAndRetrievalOfQubitPhaseGates}, Sedlák and Ziman investigated a more restrained problem, where the unitary transformation was from $U(1)$ group and could be expressed in the computational basis in the following form:
\begin{align}\label{Ufi}
	U_{\phi} = \ket{0}\bra{0} + e^{i\phi}\ket{1}\bra{1}.
\end{align} 
The success probability of retrieving such a channel was $p_{success} = \frac{N}{N+1}$, where $N$ is the number of times one has access to a given transformation during storing phase.

\section{PSAR of Unitary Transformation with Noise}

In our work, we shall investigate robustness or resilience of PSAR device optimized for storing and retrieval of unitary transformation from $U(1)$ group, described in previous section, against noises \cite{RobustnessOfOptimalProbabilisticStorageAndRetrievalOfUnitaryChannelsToNoise}. Firstly, we shall examine the effect of depolarization on the performance of the device. Afterwards, we shall investigate the effects of phase damping. Depolarization channel is often used to model noise, errors and to investigate the possible noise mitigation strategies while the phase damping channel models decoherence \cite{QuantumComputationAndQuantumInformation, GenerativeMachineLearningWithTensorNetworksBenchmarksOnNear-TermQuantumComputers,  SimpleMitigationOfDepolarizingErrorsInQuantumSimulations, QuantumDecoherenceOfTwoQubits, QuantumFisherInformationOfTheGreenberger-Horne-ZeilingerStateInDecoherenceChannels, MinimalQuditCodeForAQubitInThePhase-dampingChannel}.

\subsection{Depolarization}

We shall implement channel that is a convex combination of a unitary channel and contraction in the total mixture on the device optimized for implementing unitary channel in order to investigate its resistance (or robustness) against this kind of noise. This means that the input state for storing phase and retrieving instrument remains the same as for the device optimized for implementing unitary channel in \cite{ProbabilisticStorageAndRetrievalOfQubitPhaseGates}.

Channel that we are going to implement has the following form:
\begin{align*}
	{\cal{E}}_{\phi} = q {\cal{U}}_{\phi} + (1-q) {\cal{C}}_{\mathbb{1}/2},
\end{align*}
where $0 \leq q \leq 1$, unitary channel is denoted by ${\cal{U}}_{\phi}(\varrho) = U_{\phi} \varrho U_{\phi}^{\dagger}$ with phase gate $U_{\phi}$ is from equation (\ref{Ufi}) and ${\cal{C}}_{\mathbb{1}/2}$ denotes contraction in the total mixture:
\begin{align}\label{contraction}
	{\cal{C}}_{\mathbb{1}/2}(\varrho) = \frac{\Tr(\varrho)}{\Tr(\frac{\mathbb{1}}{2})} \frac{\mathbb{1}}{2} = \frac{\mathbb{1}}{2}.
\end{align}
This channel can be interpreted as a depolarizing channel from equation (\ref{Dchannel}) for $p = 1$. Channel ${\cal{E}}_{\phi}$ transforms input qubit state by a unitary operation $U_{\phi}$ with probability $q$ and with probability $1-q$ it replaces the input state by a total mixture. Let us now calculate the Choi operator corresponding to the unitary channel:
\begin{align*}
	U_{\phi} &= ({\cal{U}}_{\phi} \otimes {\cal{I}}) \dket{I}\dbra{I} = ({\cal{U}}_{\phi} \otimes {\cal{I}}) \sum_{m,n}\ket{mm}\bra{nn} = \sum_{m,n} {\cal{U}}_{\phi}(\ket{m}\bra{n}) \otimes \ket{m}\bra{n} \\&= \sum_{m,n} U_{\phi}(\ket{m}\bra{n})U_{\phi}^{\dagger} \otimes \ket{m}\bra{n} = \sum_{m,n} (U_{\phi} \otimes \mathbb{1}) \ket{mm}\bra{nn} (U_{\phi}^{\dagger} \otimes \mathbb{1}) \overset{(\ref{stateChannelDuality})}{=} \dket{U_{\phi}}\dbra{U_{\phi}^{\dagger}}.
\end{align*}
And also let us calculate the Choi operator for contraction:
\begin{align*}
	C_{\mathbb{1}/2} &= ({\cal{C}}_{\mathbb{1}/2} \otimes {\cal{I}}) \dket{I}\dbra{I} = \sum_{m,n} {\cal{C}}_{\mathbb{1}/2}(\ket{m}\bra{n}) \otimes \ket{m}\bra{n} =  \sum_{m,n} \frac{\Tr(\ket{m}\bra{n})}{\Tr(\frac{\mathbb{1}}{2})} \frac{\mathbb{1}}{2} \otimes \ket{m}\bra{n} \\&= \frac{\mathbb{1}}{2} \otimes \sum_{n} \ket{n}\bra{n} = \frac{1}{2} \mathbb{1} \otimes \mathbb{1}.
\end{align*}
Therefore, the Choi operator of the whole channel ${\cal{E}}_{\phi}$ has the following form:
\begin{align*}
	E_{\phi} = q \dket{U_{\phi}}\dbra{U_{\phi}} + \frac{1-q}{2} \mathbb{1}\otimes\mathbb{1}.
\end{align*}
Input state in the device is \cite{OptimalQuantumLearningOfAUnitaryTransformation}:
\begin{align} \label{inpState}
	\ket{\psi} = \bigoplus_{j=0}^{N} \sqrt{p_{j}} \dket{I_{j}} \overset{(i)}{=} \sum_{j}^{N} \sqrt{p_{j}} \ket{j},
\end{align}
where in $(i)$ we are using a sort of dictionary:
\begin{align*}
	&\ket{0} \rightarrow \ket{0\cdots0} &&\ket{N+1} \rightarrow \ket{0\cdots010}\\
	&\ket{1} \rightarrow \ket{0\cdots1} &&\ket{N+2} \rightarrow \ket{0\cdots100}\\
	&\ket{2} \rightarrow \ket{0\cdots11} &&\ket{N+3} \rightarrow \ket{0\cdots101}\\
	&\ket{3} \rightarrow \ket{0\cdots111} &&\ket{N+4} \rightarrow \ket{0\cdots110}\\
	&\qquad\vdots &&\qquad\vdots\\
	&\ket{N} \rightarrow \ket{1\cdots1} &&\ket{2^{N}-1} \rightarrow \ket{1\cdots110}.
	\label{generalDictionary} \numberthis
\end{align*}
In the left column, $j$ from states $\ket{j}$ has the meaning of number of $1$'s in the state written from the right, while no other $1$'s can be in those particular states. The right column treats the remaining states in the increasing order. This notation corresponds to the decomposition of $U_{\phi}^{\otimes N}$ to irreducible representation:
\begin{align} \label{IrrepDecomposition}
	U_{\phi}^{\otimes N} = \sum_{j \in \hspace*{0.7mm} irreps} e^{ij\phi} \mathbb{1}_{j} \otimes \mathbb{1}_{m_{j}},
\end{align}
where $\mathbb{1}_{m_{j}}$ denotes identity on multiplicity spaces which correspond to the states from the right column in dictionary (\ref{generalDictionary}).

\subsubsection{Two-to-one}
Let us show model calculation for "two-to-one" case where we have channel ${\cal{E}}_{\phi}$ twice at our disposal and at the output, we require to have a single copy of resulting channel (one might be interested also in a more general task to produce more than one copy of the desired transformation). This concrete device is depicted in the figure \ref{PSARdepimage}. Choi operator for said channel has the following form:
\begin{align*}
	E_{\phi}^{\otimes2} = &q^{2} \dket{U_{\phi}}_{12}\dbra{U_{\phi}} \otimes \dket{U_{\phi}}_{34}\dbra{U_{\phi}} + q \frac{1-q}{2} \dket{U_{\phi}}_{12}\dbra{U_{\phi}} \otimes \mathbb{1}_{34} \\&+ q \frac{1-q}{2} \mathbb{1}_{12} \otimes \dket{U_{\phi}}_{34}\dbra{U_{\phi}} + \frac{(1-q)^{2}}{4} \mathbb{1}_{12} \otimes \mathbb{1}_{34}.
	\label{Efi}\numberthis
\end{align*}
\begin{figure}[H]
	\begin{center}
		\includegraphics[scale = 0.45]{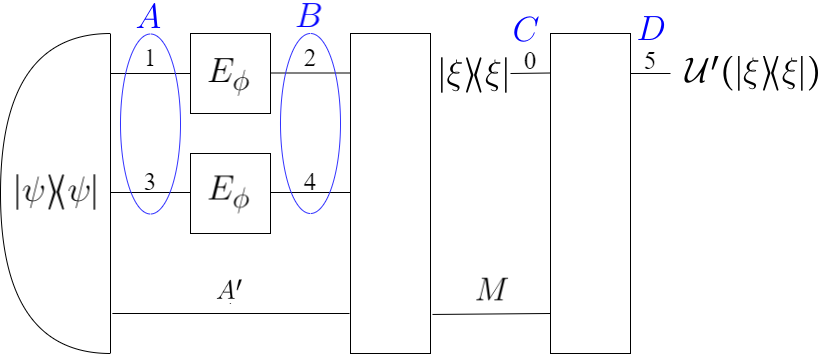}
	\end{center}
	\caption{Schematic image of PSAR implementing channel ${\cal{E}}_{\phi}$ twice with input state $\ket{\psi}$ explicitly written in equation (\ref{state}). At the output of register ${\cal{H}}_{5}$, we expect to retrieve unitary channel, possibly with some noise, in case of successful implementation. In our notation, Hilbert spaces ${\cal{H}}_{13} = {\cal{H}}_{A}$, ${\cal{H}}_{24} = {\cal{H}}_{B}$, ${\cal{H}}_{0} = {\cal{H}}_{C}$ and ${\cal{H}}_{5} = {\cal{H}}_{D}$ are identical.}
	\label{PSARdepimage}
\end{figure}
Before proceeding to the initial state, let us explicitly express decomposition of unitary operator applied on two qubits into irreducible representation:
\begin{align*}
	U^{\otimes2}_{\phi} =
	\begin{pmatrix}
		e^{i0\phi} & 0 & 0 & 0\\
		0 & e^{i1\phi} & 0 & 0\\
		0 & 0 & e^{i1\phi} & 0\\
		0 & 0 & 0 & e^{i2\phi}
	\end{pmatrix}
	= \bigoplus_{j=0}^{2} e^{ij\phi} \otimes \mathbb{1}_{m_{j}}.
	\label{Bono} \numberthis
\end{align*}
Matrices $\mathbb{1}_{m_{j}}$ denote identity on multiplicity spaces ${\cal{H}}_{m_{j}}$. We can see that in this case, only such identity with higher dimension than one is $\mathbb{1}_{m_{1}}$. PSAR device is optimized only for implementing unitary channel. The optimal storage state used in such a device is from the equation (\ref{inpState}):
\begin{align*}
	\ket{\Psi} &= \bigoplus_{j=0}^{2} \sqrt{p_{j}}\dket{I_{j}} = \sqrt{p_{0}}\ket{00} + \sqrt{p_{1}}\ket{11} + \sqrt{p_{2}}\ket{22} \\&= \sqrt{p_{0}}\ket{00,00} + \sqrt{p_{1}}\ket{01,01} + \sqrt{p_{2}}\ket{11,11}.
	\label{state}\numberthis
\end{align*}
In the last equation, we have used the fact that we discard the multiplicity spaces because in case of retrieving unitary channel they carry no additional information, and we are following the previous work where the PSAR was optimized for implementation of unitary channels \cite{ProbabilisticStorageAndRetrievalOfQubitPhaseGates}. From equation (\ref{Bono}) we can see that we either dismiss state $\ket{01}$ or state $\ket{10}$ - we have chosen to disregard the latter one. From previous work \cite{ProbabilisticStorageAndRetrievalOfQubitPhaseGates}, we already have the values for probabilities: $p_{j} = \frac{1}{3}$ for all $j$. The dictionary from the equation (\ref{generalDictionary}) in this case reduces to:
\begin{align*}
	&\ket{0} \rightarrow \ket{00} &&\ket{3} \rightarrow \ket{10}\\
	&\ket{1} \rightarrow \ket{01}\\
	&\ket{2} \rightarrow \ket{11},
	\label{dictionary} \numberthis
\end{align*}
where we have denoted the state $\ket{3}$ as the multiplicity state, that we discard. We have already used this concrete dictionary in the equation (\ref{state}) The significance of this notation will reveal itself later.

\subsubsection{Storing}

Firstly, let us investigate the storing phase of the device with the noisy channel ${\cal{E}}_{\phi}^{\otimes2}$ from equation (\ref{Efi}), applied on the state $\dyad{\psi}{\psi}$ from equation (\ref{state}):
\begin{align*}
	&\varrho_{E} \equiv {\cal{E}}_{\phi,AB}^{\otimes2} \star \ket{\Psi}_{AA^{\prime}}\bra{\Psi} \overset{(\ref{linkProduct})}{=} \Tr_{A} [(E_{\phi,AB}^{\otimes2} \otimes \mathbb{1}_{A^{\prime}}) (\ket{\Psi}_{AA^{\prime}}\bra{\Psi}^{T_{A}} \otimes \mathbb{1}_{B})] \underset{{\cal{H}}_{B} = {\cal{H}}_{24}}{\overset{{\cal{H}}_{A} = {\cal{H}}_{13}}{=}} \\&\Tr_{13}\Biggl\{\biggl[ \bigg( q^{2}\dket{U_{\phi}}_{1234}\dbra{U_{\phi}}^{\otimes2} + q \frac{1-q}{2} \dket{U_{\phi}}_{12}\dbra{U_{\phi}} \otimes \mathbb{1}_{34} + q \frac{1-q}{2} \mathbb{1}_{12} \otimes \dket{U_{\phi}}_{34}\dbra{U_{\phi}} + \frac{(1-q)^{2}}{4} \mathbb{1}_{1234}\bigg) \\& \otimes \mathbb{1}_{A^{\prime}} \biggl] \left[ \bigoplus_{j,k=0}^{2} \sqrt{p_{j}p_{k}} \dket{I_{j}}_{13A^{\prime}}\dbra{I_{k}}^{T_{13}} \otimes \mathbb{1}_{24} \right] \Biggl\},
	\label{storing} \numberthis
\end{align*}
where $\dket{U_{\phi}}_{1234}\dbra{U_{\phi}}^{\otimes2} = \dket{U_{\phi}}_{12}\dbra{U_{\phi}} \otimes \dket{U_{\phi}}_{34}\dbra{U_{\phi}}$ and we have denoted the resulting state with $\varrho_{E}$. In our notation $E_{\phi,AB}^{\otimes2}$ means that the operator is applied on space ${\cal{H}}_{AB}$. Let us cut the calculation into four parts, for each term forming Choi operator $E_{\phi}^{\otimes2}$ of the applied channel.

Firstly, we shall start evaluating the term next to $q^{2}$ from equation (\ref{storing}):
\begin{align*}
	&\varrho_{\phi,\phi} \equiv \Tr_{A} \left[ \left( \dket{U_{\phi}}_{AB}\dbra{U_{\phi}}^{\otimes2} \otimes \mathbb{1}_{A^{\prime}} \right) \left( \bigoplus_{j,k=0}^{2} \sqrt{p_{j}p_{k}} \dket{I_{j}}_{AA^{\prime}}\dbra{I_{k}}^{T_{A}} \otimes \mathbb{1}_{B} \right) \right] \overset{(\ref{stateChannelDuality})}{=} \\&\Tr_{A} \left\{ \left[ (U_{\phi,A}^{\otimes2} \otimes \mathbb{1}_{B})\dket{I}_{AB}\dbra{I}(U_{\phi,A}^{\dagger\otimes2} \otimes \mathbb{1}_{B}) \otimes \mathbb{1}_{A^{\prime}} \right] \left( \bigoplus_{j,k=0}^{2} \sqrt{p_{j}p_{k}} \ket{jj}_{AA^{\prime}}\bra{kk}^{T_{A}} \otimes \mathbb{1}_{B} \right) \right\} = \\&\Tr_{A} \biggl\{ \bigg[  \underbrace{\bigoplus_{m=0}^{2}\bigl( (e^{im\phi}\otimes\mathbb{1}_{m_{m}})_{A}}_{\text{decomposition of }U_{\phi,A}^{\otimes2}}\otimes\mathbb{1}_{B}\bigl)\ket{mm}_{AB} \underbrace{\bigoplus_{n=0}^{2} {}_{AB}\bra{nn} \big( (e^{-in\phi}\otimes\mathbb{1}_{m_{n}})_{A}}_{\text{decomposition of }U_{\phi,A}^{\dagger,\otimes2}}\otimes\mathbb{1}_{B}\big) \otimes \mathbb{1}_{A^{\prime}} \bigg] \\& \left( \bigoplus_{j,k=0}^{2} \sqrt{p_{j}p_{k}} \ket{k}_{A}\bra{j} \otimes \ket{j}_{A^{\prime}}\bra{k} \otimes \mathbb{1}_{B} \right) \biggl\} \underset{\text{multiplicity spaces}}{\overset{\text{discarding}}{=}} \\& \bigoplus_{j,k,m,n=0}^{2} \ket{m}_{B}\bra{n} \otimes \ket{j}_{A^{\prime}}\bra{k} e^{i(m-n)\phi} \sqrt{p_{j}p_{k}} \underbrace{\Tr_{A}\left(\ket{m}_{A}\bra{n}\ket{k}_{A}\bra{j}\right)}_{\delta_{jm}, \delta_{kn}} = \\& \bigoplus_{j,k=0}^{2} \sqrt{p_{j}} e^{ij\phi} \ket{jj}_{BA^{\prime}}\bra{kk} e^{-ik\phi} \sqrt{p_{k}} \overset{p_{j}=p_{k}=\frac{1}{3}}{=} \bigoplus_{j,k=0}^{2} \frac{1}{\sqrt{3}} e^{ij\phi} \ket{jj}_{BA^{\prime}}\bra{kk} e^{-ik\phi} \frac{1}{\sqrt{3}},
	\label{fifi} \numberthis
\end{align*}
where $\mathbb{1}_{m_{m}}$ denotes identity on multiplicity space ${\cal{H}}_{m}$, $U_{\phi,A}$ and $(.)_{A}$ denote operator applied on ${\cal{H}}_{A}$. Further let us take a look at the second term, next to the coefficient $q(1-q)$, from equation (\ref{storing}):
\begin{align*}
	&\varrho_{\phi,I} \equiv \Tr_{13} \left[ \left( \dket{U_{\phi}}_{12}\dbra{U_{\phi}} \otimes \frac{1}{2} \mathbb{1}_{34} \otimes \mathbb{1}_{A^{\prime}} \right) \left( \bigoplus_{j,k=0}^{2} \sqrt{p_{j}p_{k}} \dket{I_{j}}_{13A^{\prime}}\dbra{I_{k}}^{T_{13}} \otimes \mathbb{1}_{B} \otimes \mathbb{1}_{B} \right) \right] \overset{(\ref{stateChannelDuality})}{=} \\& \frac{1}{2} \Tr_{13} \Bigg\{ \left[ (U_{\phi,1} \otimes \mathbb{1}_{2}) \dket{I}_{12}\dbra{I} (U_{\phi,1}^{\dagger} \otimes \mathbb{1}_{2}) \otimes \mathbb{1}_{34} \otimes \mathbb{1}_{A^{\prime}} \right] \left( \bigoplus_{j,k=0}^{2} \sqrt{p_{j}p_{k}}\ket{jj}_{13A^{\prime}}\bra{kk}^{T_{13}} \right) \Bigg\} = \\& \frac{1}{2} \Tr_{13} \Biggl\{ \biggl[ \bigoplus_{m=0}^{1} \left[(e^{im\phi} \otimes \mathbb{1}_{m_{m}})_{1} \otimes \mathbb{1}_{2} \right] \ket{mm}_{12} \bigoplus_{n=0}^{1} {}_{12}\bra{nn} \left[ (e^{-in\phi} \otimes \mathbb{1}_{m_{n}})_{1} \otimes \mathbb{1}_{2} \right] \\& \otimes \sum_{d=0}^{1} \ket{d}_{3}\bra{d} \otimes \sum_{c=0}^{1} \ket{c}_{4}\bra{c} \otimes \mathbb{1}_{A^{\prime}} \biggr] \left( \bigoplus_{j,k=0}^{2} \sqrt{p_{j}p_{k}} \ket{k}_{13}\bra{j} \otimes \ket{j}_{A^{\prime}}\bra{k} \otimes \mathbb{1}_{B} \right) \Biggr\} \underset{k = bb^{\prime}}{\overset{j = aa^{\prime}}{=}} \\& \frac{1}{2} \bigoplus_{\substack{a,a^{\prime},b,b^{\prime}, \\ m,n,c,d=0}}^{1} \Tr_{13} \Biggl\{ \biggl[ \left[ (e^{im\phi} \otimes \mathbb{1}_{m_{m}})_{1} \otimes \mathbb{1}_{2} \right] \ket{mm}_{12} \hspace*{-0.5mm}\bra{nn} \left[ (e^{-in\phi} \otimes \mathbb{1}_{m_{n}})_{1} \otimes \mathbb{1}_{2} \right] \\& \otimes \ket{d}_{3}\bra{d} \otimes \ket{c}_{4}\bra{c} \otimes \mathbb{1}_{A^{\prime}} \biggr] \big( \sqrt{p_{aa^{\prime}}p_{bb^{\prime}}} \underbrace{\ket{bb^{\prime}}_{13}\bra{aa^{\prime}} \otimes \ket{aa^{\prime}}_{A^{\prime}}\bra{bb^{\prime}}}_{\text{multiplicity state }\ket{10} = \ket{3}\text{ is not allowed}} \otimes \mathbb{1}_{B} \big) \Biggr\} = \\& \frac{1}{2} \bigoplus_{\substack{a,a^{\prime},b,b^{\prime}, \\ m,n,c,d=0}}^{1} \sqrt{p_{aa^{\prime}}p_{bb^{\prime}}} e^{i(m-n)\phi} \ket{mc}_{24}\bra{nc} \otimes \ket{aa^{\prime}}_{A^{\prime}}\bra{bb^{\prime}}  \underbrace{\Tr_{13} (\ket{md}_{13}\bra{nd} \ket{bb^{\prime}}_{13}\bra{aa^{\prime}}}_{\delta_{nb},\delta_{db^{\prime}}}) = \\& \frac{1}{2} \bigoplus_{\substack{a,a^{\prime},b,b^{\prime}, \\ m,c=0}}^{1} \sqrt{p_{aa^{\prime}}p_{bb^{\prime}}} e^{i(m-b)\phi} \ket{mc}_{24}\bra{bc} \otimes \ket{aa^{\prime}}_{A^{\prime}}\bra{bb^{\prime}} \underbrace{{}_{13}\bra{aa^{\prime}}\ket{mb^{\prime}}_{13}}_{\delta_{am}, \delta_{a^{\prime}b^{\prime}}} = \\& \frac{1}{2} \bigoplus_{a,a^{\prime},b,c=0}^{1} \sqrt{p_{aa^{\prime}}p_{bb^{\prime}}} e^{i(a-b)\phi} \ket{ac}_{B}\bra{bc} \otimes \ket{aa^{\prime}}_{A^{\prime}}\bra{ba^{\prime}} \underset{\text{ for all }a,a^{\prime}}{\overset{p_{aa^{\prime}} = \frac{1}{3}}{=}} \\& \frac{1}{6} \bigoplus_{c=0}^{1} (\ket{0c}_{B}\bra{0c} \otimes \ket{00}_{A^{\prime}}\bra{00} +
	e^{-i\phi} \ket{0c}_{B}\bra{1c} \otimes \xcancel{\ket{00}_{A^{\prime}}\bra{10}} + \ket{0c}_{B}\bra{0c} \otimes \ket{01}_{A^{\prime}}\bra{01} \\&+ e^{-\phi}\ket{0c}_{B}\bra{1c} \otimes \ket{01}_{A^{\prime}}\bra{11} + e^{i\phi} \ket{1c}_{B}\bra{0c} \otimes \xcancel{\ket{10}_{A^{\prime}}\bra{00}} + \ket{1c}_{B}\bra{1c} \otimes \xcancel{\ket{10}_{A^{\prime}}\bra{10}} \\&+ e^{i\phi}\ket{1c}_{B}\bra{0c} \otimes \ket{11}_{A^{\prime}}\bra{01} + \ket{1c}_{B}\bra{1c} \otimes \ket{11}_{A^{\prime}}\bra{11}) \underset{\text{discarding multiplicity state }\ket{10} = \ket{3}}{\overset{\text{dictionary } (\ref{dictionary})}{=}} \\& \frac{1}{6} ( \ket{00}_{BA^{\prime}}\bra{00} + \ket{10}_{BA^{\prime}}\bra{10} + \ket{01}_{BA^{\prime}}\bra{01} + \ket{11}_{BA^{\prime}}\bra{11} + e^{-i\phi}\ket{01}_{BA^{\prime}}\bra{32} \\&+ e^{-\phi}\ket{11}_{BA^{\prime}}\bra{22} + e^{i\phi}\ket{32}_{BA^{\prime}}\bra{01} + e^{i\phi}\ket{22}_{BA^{\prime}}\bra{11} + \ket{32}_{BA^{\prime}}\bra{32} + \ket{22}_{BA^{\prime}}\bra{22}).
	\label{fii} \numberthis
\end{align*}
In the analogous manner, we can evaluate the third term in equation (\ref{storing}):
\begin{align*}
	&\varrho_{I,\phi} \equiv \Tr_{13} \left[ \left(  \frac{1}{2} \mathbb{1}_{12} \otimes \dket{U_{\phi}}_{34}\dbra{U_{\phi}} \otimes \mathbb{1}_{A^{\prime}} \right) \left( \bigoplus_{j,k=0}^{2} \sqrt{p_{j}p_{k}} \dket{I_{j}}_{13A^{\prime}}\dbra{I_{k}}^{T_{13}} \otimes \mathbb{1}_{B} \right) \right] = \cdots = \\& \frac{1}{6} (\ket{00}_{BA^{\prime}}\bra{00} + \ket{30}_{BA^{\prime}}\bra{30} + e^{i\phi}\ket{11}_{BA^{\prime}}\bra{00} + e^{i\phi}\ket{21}_{BA^{\prime}}\bra{30} + e^{-i\phi}\ket{00}_{BA^{\prime}}\bra{11} \\& + e^{-i\phi}\ket{30}_{BA^{\prime}}\bra{21} + \ket{11}_{BA^{\prime}}\bra{11} + \ket{21}_{BA^{\prime}}\bra{21} + \ket{12}_{BA^{\prime}}\bra{12} + \ket{22}_{BA^{\prime}}\bra{22}).
	\label{ifi} \numberthis
\end{align*}
In the end, let us calculate the last term from equation (\ref{storing}):
\begin{align*}
	&\varrho_{I,I} \equiv \Tr_{13} \left[ \left(\frac{1}{2} \mathbb{1}_{12} \otimes \frac{1}{2} \mathbb{1}_{34} \otimes \mathbb{1}_{A^{\prime}}\right) \left( \bigoplus_{jk=0}^{2} \sqrt{p_{j}p_{k}} \ket{k}_{A}\bra{j} \otimes \ket{j}_{A^{\prime}}\bra{k} \otimes \mathbb{1}_{B} \right) \right] \underset{{\cal{H}}_{A} = {\cal{H}}_{13}}{\overset{{\cal{H}}_{B} = {\cal{H}}_{24}}{=}} \\& \frac{1}{4} \sqrt{p_{j}p_{k}} \left( \mathbb{1}_{B} \otimes \bigoplus_{j=0}^{2} \ket{j}_{A^{\prime}}\bra{k} \right) \Tr_{A}\left(\ket{k}_{A}\bra{j}\right) = \frac{1}{4} \left( \mathbb{1}_{B} \otimes \bigoplus_{j=0}^{2} p_{j} \ket{j}_{A^{\prime}}\bra{j} \right) \overset{(i)}{=} \\& \frac{1}{4} \frac{1}{3} \left(\mathbb{1}_{B} \otimes \mathbb{1}_{A^{\prime}}^{MS}\right),
	\label{ii} \numberthis
\end{align*}
where in $(i)$, we know that $p_{j} = p_{k} = \frac{1}{3}$ for all $j,k$ and where $\mathbb{1}^{MS} = \sum_{j=0}^{2}\ket{j}_{A^{\prime}}\bra{j}$ denotes identity on space without multiplicities, i.e., on space that is spanned by states $\{\ket{00}, \ket{01}, \ket{11}\}$, while $\mathbb{1}$ is an identity on space spanned by states $\{\ket{00}, \ket{01}, \ket{10}, \ket{11}\}$.

\subsubsection{Retrieving}

To retrieve the transformation that is implemented in the end, we shall calculate the following expression:
\begin{align*}
	&R_{s} \star \varrho_{E} = \Tr_{M} \left[R_{MCD}^{s} (\varrho_{E}^{T_{M}} \otimes \mathbb{1}_{CD})\right] =\\& \Tr_{M} \left\{R_{MCD}^{s} \left[(q^{2}\varrho_{\phi,\phi}^{T_{M}} + q(1-q)\varrho_{\phi,I}^{T_{M}} + q(1-q)\varrho_{I,\phi}^{T_{M}} + (1-q)^{2}\varrho_{I,I}^{T_{M}}) \otimes \mathbb{1}_{CD}\right]\right\},
	\label{retrieving} \numberthis
\end{align*}
where $R_{s} = R_{MCD}^{s}$ denotes Choi operator of quantum instrument corresponding to successful retrieving in case of optimal device for probabilistic storage and retrieval of phase gate. The form of this instrument is as follows \cite{ProbabilisticStorageAndRetrievalOfQubitPhaseGates}:
\begin{align*}
	R_{s} = \bigoplus_{J=-1}^{2} \mathbb{1}_{J} \otimes \mathbb{1}_{J} \otimes s^{(J)},
	\label{retr instr} \numberthis
\end{align*}
where $s^{(J)}_{jj^{\prime}} = \sum_{j,j^{\prime}} \dket{I_{m_{J}^{j}}}_{CD}\dbra{I_{m_{J}^{j^{\prime}}}}$ for $J = \{0, 1\}$ with $j, j^{\prime} \in \{J, J+1\}$ and $s^{(-1)}_{jj^{\prime}} = s^{(2)}_{jj^{\prime}} = 0$ and $s^{(0)}_{jj^{\prime}} = s^{(1)}_{jj^{\prime}} = 1$ for all $j,j^{\prime}$. Identity $I_{m_{J}^{j}}$ relates to the following decomposition:
\begin{align*}
	e^{ij\phi} I_{j} \otimes U_{\phi}^{\ast} = \bigoplus_{J \in \{j-1,j\}} e^{iJ\phi} \otimes I_{m_{J}^{j}},
\end{align*} 
where it denotes multiplicity spaces, and the index $j$ labels irreps in the decomposition of $U_{\phi}^{\otimes N}$ from equation (\ref{IrrepDecomposition}). The identity $\mathbb{1}_{J} \otimes \mathbb{1}_{J}$ acts on Hilbert space ${\cal{H}}_{M}$ that is effectively also the Hilbert space in which the state, denoted by $\varrho_{E}$, resides after the storing phase. Let us also express the form of $R_{s}$ for a concrete index $J$:
\begin{align*}
	&R_{s}^{(J)} = \ket{J,J}_{M}\bra{J,J} \otimes \ket{00}_{CD}\bra{00} + \ket{J,J}_{M}\bra{J+1,J+1} \otimes \ket{00}_{CD}\bra{11} \\& + \ket{J+1,J+1}_{M}\bra{J,J} \otimes \ket{11}_{CD}\bra{00} + \ket{J+1,J+1}_{M}\bra{J+1,J+1} \otimes \ket{11}_{CD}\bra{11}.
	\label{Jretr} \numberthis
\end{align*}

Again, let us evaluate the expression in equation (\ref{retrieving}) by parts. Firstly, we shall start with the term next to $q^{2}$:
\begin{align*} \label{retrUfi}
	&\Tr_{M} \left[R_{MCD}^{s} (\varrho_{\phi,\phi}^{T_{M}} \otimes \mathbb{1}_{CD})\right] \overset{(\ref{fifi})}{=} \Tr_{M} \Bigg[\Big(\bigoplus_{J=-1}^{2} R_{s}^{(J)} \Big) \bigg(\bigoplus_{j,k=0}^{2} \frac{1}{\sqrt{3}} e^{ij\phi} \underbrace{\ket{jj}_{M}\bra{kk}^{T_{M}}}_{\dket{I_{k}}_{M}\dbra{I_{j}}} e^{-ik\phi} \frac{1}{\sqrt{3}} \otimes \mathbb{1}_{CD}\bigg)\Bigg] = \\& \frac{1}{3} \bigoplus_{J=-1}^{2}\bigoplus_{j,k=0}^{2} e^{i(j-k)\phi} \prescript{}{M}{\dbra{I_{j}}} R_{s}^{(J)} \dket{I_{k}}_{M} \overset{(\ref{Jretr})}{=} \\& \frac{1}{3}\bigoplus_{J=-1}^{2}\bigoplus_{j,k=0}^{2} e^{i(j-k)\phi} \prescript{}{M}{\bra{jj}} \bigl( s_{J,J}^{(J)} \underbrace{\ket{J,J}_{M}\bra{J,J}}_{j=k=J} \otimes \ket{00}_{CD}\bra{00} + s_{J,J+1}^{(J)} \underbrace{\ket{J,J}_{M}\bra{J+1,J+1}}_{j=J,k=J+1} \otimes \ket{00}_{CD}\bra{11} + \\&s_{J+1,J}^{(J)} \underbrace{\ket{J+1,J+1}_{M}\bra{J,J}}_{j=J+1,k=1} \otimes \ket{11}_{CD}\bra{00} + s_{J+1,J+1}^{(J)} \underbrace{\ket{J+1,J+1}_{M}\bra{J+1,J+1}}_{j=k=J+1} \otimes \ket{11}_{CD}\bra{11} \bigr) \ket{kk}_{M} \\&\overset{s^{(-1)}_{jj^{\prime}} = s^{(2)}_{jj^{\prime}} = 0}{\underset{s^{(0)}_{jj^{\prime}} = s^{(1)}_{jj^{\prime}} = 1}{=}} \frac{1}{3} \bigoplus_{J=0}^{1} \left( \ket{00}_{CD}\bra{00} + e^{-i\phi}\ket{00}_{CD}\bra{11} + e^{i\phi}\ket{11}_{CD}\bra{00} + \ket{11}_{CD}\bra{11} \right) \overset{(i)}{=} \frac{2}{3} \dket{U_{\phi}}_{CD}\dbra{U_{\phi}},  \numberthis
\end{align*}
where $\dket{U_{\phi}}_{CD}\dbra{U_{\phi}} = \ket{00}_{CD}\bra{00} + e^{-i\phi}\ket{00}_{CD}\bra{11} + e^{i\phi}\ket{11}_{CD}\bra{00} + \ket{11}_{CD}\bra{11}$, therefore, in $(i)$ we have only evaluated the direct summation. Let us now move on to the second term from equation (\ref{retrieving}):
\begin{align*}
	\Tr_{M} \left[R_{MCD}^{s} (\varrho_{\phi,I}^{T_{M}} \otimes \mathbb{1}_{CD})\right].
	\label{retrFii} \numberthis
\end{align*}
Let us remind the form of $\varrho_{\phi,I}^{T_{M}}$ from equation (\ref{fii}) where we are also denoting individual terms with numbers for future use:
\begin{align*}\label{Tfii} 
	&\varrho_{\phi,I}^{T_{M}} = \frac{1}{6} ( \underbrace{\ket{00}_{M}\bra{00}}_{1} + \underbrace{\ket{10}_{M}\bra{10}}_{2} + \underbrace{\ket{01}_{M}\bra{01}}_{3} + \underbrace{\ket{11}_{M}\bra{11}}_{4} + \underbrace{e^{-i\phi}\ket{32}_{M}\bra{01}}_{5} \\&+ 
	\underbrace{e^{-i\phi}\ket{22}_{M}\bra{11}}_{6} + \underbrace{e^{i\phi}\ket{01}_{M}\bra{32}}_{7} +  \underbrace{e^{i\phi}\ket{11}_{M}\bra{22}}_{8} +
	\underbrace{\ket{32}_{M}\bra{32}}_{9} +
    \underbrace{\ket{22}_{M}\bra{22}}_{10}).
	\numberthis
\end{align*}
Let us now express the retrieving operator for $J = 0$ and $J = 1$ (while remembering $s^{(-1)} = s^{(2)} = 0$):
\begin{align*}\label{Rs01}
	R_{s}^{(0)} &= \underbrace{\ket{00}_{M}\bra{00}}_{a} \otimes \ket{00}_{CD}\bra{00} + \underbrace{\ket{00}_{M}\bra{11}}_{b} \otimes \ket{00}_{CD}\bra{11} \\&+ \underbrace{\ket{11}_{M}\bra{00}}_{c} \otimes \ket{11}_{CD}\bra{00} + \underbrace{\ket{11}_{M}\bra{11}}_{d} \otimes \ket{11}_{CD}\bra{11}, \\
	R_{s}^{(1)} &= \underbrace{\ket{11}_{M}\bra{11}}_{e} \otimes \ket{00}_{CD}\bra{00} + \underbrace{\ket{11}_{M}\bra{22}}_{f} \otimes \ket{00}_{CD}\bra{11} \\&+ \underbrace{\ket{22}_{M}\bra{11}}_{g} \otimes \ket{11}_{CD}\bra{00} + \underbrace{\ket{22}_{M}\bra{22}}_{h} \otimes \ket{11}_{CD}\bra{11}.
	\numberthis
\end{align*}
Putting equations (\ref{Tfii}) and (\ref{Rs01}) into equation (\ref{retrFii}) we obtain:
\begin{align*}
	\Tr_{M} \left[R_{s}^{(0)} (\varrho_{\phi,I}^{T_{M}} \otimes \mathbb{1}_{CD})\right] &= \frac{1}{6}\bigl(\hspace*{-0.5cm}\underbrace{\ket{00}_{CD}\bra{00}}_{1 \text{ from (\ref{Tfii}), } a \text{ from (\ref{Rs01})}} \hspace*{-3mm}+\hspace*{3mm} \underbrace{\ket{11}_{CD}\bra{11}}_{4d}\bigr),\\
	\Tr_{M} \left[R_{s}^{(1)} (\varrho_{\phi,I}^{T_{M}} \otimes \mathbb{1}_{CD})\right] &= \frac{1}{6}(\underbrace{\ket{00}_{CD}\bra{00}}_{4e} + \underbrace{e^{-i\phi}\ket{00}_{CD}\bra{11}}_{6f} + \underbrace{e^{i\phi}\ket{11}_{CD}\bra{00}}_{8g} + \underbrace{\ket{11}_{CD}\bra{11}}_{10h}) \\&= \frac{1}{6} \dket{U_{\phi}}_{CD}\dbra{U_{\phi}}.
\end{align*}
Putting these two equations together, we get the following result:
\begin{align*}
	\Tr_{M} \left[R_{MCD}^{s} (\varrho_{\phi,I}^{T_{M}} \otimes \mathbb{1}_{CD})\right] = \frac{1}{6} \left[\dket{U_{\phi}}_{CD}\dbra{U_{\phi}} + \left(\ket{00}_{CD}\bra{00} + \ket{11}_{CD}\bra{11}\right) \right]
\end{align*}
In the same vein, we can calculate also the third term in (\ref{retrieving}):
\begin{align*}
	\Tr_{M} \left[R_{MCD}^{s} (\varrho_{I,\phi}^{T_{M}} \otimes \mathbb{1}_{CD})\right] = \frac{1}{6} \left[\dket{U_{\phi}}_{CD}\dbra{U_{\phi}} + \left(\ket{00}_{CD}\bra{00} + \ket{11}_{CD}\bra{11}\right) \right]
\end{align*}
Let us finish with the last term:
\begin{align*}
	&\Tr_{M} \left[R_{MCD}^{s} (\varrho_{I,I}^{T_{M}} \otimes \mathbb{1}_{CD})\right] \underset{(\ref{ii})}{\overset{s^{(-1)} = s^{(2)} = 0}{=}} \frac{1}{4 \cdot 3} \Tr_{M} \left[\left(\bigoplus_{J=0}^{1}R_{s}^{(J)}\right) \left(\sum_{j=0}^{3}\sum_{k=0}^{2} \ket{jk}_{M}\bra{jk} \otimes \mathbb{1}_{CD}\right)\right] = \\& \frac{1}{4 \cdot 3} \left\{ \Tr_{M} \left[R_{s}^{(0)} \left(\sum_{j=0}^{3}\sum_{k=0}^{2} \ket{jk}_{M}\bra{jk} \otimes \mathbb{1}_{CD}\right)\right] + \Tr_{M} \left[R_{s}^{(1)} \left(\sum_{j=0}^{3}\sum_{k=0}^{2} \ket{jk}_{M}\bra{jk} \otimes \mathbb{1}_{CD}\right)\right] \right\} \overset{(\ref{Rs01})}{=} \\& \frac{1}{4 \cdot 3} \biggl\{ \bigl( \underbrace{\ket{00}_{CD}\bra{00}}_{a} + \underbrace{\ket{11}_{CD}\bra{11}}_{d} \bigr) + \bigl( \underbrace{\ket{00}_{CD}\bra{00}}_{e} + \underbrace{\ket{11}_{CD}\bra{11}}_{h} \bigr) \biggr\} = \\&
	\frac{2}{4\cdot3} \left(\ket{00}_{CD}\bra{00} + \ket{11}_{CD}\bra{11}\right).
\end{align*}
Putting all the above results in equation (\ref{retrieving}), we obtain the channel applied on state $\dyad{\xi}{\xi}$ in case of successful implementation:
\begin{align*} 
	R_{s} \star \varrho_{E} = \frac{2}{3} \left[\left(q^{2} + q\frac{1-q}{2}\right) \dket{U_{\phi}}_{CD}\dbra{U_{\phi}} + \left(q\frac{1-q}{2} + \frac{(1-q)^{2}}{4}\right) \left(\ket{00}_{CD}\bra{00} + \ket{11}_{CD}\bra{11}\right)\right] 
\end{align*}

\subsubsection{Generalization and Probabilities}

Let us remind that the number of implementations of channel ${\cal{E}}_{\phi}$ is denoted with the letter $N$. We have also explicitly calculated the case for $N = 3$ with the following result:
\begin{align*} \label{rhoE}
	R_{s} \star {\varrho_{E}} &= \frac{3}{4} \bigg\{ \left[ q^{3} + q^{2}(1-q) + q\frac{(1-q)^{2}}{4} \right] \dket{U_{\phi}}_{CD}\dbra{U_{\phi}} \\&+ \left[ q^{2}\frac{1-q}{2} + q\frac{(1-q)^{2}}{2} + \frac{(1-q)^{3}}{8} \right] (\ket{00}_{CD}\bra{00} + \ket{11}_{CD}\bra{11}) \bigg\}. \numberthis
\end{align*}
Now, we shall generalize our result for arbitrary number $N$ of times we have access to channel ${\cal{E}}_{\phi}$:
\begin{align*}
	&R_{s} \star \varrho_{E} = \frac{1}{N+1} \\&\biggl\{ \left[ 1\frac{N}{2^{0}}q^{N}(1-q)^{0} + N \frac{N-1}{2^{1}}q^{N-1}(1-q)^{1} + \dots + N\frac{1}{2^{N-1}}q^{1}(1-q)^{N-1} + 1\frac{0}{2^{N}}q^{0}(1-q)^{N} \right] \dket{U_{\phi}}_{CD}\dbra{U_{\phi}} + \\& \left[ 1\frac{0}{2^{0}}q^{N}(1-q)^{0} + N \frac{1}{2^{1}}q^{N-1}(1-q)^{1} + \dots + N\frac{N-1}{2^{N-1}}q^{1}(1-q)^{N-1} + 1\frac{N}{2^{N}}q^{0}(1-q)^{N} \right] \\&\left(\ket{00}_{CD}\bra{00} + \ket{11}_{CD}\bra{11}\right) \biggr\}
\end{align*}
Factor $\frac{1}{N+1}$ comes from normalization of input state to PSAR device from equation (\ref{state}): $\ket{\psi} = \bigoplus_{j=0}^{N} \frac{1}{\sqrt{N-1}} \dket{I_{j}}$. First number of each term in square brackets follow Pascal's triangle as the implemented channel ${\cal{E}}_{\phi}^{\otimes N} = \left[q {\cal{U}}_{\phi} + (1-q) {\cal{C}}_{\mathbb{1}/2}\right]^{\otimes N}$ follow binomial distribution. Each term is divided by $2^{I}$, where $I$ denotes the number of times we apply depolarizing channel in the particular term as every such channel sends a quantum state in the total mixture. Number in the nominator of every fraction comes from the number of times we apply the particular channels - in the first case the unitary one and in the second case the depolarizing one. Finally, factors $q^{(N-I)}(1-q)^{I}$ come from number of used unitary channels $N-I$ and number of used depolarizing channels $I$ in the particular terms. Let us further simplify the previous expression:
\begin{align*}\label{resWhiteNoise}
	&R_{s} \star \varrho_{E} = \\& \frac{1}{N+1} \biggl\{ \sum_{I=0}^{N}\binom{N}{I} q^{N-I}(1-q)^{I} \frac{N-I}{2^{I}}\dket{U_{\phi}}_{CD}\dbra{U_{\phi}} \\&+ \sum_{I=0}^{N}\binom{N}{I} q^{N-I}(1-q)^{I} \frac{I}{2^{I}} \left(\ket{00}_{CD}\bra{00} + \ket{11}_{CD}\bra{11}\right) \biggr\} = \\& \frac{1}{N+1} \biggl\{ \sum_{I=0}^{N}\binom{N}{I} q^{N-I}(1-q)^{I} 2^{-I} \left[ (N-I)\dket{U_{\phi}}_{CD}\dbra{U_{\phi}} + I\left(\ket{00}_{CD}\bra{00} + \ket{11}_{CD}\bra{11}\right) \right] \biggr\} \overset{(i)}{=} \\& \frac{1}{(N+1)D} \biggl\{ \sum_{k=0}^{N}\binom{N}{I} q^{N-I}(1-q)^{I} 2^{-I} D \left[ (N-I)\dket{U_{\phi}}_{CD}\dbra{U_{\phi}} + I\left(\ket{00}_{CD}\bra{00} + \ket{11}_{CD}\bra{11}\right) \right] \biggr\} = \\& \frac{N(1+q)^{N}}{2^{N}(N+1)} \biggl\{ \sum_{k=0}^{N}\binom{N}{I} q^{N-I}(1-q)^{I} 2^{-I} \frac{2^{N}}{N(1+q)^{N}} \\&\left[ (N-I)\dket{U_{\phi}}_{CD}\dbra{U_{\phi}} + I\left(\ket{00}_{CD}\bra{00} + \ket{11}_{CD}\bra{11}\right) \right] \biggr\} \overset{(ii)}{=} \\& \frac{N(1+q)^{N}}{2^{N}(N+1)} \left[ \frac{2q}{1+q}\dket{U_{\phi}}_{CD}\dbra{U_{\phi}} + \frac{1-q}{1+q}\left(\ket{00}_{CD}\bra{00} + \ket{11}_{CD}\bra{11}\right) \right].
	\numberthis
\end{align*}
In $(i)$, we have multiplied the expression by $\frac{D}{D}$, where $D = \frac{2^{N}}{N(1+q)^{N}}$. Evaluation of summation in $(ii)$ was done in Wolfram Mathematica. Success probability of implementing channel $\frac{2q}{1+q}\dket{U_{\phi}}\dbra{U_{\phi}} + \frac{1-q}{1+q}\left(\ket{00}\bra{00} + \ket{11}\bra{11}\right)$ is $p_{suc} = \frac{N(1+q)^{N}}{2^{N}(N+1)}$. The probability of successful retrieval $p_{suc}$ of channel $\frac{2q}{1+q}\dket{U_{\phi}}\dbra{U_{\phi}} + \frac{1-q}{1+q}\left(\ket{00}\bra{00} + \ket{11}\bra{11}\right)$ is diminishing for majority of interval of values of $N$, however for high values of $N$, the success probability is increasing. This is happening because there are two competing factors, $\frac{N}{N+1}$ and $\left(\frac{1+q}{2}\right)^{N}$. While the term $\left(\frac{1+q}{2}\right)^{N}$ is dominant, the probability is decreasing, however when $\frac{N}{N+1}$ prevails, the probability starts to increase. Optimal value of $N$ can be obtained by finding stationary point:
\begin{align*}
	\derivative{p_{suc}}{N} = - \frac{2^{-N} N (1+q)^{N}}{(1+N)^{2}} + \frac{2^{-N} (1+q)^{N}}{1+N} - \frac{2^{-N} N \log(2)}{1+N} + \frac{2^{-N} N \log(1+q)}{1+N} \overset{!}{=} 0.
\end{align*}
By solving previous equation, we obtain two solutions for $-1 \ge q < 1$:
\begin{align*}
	1) \quad N &= \frac{1}{2} \left(-1 - \sqrt{\frac{4 + \log(\frac{2}{1+q})}{\log(\frac{2}{1+q})}}\right) \qquad
	2) \quad N &= \frac{1}{2} \left(-1 + \sqrt{\frac{4 + \log(\frac{2}{1+q})}{\log(\frac{2}{1+q})}}\right).
\end{align*}
However, the first solution results in negative $N$, thus the optimal $N$ is the second one. If we take a look only at the retrieval of a unitary transformation as that is what we desire to do, we find a similar behavior. Factor next to $\dket{U_{\phi}}_{CD}\dbra{U_{\phi}}$ is $\frac{N(1+q)^{N}}{2^{N}(N+1)}\frac{2q}{1+q} = \frac{Nq}{N+1} \frac{(1+q)^{N-1}}{2^{N-1}}$ and for most values of $q$ with growing number $N$ is diminishing, however there always exists some interval of high $q$ for which it is increasing with increasing $N$. It is also noteworthy that the noisy part of the retrieved channel is no longer depolarizing noise, but rather phase damping $P_{CD} = \left(\ket{00}_{CD}\bra{00} + \ket{11}_{CD}\bra{11}\right)$.

\subsection{Phase Damping}

Again, we use the same device as in the previous section but in this case, we shall implement the following channel:
\begin{align*}
	{\cal{F}}_{\phi} = q {\cal{U}}_{\phi} + (1-q) {\cal{P}},
\end{align*}
where $0 \leq q \leq 1$, channel ${\cal{P}} = \frac{1}{2} ({\cal{I}} + \varsigma_{z})$ and where $\varsigma_{z} (\cdot) = \sigma_{z} (\cdot) \sigma_{z}$ with $\sigma_{z} = \left(\begin{smallmatrix}
	1 & 0\\
	0 & -1
\end{smallmatrix}\right)$ being Pauli matrix. Let us have quantum state $\varrho = \dyad{\psi}{\psi}$, where $\ket{\psi} = a\ket{0} + b\ket{1}$ and apply channel ${\cal{P}}$ on this state:
\begin{align*}
	{\cal{P}}(\varrho) &= \frac{1}{2} \left[ {\cal{I}} (\varrho) + \varsigma_{z} (\varrho) \right]
	\\&= \frac{1}{2} \big[\left(\abs{a}^{2} \dyad{0}{0} + ab^{\star} \dyad{0}{1} + a^{\star}b \dyad{1}{0} + \abs{b}^{2} \dyad{1}{1}\right) 
	\\&\quad+ \sigma_{z} \left(\abs{a}^{2} \dyad{0}{0} + ab^{\star} \dyad{0}{1} + a^{\star}b \dyad{1}{0} + \abs{b}^{2} \dyad{1}{1}\right) \sigma_{z}\big]
	\\&= \frac{1}{2} \left[\left(\abs{a}^{2} \dyad{0}{0} + ab^{\star} \dyad{0}{1} + a^{\star}b \dyad{1}{0} + \abs{b}^{2} \dyad{1}{1}\right) + \left(\abs{a}^{2} \dyad{0}{0} - ab^{\star} \dyad{0}{1} - a^{\star}b \dyad{1}{0} + \abs{b}^{2} \dyad{1}{1}\right)\right]
	\\&= \abs{a}^{2} \dyad{0}{0} + \abs{b}^{2} \dyad{1}{1}.
\end{align*}
Or by writing previous equation in a matrix form, we can see that this channel is basically phase damping channel for $\lambda = 1$ in equation (\ref{PDchannel}).
\begin{align*}
	{\cal{P}}(\varrho) = \left(\begin{matrix}
		\abs{a}^{2} & 0\\
		0 & \abs{b}^{2}
	\end{matrix}\right).
\end{align*}
Choi operator corresponding to channel ${\cal{P}}$ is:
\begin{align*} \label{P}
	P &= \left[\frac{1}{2} ({\cal{I}} + \varsigma_{z}) \otimes {\cal{I}}\right] \dket{I}\dbra{I} = \frac{1}{2} \left[({\cal{I}} \otimes {\cal{I}}) + (\varsigma_{z} \otimes {\cal{I}})\right] \sum_{m,n=0}^{1} \ket{m}\bra{n} \otimes \ket{m}\bra{n} = \\& \frac{1}{2} \sum_{m,n=0}^{1} \left[(\ket{mm}\bra{nn} + \sigma_z(\ket{m}\bra{n})\sigma_{z} \otimes \ket{m}\bra{n}) \right] = \\& \frac{1}{2} \left[ \left( \ket{00}\bra{00} + \ket{00}\bra{11} + \ket{11}\bra{00} + \ket{11}\bra{11} \right) + \left( \ket{00}\bra{00} - \ket{00}\bra{11} - \ket{11}\bra{00} + \ket{11}\bra{11} \right) \right] = \\& \ket{00}\bra{00} + \ket{11}\bra{11} = \bigoplus_{j=0}^{1} \ket{jj}\bra{jj}.\numberthis
\end{align*}
That means that Choi operator of the entire channel is:
\begin{align*}
	F_{\phi} = q \dket{U_{\phi}}\dbra{U_{\phi}} + (1-q) P.
\end{align*}

\subsubsection{Two-to-one}

Again, let us show the calculation when we have access to channel ${\cal{F}}_{\phi}$ twice. The device for this particular case is depicted in the figure \ref{PSARpdimage}. Choi operator for this channel has the following form:
\begin{align*}
	F_{\phi}^{\otimes2} &= q^{2} \dket{U_{\phi}}_{12}\dbra{U_{\phi}} \otimes \dket{U_{\phi}}_{34}\dbra{U_{\phi}} + q (1-q) \dket{U_{\phi}}_{12}\dbra{U_{\phi}} \otimes P_{34} \\&+ q (1-q)P_{12} \otimes \dket{U_{\phi}}_{34}\dbra{U_{\phi}} + (1-q)^{2} P_{12} \otimes P_{34}.
\end{align*}
\begin{figure}[H]
	\begin{center}
		\includegraphics[scale = 0.45]{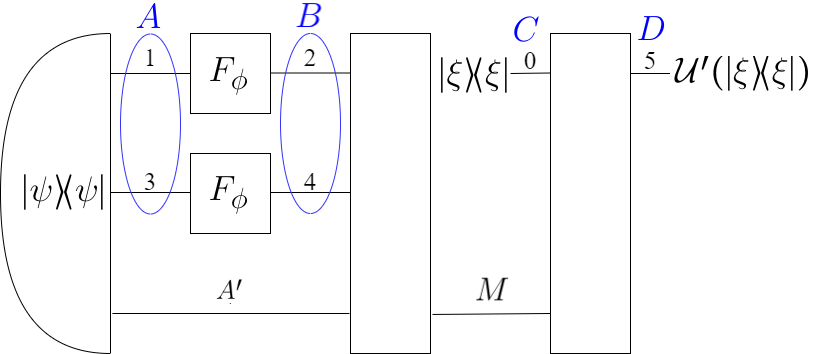}
	\end{center}
	\caption{Schematic image of PSAR implementing channel ${\cal{F}}_{\phi}$ twice with input state $\ket{\psi}$ explicitly written in equation (\ref{state}). At the output of register ${\cal{H}}_{5}$, we expect to retrieve unitary channel, possibly with some noise, in case of successful implementation. In our notation, Hilbert spaces ${\cal{H}}_{13} = {\cal{H}}_{A}$, ${\cal{H}}_{24} = {\cal{H}}_{B}$, ${\cal{H}}_{0} = {\cal{H}}_{C}$ and ${\cal{H}}_{5} = {\cal{H}}_{D}$ are identical.}
	\label{PSARpdimage}
\end{figure}
Initial state entering the device is the same as was in the previous case of white noise written in equation (\ref{state}). "Dictionary" remains the same as in equation (\ref{dictionary}).

\subsubsection{Storing}

We shall calculate what shall be stored in state $\dyad{\psi}{\psi}$ if channel ${\cal{F}}_{\phi}^{\otimes2}$ is applied twice on such an input state:
\begin{align*}
	&\varrho_{F} \equiv F_{\phi,AB}^{\otimes2} \star \ket{\Psi}_{AA^{\prime}}\bra{\Psi} = \Tr_{A} [(F_{\phi,AB}^{\otimes2} \otimes \mathbb{1}_{A^{\prime}}) (\ket{\Psi}_{AA^{\prime}}\bra{\Psi}^{T_{A}} \otimes \mathbb{1}_{B})] \\&\underset{{\cal{H}}_{B} = {\cal{H}}_{24}}{\overset{{\cal{H}}_{A} = {\cal{H}}_{13}}{=}} \Tr_{13}\Biggl\{\biggl[ (q^{2}\dket{U_{\phi}}_{12}\dbra{U_{\phi}} \otimes \dket{U_{\phi}}_{34}\dbra{U_{\phi}} + q (1-q) \dket{U_{\phi}}_{12}\dbra{U_{\phi}} \otimes P_{34} \\&+ q (1-q)P_{12} \otimes \dket{U_{\phi}}_{34}\dbra{U_{\phi}} + (1-q)^{2} P_{12} \otimes P_{34}) \biggl] \left[ \bigoplus_{j,k=0}^{2} \sqrt{p_{j}p_{k}} \dket{I_{j}}_{13A^{\prime}}\dbra{I_{k}}^{T_{A}} \otimes \mathbb{1}_{24} \right] \Biggl\}.
	\label{storingF} \numberthis
\end{align*}
We shall use the similar approach as in the previous section. We start by evaluating the first term which is the same as was in the case of depolarizing noisy channel calculated in equation (\ref{fifi}):
\begin{align*}
	&\varrho_{\phi,\phi} \equiv \Tr_{A} \left[ \left( \dket{U_{\phi}}_{AB}\dbra{U_{\phi}}^{\otimes2} \otimes \mathbb{1}_{A^{\prime}} \right) \left( \bigoplus_{j,k=0}^{2} \sqrt{p_{j}p_{k}} \dket{I_{j}}_{AA^{\prime}}\dbra{I_{k}}^{T_{A}} \otimes \mathbb{1}_{B} \right) \right] = \\& \bigoplus_{j,k=0}^{2} \frac{1}{\sqrt{3}} e^{ij\phi} \ket{jj}_{BA^{\prime}}\bra{kk} e^{-ik\phi} \frac{1}{\sqrt{3}}.
	\label{fifiF} \numberthis
\end{align*}
The next term, next to the factor $q(1-q)$, is also similar to the one in the previous section:
\begin{align*}
	&\varrho_{\phi,P} \equiv \Tr_{13} \left[ \left( \dket{U_{\phi}}_{12}\dbra{U_{\phi}} \otimes P_{34} \otimes \mathbb{1}_{A^{\prime}} \right) \left( \bigoplus_{j,k=0}^{2} \sqrt{p_{j}p_{k}} \dket{I_{j}}_{13A^{\prime}}\dbra{I_{k}}^{T_{13}} \otimes \mathbb{1}_{B} \right) \right] \underset{(\ref{P})}{\overset{(\ref{stateChannelDuality})}{=}} \\& \Tr_{13} \biggl\{ \left[ \bigoplus_{m,n=0}^{1} \left[(e^{im\phi} \otimes \mathbb{1}_{m_{m}})_{1} \otimes \mathbb{1}_{2} \right] \ket{mm}_{12}\bra{nn} \left[ (e^{-in\phi} \otimes \mathbb{1}_{m_{n}})_{1} \otimes \mathbb{1}_{2} \right] \otimes \bigoplus_{i=0}^{1}\ket{ii}_{34}\bra{ii} \otimes \mathbb{1}_{A^{\prime}} \right] \\& \left( \bigoplus_{j,k=0}^{2} \sqrt{p_{j}p_{k}} \ket{k}_{13}\bra{j} \otimes \ket{j}_{A^{\prime}}\bra{k} \otimes \mathbb{1}_{B} \right) \biggr\} \underset{k=bb^{\prime}}{\overset{j=aa^{\prime}}{=}} \\&  \bigoplus_{\substack{a,a^{\prime},b,b^{\prime}, \\ m,n,i=0}}^{1} \underbrace{\Tr_{13}(\ket{mi}_{13}\bra{ni} \ket{bb^{\prime}}_{13}\bra{aa^{\prime}})}_{i\rightarrow a^{\prime}, b^{\prime} \rightarrow a^{\prime}, m \rightarrow a, n \rightarrow b} e^{i(m-n)\phi} \sqrt{p_{aa^{\prime}}p_{bb^{\prime}}} \ket{mi}_{24}\bra{ni} \otimes \ket{aa^{\prime}}_{A^{\prime}}\bra{bb^{\prime}} = \\& \bigoplus_{a,a^{\prime},b = 0}^{1} e^{i(a-b)\phi} \sqrt{p_{aa^{\prime}}p_{ba^{\prime}}} \ket{aa^{\prime}}_{24}\bra{ba^{\prime}} \otimes \ket{aa^{\prime}}_{A^{\prime}}\bra{ba^{\prime}} \underset{{\cal{H}}_{24} = {\cal{H}}_{B}}{\overset{p_{aa^{\prime}} = \frac{1}{3}}{=}} \\& \frac{1}{3} \bigl( \ket{00}_{B}\bra{00} \otimes \ket{00}_{A^{\prime}}\bra{00} + e^{-i\phi} \ket{00}_{B}\bra{10} \otimes \xcancel{\ket{00}_{A^{\prime}}\bra{10}} + \ket{01}_{B}\bra{01} \otimes \ket{01}_{A^{\prime}}\bra{01} + \\& e^{-i\phi} \ket{01}_{B}\bra{11} \otimes \ket{01}_{A^{\prime}}\bra{11} + e^{i\phi} \ket{10}_{B}\bra{00} \otimes \xcancel{\ket{10}_{A^{\prime}}\bra{00}} + \ket{10}_{B}\bra{10} \otimes \xcancel{\ket{10}_{A^{\prime}}\bra{10}} + \\& e^{i\phi} \ket{11}_{B}\bra{01} \otimes \ket{11}_{A^{\prime}}\bra{01} + \ket{11}_{B}\bra{11} \otimes \ket{11}_{A^{\prime}}\bra{11} \bigr) \underset{\text{disregarding multipliticity states}}{\overset{\text{dictionary } (\ref{dictionary})}{=}} \\& \frac{1}{3} \left( \ket{00}_{BA^{\prime}}\bra{00} + \ket{11}_{BA^{\prime}}\bra{11} + e^{-i\phi}\ket{11}_{BA^{\prime}}\bra{22} + e^{i\phi}\ket{22}_{BA^{\prime}}\bra{11} + \ket{22}_{BA^{\prime}}\bra{22} \right).
	\label{fip} \numberthis
\end{align*}
In the similar fashion, we can calculate also the third term:
\begin{align*}
	&\varrho_{P, \phi} \equiv \Tr_{13} \left[ \left( P_{12} \otimes \dket{U_{\phi}}_{34}\dbra{U_{\phi}} \otimes   \mathbb{1}_{A^{\prime}} \right) \left( \ket{\Psi}_{13A^{\prime}}\bra{\Psi}^{T_{13}} \otimes \mathbb{1}_{B} \right) \right] = \\& \frac{1}{3} \left( \ket{00}_{BA^{\prime}}\bra{00} + e^{-i\phi}\ket{00}_{BA^{\prime}}\bra{11} + e^{i\phi}\ket{11}_{BA^{\prime}}\bra{00} + \ket{11}_{BA^{\prime}}\bra{11} + \ket{22}_{BA^{\prime}}\bra{22} \right).
	\label{pfi} \numberthis
\end{align*}
And the last term is:
\begin{align*}
	\Tr_{A} \left[ \left(P_{12} \otimes P_{34} \otimes \mathbb{1}_{A^{\prime}}\right) \left( \ket{\Psi}_{AA^{\prime}}\bra{\Psi}^{T_{A}} \otimes \mathbb{1}_{B} \right) \right].
	\label{tmp} \numberthis
\end{align*}
Let us first investigate the expression $P_{12} \otimes P_{34}$:
\begin{align*}
	&P_{12} \otimes P_{34} = (\ket{00}_{12}\bra{00} + \ket{11}_{12}\bra{11}) \otimes (\ket{00}_{34}\bra{00} + \ket{11}_{34}\bra{11}) = \\& \ket{00,00}_{1234}\bra{00,00} + \ket{11,00}_{1234}\bra{11,00} + \ket{00,11}_{1234}\bra{00,11} + \ket{11,11}_{1234}\bra{11,11}  \underset{{\cal{H}}_{B} = {\cal{H}}_{24}}{\overset{{\cal{H}}_{A} = {\cal{H}}_{13}}{=}} \\& \ket{00}_{A}\bra{00} \otimes \ket{00}_{B}\bra{00} + \ket{10}_{A}\bra{10} \otimes \ket{10}_{B}\bra{10} + \ket{01}_{A}\bra{01} \otimes \ket{01}_{B}\bra{01} + \ket{11}_{A}\bra{11} \otimes \ket{11}_{B}\bra{11} \overset{(i)}{=} \\& \ket{0}_{A}\bra{0} \otimes \ket{0}_{B}\bra{0} + \ket{3}_{A}\bra{3} \otimes \ket{3}_{B}\bra{3} + \ket{1}_{A}\bra{1} \otimes \ket{1}_{B}\bra{1} + \ket{2}_{A}\bra{2} \otimes \ket{2}_{B}\bra{2} = \sum_{i=0}^{3} \ket{ii}_{AB}\bra{ii}.
\end{align*}
In $(i)$, we have used the dictionary from (\ref{dictionary}). Let us use this result and combine it with equation (\ref{tmp}):
\begin{align*}
	&\Tr_{A} \left[ \left(\sum_{i=0}^{3} \ket{ii}_{AB}\bra{ii} \otimes \mathbb{1}_{A^{\prime}}\right) \left( \bigoplus_{j,k=0}^{2} \sqrt{p_{j}p_{k}} \ket{k}_{A}\bra{j} \otimes \ket{j}_{A^{\prime}}\bra{k} \otimes \mathbb{1}_{B} \right) \right] = \\& \bigoplus_{i=0}^{3}\bigoplus_{j,k=0}^{2} \sqrt{p_{j}p_{k}} \ket{i}_{B}\bra{i} \otimes \ket{j}_{A^{\prime}}\bra{k} \Tr_{A}(\ket{i}_{A}\bra{i}\ket{k}_{A}\bra{j}) \underset{\text{for all }j,k}{\overset{p_{j} = p_{k} = \frac{1}{3}}{=}} \frac{1}{3} \bigoplus_{i=0}^{2} \ket{ii}_{BA^{\prime}}\bra{ii} \equiv \varrho_{P,P}
	\label{PP}\numberthis
\end{align*}

\subsubsection{Retrieving}

Again, we are going to use the same retrieving instrument as before, expressed in equation (\ref{retr instr}). Let us calculate:
\begin{align*}
	R_{s} \star \varrho_{F} = \Tr_{M} \left[R_{MCD}^{s} \left(\varrho_{F}^{T_{M}} \otimes \mathbb{1}_{CD }\right)\right].
\end{align*}
The first term is the same as was in the previous case for the depolarizing channel:
\begin{align*}
	\Tr_{M} \left[R_{MCD}^{s} (\varrho_{\phi,\phi}^{T_{M}} \otimes \mathbb{1}_{CD})\right] \overset{(\ref{retrUfi})}{=} \frac{2}{3} \dket{U_{\phi}}_{CD}\dbra{U_{\phi}}.
\end{align*}
The second term is a little bit more complicated. Let us begin by expressing the transposed state from equation (\ref{fip}):
\begin{align*}
	\varrho_{\phi,P}^{T_{M}} = \frac{1}{3} \bigl( \underbrace{\ket{00}_{M}\bra{00}}_{1} + \underbrace{\ket{11}_{M}\bra{11}}_{2} + \underbrace{e^{-i\phi}\ket{22}_{M}\bra{11}}_{3} + \underbrace{e^{i\phi}\ket{11}_{M}\bra{22}}_{4} + \underbrace{\ket{22}_{M}\bra{22}}_{5} \bigr)
	\label{transp fip} \numberthis
\end{align*}
And now, the core of the calculation:
\begin{align*}
	&\Tr_{M} \left[R_{MCD}^{s} (\varrho_{\phi,P}^{T_{M}} \otimes \mathbb{1}_{CD})\right] \overset{s^{(-1)} = s^{(2)} = 0}{=} \Tr_{M} \left[R_{s}^{(0)} (\varrho_{\phi,P}^{T_{M}} \otimes \mathbb{1}_{CD})\right] + \Tr_{M} \left[R_{s}^{(1)} (\varrho_{\phi,P}^{T_{M}} \otimes \mathbb{1}_{CD})\right] = \\& \frac{1}{3} \Bigl[ \bigl( \hspace*{-5mm}\underbrace{\ket{00}_{CD}\bra{00}}_{1 \text{ from (\ref{transp fip}), } a \text{ from (\ref{Rs01})}} \hspace*{-3mm}+\hspace*{3mm} \underbrace{\ket{11}_{CD}\bra{11}}_{4b} \bigr) + \bigl( \underbrace{\ket{00}_{CD}\bra{00}}_{1e} +  e^{-i\phi}\underbrace{\ket{00}_{CD}\bra{11}}_{3f} +
	e^{i\phi}\underbrace{\ket{11}_{CD}\bra{00}}_{2g} + \underbrace{\ket{11}_{CD}\bra{11}}_{4h} \bigr) \Bigr] = \\&
	\frac{1}{3} \left(\dket{U_{\phi}}_{CD}\dbra{U_{\phi}} + P_{CD}\right)
\end{align*}
We get the same result also for the third term:
\begin{align*}
	\Tr_{M} \left[R_{MCD}^{s} (\varrho_{P, \phi}^{T_{M}} \otimes \mathbb{1}_{CD})\right] = 	\frac{1}{3} \left(\dket{U_{\phi}}_{CD}\dbra{U_{\phi}} + P_{CD}\right).
\end{align*}
And finally, the last term:
\begin{align*}
	&\Tr_{M} \left[R_{MCD}^{s} (\varrho_{P, P}^{T_{M}} \otimes \mathbb{1}_{CD})\right] = \Tr_{M} \left[R_{MCD}^{s} \left(\frac{1}{3}\bigoplus_{i=0}^{2} \ket{ii}_{M}\bra{ii} \otimes \mathbb{1}_{CD}\right)\right] = \\&\frac{1}{3} \left( \bigoplus_{i=0}^{2} \bra{ii}R_{s}^{(0)}\ket{ii} + \bigoplus_{i=0}^{2} \bra{ii}R_{s}^{(1)}\ket{ii} \right) = \frac{1}{3} \bigl( \underbrace{\ket{00}_{CD}\bra{00}}_{a \text{ from (\ref{Rs01})}} + \underbrace{\ket{11}_{CD}\bra{11}}_{d} + \underbrace{\ket{00}_{CD}\bra{00}}_{e} + \underbrace{\ket{11}_{CD}\bra{11}}_{h} \bigr) = \\& \frac{2}{3} P_{CD}
\end{align*}
Putting it all together, we obtain the retrieved channel:
\begin{align*} \label{2rhoF}
	R_{s} \star \varrho_{F} &= \frac{2}{3} q^{2} \dket{U_{\phi}}_{CD}\dbra{U_{\phi}} + \frac{1}{3} q(1-q) \left(\dket{U_{\phi}}_{CD}\dbra{U_{\phi}} + P_{CD}\right) \\&+ \frac{1}{3} q(1-q) \left(\dket{U_{\phi}}_{CD}\dbra{U_{\phi}} + P_{CD}\right) + \frac{2}{3} (1-q)^{2}P_{CD} \\&= \frac{2}{3} \left[ (q^{2} + q(1-q))\dket{U_{\phi}}_{CD}\dbra{U_{\phi}} + (q(1-q) + (1-q)^{2})P_{CD} \right] \\&= \frac{2}{3} \left[q\dket{U_{\phi}}_{CD}\dbra{U_{\phi}} + (1-q) P_{CD} \right]. \numberthis
\end{align*}
From here, we can directly see that we are implementing the channel ${\cal{F}}_{\phi} = q\dket{U_{\phi}}_{CD}\dbra{U_{\phi}} + (1-q) P_{CD}$ with probability $\frac{2}{3}$.

\subsubsection{Generalization and Probabilities}

Generalization of result for $N$ number of implementations of channel ${\cal{F}}_{\phi}$ is rather straightforward as the only variable is probability of success which depends only on the number of times we have access to noisy channel $N$:
\begin{align}\label{redPhaseDamp}
	R_{s} \star \varrho_{F} = \frac{N}{N+1} \left[ q \dket{U_{\phi}}_{CD}\dbra{U_{\phi}} + (1-q) P_{CD} \right].
\end{align}
We have retrieved the same channel with which we have started, ${\cal{F}}_{\phi}$, with the success probability being $p_{suc} = \frac{N}{N+1}$. This is the same success probability as in the original work of Sedlák and Ziman, where they were considering having access to phase gates without any noise, or equivalently, for the case, when $q = 1$. The probability of success $p_{suc}$ is increasing for the rising $N$ as well as the fraction $\frac{Nq}{N+1}$ of the entire retrieved channel, which belongs to unitary part of the channel, is increasing.

\subsection{Comparison}

In this section we shall compare the robustness of PSAR to depolarizing and phase damping channels.
\begin{figure}[h!]
	\vspace*{-0cm}
	\begin{subfigure}{.65\textwidth}
		\centering
		\hspace*{-2.5cm}
		\vspace*{2cm}
		\includegraphics[width=.81\linewidth]{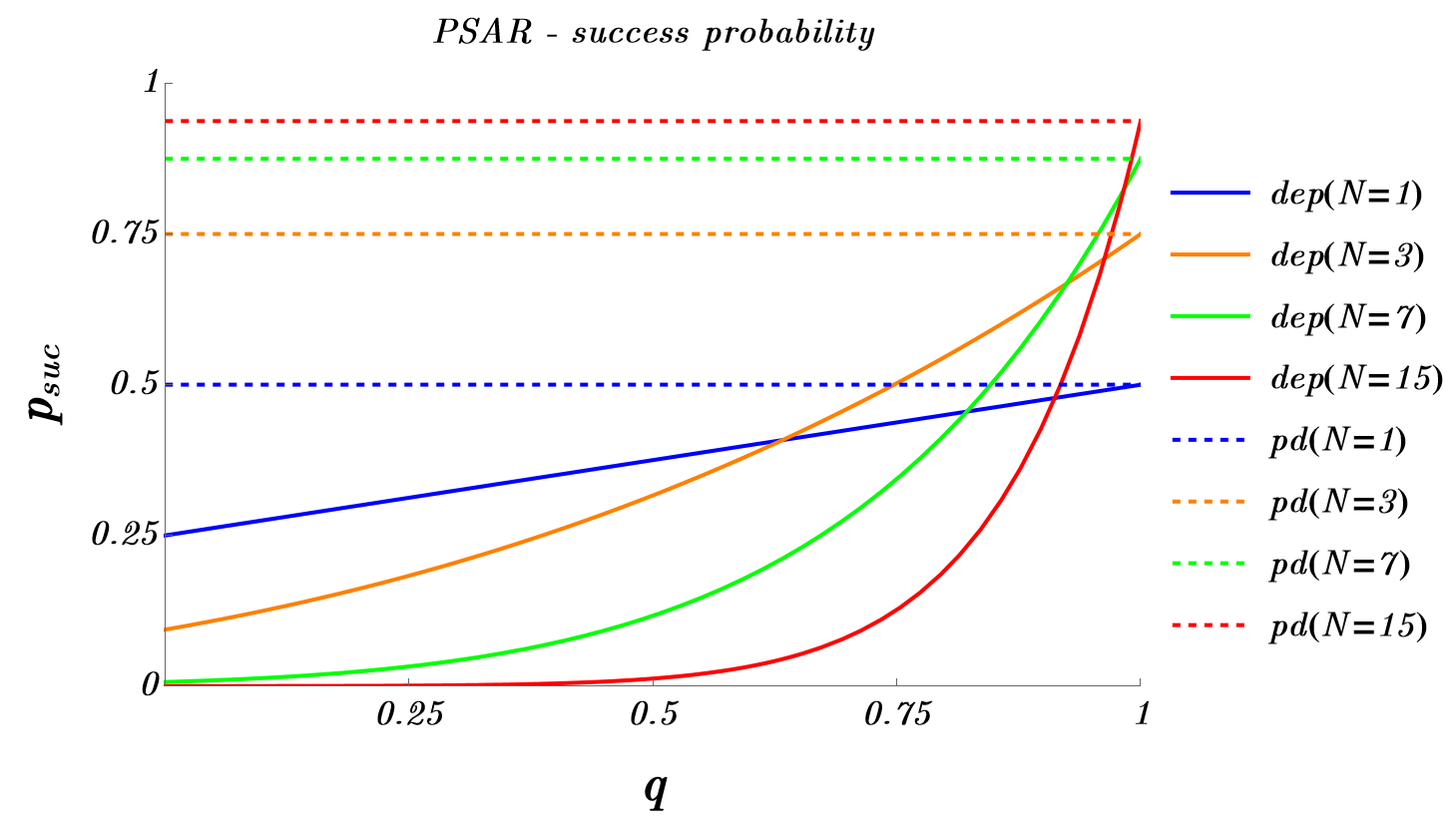}
		\vspace*{-1.5cm}
		\caption{Comparison of success probability $p_{suc}$ in case of \\depolarizing channel (solid lines) and phase damping \\(dashed lines) for $N=1$, $N=3$, $N=7$, and $N=15$.}
		\label{compPSARbetweenPDandDep:a}
	\end{subfigure}
	\hspace*{-2cm}
	\begin{subfigure}{.65\textwidth}
		\vspace*{0.3cm}
		\centering
		\hspace*{-3cm}
		\includegraphics[width=.81\linewidth]{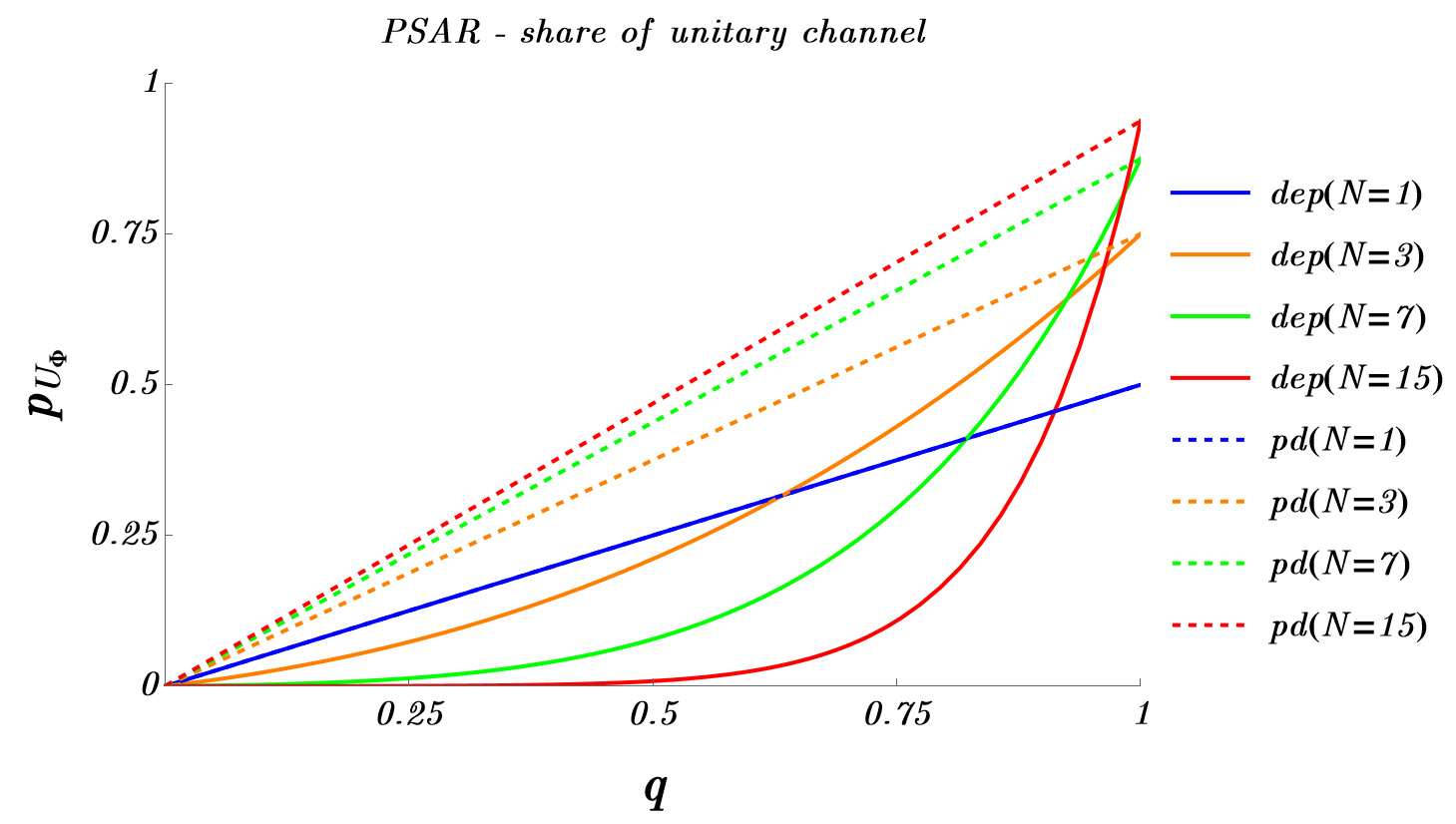}
		\caption{Ratio of unitary channel $p_{U_{\phi}}$ compared to the \\entire implemented channel  in case of depolarizing \\channel (solid lines) and phase damping (dashed \\lines) for $N=1$, $N=3$, $N=7$, and $N=15$.}
		\label{compPSARbetweenPDandDep:b}
	\end{subfigure}
	\hspace*{3cm}
	\begin{subfigure}{.6\textwidth}
		\vspace*{0.3cm}
		\centering
		\includegraphics[width=.8\linewidth]{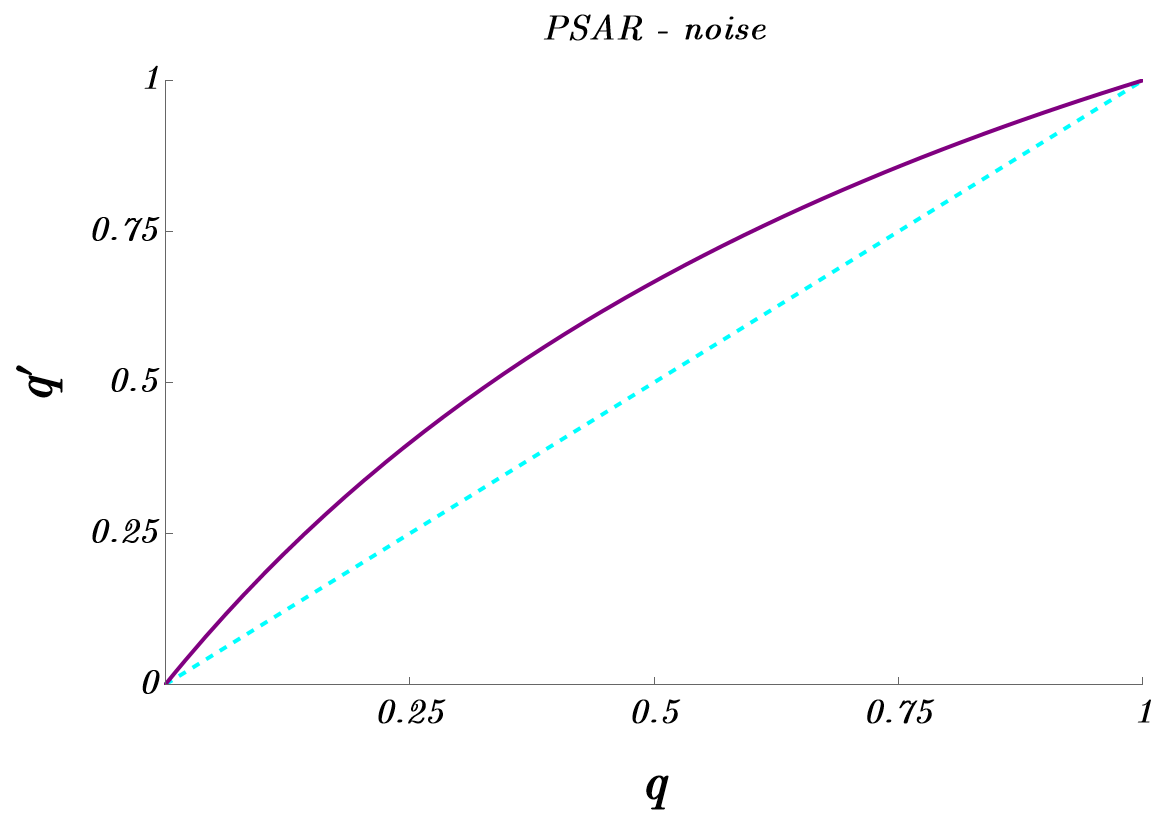}
		\caption{Comparison of noise degrees before storage phase $q$ with the noise present after the retrieval $q^{\prime}$ for depolarizing channel (solid line) and phase damping (dashed line).}
		\label{compPSARbetweenPDandDep:c}
	\end{subfigure}
	\vspace*{0.4cm}
	\caption{In the legends of respective figures, "\emph{pd}" denotes phase damping, while "\emph{dep}" denotes depolarization.}
	\label{compPSARbetweenPDandDep}
\end{figure}
In the figure \ref{compPSARbetweenPDandDep:a} there is a comparison of success probability for implementing depolarizing channel (solid lines) and phase damping (dashed lines) for $N=1$, $N=3$, $N=7$, and $N=15$. We can see that the success probability is always better in case of phase damping except for $q=1$.  We can also see that success probability for depolarization depends also on mixing parameter $q$, which is not the case for phase damping. Therefore, the resilience of PSAR optimized for implementing phase gates is higher against noise caused by phase damping than against the one caused by depolarization. Succes probability for phase damping goes to $1$, while for depolarization the success probability is diminishing for smaller values of $q$ with growing $N$, but for every value $N$ there exists an interval of $q$ for which the success probability is increasing.

In the figure \ref{compPSARbetweenPDandDep:b} we have depicted a comparison between depolarizing and phase damping channels for how big of a share of the entire implemented channel constitutes a unitary channel $p_{U_{\phi}}$ for the same values of $N$ as in the previous figure. Value for depolarization is $p_{U_{\phi}}(dep) = \frac{Nq}{N+1} (\frac{1+q}{2})^{N-1}$ and for phase damping $p_{U_{\phi}}(pd) = \frac{Nq}{N+1}$ as can be seen from equations (\ref{resWhiteNoise}) and (\ref{redPhaseDamp}) respectively. We can see that, except for $N=1$ and $q=1$, the PSAR is closer to implementation of pure unitary channel in case of using channel mixed with phase damping than with depolarization. For depolarizing channel, for every value of $N$ there exists an interval of high value $q$ where the implementation of unitary channel is more successful with growing $N$. Albeit, this region is also diminishing with growing $N$.

Figure \ref{compPSARbetweenPDandDep:c} shows a comparison between degrees of noise constituting both of original channels applied to input state versus degrees of noise still present in the channel after retrieval. This relation is independent of number of uses of channels $N$. In case of dephasing, the noise remains the same after the retrieval. However, for depolarizing noise, $q^{\prime} = 2q/(1+q) \geq q$. Therefore, we shall conclude that PSAR decreases depolarizing noise, and this decrease is not dependent on the number of uses of the noisy channel. It is because the evolution of depolarizing channel can be written as \cite{PreskillLectureNotes}:
\begin{align*}
	\varrho \mapsto \varrho^{\prime} = q\varrho + \frac{1-q}{3} (\sigma_{x} \varrho \sigma_{x} + \sigma_{y} \varrho \sigma_{y} + \sigma_{z} \varrho \sigma_{z}).
\end{align*}
And the retrieving instrument $R_{s}$ disregards the contributions from $\sigma_{x}$ and $\sigma_{y}$ and effectively changes the depolarizing noise to phase damping. This is because Choi matrices of $\varsigma_{x} = \sigma_{x} \cdot \sigma_{x}$ and $\varsigma_{y} = \sigma_{y} \cdot \sigma_{y}$ are:
\begin{align*}
	(\varsigma_{x} \otimes \mathcal{I}) \sum_{ij} \dyad{jj}{ii} &= \dyad{10}{10} + \dyad{10}{01} + \dyad{01}{10} + \dyad{01}{01} \\
	(\varsigma_{y} \otimes \mathcal{I}) \sum_{ij} \dyad{jj}{ii} &= \dyad{10}{10} - \dyad{10}{01} - \dyad{01}{10} + \dyad{01}{01},
\end{align*}
and because there are no terms with $\ket{10}$ or $\ket{01}$ within the retrieving instrument $R_{s}$, as can be seen from equations (\ref{retr instr}) and (\ref{Jretr}), the contributions from $\sigma_{x}$ and $\sigma_{y}$ errors are disregarded.


\subsection{Implementations}

In this section we shall investigate the proposed implementations of PSAR optimized for phase gates in the work \cite{ProbabilisticStorageAndRetrievalOfQubitPhaseGates}.

\subsubsection{Vidal-Masanes-Cirac}

First implementation of phase-gate learning is through the Vidal-Masanes-Cirac protocol \cite{StoringQuantumDynamicsInQuantumStatesAStochasticProgrammableGate}. This device is depicted for general case $N = 2^{k}-1$ of implementing unitary channel in the figure \ref{VMCimage}. The idea is to recycle the unsuccessful result and thus improve the total probability of success. In case of successful measurement, the protocol terminates while in case of unsuccessful one, the corresponding state is reused as the input state for the next register, which possesses more gates than the previous register serving as correction mechanism to the undesired transformation which correspond to the failed measurement. Measurement corresponding to success is $\dyad{0}{0}$ due to successful implementation in case of phase-gate implementation. Thus, $\dyad{1}{1}$ corresponds to unsuccessful measurement. We shall examine how this implementation fares in case of applying noisy channels instead of the unitary ones.
\begin{figure}[H]
	\begin{center}
			\begin{quantikz}
			\ket{\xi}
			\rstick{\hspace{-0.1cm}\\~}
			&\qw
			&\ctrl{1}
			&\qw
			&\ctrl{2}
			&\qw
			&\ctrl{4}
			&\qw
			\\ 
			\ket{+}\rstick{\hspace{-0.1cm}\\~}
			&\gate{U_{\varphi}}
			&\targ{}
			&\meter{}
			&\cw
			&\cw
			&\cw
			&\cw
			\\
			\ket{+}\rstick{\hspace{-0.1cm}\\~}
			&\gate{U_{\varphi}}
			&\gate{U_{\varphi}}
			&\qw
			&\targ{}
			&\meter{}
			&\cw
			&\cw
			\\
			\vdots
			\\
			\ket{+}\rstick{\hspace{-0.1cm}\\~}
			&\gate{U_{\varphi}}
			&\qw\cdots
			&\gate{U_{\varphi}}
			&\qw
			&\qw
			&\targ{}
			&\meter{}
		\end{quantikz}
		\vspace*{0.4cm}
		\caption{Vidal-Masanes-Cirac realization scheme for arbitrary $N=2^{k}-1$ times of applying unitary channel with $k$ being the number of registers. The input states are $\ket{\xi} = a\ket{0} + b\ket{1}$ and $\ket{+} = \frac{1}{\sqrt{2}} (\ket{0} + \ket{1})$. After the successful measurement, the procedure ends, while after the failed measurement, the state is recycled and used as input in the next register with gates correcting for the error in the original case of implementing only unitary channel.}
		\label{VMCimage}
	\end{center}
\end{figure}
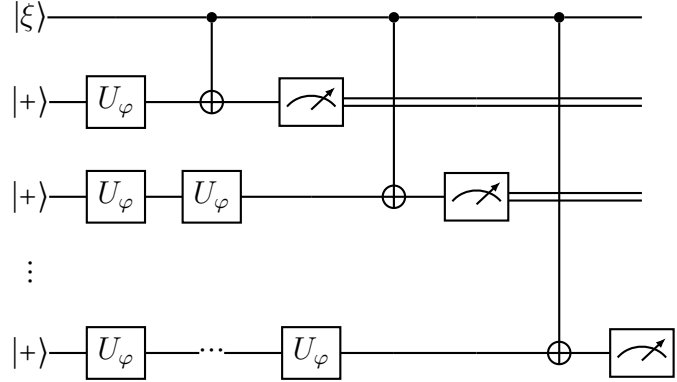

\paragraph{Depolarization}

We shall calculate the implementation for the $N = 3$ case to be able to directly compare the implementation with the equation (\ref{rhoE}). This device is pictured in the figure \ref{VMC}.
\begin{figure}[H]
	\begin{center}
		\begin{quantikz}
			\ket{\xi}
			\rstick{\hspace{-0.1cm}\\~}
			&\qw
			&\ctrl{1}
			&\qw
			&\ctrl{2}
			&\qw
			\\ 
			\ket{+}\rstick{\hspace{-0.1cm}\\~}
			&\gate{E_{\varphi}}
			&\targ{}
			&\meter{}
			&\cw
			&\cw
			\\
			\ket{+}\rstick{\hspace{-0.1cm}\\~}
			&\gate{E_{\varphi}}
			&\gate{E_{\varphi}}
			&\qw
			&\targ{}
			&\meter{}
		\end{quantikz}
	\end{center}
	\caption[font=small]{Vidal-Masanes-Cirac realization scheme for $N=3$ case, when we are applying channel ${\cal{E}}_{\phi}$ three times. Input states are $\ket{\xi} = a\ket{0} + b\ket{1}$ and $\ket{+} = \frac{1}{\sqrt{2}} (\ket{0} + \ket{1})$.}
	\label{VMC}
\end{figure}

Firstly, we have to calculate what happens when we are implementing channel ${\cal{E}}_{\phi}$ on the state $\ket{+} = \frac{1}{\sqrt{2}} (\ket{0} + \ket{1})$ only once:
\begin{align*} \label{E++}
	&{\cal{E}}_{\phi} (\dyad{+}{+}) = q {\cal{U}}_{\phi} (\dyad{+}{+}) + (1-q) {\cal{C}}_{\mathbb{1}/2} (\dyad{+}{+}) \\&= q (\dyad{0}{0} + e^{i\phi} \dyad{1}{1}) \frac{1}{2} (\dyad{0}{0} + \dyad{0}{1} + \dyad{1}{0} + \dyad{1}{1}) (\dyad{0}{0} + e^{-i\phi} \dyad{1}{1}) \\&+ (1-q) \frac{1}{2} (\dyad{0}{0} + \dyad{1}{1}) \\&= \frac{1}{2} \left[q (\dyad{0}{0} + e^{-i\phi}\dyad{0}{1} + e^{i\phi}\dyad{1}{0} + \dyad{1}{1}) + (1-q) (\dyad{0}{0} + \dyad{1}{1})\right] \numberthis
\end{align*}
Now, let us calculate the tensor product of $\dyad{\xi}{\xi} = a^{2} \dyad{0}{0} + ab^{\ast} \dyad{0}{1} + a^{\ast}b \dyad{1}{0} + b^{2} \dyad{1}{1}$ with ${\cal{E}}_{\phi} (\dyad{+}{+})$:
\begin{align*} \label{xiE++}
	&\dyad{\xi}{\xi} \otimes {\cal{E}}_{\phi} (\dyad{+}{+}) \\&= \frac{1}{2} \big\{q \big[a^{2} (\dyad{00}{00} + e^{-i\phi}\dyad{00}{01} + e^{i\phi}\dyad{01}{00} + \dyad{01}{01}) \\&+ ab^{\ast} (\dyad{00}{10} + e^{-i\phi} \dyad{00}{11} + e^{i\phi} \dyad{01}{10} + \dyad{01}{11}) \\&+ a^{\ast}b (\dyad{10}{00} + e^{-i\phi} \dyad{10}{01} + e^{i\phi} \dyad{11}{00} + \dyad{11}{01}) \\&+ b^{2} (\dyad{10}{10} + e^{-i\phi} \dyad{10}{11} + e^{i\phi} \dyad{11}{10} + \dyad{11}{11})\big] \\&+ (1-q) \big[a^{2} (\dyad{00}{00} + \dyad{01}{01}) + ab^{\ast} (\dyad{00}{10} + \dyad{01}{11}) \\&+ a^{\ast}b (\dyad{10}{00} + \dyad{11}{01}) + b^{2} (\dyad{10}{10} + \dyad{11}{11})\big]\big\} \numberthis
\end{align*}
And we have to also apply controlled $NOT$ gate on the previous result with the first qubit being the control one and the second qubit being the target qubit:
\begin{align*} \label{CNOT1}
	&CNOT \left[\dyad{\xi}{\xi} \otimes {\cal{E}}_{\phi} (\dyad{+}{+})\right] \\&= \frac{1}{2} \big\{q \big[a^{2} (\dyad{00}{00} + e^{-i\phi}\dyad{00}{01} + e^{i\phi}\dyad{01}{00} + \dyad{01}{01}) \\&+ ab^{\ast} (\dyad{00}{11} + e^{-i\phi} \dyad{00}{10} + e^{i\phi} \dyad{01}{11} + \dyad{01}{10}) \\&+ a^{\ast}b (\dyad{11}{00} + e^{-i\phi} \dyad{11}{01} + e^{i\phi} \dyad{10}{00} + \dyad{10}{01}) \\&+ b^{2} (\dyad{11}{11} + e^{-i\phi} \dyad{11}{10} + e^{i\phi} \dyad{10}{11} + \dyad{10}{10})\big] \\&+ (1-q) \big[a^{2} (\dyad{00}{00} + \dyad{01}{01}) + ab^{\ast} (\dyad{00}{11} + \dyad{01}{10}) \\&+ a^{\ast}b (\dyad{11}{00} + \dyad{10}{01}) + b^{2} (\dyad{11}{11} + \dyad{10}{10})\big] \big\} \\&\overset{(i)}{=} \frac{1}{2} \Big\{q \big[(a^{2} \dyad{0}{0} + ab^{\ast} e^{-i\phi} \dyad{0}{1} + a^{\ast}b e^{i\phi} \dyad{1}{0} + b^{2} \dyad{1}{1}) \otimes \dyad{0}{0} \\&+ (a^{2} \dyad{0}{0} + ab^{\ast} e^{i\phi} \dyad{0}{1} + a^{\ast}b e^{-i\phi} \dyad{1}{0} + b^{2} \dyad{1}{1}) \otimes \dyad{1}{1}  \big] \\&+ (1-q) \left[(a^{2} \dyad{0}{0} + b^{2} \dyad{1}{1}) \otimes \dyad{0}{0} + (a^{2} \dyad{0}{0} + b^{2} \dyad{1}{1}) \otimes \dyad{1}{1}\right]\Big\} \\&= \frac{1}{2} \Big\{\left[q U_{\phi} \dyad{\xi}{\xi} U_{\phi}^{\dagger} + (1-q) (a^{2} \dyad{0}{0} + b^{2} \dyad{1}{1})\right] \otimes \dyad{0}{0} \\&+ \left[q U_{-\phi} \dyad{\xi}{\xi} U_{-\phi}^{\dagger} + (1-q) (a^{2} \dyad{0}{0} + b^{2} \dyad{1}{1})\right] \otimes \dyad{1}{1}\Big\}, \numberthis
\end{align*}
where in $(i)$ we are discarding non-diagonal states as we only measure states $\ket{0}$ and $\ket{1}$. Therefore, the probability of successful implementation with Vidal-Masanes-Cirac scheme in case of $N = 1$ is $\frac{1}{2}$ and is in agreement with our calculation for the PSAR device. We proceed further and recycle the unsuccessful result and calculate what happens then. Let us calculate what happens if we apply channel ${\cal{E}}_{\phi}$ twice on $\ket{+}$:
\begin{align*}\label{tensoredStates}
	&{\cal{E}}_{\phi} \left[{\cal{E}}_{\phi} (\dyad{+}{+})\right] = q {\cal{U}}_{\phi} \left[q {\cal{U}}_{\phi}(\dyad{+}{+}) + (1-q) {\cal{C}}_{\mathbb{1}/2}(\dyad{+}{+})\right] + (1-q) {\cal{C}}_{\mathbb{1}/2} \left[{\cal{E}}_{\phi}(\dyad{+}{+})\right] \\&= q^{2} {\cal{U}}_{\phi}[{\cal{U}}_{\phi}(\dyad{+}{+})] + q(1-q) {\cal{U}}_{\phi}[{\cal{C}}_{\mathbb{1}/2}(\dyad{+}{+})] + (1-q) {\cal{C}}_{\mathbb{1}/2} \left[{\cal{E}}_{\phi}(\dyad{+}{+})\right] \\&= q^{2} (\dyad{0}{0} + e^{i\phi} \dyad{1}{1}) \frac{1}{2} (\dyad{0}{0} + e^{-\phi} \dyad{0}{1} + e^{i\phi} \dyad{1}{0} + \dyad{1}{1}) (\dyad{0}{0} + e^{-i\phi} \dyad{1}{1}) \\&+ q(1-q) (\dyad{0}{0} + e^{i\phi} \dyad{1}{1}) \frac{1}{2} (\dyad{0}{0} + \dyad{1}{1}) (\dyad{0}{0} + e^{-i\phi} \dyad{1}{1}) + (1-q) \frac{1}{2} (\dyad{0}{0} + \dyad{1}{1}) \\&= q^{2} \frac{1}{2} (\dyad{0}{0} + e^{-2i\phi} \dyad{0}{1} + e^{2i\phi} \dyad{1}{0} + \dyad{1}{1}) + \left[q(1-q) + (1-q)\right] \frac{1}{2} (\dyad{0}{0} + \dyad{1}{1}) \\&= q^{2} U_{2\phi} \dyad{\xi}{\xi} U_{2\phi}^{\dagger} + \frac{1-q^{2}}{2} \mathbb{1}. \numberthis
\end{align*}
Let us take the state corresponding to the failed measurement from the equation (\ref{CNOT1}):
\begin{align*}  \label{StateRecycled}
	\varrho_{recycled} &\equiv \frac{1}{2} \left[q U_{-\phi} \dyad{\xi}{\xi} U_{-\phi}^{\dagger} + (1-q) (a^{2} \dyad{0}{0} + b^{2} \dyad{1}{1})\right] \otimes \left[q^{2} U_{2\phi} \dyad{+}{+} U_{2\phi}^{\dagger} + \frac{1-q^{2}}{2} \mathbb{1}\right] \\& = \left[q \frac{1}{2} (a^{2} \dyad{0}{0} + e^{i\phi} ab^{\ast} \dyad{0}{1} + e^{-i\phi} a^{\ast}b \dyad{1}{0} + b^{2} \dyad{1}{1}) + \frac{1-q}{2} (a^{2}\dyad{0}{0} + b^{2} \dyad{1}{1})\right] \\&\otimes \left[\frac{q^{2}}{2} (\dyad{0}{0} + e^{-2i\phi} \dyad{0}{1} + e^{i2\phi} \dyad{1}{0} + \dyad{1}{1}) + \frac{1-q^{2}}{2} (\dyad{0}{0} + \dyad{1}{1})\right] \\&= \frac{q^{3}}{4} \big[a^{2} (\dyad{00}{00} + e^{-i2\phi} \dyad{00}{01} + e^{i2\phi} \dyad{01}{00} + \dyad{01}{01}) \\&+ ab^{\ast} (e^{i\phi} \dyad{00}{10} + e^{-i\phi} \dyad{00}{11} + e^{i3\phi}\dyad{01}{10} + e^{i\phi} \dyad{01}{11}) \\&+ a^{\ast}b (e^{-i\phi}\dyad{10}{00} + e^{-i3\phi} \dyad{10}{01} + e^{i\phi} \dyad{11}{00} + e^{-i\phi} \dyad{11}{01}) \\&+ b^{2} (\dyad{10}{10} + e^{-i2\phi} \dyad{10}{11} + e^{i2\phi} \dyad{11}{10} + \dyad{11}{11})\big] \\&+ \frac{1}{4} q(1-q^{2}) \big[a^{2} (\dyad{00}{00} + \dyad{01}{01}) + ab^{\ast} (e^{i\phi} \dyad{00}{10} + e^{i\phi} \dyad{01}{11}) \\&+ a^{\ast}b (e^{-i\phi} \dyad{10}{00} + e^{-i\phi} \dyad{11}{01}) + b ^{2} (\dyad{10}{10} + \dyad{11}{11})\big] \\&+ \frac{1}{4}q^{2}(1-q) \big[a^{2} (\dyad{00}{00} + e^{-i2\phi} \dyad{00}{01} + e^{i2\phi} \dyad{01}{00} + \dyad{01}{01}) \\&+ b^{2} (\dyad{10}{10} + e^{-2i\phi} \dyad{10}{11} + e^{i2\phi} \dyad{11}{10} + \dyad{11}{11})\big] \\&+ \frac{1}{4} (1-q)(1-q^{2}) \left[a^{2}(\dyad{00}{00} + \dyad{01}{01}) + b^{2} (\dyad{10}{10} + \dyad{11}{11})\right], \numberthis
\end{align*}
where we have denoted the resulting state as $\varrho_{recycled}$. Now we only have to apply $CNOT$ gate on the previous state:
\begin{align*} \label{CNOT3}
	&CNOT\left\{\varrho_{recycled} \otimes {\cal{E}}_{\phi}\left[{\cal{E}}_{\phi}(\dyad{+}{+})\right]\right\} \\&= \frac{q^{3}}{4} \big[a^{2} (\dyad{00}{00} + e^{-i2\phi} \dyad{00}{01} + e^{i2\phi} \dyad{01}{00} + \dyad{01}{01}) \\&+ ab^{\ast} (e^{i\phi} \dyad{00}{11} + e^{-i\phi} \dyad{00}{10} + e^{i3\phi}\dyad{01}{11} + e^{i\phi} \dyad{01}{10}) \\&+ a^{\ast}b (e^{-i\phi}\dyad{11}{00} + e^{-i3\phi} \dyad{11}{01} + e^{i\phi} \dyad{10}{00} + e^{-i\phi} \dyad{10}{01}) \\&+ b^{2} (\dyad{11}{11} + e^{-i2\phi} \dyad{11}{10} + e^{i2\phi} \dyad{10}{11} + \dyad{10}{10})\big] \\&+ \frac{1}{4} q(1-q^{2}) \big[a^{2} (\dyad{00}{00} + \dyad{01}{01}) + ab^{\ast} (e^{i\phi} \dyad{00}{11} + e^{i\phi} \dyad{01}{10}) \\&+ a^{\ast}b (e^{-i\phi} \dyad{11}{00} + e^{-i\phi} \dyad{10}{01}) + b ^{2} (\dyad{11}{11} + \dyad{10}{10})\big] \\&+ \frac{1}{4}q^{2}(1-q) \big[a^{2} (\dyad{00}{00} + e^{-i2\phi} \dyad{00}{01} + e^{i2\phi} \dyad{01}{00} + \dyad{01}{01}) \\&+ b^{2} (\dyad{11}{11} + e^{-2i\phi} \dyad{11}{10} + e^{i2\phi} \dyad{10}{11} + \dyad{10}{10})\big] \\&+ \frac{1}{4} (1-q)(1-q^{2}) \left[a^{2}(\dyad{00}{00} + \dyad{01}{01}) + b^{2} (\dyad{11}{11} + \dyad{10}{10})\right] \\&\overset{(i)}{=} \frac{q^{3}}{4} \left[a^{2} \dyad{0}{0} + ab^{\ast} e^{-i\phi} \dyad{1}{0} + a^{\ast}b e^{i\phi} \dyad{1}{0} + b^{2} \dyad{1}{1}\right] \otimes \dyad{0}{0} \\&+ \frac{q^{3}}{4} \left[a^{2} \dyad{0}{0} + ab^{\ast} e^{i3\phi} \dyad{0}{1} + a^{\ast}b e^{-i3\phi} \dyad{1}{0} + b^{2} \dyad{1}{1} \right] \otimes \dyad{1}{1} \\&+ \frac{q}{4}(1-q^{2}) (a^{2} \dyad{0}{0} + b^{2} \dyad{1}{1}) \otimes \dyad{0}{0} + \frac{q}{4}(1-q^{2}) (a^{2} \dyad{0}{0} + b^{2} \dyad{1}{1}) \otimes \dyad{1}{1} \\&+ \frac{q^{2}}{4}(1-q) (a^{2} \dyad{0}{0} + b^{2} \dyad{1}{1}) \otimes \dyad{0}{0} + \frac{q^{2}}{4}(1-q) (a^{2} \dyad{0}{0} + b^{2} \dyad{1}{1}) \otimes \dyad{1}{1} \\&+ \frac{1}{4} (1-q)(1-q^{2}) (a^{2} \dyad{0}{0} + b^{2} \dyad{1}{1}) \otimes \dyad{0}{0} + \frac{1}{4} (1-q)(1-q^{2}) (a^{2} \dyad{0}{0} + b^{2} \dyad{1}{1}) \otimes \dyad{1}{1} \\&= \frac{q^{3}}{4} U_{\phi} \dyad{\xi}{\xi} U_{\phi}^{\dagger} \otimes \dyad{0}{0} + \left[\frac{1}{4} q(1-q^{2}) + \frac{1}{4} q^{2}(1-q) + \frac{1}{4} (1-q)(1-q^{2})\right] (a^{2} \dyad{0}{0} + b^{2} \dyad{1}{1}) \otimes \dyad{0}{0} \\&+ \frac{q^{3}}{4} U_{-3\phi} \dyad{\xi}{\xi} U_{-3\phi}^{\dagger} \otimes \dyad{1}{1} + \left[\frac{1}{4} q(1-q^{2}) + \frac{1}{4} q^{2}(1-q) + \frac{1}{4} (1-q)(1-q^{2})\right] (a^{2} \dyad{0}{0} + b^{2} \dyad{1}{1}) \otimes \dyad{1}{1} \\&= \frac{1}{4}  \Big\{ \left[q^{3} U_{\phi} \dyad{\xi}{\xi} U_{\phi}^{\dagger} + (1-q^{3}) (a^{2} \dyad{0}{0} + b^{2} \dyad{1}{1})\right] \otimes \dyad{0}{0} \\&+ \left[q^{3} U_{-3\phi} \dyad{\xi}{\xi} U_{-3\phi}^{\dagger} + (1-q^{3}) (a^{2} \dyad{0}{0} + b^{2} \dyad{1}{1})\right] \otimes \dyad{1}{1} \Big\}. \numberthis
\end{align*}
In $(i)$ we are discarding the states that are not on the diagonal as they do not add anything due to the measurement we employ. Here we obtain probability of success being $\frac{1}{4}$. To get the final probability of successful implementation, we must sum up the probabilities that correspond to successful measurement in case of successful measurement with one register from equation (\ref{CNOT1}) with the probability of successful measurement with two registers from equation (\ref{CNOT3}): $\frac{1}{2} + \frac{1}{4} = \frac{3}{4}$. Therefore, the probability of successful measurement differs from equation (\ref{rhoE}) for the PSAR device, where it depends also on mixing parameter $q$. Another difference here is, that if we succeed with a measurement on the first try, with only one register used, we implement different channel $qU_{\phi}\dyad{\xi}{\xi}U_{\phi}^{\dagger} + (1-q) (a^{2} \dyad{0}{0} + b^{2} \dyad{1}{1})$ compared to the case if we succeed with the measurement when we are using two registers $q^{3}U_{\phi}\dyad{\xi}{\xi}U_{\phi}^{\dagger} + (1-q^{3}) (a^{2} \dyad{0}{0} + b^{2} \dyad{1}{1})$.

\paragraph{Phase Damping}
Let us repeat the same procedure also for the phase damping where we are implementing noisy channel ${\cal{F}}_{\phi}$ three times as depicted in the figure \ref{VMCpd}.
\begin{figure}[H]
	\begin{center}
		\begin{quantikz}
			\ket{\xi}
			\rstick{\hspace{-0.1cm}\\~}
			&\qw
			&\ctrl{1}
			&\qw
			&\ctrl{2}
			&\qw
			\\ 
			\ket{+}\rstick{\hspace{-0.1cm}\\~}
			&\gate{F_{\phi}}
			&\targ{}
			&\meter{}
			&\cw
			&\cw
			\\
			\ket{+}\rstick{\hspace{-0.1cm}\\~}
			&\gate{F_{\phi}}
			&\gate{F_{\phi}}
			&\qw
			&\targ{}
			&\meter{}
		\end{quantikz}
	\end{center}
	\caption[font=small]{Vidal-Masanes-Cirac realization scheme for $N=3$ case, when we are applying channel ${\cal{F}}_{\phi}$ three times. The input states are $\ket{\xi} = a\ket{0} + b\ket{1}$ and $\ket{+} = \frac{1}{\sqrt{2}} (\ket{0} + \ket{1})$.}
	\label{VMCpd}
\end{figure}
Firstly, we have to calculate what happens if we implement channel ${\cal{F}}_{\phi}$ once:
\begin{align*}
	{\cal{F}}_{\phi}(\dyad{+}{+}) = q {\cal{U}}_{\phi} (\dyad{+}{+}) + (1-q) {\cal{P}}(\dyad{+}{+}) \overset{(\ref{E++})}{=} {\cal{E}}_{\phi}(\dyad{+}{+}).
\end{align*}
And because the previous equation is true, also if we apply $CNOT$ on $	\dyad{\xi}{\xi} \otimes {\cal{F}}_{\phi}(\dyad{+}{+})$ we obtain the same result as in the case of depolarization:
\begin{align*}
	CNOT \left[\dyad{\xi}{\xi} \otimes {\cal{F}}_{\phi} (\dyad{+}{+})\right] \overset{(\ref{CNOT1})}{=} CNOT \left[\dyad{\xi}{\xi} \otimes {\cal{E}}_{\phi} (\dyad{+}{+})\right].
\end{align*} 
Now, let us calculate what happens if we apply channel ${\cal{F}}_{\phi}$ twice on the state $\ket{+}$:
\begin{align*}
	&{\cal{F}}_{\phi} \left[{\cal{F}}_{\phi}(\dyad{+}{+})\right] = q {\cal{U}}_{\phi} \left[q {\cal{U}}_{\phi} (\dyad{+}{+}) + (1-q) {\cal{P}}(\dyad{+}{+})\right] + (1-q) {\cal{P}}\left[{\cal{F}}_{\phi}(\dyad{+}{+})\right] \\&= q^{2} {\cal{U}}_{\phi} \left[{\cal{U}}_{\phi}(\dyad{+}{+})\right] + \left[q(1-q) + (1-q)\right] {\cal{P}}(\dyad{+}{+}) = q^{2} U_{2\phi} \dyad{\xi}{\xi} U_{2\phi}^{\dagger} + \frac{1-q^{2}}{2} \mathbb{1} \\&\overset{(\ref{tensoredStates})}{=} {\cal{E}}_{\phi} \left[{\cal{E}}_{\phi}(\dyad{+}{+})\right], 
\end{align*}
where ${\cal{P}}(\dyad{+}{+}) = \frac{1}{2} ({\cal{I}} + \varsigma_{z}) (\dyad{+}{+}) = \frac{1}{2} (\dyad{0}{0} + \dyad{1}{1})$ and therefore also ${\cal{U}}_{\phi}\left[{\cal{P}}(\dyad{+}{+})\right] = \\{\cal{P}}\left[{\cal{F}}(\dyad{+}{+})\right] = \frac{1}{2} \mathbb{1}$. Because the resulting state is the same as was in the depolarizing case, also the further steps will not divert from depolarization. Therefore, we are implementing the same final operation as in the depolarization case, where the state $\varrho_{recycled}$ is from the equation (\ref{StateRecycled}):
\begin{align*}
	&CNOT \left[\varrho_{recycled} \otimes {\cal{F}}_{\phi}\left({\cal{F}}_{\phi}(\dyad{+}{+})\right)\right] \overset{(\ref{CNOT3})}{=} CNOT \left[\varrho_{recycled} \otimes {\cal{E}}_{\phi}\left({\cal{E}}_{\phi}(\dyad{+}{+})\right)\right].
\end{align*}
Therefore, also after putting up the probabilities together the successful measurements in case of implementing the channel only once and in case of repairing it once for the failed measurement we are obtaining the same success probability of $\frac{3}{4}$, which is in agreement with the probability from equation (\ref{redPhaseDamp}) for the PSAR device.

\paragraph{Generalization}
We shall generalize the implementation using Vidal-Masanes-Cirac scheme, which is rather straightforward process because the only variables are the normalization of the entire state, powers of $q^{N}$ and $(1-q)^{N}$ and phase factor in case of unitary transformation corresponding to failed measurement $U_{-N\phi}\dyad{\xi}{\xi}U_{-N\phi}^{\dagger}$. In case of $N$ gates used we obtain:
\begin{align*} \label{genVMC}
	&\frac{1}{2^{k}} \big\{ \left[q^{N} U_{\phi} \dyad{\xi}{\xi} U_{\phi}^{\dagger} + (1-q^{N}) (a^{2} \dyad{0}{0} + b^{2} \dyad{1}{1})\right] \otimes \dyad{0}{0} \\&+ \left[q^{N} U_{-N\phi} \dyad{\xi}{\xi} U_{-N\phi}^{\dagger} + (1-q^{N}) (a^{2} \dyad{0}{0} + b^{2} \dyad{1}{1})\right] \otimes \dyad{1}{1} \big\}, \numberthis
\end{align*}
with $k$ being number of registers in the scheme and with the relation to number of total gates used being $N = 2^{k} - 1$. With the success probability being $\frac{1}{2} + \frac{1}{4} + \cdot + \frac{1}{2^{k}} = \frac{N}{N+1}$, i.e., the same as for the original case of implementing only phase gate. Which is also the same as PSAR device implementing noisy channel with phase damping as in equation (\ref{redPhaseDamp}) but different compared to success probability of implementing noisy channel with depolorazition as in equation (\ref{resWhiteNoise}). Also, the implemented channel is noisier and noisier with each correction, because with each new implementation the term next to unitarily transformed state in equation (\ref{genVMC}) is diminishing because $q \leq 1$.

\subsubsection{Virtual Qudit}
In this section, we shall describe one more implementation that is heavily inspired by \cite{ProbabilisticStorageAndRetrievalOfQubitPhaseGates} but is slightly modified from the original. The scheme is depicted in the figure \ref{ancImpl}. As the input state we take:
\begin{align}\label{ancImpSt}
	\ket{\psi} = \frac{1}{\sqrt{N+1}} \sum_{j=0}^{N} \ket{j} \in {\cal{H}}_{A}.
\end{align}
Crucial step in this particular implementation is defining virtual qudit with the dimension being $2^{N}$. Which is different compared to the previous work \cite{ProbabilisticStorageAndRetrievalOfQubitPhaseGates}, where the dimension was $N$ but here, the depolarizing channel also touches the multiplicity spaces. Again, slight difference compared to the previous work is that the channel that maps states from ${\cal{H}}_{A}$ in the space of our virtual qudit behaves as identity everywhere. Also, the shift-down operator that takes as input a control qubit and a target qudit must be extended in the following way:
\begin{align}
	\label{shiftDown}
	{\cal{C}}_{\ominus} (\ket{c} \otimes \ket{t}) = \begin{cases}
		&\ket{c} \otimes \ket{t \ominus c} \quad if \hspace*{0.1cm} \ket{t} \in {\cal{H}}_{A},\\
		&\ket{c} \otimes \ket{t} \qquad\quad\hspace*{-2mm} otherwise,
	\end{cases} \numberthis
\end{align}
\begin{figure}[H]
	\begin{center}
		\includegraphics[width=8cm]{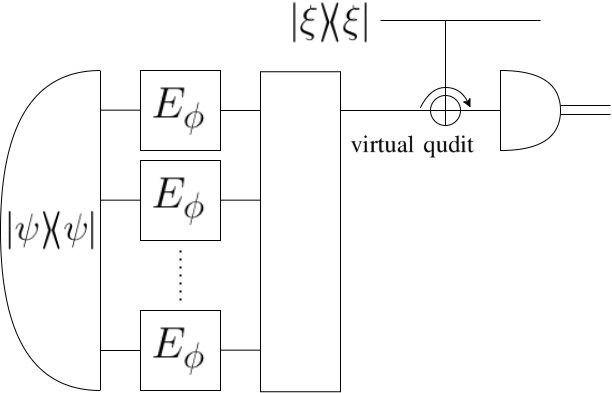}
		\vspace{1cm}
		\caption{We have chosen the implementation of noisy channel ${\cal{E}}_{\phi}^{\otimes N}$ for this image, as the difference for implementing channel ${\cal{F}}_{\phi}^{\otimes N}$ would be only in using different gates in the storing phase. Firstly, the channel is stored in the state $\ket{\psi}$ expressed in the equation (\ref{ancImpSt}). Then it is transformed to the virtual qudit by identity channel. Following this, the shift-down operator ${\cal{C}}_{\ominus}$ acts on the control qubit $\ket{\xi} = a\ket{0} + b\ket{1}$ and as target it takes the virtual qudit. In the end, the measurement in the basis of a virtual qudit is performed, with states $\ket{0}, \cdots, \ket{N-1}$ forming successful measurement.}
		\label{ancImpl}
	\end{center}
\end{figure}
where the operator behaves as identity on the multiplicity states. After the application of the shift-down operator we perform a measurement in the basis of virtual qudit where the first $N$ states $\{\ket{j}\}_{j=0}^{N-1}$ correspond to successful measurement while the remaining $2^{N} - N$ states correspond to failed measurement.

\paragraph{Depolarization}
Let us calculate what happens using virtual qudit implementation of our quantum network and therefore also uncover the result in case of the failed measurement. Input state into the scheme is:
\begin{align*} \label{impInSt}
	\ket{\psi} = \frac{1}{\sqrt{3}} (\ket{00} + \ket{01} + \ket{11}) \overset{(\ref{dictionary})}{=} \frac{1}{\sqrt{3}} (\ket{0} + \ket{1} + \ket{2}). \numberthis
\end{align*}
In our present case, we are applying the channel ${\cal{E}}_{\phi}^{\otimes2}$ on the input state:
\begin{align*}
	{\cal{E}}_{\phi}^{\otimes2}(\dyad{\psi}{\psi}) = q^{2} {\cal{U}}_{\phi}^{\otimes2} (\dyad{\psi}{\psi}) + q(1-q) \left[({\cal{U}}_{\phi} \otimes {\cal{C}}_{\mathbb{1}/2}) + ({\cal{C}}_{\mathbb{1}/2} \otimes {\cal{U}}_{\phi})\right] (\dyad{\psi}{\psi}) + (1-q)^{2} {\cal{C}}_{\mathbb{1}/2}^{\otimes2} (\dyad{\psi}{\psi}).
\end{align*}
Let us calculate the individual terms of the previous equation:
\begin{align*} \label{StInpStImpl}
	{\cal{U}}_{\phi}^{\otimes2} (\dyad{\psi}{\psi}) &= \frac{1}{3} (\dyad{00}{00} + e^{-i\phi} \dyad{00}{01} + e^{-2i\phi} \dyad{00}{11} + e^{i\phi} \dyad{01}{00} + \dyad{01}{01} \\&+ e^{-i\phi} \dyad{01}{11} + e^{2i\phi} \dyad{11}{00} + e^{i\phi} \dyad{11}{01} + \dyad{11}{11}) \\&\hspace*{-0.3cm}\overset{(\ref{dictionary})}{=} \frac{1}{3} (\dyad{0}{0} + e^{-i\phi} \dyad{0}{1} + e^{-2i\phi} \dyad{0}{2} + e^{i\phi} \dyad{1}{0} + \dyad{1}{1} + e^{-i\phi} \dyad{1}{2} \\&+ e^{2i\phi} \dyad{2}{0} + e^{i\phi} \dyad{2}{1} + \dyad{2}{2}),\\
	{\cal{C}}_{\mathbb{1}/2}^{\otimes2} (\dyad{\psi}{\psi}) &= \frac{1}{4} (\dyad{00}{00} + \dyad{01}{01} + \dyad{10}{10} + \dyad{11}{11}) \\&\hspace*{-0.3cm}\overset{(\ref{dictionary})}{=} \frac{1}{4} (\dyad{0}{0} + \dyad{1}{1} + \dyad{3}{3} + \dyad{2}{2}),\\
	({\cal{U}}_{\phi} \otimes {\cal{C}}_{\mathbb{1}/2}) (\dyad{\psi}{\psi}) &= ({\cal{U}}_{\phi} \otimes {\cal{C}}_{\mathbb{1}/2}) \bigg[\frac{1}{3} (\dyad{00}{00} + \dyad{00}{01} + \dyad{00}{11} \\&+ \dyad{01}{00} + \dyad{01}{01} + \dyad{01}{11} + \dyad{11}{00} + \dyad{11}{01} + \dyad{11}{11})\bigg] \\&= \frac{1}{3} \big[{\cal{U}}_{\phi}(\dyad{0}{0}) \otimes {\cal{C}}_{\mathbb{1}/2}(\dyad{0}{0}) + {\cal{U}}_{\phi}(\dyad{0}{0}) \otimes {\cal{C}}_{\mathbb{1}/2}(\dyad{0}{1}) \\&+ {\cal{U}}_{\phi}(\dyad{0}{1}) \otimes {\cal{C}}_{\mathbb{1}/2}(\dyad{0}{1}) + {\cal{U}}_{\phi}(\dyad{0}{0}) \otimes {\cal{C}}_{\mathbb{1}/2}(\dyad{1}{0}) \\&+ {\cal{U}}_{\phi}(\dyad{0}{0}) \otimes {\cal{C}}_{\mathbb{1}/2}(\dyad{1}{1}) + {\cal{U}}_{\phi}(\dyad{0}{1}) \otimes {\cal{C}}_{\mathbb{1}/2}(\dyad{1}{1}) \\&+ {\cal{U}}_{\phi}(\dyad{1}{0}) \otimes {\cal{C}}_{\mathbb{1}/2}(\dyad{1}{0}) + {\cal{U}}_{\phi}(\dyad{1}{0}) \otimes {\cal{C}}_{\mathbb{1}/2}(\dyad{1}{1}) \\&+ {\cal{U}}_{\phi}(\dyad{1}{1}) \otimes {\cal{C}}_{\mathbb{1}/2}(\dyad{1}{1})\big] \\&\hspace*{-0.1cm}\overset{(i)}{=} \frac{1}{3} \bigg(\dyad{0}{0} \otimes \frac{\mathbb{I}}{2} + \dyad{0}{0} \otimes \frac{\mathbb{I}}{2} + e^{-i\phi}\dyad{0}{1} \otimes \frac{\mathbb{I}}{2} + e^{i\phi}\dyad{1}{0} \otimes \frac{\mathbb{I}}{2} + \dyad{1}{1} \otimes \frac{\mathbb{I}}{2}\bigg) \\&= \frac{1}{6} [2(\dyad{00}{00} + \dyad{01}{01}) + e^{-i\phi} (\dyad{00}{10} + \dyad{01}{11}) + e^{i\phi} (\dyad{10}{00} + \dyad{11}{01}) \\&+ \dyad{10}{10} + \dyad{11}{11}] \\&\hspace*{-0.3cm}\overset{(\ref{dictionary})}{=} \frac{1}{6} [2(\dyad{0}{0} + \dyad{1}{1}) + e^{-i\phi} (\dyad{0}{3} + \dyad{1}{2}) + e^{i\phi} (\dyad{3}{0} + \dyad{2}{1}) + \dyad{3}{3} + \dyad{2}{2}],\\
	({\cal{C}}_{\mathbb{1}/2} \otimes {\cal{U}}_{\phi}) (\dyad{\psi}{\psi}) &= \frac{1}{6} \big[\dyad{0}{0} + \dyad{3}{3} + 2(\dyad{1}{1} + \dyad{2}{2}) + e^{-i\phi} (\dyad{0}{1} + \dyad{3}{2}) + e^{i\phi}(\dyad{1}{0} + \dyad{2}{3})\big]. \numberthis
\end{align*}
In $(i)$ we have used that ${\cal{C}}_{\mathbb{1}/2} (\dyad{0}{1}) = {\cal{C}}_{\mathbb{1}/2} (\dyad{1}{0}) = 0$ as can be seen from the equation (\ref{contraction}). Let us assume that the retrieved transformation should act on the state $\ket{\xi} = a\ket{0} + b\ket{1}$:
\begin{align*} \label{xiuu}
	&\dyad{\xi}{\xi} \otimes {\cal{U}}_{\phi}^{\otimes2} (\dyad{\psi}{\psi}) = \\&\frac{1}{3} \big[a^{2} (\dyad{00}{00} + e^{-i\phi} \dyad{00}{01} + e^{-2i\phi} \dyad{00}{02} + e^{i\phi} \dyad{01}{00} \\
	&+ \dyad{01}{01} + e^{-i\phi} \dyad{01}{02} + e^{2i\phi} \dyad{02}{00} + e^{i\phi} \dyad{02}{01} + \dyad{02}{02}) \\
	&+ ab^{\ast} (\dyad{00}{10} + e^{-i\phi} \dyad{00}{11} + e^{-2i\phi} \dyad{00}{12} + e^{i\phi} \dyad{01}{10} \\
	&+ \dyad{01}{11} + e^{-i\phi} \dyad{01}{12} + e^{2i\phi} \dyad{02}{10} + e^{i\phi} \dyad{02}{11} + \dyad{02}{12}) \\
	&+ a^{\ast}b (\dyad{10}{00} + e^{-i\phi} \dyad{10}{01} + e^{-2i\phi} \dyad{10}{02} + e^{i\phi} \dyad{11}{00} \\
	&+ \dyad{11}{01} + e^{-i\phi} \dyad{11}{02} + e^{2i\phi} \dyad{12}{00} + e^{i\phi} \dyad{12}{01} + \dyad{12}{02}) \\
	&+b^{2} (\dyad{10}{10} + e^{-i\phi} \dyad{10}{11} + e^{-2i\phi} \dyad{10}{12} + e^{i\phi} \dyad{11}{10} \\
	&+ \dyad{11}{11} + e^{-i\phi} \dyad{11}{12} + e^{2i\phi} \dyad{12}{10} + e^{i\phi} \dyad{12}{11} + \dyad{12}{12})\big]. \numberthis
\end{align*}
Acting on the last term:
\begin{align*} \label{xicc}
	&\dyad{\xi}{\xi} \otimes {\cal{C}}_{\mathbb{1}/2}^{\otimes2} (\dyad{\psi}{\psi}) = \\&\frac{1}{4} \big[a^{2} (\dyad{00}{00} + \dyad{01}{01} + \dyad{02}{02} + \dyad{03}{03}) + ab^{\ast} (\dyad{00}{10} + \dyad{01}{11} + \dyad{02}{12} + \dyad{03}{13}) \\& a^{\ast}b (\dyad{10}{00} + \dyad{11}{01} + \dyad{12}{02} + \dyad{13}{03}) + b^{2} (\dyad{10}{10} + \dyad{11}{11} + \dyad{12}{12} + \dyad{13}{13})\big]. \numberthis
\end{align*}
And the mixed term gives us the following result:
\begin{align*} \label{xiuc}
	&\dyad{\xi}{\xi} \otimes ({\cal{U}}_{\phi} \otimes {\cal{C}}_{\mathbb{1}/2}) (\dyad{\psi}{\psi}) = \\&\frac{1}{6} \big[a^{2} (2\dyad{00}{00} + 2 \dyad{01}{01} + e^{i\phi} \dyad{03}{00} + e^{i\phi} \dyad{02}{01} \\&+ \dyad{03}{03} + \dyad{02}{02} + e^{-i\phi} \dyad{00}{03} + e^{-i\phi} \dyad{01}{02}) \\&+ ab^{\ast} (2\dyad{00}{10} + 2 \dyad{01}{11} + e^{i\phi} \dyad{03}{10} + e^{i\phi} \dyad{02}{11} \\&+ \dyad{03}{13} + \dyad{02}{12} + e^{-i\phi} \dyad{00}{13} + e^{-i\phi} \dyad{01}{12}) \\&+ a^{\ast}b (2\dyad{10}{00} + 2\dyad{11}{01} + e^{i\phi} \dyad{13}{00} + e^{i\phi} \dyad{12}{01} \\&+ \dyad{13}{03} + \dyad{12}{02} + e^{-i\phi} \dyad{10}{03} + e^{-i\phi} \dyad{11}{02}) \\&+ b^{2} (2\dyad{10}{10} + 2\dyad{11}{11} + e^{i\phi} \dyad{13}{10} + e^{i\phi} \dyad{12}{11} \\&+ \dyad{13}{13} + \dyad{12}{12} + e^{-i\phi} \dyad{10}{13} + e^{-i\phi} \dyad{11}{12}) \big]. \numberthis
\end{align*}
Analogously, we can calculate also $\dyad{\xi}{\xi} \otimes ({\cal{C}}_{\mathbb{1}/2} \otimes {\cal{U}}_{\psi}) (\dyad{\psi}{\psi})$. Let us now apply shift-down operator on the equation (\ref{xiuu}):
\begin{align*}\label{VQC1}
	&{\cal{C}}_{\ominus} \left[\dyad{\xi}{\xi} \otimes {\cal{U}}_{\phi}^{\otimes2}(\dyad{\psi}{\psi})\right] = \\&\frac{1}{3} \big[a^{2} (\dyad{00}{00} + e^{-i\phi} \dyad{00}{01} + e^{-2i\phi} \dyad{00}{02} + e^{i\phi} \dyad{01}{00} \\&+ \dyad{01}{01} + e^{-i\phi} \dyad{01}{02} + e^{2i\phi} \dyad{02}{00} + e^{i\phi} \dyad{02}{01} + \dyad{02}{02}) \\&+ ab^{\ast} (\dyad{00}{12} + e^{-i\phi} \dyad{00}{10} + e^{-2i\phi} \dyad{00}{11} + e^{i\phi} \dyad{01}{12} \\&+ \dyad{01}{10} + e^{-i\phi} \dyad{01}{11} + e^{2i\phi} \dyad{02}{12} + e^{i\phi} \dyad{02}{10} + \dyad{02}{11}) \\&+ a^{\ast}b (\dyad{12}{00} + e^{-i\phi} \dyad{12}{01} + e^{-2i\phi} \dyad{12}{02} + e^{i\phi} \dyad{10}{00} \\&+ \dyad{10}{01} + e^{-i\phi} \dyad{10}{02} + e^{2i\phi} \dyad{11}{00} + e^{i\phi} \dyad{11}{01} + \dyad{11}{02}) \\&+ b^{2} (\dyad{12}{12} + e^{-i\phi} \dyad{12}{10} + e^{-2i\phi} \dyad{12}{11} + e^{i\phi} \dyad{10}{12} \\&+ \dyad{10}{10} + e^{-i\phi} \dyad{10}{11} + e^{2i\phi} \dyad{11}{12} + e^{i\phi} \dyad{11}{10} + \dyad{11}{11})\big] \overset{(i)}{=} \\& \frac{1}{3} \big[(a^{2} \dyad{0}{0} + ab^{\ast} e^{-i\phi} \dyad{1}{0} + a^{\ast}b e^{-i\phi} \dyad{1}{0} + b^{2} \dyad{1}{1}) \otimes (\dyad{0}{0} + \dyad{1}{1}) \\&+ (a^{2} \dyad{0}{0} + ab^{\ast} e^{2i\phi} \dyad{0}{1} + a^{\ast}b e^{-2i\phi} \dyad{1}{0} + b^{2} \dyad{1}{1}) \otimes \dyad{2}{2}\big]= \\& \frac{1}{3} \left[U_{\phi}\dyad{\xi}{\xi}U_{\phi}^{\dagger} \otimes (\dyad{0}{0} + \dyad{1}{1}) + U_{-2\phi}\dyad{\xi}{\xi}U_{-2\phi}^{\dagger
	} \otimes \dyad{2}{2}\right]. \numberthis
\end{align*}
In $(i)$ we only explicitly write out diagonal states in the second space because we are doing measurement corresponding to the basis $\{\ket{j}\}_{j=0}^{2^{N}-1}$, where now $N=2$. We proceed by applying the shift-down operator on the equation (\ref{xicc}):
\begin{align*}\label{VQC2}
	& {\cal{C}}_{\ominus} \left[\dyad{\xi}{\xi} \otimes {\cal{C}}_{\mathbb{1}/2}^{\otimes2} (\dyad{\psi}{\psi})\right] = \\&\frac{1}{4} \big[a^{2} (\dyad{00}{00} + \dyad{01}{01} + \dyad{02}{02} + \dyad{03}{03}) + ab^{\ast} (\dyad{00}{12} + \dyad{01}{10} + \dyad{02}{11} + \dyad{03}{13}) \\& a^{\ast}b (\dyad{12}{00} + \dyad{10}{01} + \dyad{11}{02} + \dyad{13}{03}) + b^{2} (\dyad{12}{12} + \dyad{10}{10} + \dyad{11}{11} + \dyad{13}{13})\big] \\&= \frac{1}{4} \big[(a^{2}\dyad{0}{0} + b^{2}\dyad{1}{1}) \otimes (\dyad{0}{0} + \dyad{1}{1} + \dyad{2}{2}) \\&+ (a^{2}\dyad{0}{0} + ab^{\ast} \dyad{0}{1} + a^{\ast}b \dyad{1}{0} + b^{2}\dyad{1}{1}) \otimes \dyad{3}{3}\big] \\&\overset{(i)}{=} \frac{1}{4} \big[(a^{2}\dyad{0}{0} + b^{2}\dyad{1}{1}) \otimes (\dyad{0}{0} + \dyad{1}{1} + \dyad{2}{2}) + \dyad{\xi}{\xi} \otimes \dyad{3}{3}\big]. \numberthis
\end{align*}
In $(i)$ we again only write diagonal states. And the last application of ${\cal{C}}_{\ominus}$ on the equation (\ref{xiuc}) yields:
\begin{align*}\label{VQC3}
	&{\cal{C}}_{\ominus}\left[\dyad{\xi}{\xi} \otimes ({\cal{U}}_{\phi} \otimes {\cal{C}}_{\mathbb{1}/2}) (\dyad{\psi}{\psi})\right] = \\&\frac{1}{6} \big[a^{2} (2\dyad{00}{00} + 2 \dyad{01}{01} + e^{i\phi} \dyad{03}{00} + e^{i\phi} \dyad{02}{01} \\&+ \dyad{03}{03} + \dyad{02}{02} + e^{-i\phi} \dyad{00}{03} + e^{-i\phi} \dyad{01}{02}) \\&+ ab^{\ast} (2\dyad{00}{12} + 2 \dyad{01}{10} + e^{i\phi} \dyad{03}{12} + e^{i\phi} \dyad{02}{10} \\&+ \dyad{03}{13} + \dyad{02}{11} + e^{-i\phi} \dyad{00}{13} + e^{-i\phi} \dyad{01}{11}) \\&+ a^{\ast}b (2\dyad{12}{00} + 2\dyad{10}{01} + e^{i\phi} \dyad{13}{00} + e^{i\phi} \dyad{11}{01} \\&+ \dyad{13}{03} + \dyad{11}{02} + e^{-i\phi} \dyad{12}{03} + e^{-i\phi} \dyad{10}{02}) \\&+ b^{2} (2\dyad{12}{12} + 2\dyad{10}{10} + e^{i\phi} \dyad{13}{12} + e^{i\phi} \dyad{11}{10} \\&+ \dyad{13}{13} + \dyad{11}{11} + e^{-i\phi} \dyad{12}{13} + e^{-i\phi} \dyad{10}{11}) \big] \\&\overset{(i)}{=} \frac{1}{6} \big[(a^{2} \dyad{0}{0} + b^{2} \dyad{1}{1}) \otimes 2\dyad{0}{0} \\&+ (2a^{2}\dyad{0}{0} + ab^{\ast} e^{-i\phi} \dyad{0}{1} + a^{\ast}b e^{i\phi} \dyad{1}{0} + b^{2} \dyad{1}{1}) \otimes \dyad{1}{1} \\&+ (a^{2}\dyad{0}{0} + 2b^{2}\dyad{1}{1}) \otimes \dyad{2}{2} + (a^{2}\dyad{0}{0} + ab^{\ast}\dyad{0}{1} + a^{\ast}b\dyad{1}{0} + b^{2}\dyad{1}{1}) \otimes \dyad{3}{3}\big] \\&= \frac{1}{6} \big[U_{\phi}\dyad{\xi}{\xi}U_{\phi}^{\dagger} \otimes \dyad{1}{1} + (a^{2}\dyad{0}{0} + b^{2}\dyad{1}{1}) \otimes (2\dyad{0}{0} + \dyad{2}{2}) \\&+ a^{2}\dyad{0}{0} \otimes \dyad{1}{1} + b^{2}\dyad{1}{1} \otimes \dyad{2}{2} + \dyad{\xi}{\xi} \otimes \dyad{3}{3}\big]. \numberthis
\end{align*}
In $(i)$ we only write out states corresponding to measuring diagonal states in the second space. And analogous calculation can be done also for the last remaining state:
\begin{align*}\label{VQC4}
	&{\cal{C}}_{\ominus}\left[\dyad{\xi}{\xi} \otimes ({\cal{C}}_{\mathbb{1}/2} \otimes {\cal{U}}_{\phi}) (\dyad{\psi}{\psi})\right] = \\&\frac{1}{6} \big[U_{\phi}\dyad{\xi}{\xi}U_{\phi}^{\dagger} \otimes \dyad{0}{0} + (a^{2}\dyad{0}{0} + b^{2}\dyad{1}{1}) \otimes (2\dyad{1}{1} + \dyad{2}{2}) + a^{2}\dyad{0}{0} \otimes \dyad{2}{2} \\&+ b^{2}\dyad{1}{1} \otimes \dyad{0}{0} + \dyad{\xi}{\xi} \otimes \dyad{3}{3}\big].\numberthis
\end{align*}
By putting together, the previous results (\ref{VQC1}), (\ref{VQC2}), (\ref{VQC3}) and (\ref{VQC4}), we obtain the state after the implemented transformation:
\begin{align*}
	&{\cal{C}}_{\ominus}\big[\dyad{\xi}{\xi} \otimes {\cal{E}}_{\phi}^{\otimes2} (\dyad{\psi}{\psi})\big] = \\&\frac{1}{3} q^{2} \left[U_{\phi}\dyad{\xi}{\xi}U_{\phi}^{\dagger} \otimes (\dyad{0}{0} + \dyad{1}{1}) + U_{-2\phi}\dyad{\xi}{\xi}U_{-2\phi}^{\dagger
	} \otimes \dyad{2}{2}\right] \\&+ \frac{1}{6} q(1-q) \big[U_{\phi}\dyad{\xi}{\xi}U_{\phi}^{\dagger} \otimes (\dyad{0}{0} + \dyad{1}{1}) + (a^{2}\dyad{0}{0} + b^{2}\dyad{1}{1}) \otimes (2\dyad{0}{0} + 2\dyad{1}{1} + 3\dyad{2}{2}) \\&+ a^{2}\dyad{0}{0} \otimes \dyad{1}{1} + b^{2}\dyad{1}{1} \otimes \dyad{0}{0} + \dyad{\xi}{\xi} \otimes 2\dyad{3}{3}\big] \\&+ \frac{1}{4} (1-q)^{2} \big[(a^{2}\dyad{0}{0} + b^{2}\dyad{1}{1}) \otimes (\dyad{0}{0} + \dyad{1}{1} + \dyad{2}{2}) + \dyad{\xi}{\xi} \otimes \dyad{3}{3}\big].
\end{align*}
Successful measurement corresponds to states $\ket{0}$ and $\ket{1}$, while the unsuccessful measurement corresponds to measuring states $\ket{2}$ and $\ket{3}$.

Let us now analyze the proportion of the entire implemented transformation that is constituted by a unitary transformation, i.e., by the transformation we wish to implement. We have also calculated the implementation of the depolarizing noisy channel ${\cal{E}}_{\psi}$ for the cases $N=3,4$. Let us write the former where, from the equation (\ref{ancImpSt}), the input state is $\ket{\psi} = \frac{1}{\sqrt{4}} \sum_{i=0}^{3} \ket{i}$:
\begin{align*}
	&{\cal{C}}_{\ominus}\big[\dyad{\xi}{\xi} \otimes {\cal{E}}_{\phi}^{\otimes3} (\dyad{\psi}{\psi})\big] = \\&\frac{1}{4}q^{3} \big[U_{\phi} \dyad{\xi}{\xi} U_{\phi}^{\dagger} \otimes (\dyad{0}{0} + \dyad{1}{1} + \dyad{2}{2}) + U_{-3\phi} \dyad{\xi}{\xi} U_{-3\phi}^{\dagger} \otimes \dyad{3}{3}\big] + \\& \frac{1}{8} q^{2}(1-q) \big[U_{\phi} \dyad{\xi}{\xi} U_{\phi}^{\dagger} \otimes 2(\dyad{0}{0} + \dyad{1}{1} + \dyad{2}{2}) + \\&(a^{2} \dyad{0}{0} + b^{2} \dyad{1}{1}) \otimes (2\dyad{0}{0} + 3\dyad{1}{1} + 2\dyad{2}{2} + 4\dyad{3}{3}) + \\&a^{2}\dyad{0}{0} \otimes \dyad{2}{2} + b^{2}\dyad{1}{1} \otimes \dyad{0}{0} + \dyad{\xi}{\xi} \otimes (2\dyad{4}{4} + \dyad{5}{5} + 2\dyad{6}{6} + \dyad{7}{7})\big] + \\& \frac{1}{16} q (1-q)^{2} \big[U_{\phi} \dyad{\xi}{\xi} U_{\phi}^{\dagger} \otimes (\dyad{0}{0} + \dyad{1}{1} + \dyad{2}{2}) + \\&(a^{2} \dyad{0}{0} + b^{2} \dyad{1}{1}) \otimes (5\dyad{0}{0} + 7\dyad{1}{1} + 5\dyad{2}{2} + 6\dyad{3}{3}) \\&a^{2}\dyad{0}{0} \otimes 2\dyad{2}{2} + b^{2}\dyad{1}{1} \otimes 2\dyad{0}{0} + \dyad{\xi}{\xi} \otimes (6\dyad{4}{4} + 4\dyad{5}{5} + 6\dyad{6}{6} + 4\dyad{7}{7})\big] + \\& \frac{1}{8} (1-q)^{3} \big[(a^{2} \dyad{0}{0} + b^{2} \dyad{1}{1}) \otimes (\dyad{0}{0} + \dyad{1}{1} + \dyad{2}{2} + \dyad{3}{3}) + \dyad{\xi}{\xi} \otimes \sum_{i=4}^{7}\ket{i} \big].
\end{align*}
And the result for $N=4$ with input state from the equation (\ref{ancImpSt}), $\ket{\psi} = \frac{1}{\sqrt{5}} \sum_{i=0}^{4} \ket{i}$:
\begin{align*}\label{VMC4}
	&{\cal{C}}_{\ominus}\big[\dyad{\xi}{\xi} \otimes {\cal{E}}_{\phi}^{\otimes4} (\dyad{\psi}{\psi})\big] = \\&\frac{1}{5} q^{4} \big[U_{\phi} \dyad{\xi}{\xi} U_{\phi}^{\dagger} \otimes (\dyad{0}{0} + \dyad{1}{1} + \dyad{2}{2} + \dyad{3}{3}) + U_{-4\phi} \dyad{\xi}{\xi} U_{-4\phi}^{\dagger} \otimes \dyad{4}{4}\big] + \\& \frac{1}{10} q^{3} (1-q) \big[U_{\phi} \dyad{\xi}{\xi} U_{\phi}^{\dagger} \otimes 3(\dyad{0}{0} + \dyad{1}{1} + \dyad{2}{2} + \dyad{3}{3}) + \\&(a^{2} \dyad{0}{0} + b^{2} \dyad{1}{1}) \otimes (2\dyad{0}{0} + 3\dyad{1}{1} + 3\dyad{2}{2} + 2\dyad{3}{3} + 5\dyad{4}{4}) + \\&a^{2}\dyad{0}{0} \otimes \dyad{3}{3} + b^{2}\dyad{1}{1} \otimes \dyad{0}{0} + \\&\dyad{\xi}{\xi} \otimes (2\dyad{5}{5} + \dyad{6}{6} + 2\dyad{7}{7} + \dyad{8}{8} + \dyad{9}{9} + \dyad{10}{10} + 2\dyad{12}{12} + \dyad{14}{14} + \dyad{15}{15}) \big] + \\& \frac{1}{20} q^{2} (1-q)^{2} \big[U_{\phi} \dyad{\xi}{\xi} U_{\phi}^{\dagger} \otimes 3(\dyad{0}{0} + \dyad{1}{1} + \dyad{2}{2} + \dyad{3}{3}) + \\&(a^{2} \dyad{0}{0} + b^{2} \dyad{1}{1}) \otimes (7\dyad{0}{0} + 10\dyad{1}{1} + 10\dyad{2}{2} + 7\dyad{3}{3} + 10\dyad{4}{4}) + \\&a^{2}\dyad{0}{0} \otimes (\dyad{2}{2} + 3\dyad{3}{3}) + b^{2}\dyad{1}{1} \otimes (3\dyad{0}{0} + \dyad{1}{1}) + \\&\dyad{\xi}{\xi} \otimes (8\dyad{5}{5} + 5\dyad{6}{6} + 9\dyad{7}{7} + 6\dyad{8}{8} + 4\dyad{9}{9} + 6\dyad{10}{10} + 3\dyad{11}{11} + 8\dyad{12}{12} + \\&2\dyad{13}{13} + 5\dyad{14}{14} + 4\dyad{15}{15})\big] + \\&\frac{1}{40} q (1-q)^{3} \big[U_{\phi} \dyad{\xi}{\xi} U_{\phi}^{\dagger} \otimes (\dyad{0}{0} + \dyad{1}{1} + \dyad{2}{2} + \dyad{3}{3}) + \\&(a^{2} \dyad{0}{0} + b^{2} \dyad{1}{1}) \otimes (9\dyad{0}{0} + 12\dyad{1}{1} + 12\dyad{2}{2} + 9\dyad{3}{3} + 10\dyad{4}{4}) + \\&a^{2}\dyad{0}{0} \otimes (\dyad{2}{2} + 3\dyad{3}{3}) + b^{2}\dyad{1}{1} \otimes (3\dyad{0}{0} + \dyad{1}{1}) + \\&\dyad{\xi}{\xi} \otimes (11\dyad{5}{5} + 9\dyad{6}{6} + 12\dyad{7}{7} + 10\dyad{8}{8} + 7\dyad{9}{9} + 10\dyad{10}{10} + 8\dyad{11}{11} \\&+ 11\dyad{12}{12} + 6\dyad{13}{13} + 9\dyad{14}{14} + 7\dyad{15}{15})\big] + \\&\frac{1}{16} (1-q)^{4} \big[(a^{2} \dyad{0}{0} + b^{2} \dyad{1}{1}) \otimes (\dyad{0}{0} + \dyad{1}{1} + \dyad{2}{2} + \dyad{3}{3} + \dyad{4}{4}) + \dyad{\xi}{\xi} \otimes \sum_{i=5}^{15} \dyad{i}{i}\big]. \numberthis
\end{align*}
If we take a look at the previous equations, we can see that the factors next to states which are in the tensor product with $U_{\phi} \dyad{\xi}{\xi} U_{\phi}^{\dagger}$ follow Pascal triangle (e.g. in case of $N=4$, in the term next to $q^{4}$ there is $\mathbf{1}\sum_{j=0}^{3}\dyad{j}{j}$, next to $q^{3}(1-q)$ and $q^{2}(1-q)^{2}$ there is $\mathbf{3}\sum_{j=0}^{3}\dyad{j}{j}$ and next to $q(1-q)^{3}$ there is again $\mathbf{1}\sum_{j=0}^{3}\dyad{j}{j}$. Together this makes $1,3,3,1$ which is a third row of the Pascal triangle). In addition, in the last term with $(1-q)^{N}$, there is no unitary transformation. This is due to implementation, where ${\cal{E}}_{\phi} = [q {\cal{U}}_{\phi} + (1-q) {\cal{C}}_{\mathbb{1}/2}]^{\otimes N}$ is a binomial distribution. We can also notice that the states next to $U_{\phi} \dyad{\xi}{\xi} U_{\phi}^{\dagger}$ are always the ones that form the successful measurement $\{\ket{j}\}_{j=0}^{N-1}$. And as the last point, we should also account for the factors $q^{\phi}(1-q)^{N-\phi}$ in front of the square brackets that come from the number of times $\phi$ unitary channel is mixed with contraction $N - \phi$ in the individual terms, where by $\phi$ we have denoted number of times unitary channel is mixed in the particular term. The number in front of every square bracket comes from normalization of input state $\ket{\psi}$ from equation (\ref{ancImpSt}) and then it is multiplied by $\frac{1}{2^{(N - \phi)}}$. For example, in equation (\ref{VMC4}) next to the $q(1-q)^{3}$ we have $\frac{1}{40} = \frac{1}{5 \times 2^{3}}$, where $5$ comes from normalization of the input state $\ket{\psi} = \frac{1}{\sqrt{N+1}} \sum_{i=0}^{N} \ket{i} = \frac{1}{\sqrt{5}} \sum_{i=0}^{4} \ket{i}$ and the power $3$ to which $2$ is raised comes from applying channel ${\cal{C}}_{\mathbb{1}/2}$ three times in this particular term. Together, the fraction $p_{U_{\phi}}$ of the resulting state for both the successful and the failed measurement that is taken by the unitary channel is:
\begin{align}\label{pufi}
	p_{U_{\phi}} = \sum_{\phi=1}^{N} \binom{N-1}{\phi-1} N \frac{q^{\phi}(1-q)^{N-\phi}}{(N+1)2^{N-\phi}} = \frac{N}{N+1} q \frac{(1+q)^{N-1}}{2^{N-1}},
\end{align}
where the number $N$ after the binomial coefficient comes from the fact that unitary channel applied on $\dyad{\xi}{\xi}$ is tied with $\dyad{j}{j}$ for $j=0,\dots,N-1$ through tensor product. Therefore, there exist $N$ possibilities to measure $U_{\phi}\dyad{\xi}{\xi}U_{\phi}^{\dagger}$. From the binomial coefficient $\binom{N-1}{\phi-1}$, we had to subtract one unitary channel because we are counting permutations in case of at least one unitary channel being applied (as there is no unitary transformation on the output if there is no on the input). The sum in the previous equation (\ref{pufi}) was evaluated using Wolfram Mathematica. The value we have obtained is the same as in equation (\ref{resWhiteNoise}) for the PSAR device and it is depicted in the figures \ref{compPSARbetweenPDandDep:b} and \ref{compVQandVMCandPSARforDep:b}. Share of unitary channel $p_{U_{\phi}}$ mixed in the entire channel goes to zero as $N$ goes to infinity, because $1+q \leq 2$. The only exception is $q=1$, when we recover $p_{U_{\phi}} \overset{q=1}{=} \frac{N}{N+1}$. This means, that with more noisy channels at our disposal, the performance is worsening, because our desire is to implement unitary channel with as little noise as possible (and we can see that this ability is decreasing with the increasing number $N$). 

Analyzing share that $a^{2} \dyad{0}{0} + b^{2} \dyad{1}{1}$ takes of the entire implemented channel is noticeably more challenging. Let us begin our operation with explicitly listing few examples of what exactly happens if we apply various number of unitary and depolarizing channels on the input state. Let us take as an example a case when $N=3$ and list all the possible combinations (for now, without the factors $q^{\phi}(1-q)^{N-\phi}$):
\begin{align*}
	({\cal{U}}_{\phi} \otimes {\cal{U}}_{\phi} \otimes {\cal{U}}_{\phi}) (\dyad{\psi}{\psi}) &= \frac{1}{4} \big[U_{\phi} \dyad{\xi}{\xi} U_{\phi}^{\dagger} \otimes (\dyad{0}{0} + \dyad{1}{1} + \dyad{2}{2}) + U_{-3\phi} \dyad{\xi}{\xi} U_{-3\phi}^{\dagger} \otimes \dyad{3}{3}\big], \\
	({\cal{C}}_{\mathbb{1}/2} \otimes {\cal{U}}_{\phi} \otimes {\cal{U}}_{\phi}) (\dyad{\psi}{\psi}) &= \frac{1}{8} \big[U_{\phi} \dyad{\xi}{\xi} U_{\phi}^{\dagger} \otimes (\dyad{0}{0} + \dyad{1}{1}) \\&+ (a^{2} \dyad{0}{0} + b^{2} \dyad{1}{1}) \otimes (2\dyad{2}{2} + \dyad{3}{3}) \\&+ a^{2} \dyad{0}{0} \otimes \dyad{3}{3} + b^{2} \dyad{1}{1} \otimes \dyad{1}{1} + \dyad{\xi}{\xi} \otimes (\dyad{5}{5} + \dyad{6}{6})\big], \\
	(\underset{2}{{\cal{U}}_{\phi}} \otimes \underset{1}{{\cal{C}}_{\mathbb{1}/2}} \otimes  \underset{0}{{\cal{U}}_{\phi}}) (\dyad{\psi}{\psi}) &= \frac{1}{8} \big[U_{\phi} \dyad{\xi}{\xi} U_{\phi}^{\dagger} \otimes (\dyad{0}{0} + \dyad{2}{2}) \\&+ (a^{2} \dyad{0}{0} + b^{2} \dyad{1}{1}) \otimes (2\dyad{1}{1} + \dyad{3}{3}) \\&+ a^{2} \dyad{0}{0} \otimes \dyad{2}{2} + b^{2} \dyad{1}{1} \otimes \dyad{0}{0} + \dyad{\xi}{\xi} \otimes (\dyad{4}{4} + \dyad{6}{6})\big], \\
	({\cal{U}}_{\phi} \otimes {\cal{U}}_{\phi} \otimes {\cal{C}}_{\mathbb{1}/2}) (\dyad{\psi}{\psi}) &= \frac{1}{8} \big[U_{\phi} \dyad{\xi}{\xi} U_{\phi}^{\dagger} \otimes (\dyad{1}{1} + \dyad{2}{2}) \\&+ (a^{2} \dyad{0}{0} + b^{2} \dyad{1}{1}) \otimes (2\dyad{0}{0} + \dyad{3}{3}) \\&+ a^{2} \dyad{0}{0} \otimes \dyad{1}{1} + b^{2} \dyad{1}{1} \otimes \dyad{3}{3} + \dyad{\xi}{\xi} \otimes (\dyad{4}{4} + \dyad{7}{7})\big], \\
	(\underset{2}{{\cal{U}}_{\phi}} \otimes \underset{1}{{\cal{C}}_{\mathbb{1}/2}} \otimes \underset{0}{{\cal{C}}_{\mathbb{1}/2}}) (\dyad{\psi}{\psi}) &= \frac{1}{16} \big[U_{\phi} \dyad{\xi}{\xi} U_{\phi}^{\dagger} \otimes \dyad{2}{2} \\&+ (a^{2} \dyad{0}{0} + b^{2} \dyad{1}{1}) \otimes (3\dyad{0}{0} + 3\dyad{1}{1} + \dyad{3}{3}) \\&+ a^{2} \dyad{0}{0} \otimes 2\dyad{2} + b^{2} \dyad{1}{1} \otimes 2\dyad{3}{3} \\&+ \dyad{\xi}{\xi} \otimes (3\dyad{4}{4} + \dyad{5}{5} + \dyad{6}{6} + \dyad{7}{7})\big], \\
	({\cal{C}}_{\mathbb{1}/2} \otimes {\cal{U}}_{\phi} \otimes {\cal{C}}_{\mathbb{1}/2}) (\dyad{\psi}{\psi}) &= \frac{1}{16} \big[U_{\phi} \dyad{\xi}{\xi} U_{\phi}^{\dagger} \otimes \dyad{1}{1} \\&+ (a^{2} \dyad{0}{0} + b^{2} \dyad{1}{1}) \otimes (2\dyad{0}{0} + \dyad{1}{1} + 2\dyad{2}{2} + 2\dyad{3}{3}) \\&+ \dyad{\xi}{\xi} \otimes (2\dyad{4}{4} + 2\dyad{5}{5} + 2\dyad{6}{6} + 2\dyad{7}{7})\big], \\
	({\cal{C}}_{\mathbb{1}/2} \otimes {\cal{C}}_{\mathbb{1}/2} \otimes {\cal{U}}_{\phi}) (\dyad{\psi}{\psi}) &= \frac{1}{16} \big[U_{\phi} \dyad{\xi}{\xi} U_{\phi}^{\dagger} \otimes \dyad{0}{0} \\&+ (a^{2} \dyad{0}{0} + b^{2} \dyad{1}{1}) \otimes (3\dyad{1}{1} + 3\dyad{2}{2} + \dyad{3}{3}) \\&+ a^{2} \dyad{0}{0} \otimes 2\dyad{3}{3} + b^{2} \dyad{1}{1} \otimes 2\dyad{0}{0} \\&+ \dyad{\xi}{\xi} \otimes (\dyad{4}{4} + \dyad{5}{5} + 3\dyad{6}{6} + \dyad{7}{7})\big], \\
	({\cal{C}}_{\mathbb{1}/2} \otimes {\cal{C}}_{\mathbb{1}/2} \otimes {\cal{C}}_{\mathbb{1}/2}) (\dyad{\psi}{\psi}) &= \frac{1}{8} \big[(a^{2} \dyad{0}{0} + b^{2} \dyad{1}{1}) \otimes (\dyad{0}{0} + \dyad{1}{1} + \dyad{2}{2} + \dyad{3}{3}) \\&+ \dyad{\xi}{\xi} \otimes (\dyad{4}{4} + \dyad{5}{5} + \dyad{6}{6} + \dyad{7}{7})\big].
\end{align*}
Channel on the leftmost corresponds to successful measurement of ket vector $\ket{2}$, the one in the middle corresponds to $\ket{1}$ and the one on the right corresponds to $\ket{0}$. Let us call these positions $2$, $1$, and $0$ respectively. It is a consequence of our dictionary (\ref{generalDictionary}), where first $N$ states correspond to number of ones present in a state which are all written from the right. Let us quickly return to unitary channels and provide reasoning why $U_{\phi}\dyad{\xi}{\xi}U_{\phi}^{\dagger}$ is always bound to states forming success measurement. It is, again, the artifact of our state labeling (\ref{generalDictionary}), where unitary channel sitting on position $j$ always acts on $\dyad{j}{j}$ and $\dyad{j+1}{j+1}$ as identity, while to states $\dyad{j+1}{j}$ and $\dyad{j}{j+1}$ it adds $e^{i\phi}$ and $e^{-i\phi}$ respectively. Afterwards, shift-down operator shifts $\ket{j+1}$ to $\ket{j}$ and creates $U_{\phi} \dyad{\xi}{\xi} U_{\phi}^{\dagger}$ next to the state corresponding to the original position $j$ of the unitary channel. After our little detour, we can return to the analyses of the depolarizing channels. If there is one channel ${\cal{C}}_{\mathbb{1}/2}$ acting on position $1$, which is also highlighted in the previous equations, then it "creates" factor $1+1=2$ in front of the state $\dyad{1}{1}$. It also adds $a^{2}\dyad{0}{0}$ next to $\dyad{1+1}{1+1} = \dyad{2}{2}$ and $b^{2}\dyad{1}{1}$ next to $\dyad{1-1}{1-1} = \dyad{0}{0}$. In case there are two depolarizing channels acting next to each other, say on positions $0$ and $1$, which is again denoted among the previous equations, they put factor $1+2=3$ next to states $\dyad{0}{0}$ and $\dyad{1}{1}$. If we had $V$ depolarizing channels neighboring each other the factor next to the measuring state is $1+V$. Then also, with $a^{2}\dyad{0}{0}$ there is a factor $1+1=2$ next to $\dyad{1+1}{1+1} = \dyad{2}{2}$ and with $b^{2}\dyad{1}{1}$ the factor $1+1=2$ next to $\dyad{0-1}{0-1} = \dyad{3}{3}$, where there is a modulo arithmetic and in general $\dyad{-1}{-1} = \dyad{N}{N}$. Similarly, as before, for $V$ neighboring depolarizing channels the factors next to $a^{2} \dyad{0}{0}$ and $b^{2} \dyad{1}{1}$ acquire value $V$. Every time, there is at least one depolarizing channel implemented, we also obtain a term $(a^{2}\dyad{0}{0} + b^{2}\dyad{1}{1}) \otimes \dyad{3}{3}$ (or in general case $(a^{2}\dyad{0}{0} + b^{2}\dyad{1}{1}) \otimes \dyad{N}{N}$).

\phantomsection\label{pageForV}Let us also closer analyze the reason behind the factors next to $a^{2}\dyad{0}{0} + b^{2}\dyad{1}{1}$. In case of $N=3$, the input state is $\ket{\psi} = \frac{1}{\sqrt{3}} (\ket{000} + \ket{001} + \ket{011} + \ket{111}) = \frac{1}{\sqrt{3}} (\ket{0} + \ket{1} + \ket{2} + \ket{3})$. In case of acting by ${\cal{U}}_{\phi} \otimes {\cal{C}}_{\mathbb{1}/2} \otimes {\cal{U}}_{\phi}$ on $\ket{\psi}$, the states $\ket{001} = \ket{1}$ and $\ket{011} = \ket{2}$ behave effectively as the same state because $({\cal{U}}_{\phi} \otimes {\cal{C}}_{\mathbb{1}/2} \otimes {\cal{U}}_{\phi}) (\dyad{001}{001}) = ({\cal{U}}_{\phi} \otimes {\cal{C}}_{\mathbb{1}/2} \otimes {\cal{U}}_{\phi}) (\dyad{011}{011}) = \dyad{0}{0} \otimes \frac{1}{2} (\dyad{0}{0} + \dyad{1}{1}) \otimes \dyad{1}{1}$. Therefore, we obtain factor $2$ next to states $\ket{1}$ and $\ket{2}$. But the shift-down operator slides the $b^{2} \dyad{1}{1}$ off of measuring $\ket{2}$ to measuring $\ket{1}$ and off of $\ket{1}$ to $\ket{0}$. This also creates a redundant $a^{2}\dyad{0}{0}$ next to $\dyad{2}{2}$ and $b^{2}\dyad{1}{1}$ next to $\dyad{0}{0}$ with factor $2$. But the unitary channel takes $1$ of $a^{2}\dyad{0}{0} \otimes \dyad{2}{2}$ and $1$ of $b^{2}\dyad{1}{1} \otimes \dyad{0}{0}$ to form $U_{\phi} \dyad{\xi}{\xi} U_{\phi}^{\dagger}$ (similar effect can be seen in the equation (\ref{VQC3})). In case of acting with ${\cal{U}}_{\phi} \otimes {\cal{C}}_{\mathbb{1}/2} \otimes {\cal{C}}_{\mathbb{1}/2}$, the argument is similar with the adjustment that now states $\ket{000}$, $\ket{001}$ and $\ket{011}$ behave effectively as the same state, therefore creating factor $3$ in the result. In general, for $V$ neighboring contractions we will obtain $(a^{2}\dyad{0}{0} + b^{2}\dyad{1}{1}) \otimes (1+V) \sum_{i=j}^{j+V-1}\dyad{i}{i}$, where $j$ is the position of the rightmost neighbor. We also obtain $a^{2} \dyad{0}{0} \otimes V \dyad{j+1}{j+1}$ and $b^{2} \dyad{1}{1} \otimes V \dyad{j-1}{j-1}$. We shall also discover at least one $a^{2} \dyad{0}{0} + b^{2} \dyad{1}{1}$ next to $\dyad{3}{3}$ every time at least one ${\cal{C}}_{\mathbb{1}/2}$ channel is applied. This is because states $\dyad{000}{000}$ and $\dyad{111}{111}$ will be preserved after applying whatever mixture of ${\cal{U}}_{\phi}$ and ${\cal{C}}_{\mathbb{1}/2}$ channels on the input. With the exception of adding normalization $\frac{1}{2^{I}}$, where $I$ is the number of used depolarizing channels. And then, operator ${\cal{C}}_{\mathbb{1}/2}$ will slide $b^{2} \dyad{1}{1}$ from $\dyad{000}{000} = \dyad{0}{0}$ to $\dyad{111}{111} = \dyad{3}{3}$. This holds for arbitrary value $N$, where states $\dyad{0\cdots0}{0\cdots0}$ and $\dyad{1\cdots1}{1\cdots1}$ will not be affected by applying arbitrary mixture of channels (with the exception of normalization). 

To obtain total success probability we have to derive the prescription for the fraction of $a^{2}\dyad{0}{0} + b^{2}\dyad{1}{1}$, corresponding only to a successful measurement, of the entire implemented transformation, i.e. to measuring vectors $\ket{j}$ for $j = 0, \cdots, N-1$. Let us first calculate the number of permutations of $a^{2} \dyad{0}{0} + b^{2} \dyad{1}{1}$ appearing in the resulting state. We shall start with an example. Let $N=4$ and the number of applied depolarizing channels be $I = 1$, which means that number of unitary channels is $\phi = N-I = 3$. We shall also consider number of neighboring depolarizing channels $V$. Of course, for the present case $V = 1$. Let us list all possible permutations also with the expression to evaluate the number of possible permutations for a given case:
\begin{align*}
	&j = 0: \qquad ? \otimes ? \otimes {\cal{U}}_{\phi} \otimes {\cal{C}}_{\mathbb{1}/2}  &&\frac{(N-2)!}{(I-1)!(\phi-1)!} = \frac{2!}{0!2!} = 1, \\
	&j = 1: \qquad ? \otimes {\cal{U}}_{\phi} \otimes {\cal{C}}_{\mathbb{1}/2} \otimes {\cal{U}}_{\phi}  &&\frac{(N-3)!}{(I-1)!(\phi-2)!} = \frac{1!}{0!1!} = 1, \\ 
	&j = 2: \qquad  {\cal{U}}_{\phi} \otimes {\cal{C}}_{\mathbb{1}/2} \otimes {\cal{U}}_{\phi} \otimes ?  &&\frac{(N-3)!}{(I-1)!(\phi-2)!} = \frac{1!}{0!1!} = 1, \\
	&j = 3: \qquad {\cal{C}}_{\mathbb{1}/2} \otimes {\cal{U}}_{\phi} \otimes ? \otimes ? &&\frac{(N-2)!}{(I-1)!(\phi-1)!} = \frac{2!}{0!2!} = 1,
\end{align*}
where $j$ denotes the position of the applied depolarizing channel and $?$ denotes that both channels on given position are, in principle, possible to occur. The procedure to derive all possible scenarios is as follows: at first, we put in one depolarizing channel on a given position and then we surround it with unitary channels so that we guarantee that there is exactly only one depolarizing channel without other neighboring depolarizing channel. This way we guarantee that we do not count the same permutation multiple times. The expression on the right side for calculating possible permutations is only a consequence of the previous procedure where we subtract the number of all allocated channels from $N$, the number of fixed depolarizing channels from $I$ and the number of anchored unitary channels from $\phi$. By going through all possible positions $j$ we cover all possible permutations for applying a given number $I$ and $\phi$ of depolarizing and unitary channels. Together we have four permutations for this particular case. We shall repeat this procedure for all possible number of neighboring $V$ depolarizing channels. It is because that way we find number of permutations corresponding to successful measurement to which we can assign the same value. In the just considered case of $\phi = 3$ and $V = 1$, we have found out that there shall be $4$ possibilities for $a^{2}\dyad{0}{0} + b^{2}\dyad{1}{1}$ with coefficient $V+1 = 2$ to appear next to a state corresponding to a successful measurement.

Now, let us assume that $N=4$ and $I=2$, therefore $\phi = 2$. We shall examine what happens if there are two neighboring depolarizing channels, which we shall denote with $V = 2$:
\begin{align*}
	&j = 0: \qquad ? \otimes {\cal{U}}_{\phi} \otimes {\cal{C}}_{\mathbb{1}/2} \otimes {\cal{C}}_{\mathbb{1}/2}  &&\frac{(N-3)!}{(I-2)!(\phi-1)!} = \frac{1!}{0!1!} = 1, \\
	&j = 1: \qquad {\cal{U}}_{\phi} \otimes {\cal{C}}_{\mathbb{1}/2} \otimes {\cal{C}}_{\mathbb{1}/2} \otimes {\cal{U}}_{\phi}  &&\frac{(N-4)!}{(I-2)!(\phi-2)!} = \frac{0!}{0!0!} = 1, \\ 
	&j = 2: \qquad  {\cal{C}}_{\mathbb{1}/2} \otimes {\cal{C}}_{\mathbb{1}/2} \otimes {\cal{U}}_{\phi} \otimes ?  &&\frac{(N-3)!}{(I-2)!(\phi-1)!} = \frac{1!}{0!1!} = 1,
\end{align*}
where $j$ now denotes the position of the rightmost of the neighboring depolarizing channels. Together we have $3$ orderings where there are two neighboring depolarizing channels $V = 2$ in this case. Thus, we know, that there are $3$ permutations with coefficient $3$ appearing with $a^{2}\dyad{0}{0} + b^{2}\dyad{1}{1}$ next to a state corresponding to a successful measurement. Now, we ask how many permutations there exist in situation with two depolarizing channels $I = 2$ and with no depolarizing channels $V = 1$:
\begin{align*}
	&j = 0: \qquad ? \otimes ? \otimes {\cal{U}}_{\phi} \otimes {\cal{C}}_{\mathbb{1}/2}  &&\frac{(N-2)!}{(I-1)!(\phi-1)!} = \frac{2!}{1!1!} = 2, \\
	&j = 1: \qquad ? \otimes {\cal{U}}_{\phi} \otimes {\cal{C}}_{\mathbb{1}/2} \otimes {\cal{U}}_{\phi}  &&\frac{(N-3)!}{(I-1)!(\phi-2)!} = \frac{1!}{1!0!} = 1, \\ 
	&j = 2: \qquad  {\cal{U}}_{\phi} \otimes {\cal{C}}_{\mathbb{1}/2} \otimes {\cal{U}}_{\phi} \otimes ?  &&\frac{(N-3)!}{(I-1)!(\phi-2)!} = \frac{1!}{1!0!} = 1, \\
	&j = 3: \qquad {\cal{C}}_{\mathbb{1}/2} \otimes {\cal{U}}_{\phi} \otimes ? \otimes ? &&\frac{(N-2)!}{(I-1)!(\phi-1)!} = \frac{2!}{1!1!} = 2.
\end{align*}
Together, we have $6$ permutations.

As the last case, let us analyze $N=4$, $I=3$ and $\phi=1$. Firstly, for $V = 1$, we obtain:
\begin{align*}
	&j = 0: \qquad ? \otimes ? \otimes {\cal{U}}_{\phi} \otimes {\cal{C}}_{\mathbb{1}/2}  &&\frac{(N-2)!}{(I-1)!(\phi-1)!} = \frac{2!}{2!0!} = 1, \\
	&j = 3: \qquad  {\cal{C}}_{\mathbb{1}/2} \otimes {\cal{U}}_{\phi} \otimes ?  \otimes ? &&\frac{(N-2)!}{(I-1)!(\phi-1)!} = \frac{2!}{2!0!} = 1.
\end{align*}
In this case, we only have "boundary" cases for $j = 1, 3$, because we only have one unitary channel, therefore we cannot surround the depolarizing channel from both sides. That together accounts for two permutations. Let us move to the case of $V = 2$:
\begin{align*}
	&j = 0: \qquad ? \otimes {\cal{U}}_{\phi} \otimes {\cal{C}}_{\mathbb{1}/2} \otimes {\cal{C}}_{\mathbb{1}/2} &&\frac{(N-3)!}{(I-2)!(\phi-1)!} = \frac{1!}{1!0!} = 1, \\
	&j = 2: \qquad  {\cal{C}}_{\mathbb{1}/2} \otimes {\cal{C}}_{\mathbb{1}/2} \otimes {\cal{U}}_{\phi} \otimes ? &&\frac{(N-3)!}{(I-2)!(\phi-1)!} = \frac{1!}{1!0!} = 1.
\end{align*}
Together that makes two different permutations. And finally, what if $V = 3$?
\begin{align*}
	&j = 0: \qquad {\cal{U}}_{\phi} \otimes {\cal{C}}_{\mathbb{1}/2} \otimes {\cal{C}}_{\mathbb{1}/2} \otimes {\cal{C}}_{\mathbb{1}/2} &&\frac{(N-4)!}{(I-3)!(\phi-1)!} = \frac{0!}{0!0!} = 1, \\
	&j = 1: \qquad  {\cal{C}}_{\mathbb{1}/2} \otimes {\cal{C}}_{\mathbb{1}/2} \otimes {\cal{C}}_{\mathbb{1}/2} \otimes {\cal{U}}_{\phi} &&\frac{(N-3)!}{(I-2)!(\phi-1)!} = \frac{0!}{0!0!} = 1.
\end{align*}
Again, we have two different possibilities. We skip the trivial cases of $\phi=0$ (there is only one permutation) and $\phi=N$ (no implemented depolarizing channels).

In general, for a given value of $N$ and $N-1 \geq \phi \geq 1$ the total number of permutations is obtained by summing over all number of neighbors we can have. Let us first calculate the number of boundary cases (where the $V$ neighbors are put on the rightmost or the leftmost positions):
\begin{align}\label{permutationsB}
	\sum_{V=1}^{N-\phi} 2 \frac{\left(N - V - 1\right)!}{(I-V)!(\phi-1)!},
\end{align}
where the factor $2$ appears because we have two boundary positions for $j = 0$ and $j = N-1-V$. As for the term counting permutations, in the numerator, we have to subtract the number of already used depolarizing, i.e. $V$, and unitary, i.e. $1$, channels and the same applies also for the denominator. The rest of the possible cases can be calculated as follows where the number of unitary channels is $N-1 \geq \phi \geq 2$:
\begin{align}\label{permutations}
	\sum_{V=1}^{N-\phi} \left[N-2-(V-1)\right] \frac{\left(N - V - 2\right)!}{(I-V)!(\phi-2)!}.
\end{align}
The first difference from previous equation, comes from the fact that now, we have to bound depolarizing channels from both sides, i.e. we subtract $2$ unitary channels in the fraction from both, $N$ and $\phi$. The factor in front of the fraction counts the number of possible positions $N-(V-1)$ one can put $V$ neighboring depolarizing channels on, while also discounting the $2$ boundary cases.


Let us evaluate the probability of successful measurement coming from the term $a^{2}\dyad{0}{0} + b^{2}\dyad{1}{1}$ for a general case of arbitrary $N$ and $N-1 \geq \phi \geq 1$. Now, we shall add factor $(V + 1)$ to equation (\ref{permutationsB}) as was explained on pages \pageref{pageForV} and 118. We shall also realize that there is $V$ neighboring positions on which the channel can sit:
\begin{align}\label{ab1}
	\sum_{V=1}^{N-\phi} 2 \frac{\left(N - V - 1\right)!}{(I-V)!(\phi-1)!} V (V+1),
\end{align}
For other cases, with $V$ neighbors being in the middle, where $N-1 \geq \phi \geq 2$, adding values to permutations from equation (\ref{permutations}) we get:
\begin{align}\label{ab2}
	\sum_{V=1}^{N-\phi} \left[N-2-(V-1)\right] \frac{\left(N - V - 2\right)!}{(I-V)!(\phi-2)!} V (V+1),
\end{align}
Putting previous equations (\ref{ab1}) and (\ref{ab2}) together, we obtain for $\phi \geq 2$:
\begin{align}\label{ab3}
	\sum_{V=1}^{N-\phi} \left[V (V+1)\right] \left\{2 \frac{\left(N - V - 1\right)!}{(I-V)!(\phi-1)!} + \left[N-2-(V-1)\right] \frac{\left(N - V - 2\right)!}{(I-V)!(\phi-2)!} \right\}.
\end{align}
In the case of $\phi = 1$, we have a simplified situation, where we only encounter the boundary cases:
\begin{align}\label{ab4}
	\sum_{V=1}^{N-\phi} 2 \frac{(N-V-1)!}{(I-V)!(\phi-1)!} V (V+1) \overset{\phi=1}{=} \sum_{V=1}^{N-1} 2 \frac{(N-V-1)!}{(N-1-V)!} V (V+1) = \sum_{V=1}^{N-1} 2 V(V+1).
\end{align}

We have to also take into account the terms corresponding only to $a^{2} \dyad{0}{0}$ and $b^{2} \dyad{1}{1}$ respectively. But that only means that for every permutation possible, we have to add another number of neighbors (see pages \pageref{pageForV} and 118), therefore modifying equations (\ref{ab3}) and (\ref{ab4}) in the following way:
\begin{align*}\label{ab5}
	&\phi = 1: &&\sum_{V=1}^{N-1} 2\left[V(V+1) + V\right],\\
	N-1 \geq &\phi \geq 2: &&\sum_{V=1}^{N-\phi} \left[V (V+1) + V\right] \left[2 \frac{\left(N - V - 1\right)!}{(I-V)!(\phi-1)!} + \left[N-2-(V-1)\right] \frac{\left(N - V - 2\right)!}{(I-V)!(\phi-2)!} \right]. \numberthis
\end{align*}

Unfortunately, we have to also subtract number of times $a^{2} \dyad{0}{0}$ or $b^{2} \dyad{1}{1}$ "leak" into the unsuccessful measurement, i.e. number of times the corresponding measurement is $\ket{N}$. Basically, we have to calculate how many boundary cases there are, which we have already done previously in the equation (\ref{permutationsB}) and the answer is $\sum_{V=1}^{N-\phi} 2\frac{(N-V-1)!}{(I-V)!(\phi-1)!}$. Now, we should multiply this number by either $a^{2} V$ or $b^{2} V$, but realizing, that they always sum up to $V = a^{2} V + b^{2} V$ (because there is always the same number of boundary cases on the left as on the right side) and realizing that we need two different permutations to come to this kind of summation (one from the left and one from the right), the resulting number we have to subtract is:
\begin{align}\label{ab6}
	\sum_{V=1}^{N-\phi} \frac{(N-V-1)!}{(I-V)!(\phi-1)!} V.
\end{align}
Using that $I=N-\phi$, for $\phi = 1$ we obtain $\frac{(N-V-1)!}{(N-\phi-V)!(\phi-1)!} V = \frac{(N-V-1)!}{(N-1-V)!(1-1)!} V = V$. Putting this result from equation (\ref{ab6}) together with equations (\ref{ab5}) and using that $V(V+1) + V = V^{2} + 2V = V(V+2)$, we obtain the fraction of $a^{2} \dyad{0}{0} + b^{2} \dyad{1}{1}$, $a^{2} \dyad{0}{0}$ and $b^{2} \dyad{1}{1}$ of the entire implemented channel but only for the successful measurement corresponding to term  for concrete values of $N$ and $I$:
\begin{align*}\label{ab7}
	&\phi = 1: &&\sum_{V=1}^{N-1} \left\{2 \left[V(V+1) + V\right] - V\right\} = \sum_{V=1}^{N-1} 2V^{2} + 3V = \sum_{V=1}^{N-1} V (2V + 3),\\
	N-1 \geq &\phi \geq 2: &&\sum_{V=1}^{N-\phi} \bigg\{ \left[V(V+2)\right] \bigg[2 \frac{\left(N - V - 1\right)!}{(I-V)!(\phi-1)!} + \left[N-2-(V-1)\right] \frac{\left(N - V - 2\right)!}{(I-V)!(\phi-2)!} \bigg] \\& &&- V \frac{(N-V-1)!}{(I-V)!(\phi-1)!}\bigg\}. \numberthis
\end{align*}

The share of $a^{2} \dyad{0}{0} + b^{2} \dyad{1}{1}$ corresponding to the successful measurement in case of $I=N$ is $\frac{(1-q)^{N}}{2^{N}} N$. So, by using the previous expression, equations (\ref{ab7}), adding factors $q^{\phi}(1-q)^{N-\phi}$ and normalizations $\frac{1}{(N+1) 2^{N-\phi}}$, the fraction $p_{ab}$ of the entire channel corresponding to successful measurement taken by $a^{2} \dyad{0}{0} + b^{2} \dyad{1}{1}$, $a^{2} \dyad{0}{0}$ and $b^{2} \dyad{1}{1}$ is:
\begin{align*}\label{psab}
	p_{ab} &= \frac{(1-q)^{N}}{2^{N}} N + \frac{q(1-q)^{N-1}}{(N+1)2^{N-1}} \sum_{V=1}^{N-1} \left[V (2V + 3)\right] \\&+ \sum_{\phi=2}^{N-1} \frac{q^{\phi}(1-q)^{(N-\phi)}}{(N+1)2^{(N-\phi)}} \bigg\{\sum_{V=1}^{N-\phi} \bigg[V(V+2) \left[2 \frac{(N-V-1)!}{(N-\phi-V)!(\phi-1)!} + \frac{(N-V-1)!}{(N-\phi-V)!(\phi-2)!}\right] \\&- V \frac{(N-V-1)!}{(N-\phi-V)!(\phi-1)!}\bigg]\bigg\} \\&= \frac{(1-q)^{N}}{2^{N}} N + \frac{q(1-q)^{N-1}}{(N+1)2^{N-1}} \sum_{V=1}^{N-1} \left[V (2V + 3)\right] \\&+ \sum_{\phi=2}^{N-1} \frac{q^{\phi}(1-q)^{(N-\phi)}}{(N+1)2^{(N-\phi)}} \bigg\{\sum_{V=1}^{N-\phi} \bigg[(2V(V+2) - V) \frac{(N-V-1)!}{(N-\phi-V)!(\phi-1)!} \\&+ V(V+2) \frac{(N-V-1)!}{(N-\phi-V)!(\phi-2)!} \bigg]\bigg\} \\&\overset{(i)}{=} \frac{(1-q)^{N}}{2^{N}} N + \sum_{\phi=1}^{N-1} \frac{q^{\phi}(1-q)^{N-\phi}}{(N+1)2^{N-\phi}} \sum_{V=1}^{N-\phi} \left[V (2V + 3) \frac{(N-V-1)!}{(N-\phi-V)!(\phi-1)!}\right] \\&+ \sum_{\phi=2}^{N-1} \frac{q^{\phi}(1-q)^{(N-\phi)}}{(N+1)2^{(N-\phi)}} \sum_{V=1}^{N-\phi} V(V+2) \frac{(N-V-1)!}{(N-\phi-V)!(\phi-2)!}, \numberthis
\end{align*}
where we have used that $I = N-\phi$ and in $(i)$ that $2V(V+2) - V = V(2V+3)$ and that for $\phi=1$ the expression $\frac{(N-V-1)!}{(N-\phi-V)!(\phi-1)!} = 1$.

By putting together previous shares of probabilities $p_{ab}$ (\ref{psab}) with the one we have gotten from unitary channels $p_{U_{\phi}}$ in equation (\ref{pufi}), we obtain the probability of successful measurement:
\begin{align*} \label{VQdep}
	p_{suc} &= \frac{(1-q)^{N}}{2^{N}} N + \frac{N}{N+1} q^{N} \\
	&+ \sum_{\phi=1}^{N-1} \frac{q^{\phi}(1-q)^{N-\phi}}{(N+1)2^{N-\phi}} \left\{ \sum_{V=1}^{N-\phi} \left[V (2V + 3) \frac{(N-V-1)!}{(N-\phi-V)!(\phi-1)!}\right] + \frac{N!}{(\phi-1)!(N-\phi)!}\right\} \\
	&+ \sum_{\phi=2}^{N-1} \frac{q^{\phi}(1-q)^{(N-\phi)}}{(N+1)2^{(N-\phi)}} \sum_{V=1}^{N-\phi} V(V+2) \frac{(N-V-1)!}{(N-\phi-V)!(\phi-2)!}, \numberthis
\end{align*}
where the term $\frac{N}{N+1} q^{N}$ was obtained from equation (\ref{pufi}) for $\phi = N$ while the term $\sum_{\phi=1}^{N-1} \frac{q(1-q)^{N}}{(N+1)2^{N-1}} \\ \frac{N!}{(\phi-1)!(N-\phi)!}$ comes directly from the same equation for the rest of possible values of $\phi$.

\paragraph{Phase Damping}
Let us repeat the calculation for the case of phase damping. Again, we shall make the calculation for the $N = 2$ case, when the input state is the same as in equation (\ref{impInSt}). Now, we are trying to implement channel ${\cal{F}}_{\phi}^{\otimes2}$ and we start with acting with this channel on the input state:
\begin{align*}
	&{\cal{F}}_{\phi}^{\otimes2} (\dyad{\psi}{\psi}) = \\&q^{2} {\cal{U}}_{\psi}^{\otimes2} (\dyad{\psi}{\psi}) + q(1-q) \big[({\cal{U}}_{\phi} \otimes {\cal{P}}) + ({\cal{P}} \otimes {\cal{U}}_{\phi})\big] (\dyad{\psi}{\psi}) + (1-q)^{2} {\cal{P}}^{\otimes2} (\dyad{\psi}{\psi}).
\end{align*}
Let us evaluate the previous equation term by term. First term ${\cal{U}}_{\psi}^{\otimes2} (\dyad{\psi}{\psi})$ was already calculated in (\ref{StInpStImpl}). Therefore, let us proceed with the following term:
\begin{align*}
	&({\cal{U}}_{\phi} \otimes {\cal{P}}) (\dyad{\psi}{\psi}) = \frac{1}{3} \\&\big[\dyad{0}{0} \otimes {\cal{P}}(\dyad{0}{0}) + \dyad{0}{0} \otimes {\cal{P}}(\dyad{0}{1}) + e^{-i\phi}\dyad{0}{1} \otimes {\cal{P}}(\dyad{0}{1}) + \dyad{0}{0} \otimes {\cal{P}}(\dyad{1}{0}) + \dyad{0}{0} \otimes {\cal{P}}(\dyad{1}{1}) \\&+ e^{-i\phi}\dyad{0}{1} \otimes {\cal{P}}(\dyad{1}{1}) + e^{i\phi}\dyad{1}{0} \otimes {\cal{P}}(\dyad{1}{0}) + e^{i\phi}\dyad{1}{0} \otimes {\cal{P}}(\dyad{1}{1}) + \dyad{1}{1} \otimes {\cal{P}}(\dyad{1}{1})\big] \overset{(i)}{=} \\&\frac{1}{3} \big(\dyad{00}{00} + \dyad{01}{01} + e^{-i\phi} \dyad{01}{11} + e^{i\phi} \dyad{11}{01} + \dyad{11}{11}\big) \overset{(\ref{dictionary})}{=} \\&\frac{1}{3} (\dyad{0}{0} + \dyad{1}{1} + e^{-i\phi} \dyad{1}{2} + e^{i\phi} \dyad{2}{1} + \dyad{2}{2}),
\end{align*}
where in $(i)$ we know that ${\cal{P}} (\dyad{0}{1}) + {\cal{P}} (\dyad{1}{0}) = 0$, ${\cal{P}}(\dyad{0}{0}) = \dyad{0}{0}$ and ${\cal{P}}(\dyad{1}{1}) = \dyad{1}{1}$. Analogously we can evaluate the next term:
\begin{align*}
	&({\cal{P}} \otimes {\cal{U}}_{\phi}) (\dyad{\psi}{\psi}) = \frac{1}{3} (\dyad{0}{0} + e^{-i\phi} \dyad{0}{1} + e^{i\phi} \dyad{1}{0} + \dyad{1}{1} + \dyad{2}{2}).
\end{align*}
And the last term:
\begin{align*}
	({\cal{P}} \otimes {\cal{P}}) (\dyad{\psi}{\psi}) = \frac{1}{3} (\dyad{00}{00} + \dyad{01}{01} + \dyad{11}{11}) = \frac{1}{3} (\dyad{0}{0} + \dyad{1}{1} + \dyad{2}{2}).
\end{align*}
Now, we calculate the product states with $\ket{\xi} = a \ket{0} + b \ket{1}$ (where $\dyad{\xi}{\xi} \otimes {\cal{U}}_{\phi}^{\otimes 2}$ was already calculated in the equation (\ref{xiuu})):
\begin{align*}
	&\dyad{\xi}{\xi} \otimes ({\cal{U}}_{\phi} \otimes {\cal{P}}) (\dyad{\psi}{\psi}) = \\&\frac{1}{3} \big[a^{2} (\dyad{00}{00} + \dyad{01}{01} + e^{-i\phi} \dyad{01}{02} + e^{i\phi} \dyad{02}{01} + \dyad{02}{02}) \\&+ ab^{\ast} (\dyad{00}{10} + \dyad{01}{11} + e^{-i\phi} \dyad{01}{12} + e^{i\phi} \dyad{02}{11} + \dyad{02}{12}) \\&+ a^{\ast}b (\dyad{10}{00} + \dyad{11}{01} + e^{-i\phi} \dyad{11}{02} + e^{i\phi} \dyad{12}{01} + \dyad{12}{02}) \\&+ b^{2} (\dyad{10}{10} + \dyad{11}{11} + e^{-i\phi} \dyad{11}{12} + e^{i\phi} \dyad{12}{11} + \dyad{12}{12})\big].
\end{align*}
The second term is:
\begin{align*}
	&\dyad{\xi}{\xi} \otimes ({\cal{P}} \otimes {\cal{U}}_{\phi}) (\dyad{\psi}{\psi}) = \\&\frac{1}{3} \big[a^{2} (\dyad{00}{00} + e^{-i\phi} \dyad{00}{01} + e^{i\phi} \dyad{01}{00} + \dyad{01}{01} + \dyad{02}{02}) \\
	&+ ab^{\ast} (\dyad{00}{10} + e^{-i\phi} \dyad{00}{11} + e^{i\phi} \dyad{01}{10} + \dyad{01}{11} + \dyad{02}{12}) \\
	&+ a^{\ast}b (\dyad{10}{00} + e^{-i\phi} \dyad{10}{01} + e^{i\phi} \dyad{11}{00} + \dyad{11}{01} + \dyad{12}{02}) \\
	&+ b^{2} (\dyad{10}{10} + e^{-i\phi} \dyad{10}{11} + e^{i\phi} \dyad{11}{10} + \dyad{11}{11} + \dyad{12}{12})\big].
\end{align*}
And the last term gives us:
\begin{align*}
	&\dyad{\xi}{\xi} \otimes {\cal{P}}^{\otimes2} (\dyad{\psi}{\psi}) = \\& \frac{1}{3} \big[a^{2} (\dyad{00}{00} + \dyad{01}{01} + \dyad{02}{02}) + ab^{\ast} (\dyad{00}{10} + \dyad{01}{11} + \dyad{02}{12}) \\&+ a^{\ast}b (\dyad{10}{00} + \dyad{11}{01} + \dyad{12}{02}) + b^{2} (\dyad{10}{10} + \dyad{11}{11} + \dyad{12}{12})\big].
\end{align*}
Let us apply shift-down operator on the previous states:
\begin{align*} \label{xifip}
	&{\cal{C}}_{\ominus} \big[\dyad{\xi}{\xi} \otimes ({\cal{U}}_{\phi} \otimes {\cal{P}}) (\dyad{\psi}{\psi})\big] = \\&\frac{1}{3} \big[a^{2} (\dyad{00}{00} + \dyad{01}{01} + e^{-i\phi} \dyad{01}{02} + e^{i\phi} \dyad{02}{01} + \dyad{02}{02}) \\&+ ab^{\ast} (\dyad{00}{12} + \dyad{01}{10} + e^{-i\phi} \dyad{01}{11} + e^{i\phi} \dyad{02}{10} + \dyad{02}{11}) \\&+ a^{\ast}b (\dyad{12}{00} + \dyad{10}{01} + e^{-i\phi} \dyad{10}{02} + e^{i\phi} \dyad{11}{01} + \dyad{11}{02}) \\&+ b^{2} (\dyad{12}{12} + \dyad{10}{10} + e^{-i\phi} \dyad{10}{11} + e^{i\phi} \dyad{11}{10} + \dyad{11}{11})\big] \\&\overset{(i)}{=} \frac{1}{3} \big[ (a^{2} \dyad{0}{0} + ab^{\ast}e^{-i\phi} \dyad{0}{1} + a^{\ast}be^{i\phi} \dyad{1}{0} + b^{2} \dyad{1}{1}) \otimes \dyad{0}{0} \\
	& (a^{2} \dyad{0}{0} + b^{2} \dyad{1}{1}) \otimes (\dyad{1}{1} + \dyad{2}{2}) \big]. \numberthis
\end{align*}
In $(i)$, we are again only explicitly writing out diagonal states in the second space due to measurement. Let us also write the result for the other term with one phase damping and one unitary channel applied:
\begin{align*} \label{xipfi}
	&{\cal{C}}_{\ominus} \left[\dyad{\xi}{\xi} \otimes ({\cal{P}} \otimes {\cal{U}}_{\phi}) (\dyad{\psi}{\psi})\right] = \\
	&\frac{1}{3} \big[a^{2} (\dyad{00}{00} + e^{-i\phi}\dyad{00}{01} + e^{i\phi}\dyad{01}{00} + \dyad{01}{01} + \dyad{02}{02}) \\
	&+ ab^{\ast} (\dyad{00}{12} + e^{-i\phi}\dyad{00}{10} + e^{i\phi}\dyad{01}{12} + \dyad{01}{10} + \dyad{02}{11}) \\
	&+ a^{\ast}b (\dyad{12}{00} + e^{-i\phi}\dyad{12}{01} + e^{i\phi}\dyad{10}{00} + \dyad{10}{01} + \dyad{11}{02}) \\
	&+ b^{2} (\dyad{12}{12} + e^{-i\phi}\dyad{12}{10} + e^{i\phi}\dyad{10}{12} + \dyad{10}{10} + \dyad{11}{11})\big]\\
	&= \frac{1}{3} \big[ (a^{2} \dyad{0}{0} + ab^{\ast} e^{-i\phi} \dyad{0}{1} + a^{\ast}b e^{i\phi} \dyad{1}{0} + b^{2} \dyad{1}{1}) \otimes \dyad{0}{0}\\ 
	&+ (a^{2} \dyad{0}{0} + b^{2} \dyad{1}{1}) \otimes (\dyad{1}{1} + \dyad{2}{2}) \big]\\
	&= \frac{1}{3} \big[ U_{\phi} \dyad{\xi}{\xi} U_{\phi}^{\dagger} \otimes \dyad{0}{0} + (a^{2} \dyad{0}{0} + b^{2} \dyad{1}{1}) \otimes (\dyad{1}{1} + \dyad{2}{2}) \big]. \numberthis
\end{align*}
And the last state:
\begin{align*} \label{xipp}
	&{\cal{C}}_{\ominus} \big[\dyad{\xi}{\xi} \otimes {\cal{P}}^{\otimes2} (\dyad{\psi}{\psi})\big] = \\& \frac{1}{3} \big[a^{2} (\dyad{00}{00} + \dyad{01}{01} + \dyad{02}{02}) + ab^{\ast} (\dyad{00}{12} + \dyad{01}{10} + \dyad{02}{11}) \\&+ a^{\ast}b (\dyad{12}{00} + \dyad{10}{01} + \dyad{11}{02}) + b^{2} (\dyad{12}{12} + \dyad{10}{10} + \dyad{11}{11})\big] = \\& \frac{1}{3} \big[(a^{2}\dyad{0}{0} + b^{2}\dyad{1}{1}) \otimes (\dyad{0}{0} + \dyad{1}{1} + \dyad{2}{2})\big]. \numberthis
\end{align*}
Putting equations (\ref{xiuu}), (\ref{xifip}), (\ref{xipfi}) and (\ref{xipp}) together, we obtain:
\begin{align*}\label{implPD}
	&{\cal{C}}_{\ominus} \big[\dyad{\xi}{\xi} \otimes {\cal{F}}_{\phi}^{\otimes2}(\dyad{\psi}{\psi})\big] = \\
	&\frac{1}{3} \big\{q^{2} \big[U_{\phi}\dyad{\xi}{\xi}U_{\phi}^{\dagger} \otimes (\dyad{0}{0} + \dyad{1}{1}) + U_{-2\phi}\dyad{\xi}{\xi}U_{-2\phi}^{\dagger} \otimes \dyad{2}{2}\big] \\
	&+ 2q(1-q) \big[U_{\phi}\dyad{\xi}{\xi}U_{\phi}^{\dagger} \otimes (\dyad{0}{0} + \dyad{1}{1}) + (a^{2}\dyad{0}{0} + b^{2}\dyad{1}{1}) \otimes (\dyad{0}{0} + \dyad{1}{1} + 2\dyad{2}{2}) \\
	&+ (1-q)^{2} (a^{2}\dyad{0}{0} + b^{2}\dyad{1}{1}) \otimes (\dyad{0}{0} + \dyad{1}{1} + \dyad{2}{2})\big] \big\}. \numberthis
\end{align*}
We have also calculated results for $N = 3$:
\begin{align*}
	&{\cal{C}}_{\ominus} \big[\dyad{\xi}{\xi} \otimes {\cal{F}}_{\phi}^{\otimes3}(\dyad{\psi}{\psi})\big] = \\
	&\frac{1}{4} \big\{ q^{3} \big[ U_{\phi} \dyad{\xi}{\xi} U_{\phi}^{\dagger} \otimes (\dyad{0}{0} + \dyad{1}{1} + \dyad{2}{2}) + U_{-3\phi} \dyad{\xi}{\xi} U_{-3\phi}^{\dagger} \otimes \dyad{3}{3}\big]\\
	& q^{2}(1-q) \big[ U_{\phi} \dyad{\xi}{\xi} U_{\phi}^{\dagger} \otimes 2(\dyad{0}{0} + \dyad{1}{1} + \dyad{2}{2})\\ 
	&+ (a^{2} \dyad{0}{0} + b^{2} \dyad{1}{1}) \otimes (\dyad{0}{0} + \dyad{1}{1} + \dyad{2}{2} + 3 \dyad{3}{3})\big]\\
	& q(1-q)^{2}  \big[ U_{\phi} \dyad{\xi}{\xi} U_{\phi}^{\dagger} \otimes (\dyad{0}{0} + \dyad{1}{1} + \dyad{2}{2})\\ 
	&+ (a^{2} \dyad{0}{0} + b^{2} \dyad{1}{1}) \otimes (2 (\dyad{0}{0} + \dyad{1}{1} + \dyad{2}{2}) + 3 \dyad{3}{3})\big]\\
	& (1-q)^{3} \big[ (a^{2} \dyad{0}{0} + b^{2} \dyad{1}{1}) \otimes (\dyad{0}{0} + \dyad{1}{1} + \dyad{2}{2} + \dyad{3}{3})\big] \big\}
\end{align*}
We can generalize the results using binomial distribution:
\begin{align*}\label{VQPDfinal}
	&{\cal{C}}_{\ominus} \big[\dyad{\xi}{\xi} \otimes {\cal{F}}_{\phi}^{\otimes N}(\dyad{\psi}{\psi})\big] = \\
	&\frac{1}{N+1} \biggl[ q^{N} U_{-N\phi} \dyad{\xi}{\xi} U_{-N\phi}^{\dagger} \otimes \dyad{N}{N} +
	 \sum_{\phi=1}^{N} \binom{N-1}{\phi-1} q^{\phi} (1-q)^{N-\phi}  U_{\phi} \dyad{\xi}{\xi} U_{\phi}^{\dagger} \otimes \sum_{j=0}^{N-1} \dyad{j}{j}\\
	& \sum_{P=1}^{N} \binom{N-1}{P-1} q^{N-P} (1-q)^{P} (a^{2} \dyad{0}{0} + b^{2} \dyad{1}{1}) \otimes \sum_{j=0}^{N-1} \dyad{j}{j}\\
	& \sum_{P=1}^{N} \binom{N}{P} q^{N-P} (1-q)^{P} (a^{2} \dyad{0}{0} + b^{2} \dyad{1}{1}) \otimes \dyad{N}{N} \biggr] \numberthis
\end{align*}
Now, we can look at the individual probabilities arising from previous equation (\ref{VQPDfinal}). Let us begin with probability of implementing unitary transformation:
\begin{align*}
	p_{U_{\phi}} = \frac{1}{N+1} \sum_{\phi=1}{N} \binom{N-1}{\phi-1} q^{\phi} (1-q)^{N-\phi} N = \frac{Nq}{N+1}
\end{align*}
where the additional factor $N$ comes from number of states corresponding to successful measurement. We can continue with evaluating success probability corresponding to measuring noisy part of implemented channel:
\begin{align*}
	p_{ab,suc} = \frac{N}{N+1} \sum_{P=1}^{N} q^{N-P} (1-q)^{P} = \frac{N}{N+1} (1-q).
\end{align*}
Putting the two previous equations together, we recover total success probability:
\begin{align}\label{VQpd}
	p_{suc} = \frac{N}{N+1} \big[ \sum_{\phi=1}^{N} \binom{N-1}{\phi-1} q^{\phi} (1-q)^{N-\phi} + \sum_{P=1}^{N} \binom{N-1}{P-1} q^{N-P} (1-q)^{P}\big] = \frac{N}{N+1} (q+1-q) = \frac{N}{N+1}
\end{align}

\subsubsection{Comparison of Implementations}

Let us compare the implementation of noisy depolarizing channel through the virtual qudit (equation (\ref{VQdep})), Vidal-Masanes-Cirac scheme (\ref{genVMC}) and the retrieval of channels through the PSAR device (\ref{resWhiteNoise}). This comparison for success probability $p_{suc}$ can be seen in the figure \ref{compVQandVMCandPSARforDep:a} for $N=1$, $N=3$, $N=7$ and $N=15$, where PSAR is depicted with solid lines, Vidal-Masanes-Cirac with dotted lines and virtual qudit with dashed lines. We can see that the implementation through Vidal-Masanes-Cirac scheme gives the highest probability of success. It does not depend on mixing parameter $q$ and always performs better than virtual qudit and PSAR apart from trivial cases ($N=1$ and $q=1$). PSAR device perfoms very similarly to the virtual qudit implementation, albeit with slightly worse probabilities in general except for $q=1$. Success probability for PSAR and virtual qudit goes to $0$ with the number $N$ going to infinity. However, for every $N$, there is an interval of high value of $q$ (which is shrinking for growing $N$) that improves the probability of success $p_{suc}$ compared to lower value of $N$.
\begin{figure}[h]
	\begin{subfigure}{0.65\textwidth}
		\hspace*{-0.5cm}
		\includegraphics[width=0.87\textwidth]{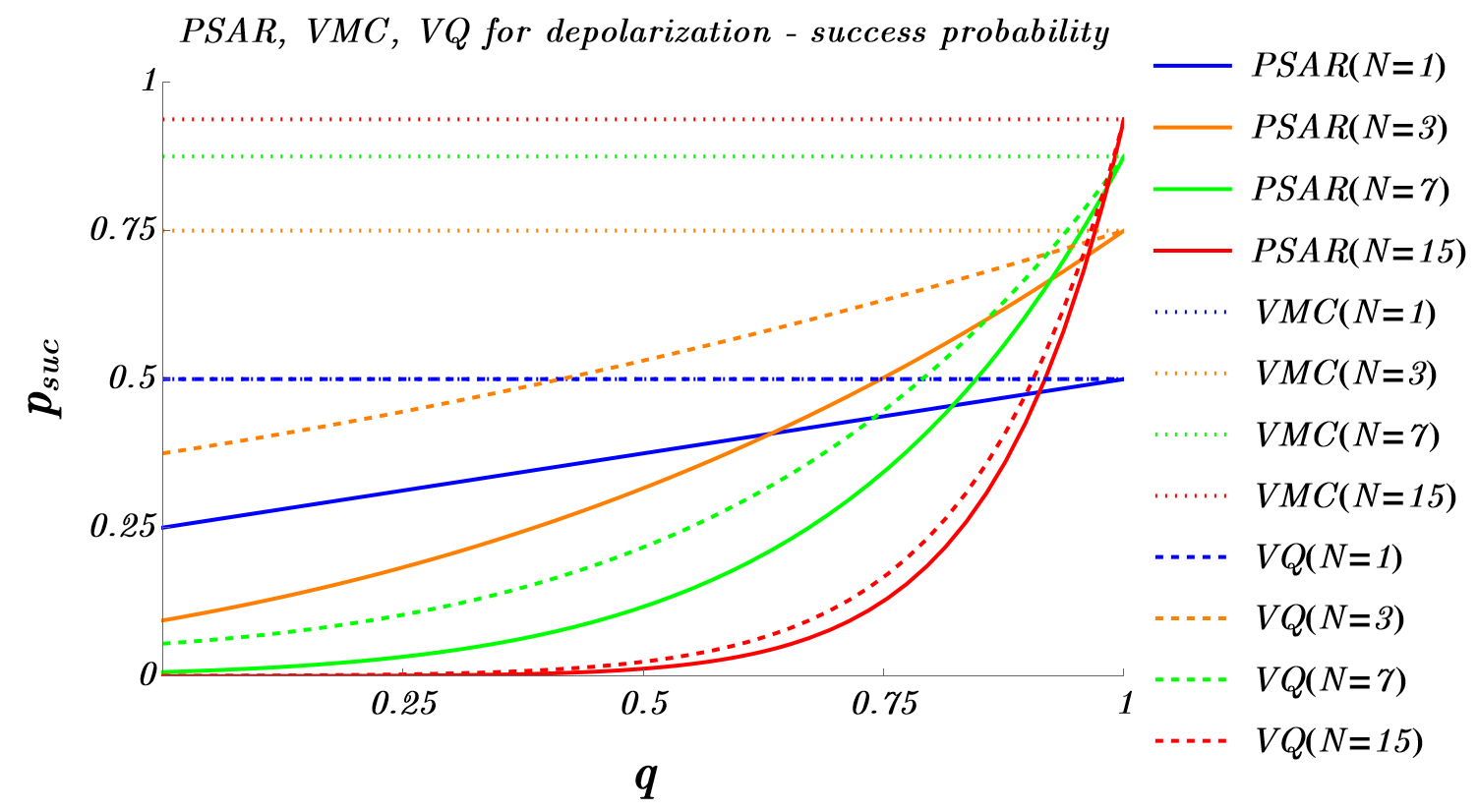}
		\vspace*{0cm}
		\caption{Comparison of success probability for implementing \\noisy depolarizing channel through PSAR device (solid \\lines), Vidal-Masanes-Cirac scheme (VMC, dotted lines) \\and virtual qudit (VQ, dashed lines) for $N=1$, $N=3$, \\$N=7$, and $N=15$.}
		\label{compVQandVMCandPSARforDep:a}
	\end{subfigure}
	\hspace*{-1.5cm}
	\begin{subfigure}{0.65\textwidth}
		\centering
		\hspace*{-3.5cm}
		\vspace*{0.5cm}
		\includegraphics[width=0.68\textwidth]{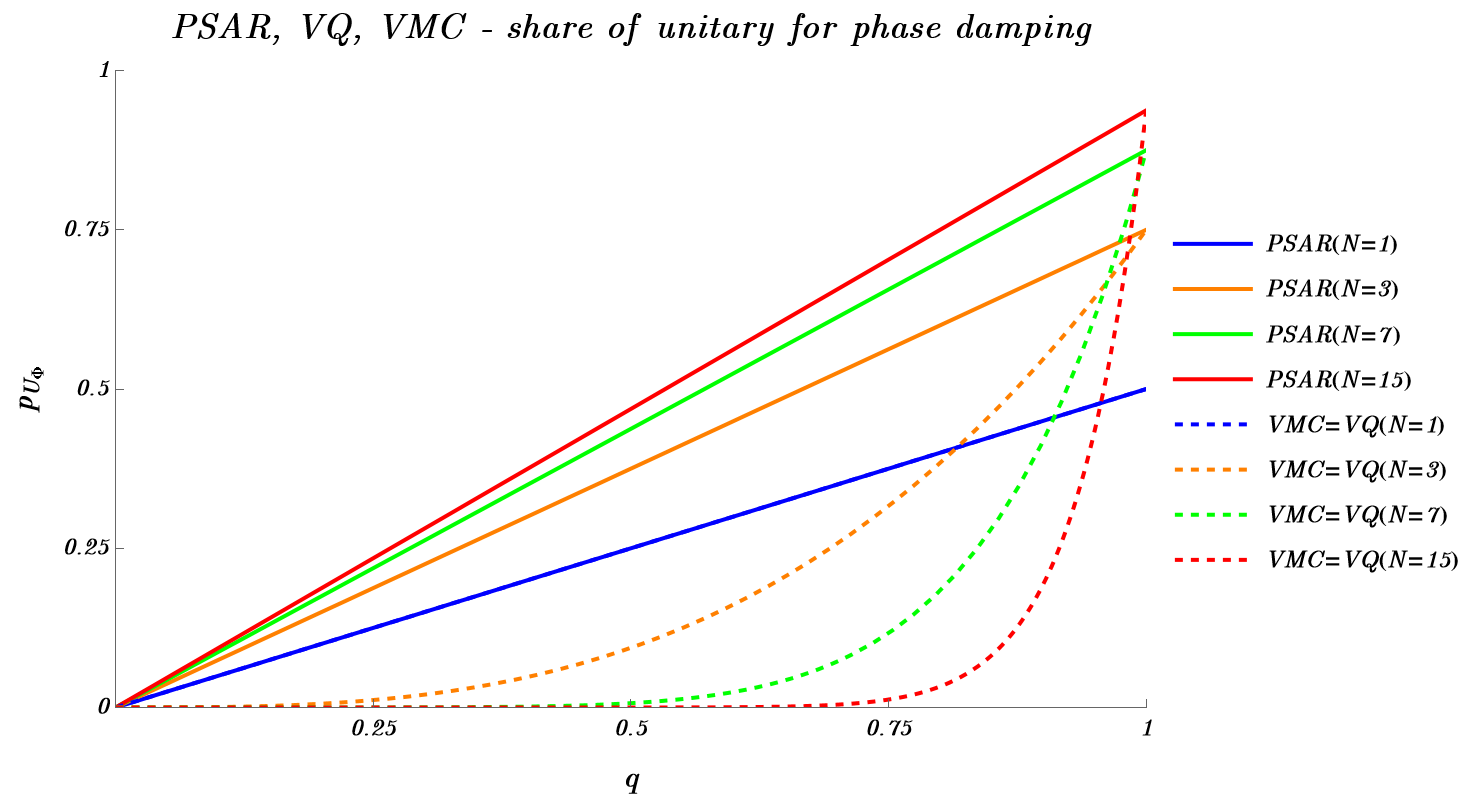}
		\vspace*{-0.5cm}
		\caption{Comparison of share that unitary channel $p_{U_{\phi}}$ \\takes from the whole implemented channel for \\PSAR (solid lines), virtual qudit (solid lines) and \\Vidal-Masanes-Cirac (dotted lines) in case of \\depolarizing channel.}
		\label{compVQandVMCandPSARforDep:b}
	\end{subfigure}
	\vspace*{0.2cm}
	\caption{Comparison of respective implementations of depolarizing noisy channel.}
	\label{compVQandVMCandPSARforDep}
\end{figure}

In the figure \ref{compVQandVMCandPSARforDep:b} there is depicted in-mixture of unitary channel $p_{U_{\phi}}$ in the entire output state for PSAR, Vidal-Masanes-Cirac and virtual qudit. In this case, values for PSAR (\ref{resWhiteNoise}) and virtual qudit (\ref{pufi}) are identical $p_{U_{\phi}}(PSAR) = p_{U_{\phi}}(VQ) = \frac{Nq}{N+1}(\frac{1+q}{2})^{N-1}$ and they are depicted with solid lines, while value $p_{U_{\phi}}(VMC) = \frac{Nq^{N}}{N+1}$ for Vidal-Masanes-Cirac (\ref{genVMC}) is depicted with dashed lines. We can see, that PSAR and virtual qudit are able to preserve unitary transformation in the resulting state better than implementations through Vidal-Masanes-Cirac scheme.

\begin{figure}[h!]
	\centering
	\includegraphics[width=.75\linewidth]{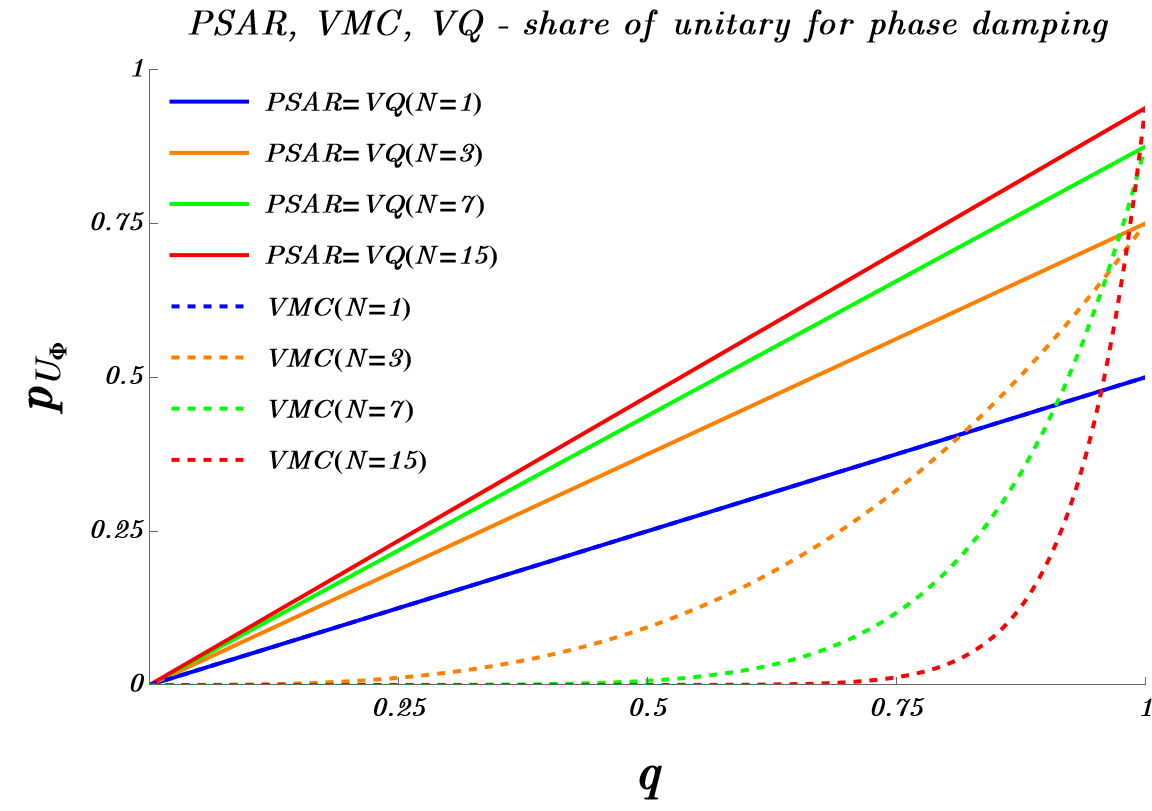}
	\vspace*{0.3cm}
	\caption{Comparison of fraction of unitary channel $p_{U_{\phi}}$ mixed in the resulting channel corresponding to successful measurement for phase damping via PSAR (solid lines), Vidal-Masanes-Cirac (dotted lines) and virtual qudit (dotted lines).}
	\label{compPDBetweenVMCandVQ}
\end{figure}

For phase damping, probability of successful implementation stays the same for PSAR, Vidal-Masanes-Cirac and also for virtual qudit. This probability was already shown in the figure \ref{compPSARbetweenPDandDep:a}. In the figure \ref{compPDBetweenVMCandVQ} there is depicted a comparison between share of unitary transformation $p_{U_{\phi}}$ of the entire implemented channel in case of implementing noisy phase damping channel through virtual qudit (\ref{VQpd}), Vidal-Masanes-Cirac scheme (\ref{genVMC}) and PSAR (\ref{redPhaseDamp}). Performance for virtual qudit implementation and PSAR $p_{U_{\phi}}(VQ) = p_{U_{\phi}}(PSAR) = \frac{Nq}{N+1}$ are identical and their performance is better in comparison with PSAR $p_{U_{\phi}}(PSAR) = \frac{Nq^{N}}{N+1}$. For VQ and PSAR fraction of unitary is linearly dependent on mixing factor $q$ and also rises with number of uses $N$. For VMC the fraction of unitary transformations goes to $0$ with growing $N$. We can also see that there is a small region of $q$ for which the performance of VMC implementation is improving with growing $N$. Unfortunately, this region is diminishing with the rising number $N$.

In case of Vidal-Masanes-Cirac implementation, the result is the same for both channels (\ref{genVMC}). Thus, let us now compare performance of virtual qudit implementation for both noisy channels. Comparison of success probability $p_{suc}$ of implementation of depolarizing channel (\ref{VQdep}) with the implementation of phase damping (\ref{VQpd}) through virtual qudit is shown in the figure \ref{compForVQbetweenDepAndPD:suc}. Depolarizing channel is depicted with solid lines, while phase damping with dashed lines. For all cases, with the exception of $q=1$ and $N=1$, virtual qudit si more successful in implementing noisy phase damping channel. Probability of success for depolarizing channel is dependent on the number of uses $N$ and goes to $0$ with $N \rightarrow \infty$.
\begin{figure}[h!]
	\hspace*{-1.2cm}
	\begin{subfigure}{.65\textwidth}
		\centering
		\vspace*{-1.5cm}
		\hspace*{0cm}
		\includegraphics[width=.85\linewidth]{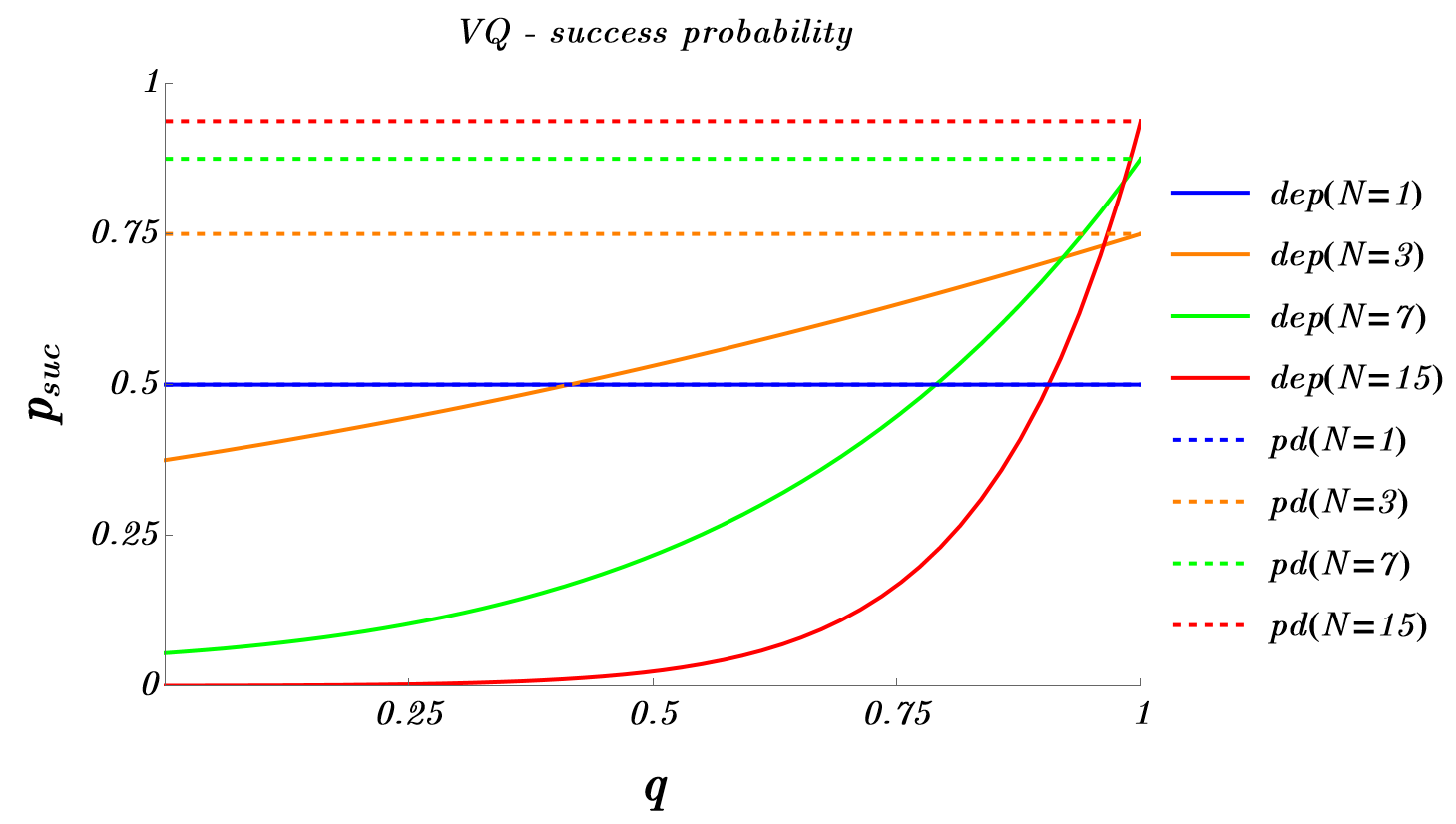}
		\caption{Comparison of success probability $p_{suc}$.}
		\label{compForVQbetweenDepAndPD:suc}
	\end{subfigure}
	\hspace*{-0.5cm}
	\begin{subfigure}{.6\textwidth}
		\centering
		\hspace*{-2.7cm}
		\vspace*{-1.8cm}
		\includegraphics[width=.73\linewidth]{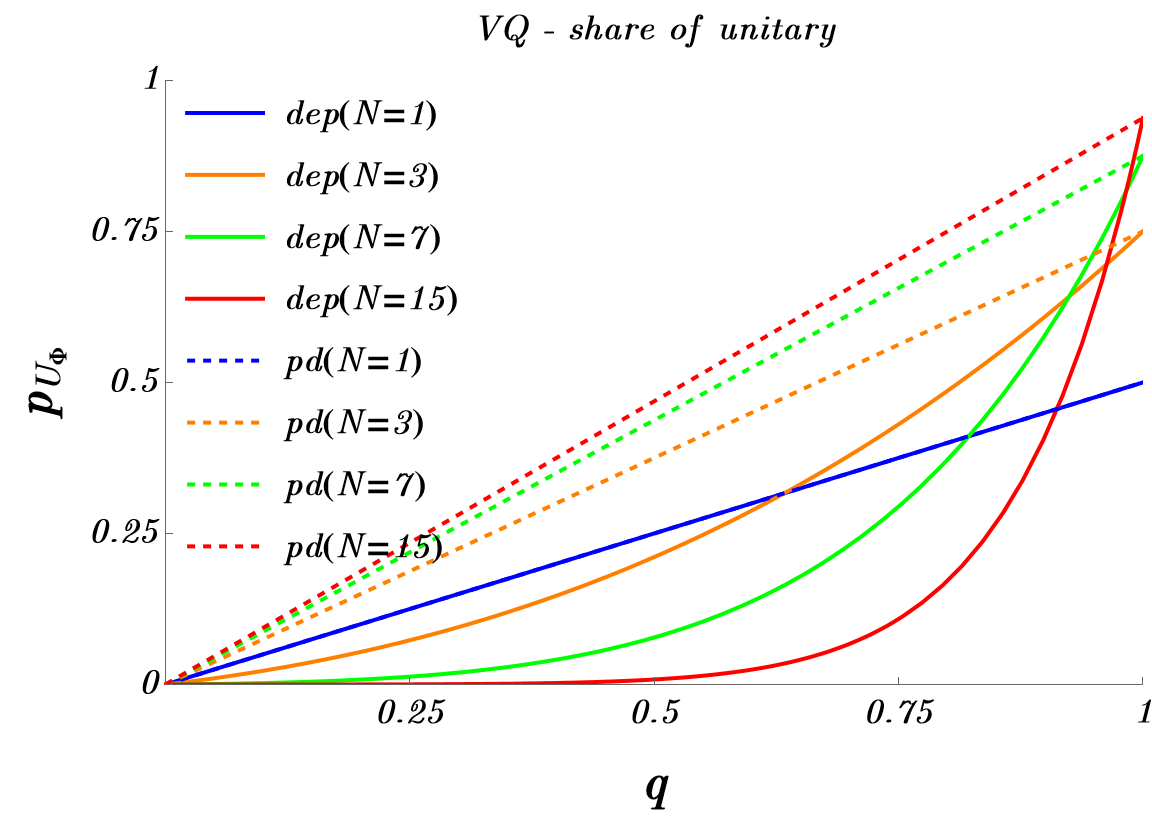}
		\vspace*{1.6cm}
		\caption{Comparison of share of unitary channel $p_{U_{\phi}}$ of \\the entire implemented channel.}
		\label{compForVQbetweenDepAndPD:ufi}
	\end{subfigure}
	\vspace*{0.2cm}
	\caption{Comparisons between implementation of depolarizing channel (dep, solid lines) and phase damping (pd, dashed lines) through virtual qudit for $N=1$, $N=3$, $N=7$ and $N=15$.}
	\label{compForVQbetweenDepAndPD}
\end{figure}

In the figure \ref{compForVQbetweenDepAndPD:ufi} we have depicted a comparison of share of unitary channel $p_{U_{\phi}}$ of the entire implemented channel for implementation of both noisy channels through virtual qudit. Value for depolarizing channel is $p_{U_{\phi}}(dep) = \frac{Nq}{N+1}(\frac{1+q}{2})^{N-1}$ as can be seen from equation (\ref{pufi}), while value for phase damping is $p_{U_{\phi}}(dp) = \frac{Nq}{N+1}$ as can be seen in (\ref{VQpd}). Virtual qudit is more successful with protecting unitary channel in case of noisy phase damping channel. For depolarizing channel, the share of unitary channel is going to zero as $N$ goes to infinity, while for phase damping, performance of VQ is improving. Both figures \ref{compForVQbetweenDepAndPD:suc} and \ref{compForVQbetweenDepAndPD:ufi} show regions of high value $q$ for which the success probability and representation of unitary channel of the result, respectively, are improving with growing $N$ for depolarization. But again, this region is shrinking with the growing number $N$.

\chapter{Conclusions}
We have examined the equivalence conditions for deterministic and probabilistic processors. At first, deterministic equivalence was defined in definition \ref{detEqv}, as well as three types of probabilistic equivalences were defined - strong in \ref{strongEqv}, weak in \ref{weakEqv} and structural in \ref{structEqv}. In theorem \ref{SufNecCondDetProc}, sufficient and necessary condition for deterministic equivalence of unitarily bonded processors was presented, followed by some concrete solutions. Further, conditions for deterministically equivalent processors with dimensions of data and program spaces $D = P = 2$ were derived. Equivalence of SWAP processor $S$ with the same dimensions was solved. It was discovered that $S$ is equivalent with $\left(U \otimes V\right) W \left(U^{\prime} \otimes V^{\prime}\right)$, where $U$, $V$, $U^{\prime}$ and $V^{\prime}$ are $2$-dimensional unitaries and $W = \exp(i\sum_{\alpha} \alpha \sigma_{\alpha} \otimes \sigma_{\alpha})$ with $\alpha = \{x, y, z\}$ and values for $x$, $y$ and $z$ given in equation (\ref{xyzxyzxyz}). Furthermore, necessary and sufficient conditions for structurally equivalent processors, with unitary relations between them, was given in theorem \ref{necAndSucStruct}. Relations between operators of structurally equivalent processors were also investigated and it was discovered that their spans must be identical (theorem \ref{spansOfOperators}). Specific co-isometric relation was given for processors with orthogonal operators. In theorem \ref{sufNecDimProbWeak}, sufficient condition for weakly (strongly) equivalent processors was stated. In the end of the chapter, relations between individual equivalences were examined. It was discovered, that only structural equivalence implies additional type of equivalence, i.e., processors that are structurally equivalent are either weakly or strongly equivalent.

Robustness of optimal probabilistic storage and retrieval device for phase gates to noises - depolarization and phase damping - was also investigated. In the case of implementing noisy channel composed of convex combination of unitary channel with the depolarizing one, it was found that the retrieved channel is noisy with probability that is decreasing with growing number of implemented original noisy channels (equation (\ref{resWhiteNoise})). For the implementation of unitary channel combined with phase damping, PSAR device is again implementing noisy channel in the case of successful measurement. However, the probability of success has not changed compared to having access to phase gates without any noise and is increasing with the rising number $N$ (equation \ref{redPhaseDamp}). Two concrete implementations were examined - Vidal-Masanes-Cirac and virtual qudit. Both implementations were shown to perform better than what was calculated through PSAR device for implementing noisy depolarizing channel as can be seen in the figure \ref{compVQandVMCandPSARforDep}. By comparing implementations of noisy depolarizing channel through virtual qudit and Vidal-Masanes-Cirac it was discovered that success probability is higher for Vidal-Masanes-Cirac, albeit the less noisy resulting channel is recovered via virtual qudit as is depicted in the figures \ref{compVQandVMCandPSARforDep}. For noisy phase damping channel, virtual qudit and Vidal-Masanes-Cirac (and even PSAR) gave the same results. However, with respect to preserving unitary channel, both implementations performed equally well, albeit worse than PSAR as can be seen in figure \ref{compPDBetweenVMCandVQ}. Vidal-Masanes-Cirac scheme performed equally well for both noisy channels. Comparison of implementations of depolarizing and phase damping channels through virtual qudit revealed that the success probability is higher for phase damping, but the less noisy implemented channel is given for depolarizing channel as can be seen in figures \ref{compForVQbetweenDepAndPD}.



\end{document}